\numberwithin{equation}{section}
\definecolor{orange}{RGB}{255,127,0}
\def\maketag@@@#1{\hbox{\m@th\normalfont\normalsize#1}}
\newtheorem{theorem}{Theorem}[section]
\newtheorem{lemma}[theorem]{Lemma}
\newtheorem{corollary}[theorem]{Corollary}
\newtheorem{remark}{Remark}[section]
\newlength{\drop}
\definecolor{amethyst}{rgb}{0.6, 0.4, 0.8}
\definecolor{burgundy}{rgb}{0.5, 0.0, 0.13}
\title{\textbf{A stabilized mixed 
discontinuous Galerkin formulation 
for double porosity/permeability model}}
\author{\textbf{M.~S.~Joshaghani}, 
\textbf{S.~H.~S.~Joodat}
and 
\textbf{K.~B.~Nakshatrala} \\
  {\small Department of Civil and Environmental
    Engineering, University of Houston. \\
    \textbf{Correspondence to:}~\textsf{knakshatrala@uh.edu}}}
\keywords{discontinuous Galerkin methods; mixed methods;
  stabilized formulations; error analysis; double
  porosity/permeability model; flow through porous media}
\begin{document}

\date{\today}

\begin{titlepage}
  \drop=0.1\textheight
  \centering
  \vspace*{\baselineskip}
  \rule{\textwidth}{1.6pt}\vspace*{-\baselineskip}\vspace*{2pt}
  \rule{\textwidth}{0.4pt}\\[\baselineskip]
       {\Large \textbf{\color{burgundy}
A stabilized mixed discontinuous Galerkin 
formulation \\[0.3\baselineskip]
for double porosity/permeability model}}\\[0.3\baselineskip]
       \rule{\textwidth}{0.4pt}\vspace*{-\baselineskip}\vspace{3.2pt}
       \rule{\textwidth}{1.6pt}\\[\baselineskip]
       \scshape
       An e-print of the paper is available on arXiv:~1805.01389. \par 
       \vspace*{0.8\baselineskip}
       Authored by \\[0.5\baselineskip]

  {\Large M.~S.~Joshaghani\par}
  {\itshape Graduate Student, University of Houston}\\[0.75\baselineskip]

  {\Large S.~H.~S.~Joodat\par}
  {\itshape Graduate Student, University of Houston}\\[0.75\baselineskip]
           
  {\Large K.~B.~Nakshatrala\par}
  {\itshape Department of Civil \& Environmental Engineering \\
  University of Houston, Houston, Texas 77204--4003 \\ 
  \textbf{phone:} +1-713-743-4418, \textbf{e-mail:} knakshatrala@uh.edu \\
  \textbf{website:} http://www.cive.uh.edu/faculty/nakshatrala}\\[0.25\baselineskip]
   \vspace{0.1mm} 
  \begin{figure}[h]
    \centering
\includegraphics[clip,width=0.54\linewidth]{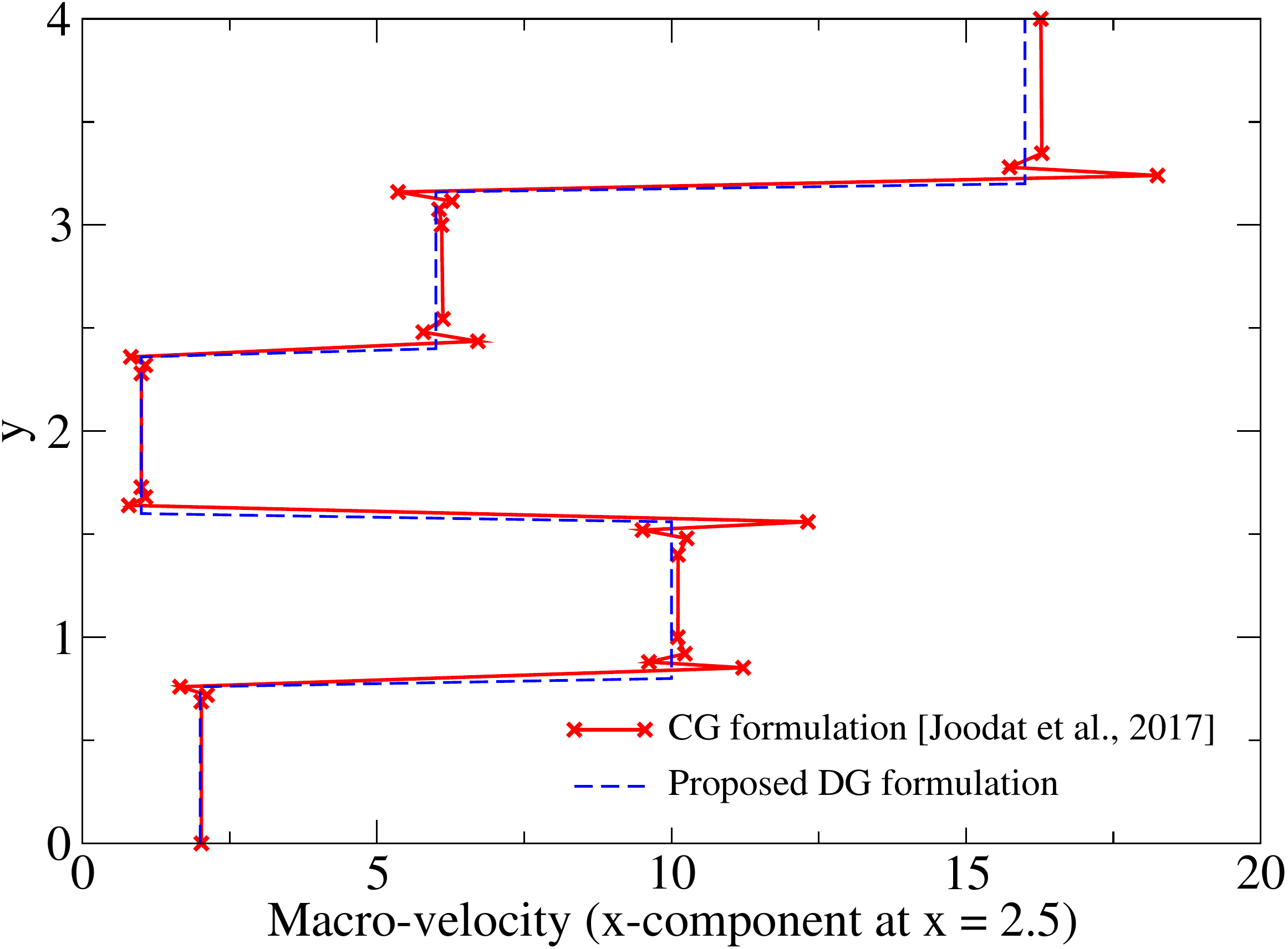}

  	\emph{{\small{Comparison of the profiles of 
	the macro-velocity under the stabilized mixed 
	continuous Galerkin (CG) formulation and 
	the proposed stabilized mixed discontinuous 
	Galerkin (DG) formulation. Under the CG 
	formulation, overshoots and undershoots 
	are observed at the interfaces of the layers. 
	On the other hand, the proposed DG 
	formulation is able to capture the physical 
	jumps across the interfaces.}}}
  \end{figure}
   \vfill
  {\scshape 2018} \\
  {\small Computational \& Applied Mechanics Laboratory} \par
\end{titlepage}

\begin{abstract}
  Modeling flow through porous media with multiple pore-networks has now become an active area of research due to recent technological endeavors like geological carbon sequestration and recovery of hydrocarbons from tight rock formations. Herein, we consider the double porosity/permeability (DPP) model, which describes the flow of a single-phase incompressible fluid through a porous medium exhibiting two dominant pore-networks with a possibility of mass transfer across them. We present a stable mixed discontinuous Galerkin (DG) formulation for the DPP model. The formulation enjoys several attractive features. 
These include: (i) Equal-order interpolation for all the field 
variables (which is computationally the most convenient) 
is stable under the proposed formulation. 
(ii) The stabilization terms are residual-based, and the stabilization parameters do not contain any mesh-dependent parameters.
(iii) The formulation is theoretically shown to be consistent, stable, and hence convergent. 
(iv) The formulation supports non-conforming discretization and distorted meshes. 
(v) The DG formulation has improved element-wise (local) mass balance compared to 
the corresponding continuous formulation. 
(vi) The proposed formulation can capture physical instabilities in coupled flow and transport problems under the DPP model.
\end{abstract}

\maketitle
\vspace{-0.4in}


\section*{A list of abbreviations and symbols} 
\label{Sec:S0_Notation}
\begin{longtable*}[c]{|p{.2\textwidth} || p{.75\textwidth}|}
  \hline
  \multicolumn{2}{| c |}{\emph{Abbreviations}} \\
  \hline
  CG  & Continuous Galerkin \\
  DG  & Discontinuous Galerkin \\
  DPP & Double porosity/permeability \\
  \hline
  \multicolumn{2}{|c|}{\emph{Symbols in the DPP model,
    $\mathsection \ref{Sec:S2_DG_GE}$}} \\
  \hline
  $\Omega$, $\overline{\Omega}$, $\partial \Omega$
  & Computational porous domain, its set closure,
  and its boundary\\
  $\mathbf{u}_1$, $\mathbf{u}_2$, $p_1$,
  $p_2$ & Velocity and pressure solution
  fields in the two pore-networks \\
  $k_1$, $k_2$ & Permeabilities
  in the two pore-networks \\
  $\gamma$, $\mu$ & True density
  and coefficient of viscosity of the fluid \\
  $\mathbf{b}_1$, $\mathbf{b}_2$ & Specific body forces
  in the pore-networks \\
  $\widehat{\mathbf{n}}(\mathbf{x})$ &
  Unit outward normal vector at $\mathbf{x}
  \in \partial \Omega$ \\
  $u_{n1}$, $u_{n2}$, $p_{01}$, 
  $p_{02}$ & Prescribed velocities and pressures,
  Eqs.~\eqref{Eqn:DG_GE_vBC_1}--\eqref{Eqn:DG_GE_Darcy_pBC_2} \\
  $\Gamma^u_1$, $\Gamma^u_2$, $\Gamma^p_1$, $\Gamma^p_2$
  & Velocity and pressure boundaries, 
  Eq.~\eqref{Eqn:well_posedness_BCs} \\
  $\chi$ & Mass exchange across the pore-networks,
  Eq.~\eqref{Eqn:DG_GE_Darcy_MT}\\
  $\beta$ & Parameter in the inter-pore mass transfer,
  Eq.~\eqref{Eqn:DG_GE_Darcy_MT} \\
  $\eta$ & Flow characterization parameter in the
  DPP model, Eq.~\eqref{Eqn:flow_characterization_param}\\
  \hline
  \multicolumn{2}{|c|}{\emph{Mesh-related quantities,
    $\mathsection \ref{Sec:S2_DG_GE}$}} \\
  \hline
  $Nele$ & Number of subdomains (elements) \\
  $\omega^{i}$, $\partial \omega^{i}$ & The $i$-th
  subdomain and its boundary $(i = 1, \cdots, Nele)$ \\
  $\widetilde{\Omega}$ & Union of all open subdomains,
  Eq.~\eqref{Eqn:DG_Omega_tilde} \\
  $\mathcal{E}$, $\mathcal{E}^{\mathrm{int}}$ &
  Sets of all and interior edges, respectively \\
  $\Upsilon$ & A typical edge (i.e., $\Upsilon \in \mathcal{E}$ or
  $\Upsilon \in \mathcal{E}^{\mathrm{int}}$) \\ 
  $\Gamma^{\mathrm{int}}$ & Union of internal boundaries  \\
  $h$ & Mesh-size, Eq.~\eqref{Eqn:DG_mesh_size} \\
  $h_{\omega}$ & Element diameter of $\omega$,
  $\mathsection \ref{Subsec:DG_mesh_related}$ 
  \& \textbf{Fig.~\ref{Fig:DPP_element_diameter}} \\
  $h_{\omega}^{\mathrm{inc}}$ & Diameter of
  the inscribed circle in $\omega$,
  $\mathsection \ref{Subsec:DG_mesh_related}$ 
  \& \textbf{Fig.~\ref{Fig:DPP_element_diameter}} \\
  $h_{\Upsilon}$ & Characteristic length of an edge,
  Eq.~\eqref{Eqn:DPP_h_Upsilon} \\
  $\mathcal{T}$, $\mathcal{T}_{h}$ & A mesh, and a mesh
  with mesh-size $h$ \\
  \hline
  \multicolumn{2}{|c|}{\emph{Symbols in the proposed
      DG formulation,
    $\mathsection \ref{Sec:S3_DG_Mixed}$}} \\
  \hline
  $(\cdot;\cdot)_{\mathcal{K}}$, $(\cdot;\cdot)$
  & $L_2$ inner-products over $\mathcal{K}$
  and $\widetilde{\Omega}$, respectively \\
  $\|\cdot\|_{\mathcal{K}}$, $\|\cdot\|$
  & $L_2$ norms over $\mathcal{K}$ and
  $\widetilde{\Omega}$, respectively \\
  $\mathbf{w}_1$, $\mathbf{w}_2$, $q_1$, 
  $q_2$ & Weighting functions for velocities and pressures \\
  ${\mathop{\mathbf{u}}^{\star}}_{1}$, 
  ${\mathop{\mathbf{u}}^{\star}}_{2}$, ${\mathop{p}^{\star}}_{1}$, ${\mathop{p}^{\star}}_{2}$ & Numerical fluxes,
  Eqs.~\eqref{Eqn:DPP2_NF_p1}--\eqref{Eqn:DPP2_NF_u2} \\
  $\{\!\!\{\cdot\}\!\!\}$, $\llbracket\cdot\rrbracket$ &
  Average and jump operators,
  Eqs.~\eqref{Eqn:DPP_jump_and_avg_scalar} \&
  \eqref{Eqn:DPP_restrictions_vector} \\
  $\eta_{u}$, $\eta_{p}$ & Stabilization parameters for
  jumps in velocities and pressures across interior
  edges, respectively;
  Eqs.~\eqref{Eqn:DG_stab_lambda_choices}--\eqref{sub:Stabilized_mixed} \\
  \hline
  \multicolumn{2}{|c|}{\emph{Constants in various estimates}} \\
  \hline
  $\mathcal{C}_{\mathrm{drag},1}$, 
  $\mathcal{C}_{\mathrm{drag},2}$ 
  &Bounds on drag coefficients, 
  Eq.~\eqref{Eqn:DG_drag_bounds} \\
  $\mathcal{C}_{\mathbf{e}_{\mathbf{u}_1}}$, 
  $\mathcal{C}_{\mathbf{e}_{\mathbf{u}_2}}$ 
  & Eqs.~\eqref{Eqn:DG_C_eu1} \& 
  \eqref{Eqn:DG_C_eu2} \\
  $\mathcal{C}_{\mathrm{int}}$ & Constant
  in standard estimate for interpolation error,
  Eq.~\eqref{Eqn:DG_standard_estimate_for_int_error} \\
  $\mathcal{C}_{\mathrm{inv}}$ & Constant
  in discrete inverse inequality, 
  Eq.~\eqref{Eqn:DDP_discrete_inverse_inequality} \\
  $\mathcal{C}_{\mathrm{lqu}}$ & Locally quasi-uniform
  coefficient,
  $\mathsection \ref{Subsec:DG_mesh_related}$ \& 
  Eq.~\eqref{Eqn:lqu_condition} \\
  $\mathcal{C}_{\mathrm{sp}}$ & Shape parameter,
  $\mathsection \ref{Subsec:DG_mesh_related}$ 
  \& Eq.~\eqref{Eqn:shape_parameter_condition} \\
  $\mathcal{C}_{\mathrm{trace}}$ & Constant
  in continuous trace inequality, 
  Eqs.~\eqref{Eqn:Trace_inequality_exp}
  \& \eqref{Eqn:Trace_inequality_exp2} \\
  \hline
  \multicolumn{2}{|c|}{\emph{Other symbols}} \\
  \hline
  $\mathscr{P}^{m}(\omega)$ & Set of all
  polynomials over $\omega$ up to and
  including $m$-th order,
  $\mathsection\ref{Subsec:DG_functional_analysis}$ \\
    $c$, $D$ & Concentration and diffusivity, 
  $\mathsection \ref{Sec:S8_DG_NR}$ \\
   $m(\omega)$ &  Net rate of volumetric flux from element $\omega$, 
   $\mathsection \ref{Sec:Element_wise_MB}$ \\
   $m^{\mathrm{out}}_{\mathrm{max}}$ &  Maximum element-wise mass outflow flux, 
   Eq.~\eqref{Eqn:Max_mass_outFlux}\\
   $m^{\mathrm{in}}_{\mathrm{max}}$ &  Maximum element-wise mass inflow flux, 
   Eq.~\eqref{Eqn:Max_mass_inFlux}\\
  \hline
\end{longtable*}

\section{INTRODUCTION AND MOTIVATION}
This paper presents a discontinuous Galerkin
version of the continuous stabilized mixed
formulation proposed recently by
\citep{Nakshatrala_Joodat_Ballarini_P2}
for the double porosity/permeability (DPP)
mathematical model. The DPP model describes
the flow of a single-phase incompressible
fluid in a rigid porous medium with two
distinct pore-networks with possible mass
transfer across them. A derivation of the
DPP model using the theory of interacting
continua and continuum thermomechanics along
with the mathematical properties that the
solutions of this model satisfy are presented
in \citep{Nakshatrala_Joodat_Ballarini}.

The motivation for this work is twofold. 
\emph{First}, due to the increasing interest
in geo-materials with multiple pore-scales
(e.g., vuggy carbonates and shales) and
the development of new synthetic
complex porous materials using advanced
manufacturing techniques, understanding
the flow of fluids in such porous materials
is currently an active area of research.
Darcy equations, which are commonly
used for modeling flow of fluids in
a porous medium with one single pore-network, are not
adequate to accurately describe
the flow dynamics in porous media
with multiple pore-networks.
Hence, it is required to develop
new mathematical models and computational
tools which can accurately capture the
flow characteristics in complex porous media
consisting of multiple pore-networks with
possible mass transfer across them.
For this purpose, \citep{Nakshatrala_Joodat_Ballarini}
have recently proposed a mathematical model,
which is capable of considering fluid flow
through two pore-networks. This mathematical
model will be referred to as the DPP model, which
forms a basis for the current paper. 

The \emph{second} motivation behind the
current paper is that the continuous
Galerkin (CG) based formulations suffer
from the so-called Gibbs phenomenon\footnote{
  Traditionally, the Gibbs phenomenon is the
  manifestation of overshoots and undershoots
  in the representation of a simple discontinuity
  using the Fourier series.
  This phenomenon was first observed by
  \citep{wilbraham1848certain}.
  A mathematical explanation was later provided
  in the papers \citep{gibbs1898fourier} and
  \citep{gibbs1899fourier}; the former paper had a mistake
  which was corrected in the later. 
  However, contrary to the traditional
  belief, one can observe undershoots and
  overshoots even when non-trigonometric
  functions are employed to approximate
  a simple discontinuous function in a
  least-squares sense. In particular,
  the ``Gibbs phenomenon'' can occur
  even under a piece-wise linear approximation
  \citep{foster1991gibbs}.  
} when
applied to problems with highly heterogeneous
medium properties such as layered media; which
manifests in the form of spurious oscillations
(overshoots and undershoots) at the interface
of a sharp change in medium properties (e.g.,
permeability).
\citep{Hughes_Masud_Wan_2006} have clearly
demonstrated that conventional continuous
finite element methods for \emph{Darcy
  equations} fall short in accurately
capturing jumps in the solution fields
at the location of material discontinuities. 
Since disparate
medium properties are frequently encountered 
in subsurface modeling, the stabilized mixed
four-field CG formulation recently proposed for DPP model
\citep{Nakshatrala_Joodat_Ballarini_P2}
will not be able to accurately capture
the velocity profiles in highly heterogeneous porous media and
will not suffice for realistic subsurface
modeling. This will be clearly demonstrated
using numerical simulations in a later
section of this paper.
We, therefore, develop a stabilized mixed
DG formulation for the DPP model, which
is robust, stable and capable of capturing
possible jumps in the solution fields due
to the existing disparate medium properties.  

It is important to mention that one can also
capture disparate medium properties and satisfy
the LBB \emph{inf-sup} stability condition
\citep{Brezzi_Fortin} by employing
an element from the H(div) family; which include
Raviart-Thomas spaces \citep{Raviart_Thomas_MAFEM_1977_p292},
N\'ed\'elec spaces \citep{nedelec1980mixed}
Brezzi-Douglas-Marini
(BDM) spaces \citep{Brezzi_Douglas_Marini_NumerMath_1985_v47_p217},
Brezzi-Douglas-Fortin-Marini (BDFM)
spaces \citep{Brezzi_Douglas_Fortin_Marini_MMNA_1987_v21_p581} and
Crouzeix-Raviart spaces \citep{crouzeix1973conforming}. 
Although there is an on-going debate on using
H(div) elements vs. DG methods, the later do
enjoy some unique desirable features.
DG methods combine the attractive features of 
both finite element and finite volume methods. 
Application of completely discontinuous basis 
functions in the form of piecewise polynomials
in DG methods provides them with the flexibility 
to support common non-conforming spaces (e.g., 
non-matching grids and hanging nodes, $h$-$p$ 
adaptivity, variable degrees of local interpolations) 
and handle jumps in the profiles of variables \citep{Riviere_Wheeler_2002,Cockburn_2003,Li_Riviere_2015numerical,
  Li_Riviere_2016numerical}.
DG methods also enjoy high parallel efficiency.
Unlike the conventional continuous formulations,
they are known to exhibit better local (or element-wise)
mass balance \citep{Hughes_Masud_Wan_2006,Riviere2008discontinuous}.

The origins of DG methods can be traced back to
\citep{Lions_1968} and \citep{Nitsche_1971_Uber}. 
One of the first successful applications of DG
formulation to solve a practical problem was by 
\citep{Reed_Hill_1973}, which addressed neutron
transport. 
Over the years, DG methods have been successfully
employed to solve hyperbolic PDEs 
\citep{Brezzi_Marini_Suli_2004,PalAbedi_Madhukar_Haber_2016}, 
elliptic PDEs \citep{Douglas_Dupont_1976,
Riviere_Wheeler_Girault_1999,Rusten_Vassilevski_Winther_1996,
Arnold_Brezzi_Cockburn_Marini_2002,Barrios_Bustinza_2007,Cockburn_Guzman_Wang_2009}, 
parabolic PDEs \citep{Douglas_Dupont_1976,Kulkarni_Rovas_Tortorelli_2007},
coupling algorithms \citep{nakshatrala2009time} 
and space-time finite elements \citep{palaniappan2004spacetime,abedi2006space}.
  Several variants of DG formulations have been
  developed over the years with varying merits
  for each variant. Some popular variants are
  Runge-Kutta DG \citep{cockburn2001runge},
  local DG \citep{castillo2000priori}, embedded
  DG \cite{guzey2007embedded}, compact DG
  \citep{peraire2008compact}, hybridizable
  DG \citep{cockburn2009unified} and
  adjoint-type variational multiscale
  DG \citep{Hughes_Masud_Wan_2006,badia2010stabilized}.
  Although these variants may look very
  different, a unified framework
  has been laid out by \citep{Arnold_Brezzi_Cockburn_Marini_2002},
  to derive DG methods systematically,
  and these methods differ in their choices of
  numerical fluxes. 
  However, to the best of authors' knowledge,
  there is no clear cut winner among
  these variants.

In this paper we employ the adjoint-type variational
multiscale approach to develop a stabilized mixed four-field DG formulation
for the DPP model. In order to circumvent the LBB
\emph{inf-sup} stability condition we add residual-based, adjoint-type stabilization terms
defined over the elements. In order to avoid Gibbs
phenomenon and at the same time maintain stability,
we choose appropriate and consistent numerical fluxes,
which are in the form of jumps and averages of the
medium properties and solution fields.
The resulting stabilized mixed DG formulation
enjoys several attractive features, which
include:
(i) The formulation is capable of eliminating the 
spurious numerical instabilities in the profiles 
of solutions and capturing the existing jumps in 
the material properties. (ii) Equal-order interpolation for all the
field variables, which is computationally preferred, is stable.
(iii) The formulation is mathematically shown
to be consistent, stable, and hence
convergent.
(iv) \emph{A priori} error estimation is
systematically obtained.
(v) The DG formulation exhibits improved
element-wise mass balance compared to its
continuous counterpart.
(vi) The formulation can be utilized to
capture physical instabilities in heterogeneous
porous media and to eliminate numerical
instabilities at the same time. 

The rest of this paper is organized as follows: 
Background material and preliminaries (including
the governing equations of the mathematical model)
are provided in Section \ref{Sec:S2_DG_GE}. 
The proposed stabilized mixed DG formulation is
presented in Section \ref{Sec:S3_DG_Mixed}.
A systematic convergence analysis and the error
estimation of the proposed DG formulation
are carried out in Section \ref{Sec:S4_DG_Error}.
Results of constant flow patch tests along with 
a sensitivity study on the stabilization parameters are presented 
in Section \ref{Sec:S5_DG_Patch_tests}.
Numerical convergence analysis and structure preserving properties are provided in Sections
\ref{Sec:S6_DG_Numerical_convergence} and \ref{Sec:S7_DG_Structure_preserving}, respectively.
In Section \ref{Sec:S8_DG_NR}, the proposed DG
formulation is implemented to study viscous-fingering-type
physical instabilities in heterogeneous porous
media with double pore-networks.
Finally, conclusions are drawn in 
Section \ref{Sec:S9_DG_CR}.

Throughout this paper, repeated indices do not imply summation.

\section{BACKGROUND MATERIAL AND PRELIMINARIES}
\label{Sec:S2_DG_GE}
\subsection{Governing
  equations\footnote{This subsection
    on the governing equations, which will
    be similar to our earlier papers
    \citep{Nakshatrala_Joodat_Ballarini,
    Nakshatrala_Joodat_Ballarini_P2},
    is provided to make this paper
    self-contained and for easy referencing.}}
The DPP model deals with the flow of
a single-phase incompressible fluid
through a rigid porous medium with two
pore-networks exhibiting different
hydromechanical properties.
We refer to these two pore-networks as
macro-pore and micro-pore networks,
which are denoted by subscripts $1$
and $2$, respectively.
We denote the porous domain by $\Omega
\subset \mathbb{R}^{nd}$, where ``$nd$''
represents the number of spatial dimensions.
For a precise mathematical treatment, we
assume that $\Omega$ is an open bounded
domain. The boundary $\partial \Omega =
\overline{\Omega} - \Omega$ is assumed
to be smooth, where the superposed bar
denotes the set closure.  
A spatial point is denoted by $\mathbf{x} \in \overline{\Omega}$.
The gradient operator with respect to $\mathbf{x}$
is denoted by $\mathrm{grad}[\cdot]$ and the corresponding 
divergence operator is denoted by $\mathrm{div}[\cdot]$. 
The unit outward normal to the boundary is denoted
by $\widehat{\mathbf{n}}(\mathbf{x})$.
The pressure and the discharge
(or Darcy) velocity fields in the macro-pore network
are, respectively, denoted by $p_{1}(\mathbf{x})$ and
$\mathbf{u}_{1}(\mathbf{x})$, and the corresponding
fields in the micro-pore network are denoted by
$p_{2}(\mathbf{x})$ and $\mathbf{u}_{2}(\mathbf{x})$.
We denote the viscosity and true density of the
fluid by $\mu$ and $\gamma$, respectively.

The abstract boundary value problem under
the DPP model takes the following form:
Find $\mathbf{u}_{1}(\mathbf{x})$,
$\mathbf{u}_{2}(\mathbf{x})$,
$p_{1}(\mathbf{x})$ and
$p_{2}(\mathbf{x})$ such that 
\begin{subequations}
  \label{Eqn:DPP_GE}
  \begin{alignat}{2}
    \label{Eqn:DG_GE_Darcy_BLM_1}
    &\mu k_{1}^{-1} \mathbf{u}_1(\mathbf{x})
    + \mathrm{grad}[p_1(\mathbf{x})]
    = \gamma \mathbf{b}(\mathbf{x})
    &&\quad \mathrm{in} \; \Omega \\
    \label{Eqn:DG_GE_Darcy_BLM_2}
    &\mu k_{2}^{-1} \mathbf{u}_2(\mathbf{x})
    + \mathrm{grad}[p_2(\mathbf{x})]
    = \gamma \mathbf{b}(\mathbf{x})
    &&\quad \mathrm{in} \; \Omega \\
    \label{Eqn:DG_GE_Darcy_mass_balance_1}
    &\mathrm{div}[\mathbf{u}_1(\mathbf{x})] = +\chi(\mathbf{x})
    &&\quad \mathrm{in} \; \Omega \\
    \label{Eqn:DG_GE_Darcy_mass_balance_2}
    &\mathrm{div}[\mathbf{u}_2(\mathbf{x})] = -\chi(\mathbf{x})
    &&\quad \mathrm{in} \; \Omega \\
    \label{Eqn:DG_GE_Darcy_MT}
    &\chi(\mathbf{x}) = -\frac{\beta}{\mu}
    (p_1(\mathbf{x}) - p_2(\mathbf{x}))
    &&\quad \mathrm{in} \; \Omega \\
    \label{Eqn:DG_GE_vBC_1}
    &\mathbf{u}_1(\mathbf{x}) \cdot
    \widehat{\mathbf{n}}(\mathbf{x})
    = u_{n1}(\mathbf{x})
    &&\quad \mathrm{on} \; \Gamma^{u}_{1} \\
    \label{Eqn:DG_GE_vBC_2}
    &\mathbf{u}_2(\mathbf{x}) \cdot
    \widehat{\mathbf{n}}(\mathbf{x})
    = u_{n2}(\mathbf{x}) 
    &&\quad \mathrm{on} \; \Gamma^{u}_{2} \\ 
    \label{Eqn:DG_GE_Darcy_pBC_1}
    &p_1(\mathbf{x}) = p_{01} (\mathbf{x})
    &&\quad \mathrm{on} \; \Gamma^{p}_{1} \\
    \label{Eqn:DG_GE_Darcy_pBC_2}
    &p_2(\mathbf{x}) = p_{02} (\mathbf{x})
    &&\quad \mathrm{on} \; \Gamma^{p}_{2} 
  \end{alignat}
\end{subequations}
where $k_1(\mathbf{x})$ and
$k_2(\mathbf{x})$, respectively,
denote the (isotropic) permeabilities
of the macro-pore and micro-pore
networks, $\mathbf{b}(\mathbf{x})$
denotes the specific body force,
and $\beta$ is a dimensionless
characteristic of the porous medium.
$\chi(\mathbf{x})$ accounts for the
mass exchange across the pore-networks
and is the rate of volume transfer of
the fluid between the two pore-networks per unit
volume of the porous medium. The
dimension of $\chi(\mathbf{x})$
is one over the time
[$\mathrm{M}^{0}\mathrm{L}^{0}\mathrm{T}^{-1}$].
$\Gamma_{i}^{u}$ denotes that part
of the boundary on which the normal
component of velocity is prescribed
in the macro-pore ($i=1$) and micro-pore
($i=2$) networks, and $u_{n1}(\mathbf{x})$
and $u_{n2}(\mathbf{x})$ denote the prescribed
normal components of the velocities on
$\Gamma^{u}_{1}$ and $\Gamma^{u}_{2}$,
respectively.  
$\Gamma_{i}^{p}$ is that part of the
boundary on which the pressure is
prescribed in the macro-pore ($i=1$)
and micro-pore ($i=2$) networks, and
$p_{01}(\mathbf{x})$ and $p_{02}(\mathbf{x})$
denote the prescribed pressures on
$\Gamma_{1}^{p}$ and $\Gamma_{2}^{p}$,
respectively.

For mathematical well-posedness, we
assume that
\begin{align}
  \Gamma_{1}^{u} \cup \Gamma_{1}^{p} =
  \partial \Omega, \quad
  \Gamma_{1}^{u} \cap \Gamma_{1}^{p} = \emptyset, \quad 
  \Gamma_{2}^{u} \cup \Gamma_{2}^{p} = \partial \Omega,  
  \quad \mathrm{and} \quad
  \Gamma_{2}^{u} \cap \Gamma_{2}^{p} = \emptyset
  \label{Eqn:well_posedness_BCs}
\end{align}
However, if $\Gamma_{1}^{p} = \emptyset$
and $\Gamma_{2}^{p} = \emptyset$ hold
simultaneously then one will be able
to find the pressure in each pore-network
only up to an arbitrary constant. 
We assume that the drag coefficients
in the two pore-networks, $\mu/k_1$
and $\mu/k_2$, are bounded below and
above. That is, 
\begin{align}
  0 < 
  \inf_{\mathbf{x} \in \Omega} \frac{\mu}{ k_i(\mathbf{x})} 
  \leq 
  \sup_{\mathbf{x} \in \Omega} \frac{\mu}{ k_i(\mathbf{x})}
  < +\infty
  \qquad i = 1, 2
\end{align}
This also means that there exist
two non-dimensional constants $1 \leq 
\mathcal{C}_{\mathrm{drag},1}, \;
\mathcal{C}_{\mathrm{drag},2} < +\infty$
where 
\begin{align}
  \label{Eqn:DG_drag_bounds}
  \mathcal{C}_{\mathrm{drag},1} := 
  \left(\sup_{\mathbf{x} \in \Omega} \frac{\mu}{ k_1(\mathbf{x})} \right)
  \left(\inf_{\mathbf{x} \in \Omega} \frac{\mu}{ k_1(\mathbf{x})} \right)^{-1}
  \quad \mathrm{and} \quad 
  \mathcal{C}_{\mathrm{drag},2} := 
  \left(\sup_{\mathbf{x} \in \Omega} \frac{\mu}{ k_2(\mathbf{x})} \right)
  \left(\inf_{\mathbf{x} \in \Omega} \frac{\mu}{ k_2(\mathbf{x})} \right)^{-1} 
\end{align}

\subsection{Geometrical definitions}
The domain is partitioned into ``$Nele$''
subdomains, which will be elements in the
context of the finite element method. These
elements form a mesh on the domain.
Mathematically, a mesh $\mathcal{T}$
on $\Omega$ is a finite collection
of disjoint polyhedra $\mathcal{T}
= \{\omega^{1}, \cdots, \omega^{Nele}\}$
such that
\begin{align}
  \label{Eqn:DG_FE_decomposition}
  \overline{\Omega} = \bigcup_{i = 1}^{Nele}
  \overline{\omega}^{i}
\end{align}
(Recall that an overline denotes the set closure.) 
We refer to $\omega^{i}$ as the
$i$-th subdomain (element). The
union of all open subdomains is
denoted by
\begin{align}
  \label{Eqn:DG_Omega_tilde}
  \widetilde{\Omega} =
  \bigcup_{i=1}^{Nele} \omega^{i} 
\end{align}
with the understanding that an
integration over $\widetilde{\Omega}$
is interpreted as follows:
\begin{align}
  \label{Eqn:DPP_broken_integral}
  \int_{\widetilde{\Omega}} (\cdot)
  \mathrm{d} \Omega =
  \sum_{i=1}^{Nele} \int_{\omega^{i}}
  (\cdot) \mathrm{d} \Omega 
\end{align}

The boundary of element $\omega^{i}$
is denoted by $\partial \omega^{i} :=
\overline{\omega}^i - \omega^i$.
The set of all edges{\footnote{For
    simplicity, we use ``edge'' to
    refer to a node in 1D, an edge
    in 2D and a face in 3D in the
    entire paper. The context will
    be clear from the particular
    discussion.}} in the mesh is
denoted by $\mathcal{E}$ and the
set of all interior edges is denoted
by $\mathcal{E}^{\mathrm{int}}$. 
The entire boundary of the skeleton
of the mesh (i.e, the union of all
the interior and exterior edges) is
denoted by
\begin{align}
  \Gamma =
  \bigcup_{\Upsilon \in \mathcal{E}} \Upsilon
  \equiv \bigcup_{i = 1}^{Nele} \partial \omega^{i} 
\end{align}
The entire interior boundary (i.e.,
the union of all the interior edges)
is denoted by 
\begin{align}
  \label{Eqn:DPP_Gamma_int}
  \Gamma^{\mathrm{int}} = 
  \bigcup_{\Upsilon \in \mathcal{E}^{\mathrm{int}}}
  \Upsilon
  \equiv \Gamma \setminus \partial \Omega
\end{align}
Similar to the broken integral
over $\widetilde{\Omega}$ (i.e.,
equation \eqref{Eqn:DPP_broken_integral}),
the integral over $\Gamma^{\mathrm{int}}$
should be interpreted as follows:
\begin{align}
  \int_{\Gamma^{\mathrm{int}}}(\cdot)\mathrm{d} \Gamma 
  = \sum_{\Upsilon \in \mathcal{E}^{\mathrm{int}}}
  \int_{\Upsilon} (\cdot) \mathrm{d} \Gamma 
\end{align}

\subsection{Average and jump operators}
Consider an interior edge $\Upsilon \in
\mathcal{E}^{\mathrm{int}}$. We denote the
elements that juxtapose $\Upsilon$ by
$\omega_{\Upsilon}^{+}$ and $\omega_{\Upsilon}^{-}$. 
The unit normal vectors on this interior
edge pointing outwards to $\omega^{+}_{\Upsilon}$
and $\omega^{-}_{\Upsilon}$ are, respectively,
denoted by $\widehat{\mathbf{n}}^{+}_{\Upsilon}$
and $\widehat{\mathbf{n}}^{-}_{\Upsilon}$ (see
\textbf{Fig.~\ref{Fig:DG_domain_decomposition}}). 
The average $\{\!\!\{\cdot\}\!\!\}$
and jump $\llbracket \cdot \rrbracket$
operators on $\Upsilon$ for a scalar
field $\varphi(\mathbf{x})$ are,
respectively, defined as follows: 
\begin{align}
  \label{Eqn:DPP_jump_and_avg_scalar}
  \{\!\!\{\varphi\}\!\!\} :=
  \frac{1}{2} \left(\varphi^{+}_{\Upsilon}(\mathbf{x}) 
  + \varphi^{-}_{\Upsilon}(\mathbf{x})\right)
  \quad \mathrm{and} \quad 
  \llbracket \varphi \rrbracket :=
  \varphi^{+}_{\Upsilon}(\mathbf{x})
  \widehat{\mathbf{n}}^{+}_{\Upsilon}(\mathbf{x}) 
  + \varphi^{-}_{\Upsilon}(\mathbf{x})
  \widehat{\mathbf{n}}^{-}_{\Upsilon}(\mathbf{x})
  \qquad \forall \mathbf{x} \in \Upsilon
\end{align}
where $\varphi^{+}_{\Upsilon}(\mathbf{x})$
and $\varphi^{-}_{\Upsilon}(\mathbf{x})$ are
the restrictions of $\varphi(\mathbf{x})$
onto the elements $\omega^{+}_{\Upsilon}$
and $\omega^{-}_{\Upsilon}$, respectively. 
Mathematically, 
\begin{align}
  \label{Eqn:DPP_restrictions_scalar}
  \varphi^{+}_{\Upsilon}(\mathbf{x}) :=
  \varphi(\mathbf{x})\big|_{\partial \omega^{+}_{\Upsilon}}
  \quad \mathrm{and} \quad
  \varphi^{-}_{\Upsilon}(\mathbf{x}) :=
  \varphi(\mathbf{x}) \big|_{\partial \omega^{-}_{\Upsilon}}
  \quad \forall \mathbf{x} \in \Upsilon
\end{align}
For a vector field $\boldsymbol{\tau}
(\mathbf{x})$, these operators on
$\Upsilon$ are defined as follows:
\begin{align}
  \label{Eqn:DPP_restrictions_vector}
  \{\!\!\{ \boldsymbol{\tau} \}\!\!\} :=
  \frac{1}{2} \left(\boldsymbol{\tau}^{+}_{\Upsilon}
  (\mathbf{x}) + \boldsymbol{\tau}^{-}_{\Upsilon}
  (\mathbf{x})\right)
  \quad \mathrm{and} \quad
  \llbracket \boldsymbol{\tau} \rrbracket :=
  \boldsymbol{\tau}^{+}_{\Upsilon}(\mathbf{x}) \cdot
  \widehat{\mathbf{n}}^{+}_{\Upsilon}(\mathbf{x})
  + \boldsymbol{\tau}^{-}_{\Upsilon}(\mathbf{x}) \cdot
  \widehat{\mathbf{n}}^{-}_{\Upsilon}(\mathbf{x})
  \qquad \forall \mathbf{x} \in \Upsilon
\end{align}
where $\boldsymbol{\tau}^{+}_{\Upsilon}(\mathbf{x})$
and $\boldsymbol{\tau}^{-}_{\Upsilon}(\mathbf{x})$
are defined similar to equation
\eqref{Eqn:DPP_restrictions_scalar}. 
It is important to note that the jump
operator acts on a scalar field to
produce a vector field and vice-versa. 
It is also important to note that the
above definitions are independent of
the ordering of the elements. 
The following identity will be
used in the rest of this paper:
\begin{align}
  \label{Eqn:DG_jump_avg_identity}
  \llbracket \varphi \boldsymbol{\tau} \rrbracket 
  = \llbracket \boldsymbol{\tau} \rrbracket
  \{\!\!\{ \varphi \}\!\!\} 
  + \{\!\!\{ \boldsymbol{\tau} \}\!\!\}
  \cdot \llbracket \varphi \rrbracket
\end{align}

\begin{figure}
  \includegraphics[scale=0.62]{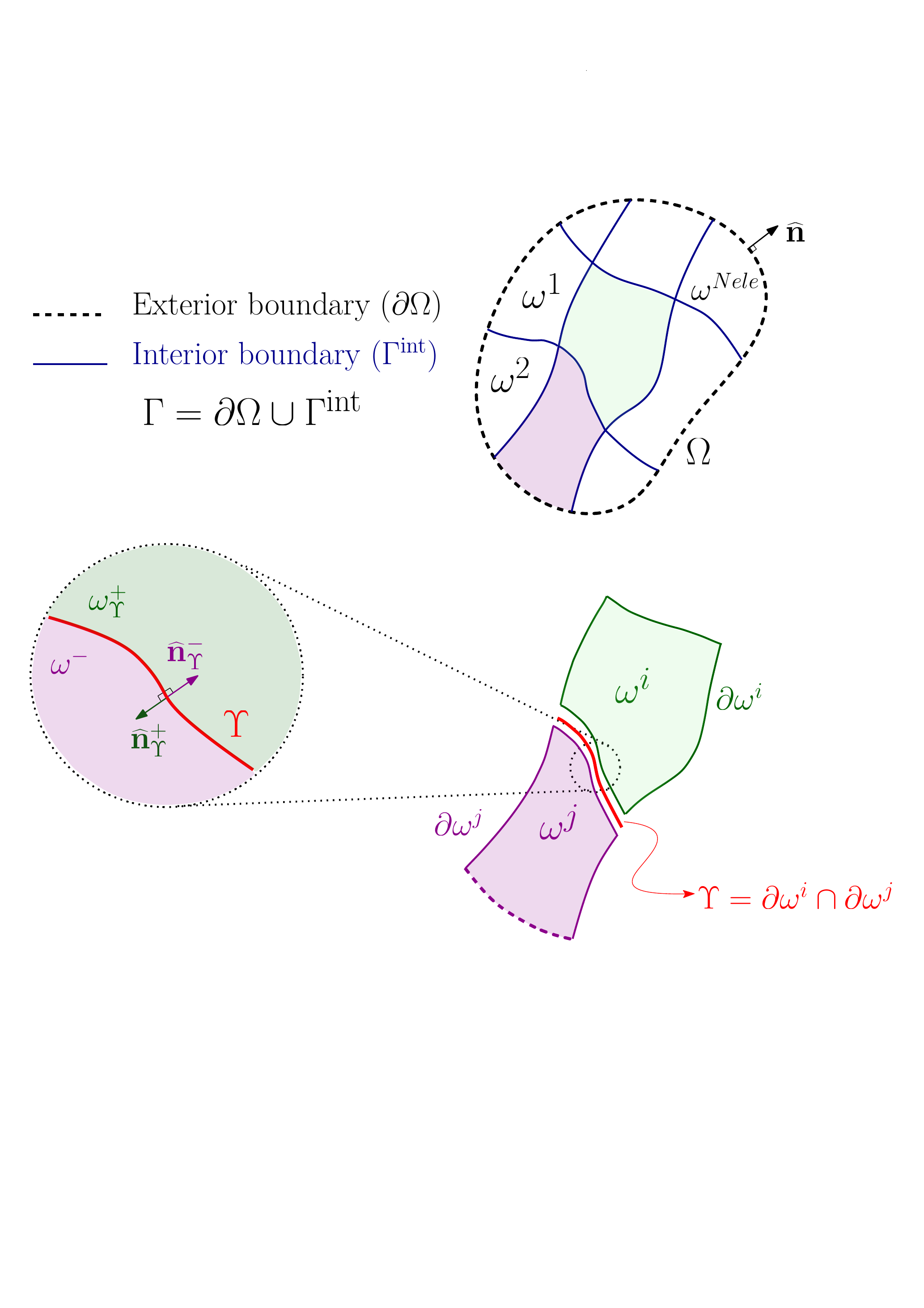}
  \caption{This figure shows the decomposition of
  the domain into subdomains (finite elements).
  External ($\partial \Omega$) and internal
  ($\Gamma^{\mathrm{int}}$) boundaries of the
  domain, the shared interface ($\Upsilon$)
  between two adjacent elements, as well as
  normal vectors to the boundaries are shown.}
  \label{Fig:DG_domain_decomposition}
\end{figure}

\subsection{Mesh-related quantities}
\label{Subsec:DG_mesh_related}
We denote the \emph{element diameter}
(i.e., the length of the largest edge)
of $\omega \in \mathcal{T}$ by $h_{\omega}$. 
The maximum element diameter in a given mesh
is referred to as \emph{the mesh-size} and
is denoted by:
\begin{align}
  \label{Eqn:DG_mesh_size}
  h := \max_{\omega \in \mathcal{T}} h_{\omega}
\end{align}
We denote the \emph{diameter of the inscribed
  circle} in $\omega \in \mathcal{T}$ by
$h_{\omega}^{\mathrm{inc}}$ (see
\textbf{Fig.~\ref{Fig:DPP_element_diameter}}). 
For an internal edge $\Upsilon \in
\mathcal{E}^{\mathrm{int}}$, shared by 
elements $\omega^{+}_{\Upsilon}$ and
$\omega^{-}_{\Upsilon}$, we define the
characteristic length $h_{\Upsilon}$
as follows:
\begin{align}
  \label{Eqn:DPP_h_Upsilon}
  h_{\Upsilon} = \frac{1}{2}
  \left(h_{\omega^{+}_{\Upsilon}} + 
  h_{\omega^{-}_{\Upsilon}}\right)
\end{align}
For an external edge $\Upsilon \in \mathcal{E}
\setminus \mathcal{E}^{\mathrm{int}}$, $h_{\Upsilon}$
is set to be equal to the element diameter
of the element containing the edge $\Upsilon$. 

\begin{figure}
  \includegraphics[scale=0.62]{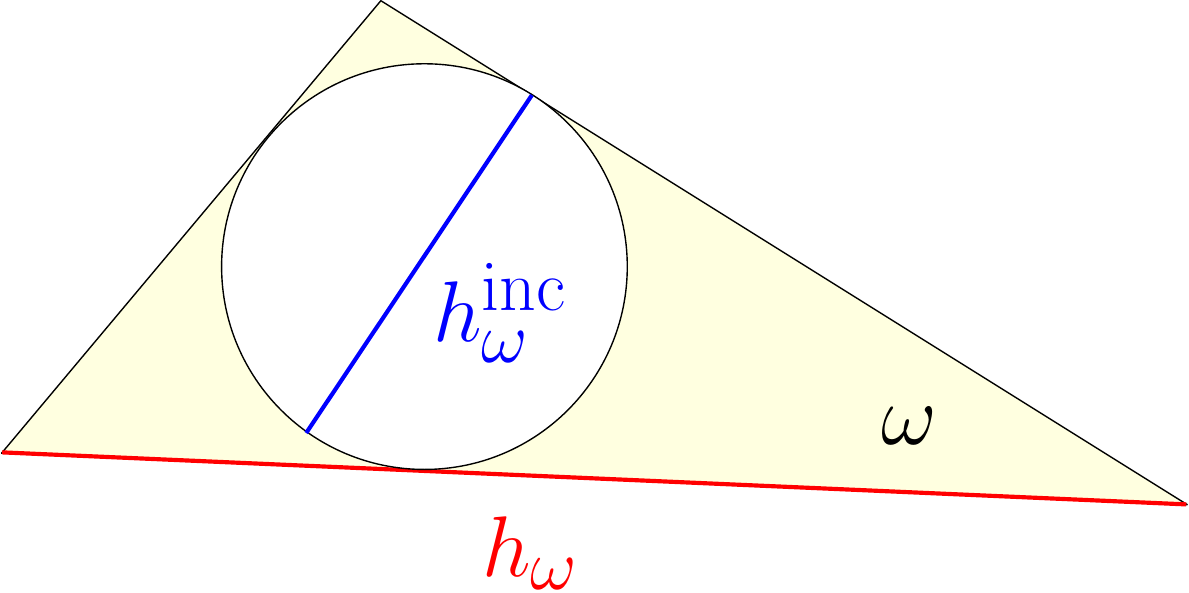}
  \caption{This figure illustrates the element diameter
    parameter $h_{\omega}$ and the diameter of the
    inscribed circle $h_{\omega}^{\mathrm{inc}}$ for
    a typical element $\omega \in \mathcal{T}$.}
  \label{Fig:DPP_element_diameter}
\end{figure}

We place two restrictions on a mesh, and
we refer to a mesh satisfying these
two restrictions as an \emph{admissible
  mesh}.
\begin{enumerate}[(i)]
\item The mesh is \emph{shape regular}
  \citep{braess2007finite}, which means
  that there exists a constant number
  $\mathcal{C}_{\mathrm{sp}}$ such that 
  \begin{align}
    \label{Eqn:shape_parameter_condition}
    \mathcal{C}_{\mathrm{sp}} h_{\omega}
    \leq h_{\omega}^{\mathrm{inc}} 
    \quad \forall \omega \in \mathcal{T}
  \end{align}
  The constant $\mathcal{C}_{\mathrm{sp}}$
  is commonly referred to as the
  \emph{shape parameter}. 
\item The mesh is 
  \emph{locally quasi-uniform}, which
  also goes by the name \emph{contact
    regularity} \citep{Dolejsi_Feistauer}.
  This condition requires that the element
  diameters of any two neighboring elements
  obey an equivalence relation.
  That is, there exists a constant
  number $\mathcal{C}_{\mathrm{lqu}} > 0$
  such that
  \begin{align}
    \label{Eqn:lqu_condition}
    \frac{1}{\mathcal{C}_{\mathrm{lqu}}}
    h_{\omega^{+}_{\Upsilon}} \leq
    h_{\omega^{-}_{\Upsilon}} \leq
    \mathcal{C}_{\mathrm{lqu}} h_{\omega^{+}_{\Upsilon}}
    \quad \forall \Upsilon
    \in \mathcal{E}^{\mathrm{int}} 
  \end{align}
  The ordering of the neighboring
  elements (i.e., which element is
  ``$+$'' and which one is ``$-$'') 
  in the above inequality is arbitrary.
  This means that the above inequality
  holds even if $\omega_{\Upsilon}^{+}$
  and $\omega_{\Upsilon}^{-}$ are
  interchanged.
  The locally quasi-uniform condition
  implies the following useful bound: 
  \begin{subequations}
    \label{Eqn:DPP_useful_bound_on_h}
    \begin{align}
    \frac{1}{2} \left(1 + \frac{1}{\mathcal{C}_{\mathrm{lqu}}}
    \right) h_{\omega^{+}_{\Upsilon}} 
    \leq h_{\Upsilon} \leq
    \frac{1}{2} \left(1 + \mathcal{C}_{\mathrm{lqu}}\right)
    h_{\omega^{+}_{\Upsilon}} \quad \forall \Upsilon
    \in \mathcal{E}^{\mathrm{int}} \\
    \frac{1}{2} \left(1 + \frac{1}{\mathcal{C}_{\mathrm{lqu}}}
    \right) h_{\omega^{-}_{\Upsilon}} 
    \leq h_{\Upsilon} \leq
    \frac{1}{2} \left(1 + \mathcal{C}_{\mathrm{lqu}}\right)
    h_{\omega^{-}_{\Upsilon}} \quad \forall \Upsilon
    \in \mathcal{E}^{\mathrm{int}} 
    \end{align}
  \end{subequations}
\end{enumerate}

A mesh $\mathcal{T}$ with mesh-size
$h$ will be denoted by $\mathcal{T}_{h}$.
A sequence of meshes will be denoted by
$\mathcal{T}_{\mathcal{H}}$, where $\mathcal{H}
= (0,\bar{h}$).
$\mathcal{T}_{\mathcal{H}}$ is said to be
an admissible sequence of meshes if
$\mathcal{T}_{h}$ is admissible for
every $h \in \mathcal{H}$. 

\begin{remark}
  There are other notions of characteristic
  mesh sizes which are employed for DG methods.
  For example, an element length scale has
  been employed in \citep{Hughes_Masud_Wan_2006},
  which takes the following
  form under our notation:
  \begin{align}
    \widehat{h} = \frac{\mathrm{meas}(\omega^{+}_{\Upsilon})
      + \mathrm{meas}
      (\omega^{-}_{\Upsilon})}{2\;\mathrm{meas}(\Upsilon)}
  \end{align}
  where $\mathrm{meas}(\cdot)$
  denotes the measure of a set. 
  A good discussion on various mesh-based
  characteristic lengths can be found in
  \citep{Dolejsi_Feistauer}.
\end{remark}

\subsection{Functional analysis aspects}
\label{Subsec:DG_functional_analysis}
We introduce the following broken Sobolev
spaces (which are piece-wise discontinuous
spaces): 
\begin{subequations}
  \label{Eqn:DG_function_spaces}
  \begin{align}
    \label{Eqn:DG_function_space_U}
  \mathcal{U} &:= \left\{\mathbf{u}(\mathbf{x}) \; \big| \;
  \mathbf{u}(\mathbf{x})\big|_{\omega^i} \in \left(L_{2}(\omega^i)
  \right)^{nd}; \; \mathrm{div}[\mathbf{u}] \in L_{2}(\omega^i); \;
  i = 1, \cdots, Nele \right\} \\
  \widetilde{\mathcal{P}} &:= \left\{p(\mathbf{x}) \; \big| \;
  p(\mathbf{x})\big|_{\omega^i} \in L_{2}(\omega^i); \;
  i = 1, \cdots, Nele \right\} \\
    \widetilde{\mathcal{Q}} &:= \left\{p(\mathbf{x}) \; \big| \;
    p(\mathbf{x})\big|_{\omega^i} \in H^{1}(\omega^i); \;
  i = 1, \cdots, Nele \right\} \\
  \mathcal{P} &:= \left\{(p_1(\mathbf{x}),p_2(\mathbf{x}))
  \in \widetilde{\mathcal{P}} \times \widetilde{\mathcal{P}} 
  \; \big| \;
  \left(\int_{\widetilde{\Omega}} p_{1}(\mathbf{x}) \mathrm{d} \Omega\right)
  \left(\int_{\widetilde{\Omega}} p_{2}(\mathbf{x}) \mathrm{d} \Omega\right)
  = 0 \right\} \\
  \mathcal{Q} &:= \left\{(p_1(\mathbf{x}),p_2(\mathbf{x}))
  \in \widetilde{\mathcal{Q}} \times \widetilde{\mathcal{Q}} 
  \; \big| \;
  \left(\int_{\widetilde{\Omega}} p_{1}(\mathbf{x}) \mathrm{d} \Omega\right)
  \left(\int_{\widetilde{\Omega}} p_{2}(\mathbf{x}) \mathrm{d} \Omega\right)
  = 0 \right\} \label{Eqn:Q}
\end{align}
\end{subequations}
where $L_2(\omega^{i})$ denotes the set of
all square-integrable functions defined on
$\omega^{i}$, and $H^1(\omega^{i})$ is a standard Sobolev
space \citep{Evans_PDE}.

\begin{remark}
  The following condition in $\mathcal{P}$
  and $\mathcal{Q}$ spaces (which is expressed
  in terms of the mean pressures in the two
  pore-networks): 
  \begin{align*}
    \left(\int_{\widetilde{\Omega}} p_1(\mathbf{x})
    \mathrm{d} \Omega \right)
    \left(\int_{\widetilde{\Omega}} p_2(\mathbf{x})
    \mathrm{d} \Omega \right) = 0 
  \end{align*}
  is one of the ways to fix the datum for
  the pressure. However, this condition is
  seldom employed in a numerical implementation.
  Alternatively, one can prescribe the pressure
  on a portion of the boundary in one of the
  pore-networks. For further details refer
  to \citep{Nakshatrala_Joodat_Ballarini_P2}.
\end{remark}
  
We denote the standard $L_2$ inner-product over
a set $\mathcal{K}$ by $(\cdot;\cdot)_{\mathcal{K}}$.
That is, 
\begin{align}
  (a;b)_{\mathcal{K}} := \int_{\mathcal{K}}
  a \cdot b \; \mathrm{d} \mathcal{K}
\end{align}
and the associated standard $L_2$ norm
is denoted by $\|\cdot\|_{\mathcal{K}}$
as follows: 
\begin{align}
  \|a\|_{\mathcal{K}} = \sqrt{(a;a)_{\mathcal{K}}}
\end{align}
The subscript in the $L_2$ inner-product
and the associated norm will be dropped
if $\mathcal{K} = \widetilde{\Omega}$. 

In a subsequent section on the interpolation error,
we employ a general order Sobolev semi-norm. To this
end, let $\alpha = (\alpha_1, \cdots, \alpha_{nd}) \in
\mathbb{N}^{nd}$ be a $nd$-tuple (i.e., multi-index),
the order of which is denoted by $|\alpha| :=
\sum_{i=1}^{nd} \alpha_i$.
We denote the multi-index (classical or
distributional) partial derivative by
$D^{\alpha}(\cdot)$.
For a scalar function $\varphi(\mathbf{x}) \in
C^{\infty}_{c}(\mathcal{K})$ (which is a set of
infinitely differentiable functions with compact
support in $\mathcal{K}$) \citep{Evans_PDE}, the
multi-index (classical) partial derivative with
respect to a given coordinate system $\mathbf{x}
= (x_1, \cdots, x_{nd})$ is defined as follows:
\begin{align}
  D^{\alpha} \varphi(\mathbf{x}) :=
  \frac{\partial^{|\alpha|} \varphi(\mathbf{x})}{\partial x_{1}^{\alpha_1}
    \partial x_{2}^{\alpha_2} 
    \cdots \partial x_{nd}^{\alpha_{nd}}}
\end{align}
Then, the multi-index distributional partial
derivative of a scalar field $a : \mathcal{K}
\rightarrow \mathbb{R}$ is defined as follows:
\begin{align}
  \left(D^{\alpha} a(\mathbf{x});\varphi(\mathbf{x})\right)_{\mathcal{K}}
  := (-1)^{|\alpha|}
  \left(a(\mathbf{x}); D^{\alpha}\varphi(\mathbf{x})\right)_{\mathcal{K}}
  \quad \forall \varphi(\mathbf{x}) \in C^{\infty}_{c}(\mathcal{K})
\end{align}
For a scalar field $a:\mathcal{K} \rightarrow
\mathbb{R}$, the $s$-th order Sobolev semi-norm
over $\mathcal{K}$ is defined as follows:
\begin{align}
  \label{Eqn:Sobolev_norm_Scalar}
  |a|_{H^{s}(\mathcal{K})} := \left( \sum_{|\alpha| = s}
  \|D^{\alpha}a(\mathbf{x}) \|^2_{\mathcal{K}} \right)^{1/2}
\end{align}
and for a vector field $\mathbf{a}:\mathcal{K}
\rightarrow \mathbb{R}^{nd}$ with scalar components
$a_i~(i=1, \cdots, nd)$, the corresponding semi-norm
is defined as follows:
\begin{align}
  |\mathbf{a}|_{H^{s}(\mathcal{K})} :=
  \left( \sum_{i = 1}^{nd} |a_i|^2_{H^s(\mathcal{K})}\right)^{1/2}
  \label{Eqn:Sobolev_norm_Vector}
\end{align}

\subsubsection{Inverse and trace
  inequalities\footnote{For these results we assume that the velocity
fields belong to $(H^{1}(\omega))^{nd}$
instead of $H(\mathrm{div},\omega)$, which
was the case in the function space
\eqref{Eqn:DG_function_space_U}. The reason
is that one has to deal with half-Sobolev
spaces and corresponding dual spaces (i.e.,
negative half-spaces) for trace inequalities
under $H(\mathrm{div})$; which makes the
convergence and error analyses more involved.
Moreover, the authors are not aware of any 
discrete trace inequalities available
in the mathematical analysis literature
that can be easily used under half-Sobolev
spaces.}}
The inequalities given below play a crucial role
in obtaining bounds on the error due to terms
defined on the element interface. Mathematical
proofs to these estimates can be found in
\citep{verfurth2013posteriori,arnold1982interior,
  Dolejsi_Feistauer,di2011mathematical}.

\begin{lemma}{(Continuous trace inequality.)}
  \label{Lemma:DDP_continuous_trace_inequality}
  For an admissible mesh $\mathcal{T}_{h}$, the
  following estimates hold $\forall \omega \in
  \mathcal{T}_{h}$:
  \begin{align}
    \label{Eqn:Trace_inequality_exp} 
    &\|v\|_{\partial \omega} \leq
    \mathcal{C}_{\mathrm{trace}} \left(\frac{1}{\sqrt{h}_{\omega}} \| v\|_{\omega}
    + \sqrt{h_{\omega}} \; \| \mathrm{grad}[v]\|_{\omega}\right)
    \quad \forall v(\mathbf{x}) \in H^1(\omega) \\
    \label{Eqn:Trace_inequality_exp2}
    &\| \mathbf{v}\|_{\partial \omega} \leq \mathcal{C}_{\mathrm{trace}}
    \left(\frac{1}{\sqrt{h_{\omega}}} \| \mathbf{v}\|_{\omega}
    + \sqrt{h_{\omega}}\;\|\mathrm{grad}[\mathbf{v}]\|_{\omega}\right) 
    \quad \forall \mathbf{v}(\mathbf{x}) \in
    (H^1(\omega))^{nd}
  \end{align}
  where the $\mathcal{C}_{\mathrm{trace}}$ depends on
  the shape parameter (i.e., $\mathcal{C}_{\mathrm{sp}}$)
  and the number of spatial dimensions ($nd$) but it is 
  \emph{not} dependent on $h_{\omega}$.
\end{lemma}

Let $\mathscr{P}^{m}(\omega)$ denote
the set of all polynomials up to and
including $m$-th order over $\omega
\in \mathcal{T}_{h}$. We then have
the following discrete inequalities. 

\begin{lemma}{(Discrete inverse inequality.)}
  \label{Lemma:DDP_discrete_inverse_inequality}
  Let $\mathcal{T}_{h}$ be an admissible mesh. Then
  the following estimates hold $\forall \omega \in
  \mathcal{T}_h$:
  \begin{align}
    \label{Eqn:DDP_discrete_inverse_inequality}
    &\|\mathrm{grad}[v^{h}]\|_{\omega} \leq
    \mathcal{C}_{\mathrm{inv}} h_{\omega}^{-1}
    \|v^{h}\|_{\omega}
    \quad \forall v^{h}(\mathbf{x}) \in
    H^{1}(\omega) \cap \mathscr{P}^{m}(\omega) \\
    &\|\mathrm{grad}[\mathbf{v}^{h}]\|_{\omega} \leq
    \mathcal{C}_{\mathrm{inv}} h_{\omega}^{-1}
    \|\mathbf{v}^{h}\|_{\omega}
    \quad \forall \mathbf{v}^{h}(\mathbf{x}) \in
    \left(H^{1}(\omega)\right)^{nd} \cap
    \left(\mathscr{P}^{m}(\omega)\right)^{nd}
  \end{align}
  where $\mathcal{C}_{\mathrm{inv}}$ is a constant
  dependent on the shape parameter
  ($\mathcal{C}_{\mathrm{sp}}$), the number of
  spatial dimensions ($nd$) and the polynomial
  order ($m$), but it does not depend on $h_{\omega}$
  or on the fields $v^{h}(\mathbf{x})$ and
  $\mathbf{v}^{h}(\mathbf{x})$.
\end{lemma}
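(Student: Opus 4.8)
The plan is to transplant the inequality to a fixed reference configuration, where it reduces to the equivalence of (semi-)norms on the finite-dimensional space $\mathscr{P}^{m}$, and then merely record how the two sides rescale under the map onto the physical element $\omega$. Since the vector-valued estimate follows from the scalar one applied component-by-component via the definition \eqref{Eqn:Sobolev_norm_Vector}, it suffices to prove the first inequality. Throughout, the shape-regularity hypothesis \eqref{Eqn:shape_parameter_condition} is the only structural property of the mesh that enters, and no dependence on $h_{\omega}$ will survive.

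\textbf{Step 1 (reference estimate).} Fix a reference element $\widehat{\omega}$ (the unit simplex, and additionally the unit hypercube if tensor-product elements are used). Because $\mathscr{P}^{m}(\widehat{\omega})$ is finite-dimensional, all norms and semi-norms on it are mutually comparable; in particular there is a constant $\widehat{\mathcal{C}}$ depending only on $m$ and $nd$ with
\begin{align}
  \|\widehat{\mathrm{grad}}[\widehat{v}]\|_{\widehat{\omega}}
  \leq \widehat{\mathcal{C}}\,\|\widehat{v}\|_{\widehat{\omega}}
  \qquad \forall \, \widehat{v} \in \mathscr{P}^{m}(\widehat{\omega}).
  \label{Eqn:DG_ref_inverse}
\end{align}
\textbf{Step 2 (scaling).} Let $F_{\omega}(\widehat{\mathbf{x}}) = \mathbf{B}_{\omega}\widehat{\mathbf{x}} + \mathbf{b}_{\omega}$ be the affine map with $F_{\omega}(\widehat{\omega}) = \omega$ (for tensor-product elements $F_{\omega}$ is the corresponding bilinear/trilinear map, whose Jacobian is controlled in the same way by shape regularity), and set $\widehat{v} := v^{h} \circ F_{\omega} \in \mathscr{P}^{m}(\widehat{\omega})$. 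Changing variables $\mathbf{x} = F_{\omega}(\widehat{\mathbf{x}})$ in both integrals and using the chain rule $(\mathrm{grad}[v^{h}]) \circ F_{\omega} = \mathbf{B}_{\omega}^{-\mathrm{T}} \widehat{\mathrm{grad}}[\widehat{v}]$ together with \eqref{Eqn:DG_ref_inverse},
\begin{align}
  \|\mathrm{grad}[v^{h}]\|_{\omega}^{2}
  = |\det \mathbf{B}_{\omega}|\,\big\| \mathbf{B}_{\omega}^{-\mathrm{T}} \widehat{\mathrm{grad}}[\widehat{v}] \big\|_{\widehat{\omega}}^{2}
  \leq |\det \mathbf{B}_{\omega}|\,\|\mathbf{B}_{\omega}^{-1}\|^{2}\,\widehat{\mathcal{C}}^{2}\,\|\widehat{v}\|_{\widehat{\omega}}^{2}
  = \|\mathbf{B}_{\omega}^{-1}\|^{2}\,\widehat{\mathcal{C}}^{2}\,\|v^{h}\|_{\omega}^{2},
  \label{Eqn:DG_scaled_inverse}
\end{align}
where the Jacobian factor cancels in the last step because $\|\widehat{v}\|_{\widehat{\omega}}^{2} = |\det \mathbf{B}_{\omega}|^{-1}\|v^{h}\|_{\omega}^{2}$. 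The classical Ciarlet bound $\|\mathbf{B}_{\omega}^{-1}\| \leq h_{\widehat{\omega}}/h_{\omega}^{\mathrm{inc}}$ and the shape-regularity hypothesis $h_{\omega}^{\mathrm{inc}} \geq \mathcal{C}_{\mathrm{sp}} h_{\omega}$ give $\|\mathbf{B}_{\omega}^{-1}\| \leq h_{\widehat{\omega}}\,\mathcal{C}_{\mathrm{sp}}^{-1} h_{\omega}^{-1}$; inserting this in \eqref{Eqn:DG_scaled_inverse} and taking square roots yields the claim with $\mathcal{C}_{\mathrm{inv}} := \widehat{\mathcal{C}}\,h_{\widehat{\omega}}/\mathcal{C}_{\mathrm{sp}}$, a constant depending on $m$, $nd$, and $\mathcal{C}_{\mathrm{sp}}$ only.

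\textbf{Main obstacle.} The delicate point is that an admissible mesh is only assumed to consist of shape-regular polyhedra, not of (affine) images of a single reference element, so for quadrilateral/hexahedral or general polyhedral elements Step 2 is not literally an affine change of variables. The robust remedy is to argue on balls instead: $\omega$ contains an inscribed ball of radius comparable to $h_{\omega}^{\mathrm{inc}}$ and is contained in a ball of radius comparable to $h_{\omega}$, and one then chains (i) the inverse inequality on the inner ball, obtained by dilating the unit ball and using finite-dimensionality of $\mathscr{P}^{m}$, (ii) the trivial monotonicity $\|v^{h}\|_{B_{\mathrm{in}}} \leq \|v^{h}\|_{\omega}$, and (iii) the polynomial ``reverse'' bound $\|v^{h}\|_{B_{\mathrm{out}}} \leq C \|v^{h}\|_{B_{\mathrm{in}}}$. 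Verifying that the constant $C$ in (iii) depends only on the ratio $h_{\omega}/h_{\omega}^{\mathrm{inc}}$ — and hence, through \eqref{Eqn:shape_parameter_condition}, only on $\mathcal{C}_{\mathrm{sp}}$, with no leftover dependence on the absolute element size — is the one estimate that genuinely requires care; as with Step 1 it reduces to a compactness/finite-dimensionality statement on a normalized configuration, and is precisely what the standard references establish.
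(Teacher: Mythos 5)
The paper does not actually prove this lemma: it states it as a standard result and refers to the literature (Verf\"urth; Arnold; Dolej\v{s}\'i--Feistauer; Di Pietro--Ern), so there is no in-paper argument to compare against. Your proof is the standard one that those references employ, and it is correct: the reference-element estimate via equivalence of norms on the finite-dimensional space $\mathscr{P}^{m}(\widehat{\omega})$, the affine change of variables with cancellation of the Jacobian factor, the Ciarlet bound $\|\mathbf{B}_{\omega}^{-1}\| \leq h_{\widehat{\omega}}/h_{\omega}^{\mathrm{inc}}$, and the shape-regularity condition \eqref{Eqn:shape_parameter_condition} combine exactly as you say to give $\mathcal{C}_{\mathrm{inv}}$ depending only on $m$, $nd$ and $\mathcal{C}_{\mathrm{sp}}$; the vector case follows componentwise from \eqref{Eqn:Sobolev_norm_Vector}. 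Your ``main obstacle'' remark is also well taken, since the paper admits general shape-regular polyhedra, not just affine images of one reference element. Only a small imprecision there: as written, the chain (i)--(iii) does not assemble, because the inverse inequality on the \emph{inner} ball controls $\|\mathrm{grad}[v^{h}]\|_{B_{\mathrm{in}}}$, not $\|\mathrm{grad}[v^{h}]\|_{\omega}$. Either apply the inverse inequality on the \emph{outer} ball and then the reverse bound $\|v^{h}\|_{B_{\mathrm{out}}} \lesssim \|v^{h}\|_{B_{\mathrm{in}}}$, or keep the inner-ball inverse inequality but first apply the reverse bound to the (polynomial) gradient, $\|\mathrm{grad}[v^{h}]\|_{B_{\mathrm{out}}} \lesssim \|\mathrm{grad}[v^{h}]\|_{B_{\mathrm{in}}}$. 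With that rearrangement the sketch is sound, and the constant indeed depends only on the radius ratio, hence on $\mathcal{C}_{\mathrm{sp}}$.
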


\begin{lemma}{(Discrete trace inequality.)}
  \label{Lemma:DDP_discrete_trace_inequality}
  For an admissible mesh $\mathcal{T}_{h}$,
  the following estimates hold $\forall
  \omega \in \mathcal{T}_{h}$:
  \begin{align}
    \label{Eqn:DPP_dti_scalar} 
    &\|v^{h}\|_{\partial \omega} \leq
    \mathcal{C}_{\mathrm{trace}} \left(1 + \mathcal{C}_{\mathrm{inv}}\right)
    \frac{1}{\sqrt{h}_{\omega}} \|v^{h}\|_{\omega}
    \quad \forall v^{h}(\mathbf{x}) \in H^1(\omega)
    \cap \mathscr{P}^{m}(\omega) \\
    \label{Eqn:DPP_dti_vector} 
    &\|\mathbf{v}^{h}\|_{\partial \omega} \leq
    \mathcal{C}_{\mathrm{trace}} \left(1 + \mathcal{C}_{\mathrm{inv}}\right)
    \frac{1}{\sqrt{h}_{\omega}} \|\mathbf{v}^{h}\|_{\omega}
    \quad \forall \mathbf{v}^{h}(\mathbf{x}) \in
    \left(H^1(\omega)\right)^{nd} \cap
    \left(\mathscr{P}^{m}(\omega)\right)^{nd}
  \end{align}
\end{lemma}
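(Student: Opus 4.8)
The plan is to obtain both estimates directly by chaining together the two preceding lemmas: the continuous trace inequality (Lemma~\ref{Lemma:DDP_continuous_trace_inequality}) and the discrete inverse inequality (Lemma~\ref{Lemma:DDP_discrete_inverse_inequality}). Since $\mathscr{P}^{m}(\omega) \subset H^{1}(\omega)$, any $v^{h}(\mathbf{x}) \in H^{1}(\omega) \cap \mathscr{P}^{m}(\omega)$ is admissible in both lemmas, so no extra regularity hypotheses are needed and the argument is purely a composition of known bounds.

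First I would apply the continuous trace inequality \eqref{Eqn:Trace_inequality_exp} to $v^{h}$, which gives
\[
\|v^{h}\|_{\partial \omega} \leq \mathcal{C}_{\mathrm{trace}}\left(\frac{1}{\sqrt{h}_{\omega}}\|v^{h}\|_{\omega} + \sqrt{h_{\omega}}\,\|\mathrm{grad}[v^{h}]\|_{\omega}\right).
\]
Next I would use the discrete inverse inequality \eqref{Eqn:DDP_discrete_inverse_inequality} to control the gradient term, namely $\|\mathrm{grad}[v^{h}]\|_{\omega} \leq \mathcal{C}_{\mathrm{inv}} h_{\omega}^{-1}\|v^{h}\|_{\omega}$, so that $\sqrt{h_{\omega}}\,\|\mathrm{grad}[v^{h}]\|_{\omega} \leq \mathcal{C}_{\mathrm{inv}} h_{\omega}^{-1/2}\|v^{h}\|_{\omega}$. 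Substituting this into the previous display and factoring out $h_{\omega}^{-1/2}\|v^{h}\|_{\omega}$ yields
\[
\|v^{h}\|_{\partial \omega} \leq \mathcal{C}_{\mathrm{trace}}\left(1 + \mathcal{C}_{\mathrm{inv}}\right)\frac{1}{\sqrt{h}_{\omega}}\|v^{h}\|_{\omega},
\]
which is precisely \eqref{Eqn:DPP_dti_scalar}. The resulting constant inherits its dependence on $\mathcal{C}_{\mathrm{sp}}$, $nd$, and (through $\mathcal{C}_{\mathrm{inv}}$) the polynomial order $m$ from the two invoked lemmas, and it carries no dependence on $h_{\omega}$.

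For the vector estimate \eqref{Eqn:DPP_dti_vector} I would run the identical argument using the vector forms \eqref{Eqn:Trace_inequality_exp2} and the second inequality in \eqref{Eqn:DDP_discrete_inverse_inequality}; alternatively, one can apply the scalar bound to each component $v^{h}_{i}$ and sum, invoking the component-wise definitions of the $L_{2}$ norms on $\omega$ and on $\partial\omega$. Either route produces the same constant $\mathcal{C}_{\mathrm{trace}}(1 + \mathcal{C}_{\mathrm{inv}})$.

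There is essentially no real obstacle here: the proof is a one-line composition of two already-established inequalities. The only point that warrants a moment of care is the bookkeeping of constants and powers of $h_{\omega}$ — one should check that the same symbol $\mathcal{C}_{\mathrm{trace}}$ legitimately appears on both sides (it does, since the continuous-trace constant already absorbs the shape-regularity dependence) and that the two contributions combine exactly to the factor $h_{\omega}^{-1/2}$, so that no hidden mesh dependence is introduced.
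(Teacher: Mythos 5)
Your proof is correct: composing the continuous trace inequality \eqref{Eqn:Trace_inequality_exp}--\eqref{Eqn:Trace_inequality_exp2} with the discrete inverse inequality \eqref{Eqn:DDP_discrete_inverse_inequality} yields exactly the stated bound with constant $\mathcal{C}_{\mathrm{trace}}(1+\mathcal{C}_{\mathrm{inv}})$, which is precisely the derivation the paper relies on (it defers the proof to the cited references, and the form of the constant reflects this same composition).
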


\section{A STABILIZED MIXED DG FORMULATION}
\label{Sec:S3_DG_Mixed}
We propose a stabilized \emph{four-field} formulation
for the DPP model. The proposed formulation
draws its inspiration from the stabilized \emph{two-field}
formulations proposed by
\citep{Hughes_Masud_Wan_2006,
  badia2010stabilized} for Darcy equations,
which describe the flow of an incompressible
fluid through a porous medium with a single
pore-network.

\subsection{Weak form in terms of numerical fluxes}
Multiplying the governing equations
\eqref{Eqn:DG_GE_Darcy_BLM_1}--\eqref{Eqn:DG_GE_Darcy_mass_balance_2}
by weighting functions, integrating over an element
$\omega$, and using equation \eqref{Eqn:DG_GE_Darcy_MT} and
the divergence theorem, we
obtain the following:
\begin{align}
  &\left( \mathbf{w}_1 ; \mu k_{1}^{-1} \mathbf{u}_{1} \right)_{\omega}
  - \left(\mathrm{div}[\mathbf{w}_{1}] ; p_{1} \right)_{\omega}
  + \left(\mathbf{w}_{1} \cdot \widehat{\mathbf{n}};
  \overset{*}{p}_{1} \right)_{\partial \omega} 
  +\left( \mathbf{w}_2 ; \mu k_{2}^{-1} \mathbf{u}_{2} \right)_{\omega}
  - \left(\mathrm{div}[\mathbf{w}_{2}] ; p_{2} \right)_{\omega}
  + \left(\mathbf{w}_{2} \cdot \widehat{\mathbf{n}};
  \overset{*}{p}_{2} \right)_{\partial \omega} \nonumber \\
  &\qquad +\left(q_1;\mathrm{div}[\mathbf{u}_1]\right)_{\omega}
  + \left(q_1;\left(\overset{*}{\mathbf{u}}_{1}
  - \mathbf{u}_1\right) \cdot \widehat{\mathbf{n}}
  \right)_{\partial \omega}
  +\left(q_2;\mathrm{div}[\mathbf{u}_2]\right)_{\omega}
  + \left(q_2;\left(\overset{*}{\mathbf{u}}_{2}
  - \mathbf{u}_2\right) \cdot \widehat{\mathbf{n}}
  \right)_{\partial \omega} \nonumber \\
  &\qquad + \left(q_1 - q_2; \frac{\beta}{\mu}(p_{1} - p_{2})\right)_{\omega} 
  = \left( \mathbf{w}_{1} ; \gamma \mathbf{b}_1 \right)_{\omega}
  + \left( \mathbf{w}_{2} ; \gamma \mathbf{b}_2 \right)_{\omega}
  \label{Eqn:Weakform}
\end{align}
where $\overset{*}{p}_1$ and $\overset{*}{p}_2$
are the numerical fluxes for the pressures
and $\overset{*}{\mathbf{u}}_1$ and
$\overset{*}{\mathbf{u}}_2$ are the
numerical fluxes for the velocities.
Summing the above equation over all
the elements and using the identity
\eqref{Eqn:DG_jump_avg_identity}, we
obtain the following weak form in
terms of numerical fluxes:
{\small
\begin{align}
  &\left( \mathbf{w}_1 ; \mu k_{1}^{-1} \mathbf{u}_{1} \right)
  - \left(\mathrm{div}[\mathbf{w}_{1}] ; p_{1} \right)
  + \left(\{\!\!\{\mathbf{w}_{1} \}\!\!\};
  \llbracket\overset{*}{p}_{1}\rrbracket \right)_{\Gamma^{\mathrm{int}}}
    + \left(\llbracket\mathbf{w}_{1} \rrbracket;
    \{\!\!\{\overset{*}{p}_{1}\}\!\!\}
    \right)_{\Gamma^{\mathrm{int}}}
  + \left(\mathbf{w}_{1} \cdot \widehat{\mathbf{n}};
  \overset{*}{p}_{1} \right)_{\partial \Omega}
  \nonumber \\
  &+\left( \mathbf{w}_2 ; \mu k_{2}^{-1} \mathbf{u}_{2} \right)
  - \left(\mathrm{div}[\mathbf{w}_{2}] ; p_{2} \right)
    + \left(\{\!\!\{\mathbf{w}_{2} \}\!\!\};
  \llbracket\overset{*}{p}_{2}\rrbracket \right)_{\Gamma^{\mathrm{int}}}
    + \left(\llbracket\mathbf{w}_{2} \rrbracket;
    \{\!\!\{\overset{*}{p}_{2}\}\!\!\}
    \right)_{\Gamma^{\mathrm{int}}}
  + \left(\mathbf{w}_{2} \cdot \widehat{\mathbf{n}};
  \overset{*}{p}_{2} \right)_{\partial \Omega} \nonumber \\
  &+\left(q_1;\mathrm{div}[\mathbf{u}_1]\right)
  + \left(\{\!\!\{q_1\}\!\!\};
  \llbracket\overset{*}{\mathbf{u}}_{1}\rrbracket
  - \llbracket \mathbf{u}_1\rrbracket
  \right)_{\Gamma^{\mathrm{int}}} 
    + \left(\llbracket q_1\rrbracket;
  \{\!\!\{\overset{*}{\mathbf{u}}_{1}\}\!\!\}
  - \{\!\!\{ \mathbf{u}_1\}\!\!\}
  \right)_{\Gamma^{\mathrm{int}}} 
  + \left(q_1;\left(\overset{*}{\mathbf{u}}_{1}
  - \mathbf{u}_1\right) \cdot \widehat{\mathbf{n}}
  \right)_{\partial \Omega} \nonumber \\
  &+\left(q_2;\mathrm{div}[\mathbf{u}_2]\right)
  + \left(\{\!\!\{q_2\}\!\!\};
  \llbracket\overset{*}{\mathbf{u}}_{2}\rrbracket
  - \llbracket \mathbf{u}_2\rrbracket
  \right)_{\Gamma^{\mathrm{int}}} 
    + \left(\llbracket q_2\rrbracket;
  \{\!\!\{\overset{*}{\mathbf{u}}_{2}\}\!\!\}
  - \{\!\!\{ \mathbf{u}_2\}\!\!\}
  \right)_{\Gamma^{\mathrm{int}}} 
  + \left(q_2;\left(\overset{*}{\mathbf{u}}_{2}
  - \mathbf{u}_2\right) \cdot \widehat{\mathbf{n}}
  \right)_{\partial \Omega} \nonumber \\
  &\; + \left(q_1 - q_2; \frac{\beta}{\mu}(p_{1} - p_{2})\right)
  = \left( \mathbf{w}_{1} ; \gamma \mathbf{b}_1 \right)
  + \left( \mathbf{w}_{2} ; \gamma \mathbf{b}_2 \right)
\end{align}}

Physically, the jumps in pressures and the normal
component of velocities should vanish on any curve
which is entirely inside the domain, and in particular,
on any interior edge. That is,
\begin{align}
  \label{Eqn:DG_Physical_jumps}
  \llbracket p_1 \rrbracket = \mathbf{0}, \;
  \llbracket p_2 \rrbracket = \mathbf{0}, \;
  \llbracket \mathbf{u}_1 \rrbracket = 0
  \; \mathrm{and} \;
  \llbracket \mathbf{u}_2 \rrbracket = 0
  \quad \mathrm{on} \; \Gamma^{\mathrm{int}}
\end{align}
Numerical fluxes are important components of DG 
methods, which have to be selected carefully.
The choice of these numerical fluxes can greatly
affect the stability of a DG formulation. Herein,
we consider the following general expressions for
the numerical fluxes:
\begin{align}
  \label{Eqn:DPP2_NF_p1}
  {\mathop{p}^{\star}}_{1} &=
  \left\{\begin{array}{ll}
   \lambda_{1}^{(1)} \{\!\!\{p_1\}\!\!\}
  + \frac{\lambda_{1}^{(2)}}{2} \llbracket p_1 \rrbracket
  \cdot \widehat{\mathbf{n}}
  + \lambda_{1}^{(3)} \llbracket \mathbf{u}_1 \rrbracket
  &\mathrm{on} \; \Gamma^{\mathrm{int}} \\
  p_1 \quad \quad \quad \quad \quad \quad \quad \quad \quad \quad \quad \quad \quad \quad \quad \quad \quad \quad    & \mathrm{on} \; \Gamma^{u}_1 \\
  p_{01} \quad \quad \quad \quad \quad \quad \quad \quad \quad \quad \quad \quad \quad \quad \quad \quad \quad \quad & \mathrm{on} \; \Gamma^{p}_1 
  \end{array} \right. \\
  {\mathop{p}^{\star}}_{2} &=
  \left\{ \begin{array}{ll}
    \lambda_{2}^{(1)} \{\!\!\{p_2\}\!\!\}
    + \frac{\lambda_{2}^{(2)}}{2} \llbracket p_2 \rrbracket
  \cdot \widehat{\mathbf{n}}
  + \lambda_{2}^{(3)} \llbracket \mathbf{u}_2 \rrbracket
  & \mathrm{on} \; \Gamma^{\mathrm{int}} \\
  p_2 \quad \quad \quad \quad \quad \quad \quad \quad \quad \quad \quad \quad \quad \quad \quad \quad \quad \quad  & \mathrm{on} \; \Gamma^{u}_2 \\
  p_{02} \quad \quad \quad \quad \quad \quad \quad \quad \quad \quad \quad \quad \quad \quad \quad \quad \quad \quad & \mathrm{on} \; \Gamma^{p}_2 
  \end{array} \right.
\end{align}
\begin{subequations}
  \begin{align}
    &{\mathop{\mathbf{u}}^{\star}}_{1} =
    \Lambda_{1}^{(1)} \{\!\!\{ \mathbf{u}_1 \}\!\!\}
    + \frac{\Lambda_{1}^{(2)}}{2} \llbracket \mathbf{u}_{1} \rrbracket
    \widehat{\mathbf{n}} 
    + \Lambda_{1}^{(3)} \llbracket p_1 \rrbracket
    \quad \mathrm{on} \; \Gamma^{\mathrm{int}} \\
    &{\mathop{\mathbf{u}}^{\star}}_{1} \cdot \widehat{\mathbf{n}} =
    \left\{
    \begin{array}{ll}
      u_{n1} & \mathrm{on} \; \Gamma_{1}^{u} \\
      \mathbf{u}_{1} \cdot \widehat{\mathbf{n}} 
      & \mathrm{on} \; \Gamma_{1}^{p}
    \end{array} \right.
  \end{align}
\end{subequations}
\begin{subequations}
    \label{Eqn:DPP2_NF_u2}
  \begin{align}
    &{\mathop{\mathbf{u}}^{\star}}_{2} =
    \Lambda_{2}^{(1)} \{\!\!\{ \mathbf{u}_2 \}\!\!\}
    + \frac{\Lambda_{2}^{(2)}}{2} \llbracket \mathbf{u}_{2} \rrbracket
  \widehat{\mathbf{n}} 
  + \Lambda_{2}^{(3)} \llbracket p_2 \rrbracket
  \quad \mathrm{on} \; \Gamma^{\mathrm{int}} \\
  &{\mathop{\mathbf{u}}^{\star}}_{2} \cdot \widehat{\mathbf{n}} =
    \left\{
    \begin{array}{ll}
      u_{n2} & \mathrm{on} \; \Gamma_{2}^{u} \\
      \mathbf{u}_{2} \cdot \widehat{\mathbf{n}} 
      & \mathrm{on} \; \Gamma_{2}^{p}
    \end{array} \right.
  \end{align}
\end{subequations}
where $\lambda_{i}^{(j)}$ and $\Lambda_{i}^{(j)}$
($i, j = 1 \; \mathrm{or} \; 2$) are constants.
It is easy to check that these numerical
fluxes satisfy the following relations
on $\Gamma^{\mathrm{int}}$:
\begin{subequations}
  \begin{alignat}{2}
    &\{\!\!\{ \mathop{p_{1}}^{\star}\}\!\!\}
    = \lambda_{1}^{(1)} \{\!\!\{p_1\}\!\!\} + \lambda_{1}^{(3)}
    \llbracket \mathbf{u}_{1} \rrbracket 
    \quad \mathrm{and} \quad
    &&\llbracket {\mathop{p}^{\star}}_{1} \rrbracket
    = \lambda_{1}^{(2)} \llbracket p_1 \rrbracket \\
    &\{\!\!\{ \mathop{p_{2}}^{\star}\}\!\!\}
    = \lambda_{2}^{(1)} \{\!\!\{p_2\}\!\!\} + \lambda_{2}^{(3)}
    \llbracket \mathbf{u}_{2} \rrbracket 
    \quad \mathrm{and} \quad
    && \llbracket {\mathop{p}^{\star}}_{2} \rrbracket
    = \lambda_{2}^{(2)} \llbracket p_2 \rrbracket \\
    &\{\!\!\{ \mathop{\mathbf{u}_{1}}^{\star}\}\!\!\} =
    \Lambda_{1}^{(1)} \{\!\!\{\mathbf{u}_1\}\!\!\} + \Lambda_{1}^{(3)}
    \llbracket p_1 \rrbracket
    \quad \mathrm{and} \quad 
    &&\llbracket {\mathop{\mathbf{u}}^{\star}}_{1} \rrbracket
    = \Lambda_{1}^{(2)} \llbracket \mathbf{u}_1 \rrbracket \\
    &\{\!\!\{ \mathop{\mathbf{u}_{2}}^{\star} \}\!\!\}=
    \Lambda_{2}^{(1)} \{\!\!\{\mathbf{u}_2\}\!\!\}
    + \Lambda_{2}^{(3)} \llbracket
    p_2 \rrbracket
    \quad \mathrm{and} \quad 
    &&\llbracket {\mathop{\mathbf{u}}^{\star}}_{2} \rrbracket
    = \Lambda_{2}^{(2)} \llbracket \mathbf{u}_2 \rrbracket 
  \end{alignat}
\end{subequations}

\subsection{The classical mixed DG formulation}
This formulation is based on the Galerkin formalism
and can be obtained by making the following choices:
\begin{align}
  \lambda_{1}^{(1)} = \lambda_{2}^{(1)}
  = \Lambda_{1}^{(1)} = \Lambda_{2}^{(1)} = 1
\end{align}
and the other constants in equations
\eqref{Eqn:DPP2_NF_p1}--\eqref{Eqn:DPP2_NF_u2}
are taken to be zeros. 
The numerical fluxes on $\Gamma^{\mathrm{int}}$
under the classical mixed DG formulation
take the following form:
\begin{align}
  \overset{*}{p}_{1} = \{\!\!\{p_1\}\!\!\}, \;
  \overset{*}{p}_{2} = \{\!\!\{p_2\}\!\!\}, \;
  \overset{*}{\mathbf{u}}_{1} = \{\!\!\{\mathbf{u}_1\}\!\!\}
  \; \mathrm{and} \;
  \overset{*}{\mathbf{u}}_{2} = \{\!\!\{\mathbf{u}_2\}\!\!\}
\end{align}
The above numerical fluxes are similar to
the ones employed by \citep{Bassi_Rebay_1997},
which are known to be consistent but do not
result in a stable DG method
\citep{Arnold_Brezzi_Cockburn_Marini_2002}. 
The corresponding weak formulation reads:~Find
$\left(\mathbf{u}_1(\mathbf{x}),
\mathbf{u}_2(\mathbf{x})\right) \in
\mathcal{U} \times \mathcal{U} $,
$\left(p_1(\mathbf{x}), p_2(\mathbf{x})
\right) \in \mathcal{P}$ such that we
have
\begin{align}
  \label{Eqn:DG_mixed_formulation}
  \mathcal{B}^{\mathrm{DG}}_{\mathrm{Gal}}
  (\mathbf{w}_1,\mathbf{w}_2,q_1,q_2;
  \mathbf{u}_1,\mathbf{u}_2,p_1,p_2)
  = \mathcal{L}^{\mathrm{DG}}_{\mathrm{Gal}}
  (\mathbf{w}_1,\mathbf{w}_2,q_1,q_2) \nonumber \\
\quad \forall \left(\mathbf{w}_1(\mathbf{x}),
  \mathbf{w}_{2}(\mathbf{x})\right) \in
  \mathcal{U} \times \mathcal{U},~
  \left(q_1(\mathbf{x}),~q_2(\mathbf{x})\right) \in \mathcal{P} 
\end{align}
where the bilinear form and the linear
functional are, respectively, defined
as follows:
\begin{subequations}
\begin{align}
  \label{Eqn:DG_bilinear_form}
  \mathcal{B}^{\mathrm{DG}}_{\mathrm{Gal}}
  &:= \left( \mathbf{w}_1 ; \mu k_{1}^{-1} \mathbf{u}_{1} \right)
  - \left(\mathrm{div}[\mathbf{w}_{1}] ; p_{1} \right)
  + \left(q_1;\mathrm{div}[\mathbf{u}_{1}]\right) 
  + \left(\llbracket \mathbf{w}_{1} \rrbracket; \{\!\!\{ p_{1} \}\!\!\}
  \right)_{\Gamma^{\mathrm{int}}}
  - \left(\{\!\!\{q_{1} \}\!\!\} ; \llbracket \mathbf{u}_{1} \rrbracket
  \right)_{\Gamma^{\mathrm{int}}}  \nonumber \\
  &\; +\left( \mathbf{w}_2 ; \mu k_{2}^{-1} \mathbf{u}_{2} \right)
  - \left(\mathrm{div}[\mathbf{w}_{2}] ; p_{2} \right)
  + \left(q_2;\mathrm{div}[\mathbf{u}_{2}]\right) 
  + \left(\llbracket \mathbf{w}_{2} \rrbracket; \{\!\!\{p_{2} \}\!\!\}
  \right)_{\Gamma^{\mathrm{int}}}
  - \left(\{\!\!\{q_{2} \}\!\!\} ; \llbracket \mathbf{u}_{2} \rrbracket
  \right)_{\Gamma^{\mathrm{int}}}  \nonumber \\
  &\; + \left(q_1 - q_2; \frac{\beta}{\mu}(p_{1} - p_{2}) \right)
  + \left(\mathbf{w}_{1} \cdot \widehat{\mathbf{n}}; p_1
  \right)_{\Gamma_{1}^{u}}
  + \left(\mathbf{w}_{2} \cdot \widehat{\mathbf{n}}; p_2
  \right)_{\Gamma_{2}^{u}} 
  -\left(q_{1};\mathbf{u}_{1} \cdot \widehat{\mathbf{n}}\right)_{\Gamma^{u}_{1}}  
  -\left(q_{2};\mathbf{u}_{2} \cdot \widehat{\mathbf{n}}\right)_{\Gamma^{u}_{2}}  \\
  \label{Eqn:DG_linear_form}
  \mathcal{L}^{\mathrm{DG}}_{\mathrm{Gal}}
  &:= \left( \mathbf{w}_{1} ; \gamma \mathbf{b}_1 \right)
  + \left( \mathbf{w}_{2} ; \gamma \mathbf{b}_2 \right)
  -(\mathbf{w}_1 \cdot \widehat{\mathbf{n}}; p_{01})_{\Gamma_{1}^{p}} 
  -(\mathbf{w}_2 \cdot \widehat{\mathbf{n}}; p_{02})_{\Gamma_{2}^{p}} 
  - \left(q_{1};u_{n1} \right)_{\Gamma^{u}_{1}}
  - \left(q_{2};u_{n2}\right)_{\Gamma^{u}_{2}} 
\end{align}
\end{subequations}
The classical mixed DG formulation is not stable
under all combinations of interpolation functions
for the field variables, which is due to the
violation of the LBB \emph{inf-sup} stability
condition \citep{Brezzi_Fortin}. Specifically,
equal-order interpolation for all the field
variables is not stable under the classical
mixed DG formulation.
This numerical instability (due to the interpolation
functions) is different from the aforementioned
instability due to the numerical fluxes (i.e.,
Bassi-Rebay DG method). 
We develop a stabilized mixed DG formulation
which does not suffer from any of the aforementioned
instabilities. This is achieved by adding
adjoint-type, residual-based stabilization
terms (which are defined over the subdomains
and circumvent the LBB \emph{inf-sup} stability
condition) and by incorporating appropriate
numerical fluxes (which are consistent and stable
and are defined along the subdomain interfaces).

\subsection{Proposed stabilized mixed DG formulation} 
This formulation makes the following choices:
\begin{subequations}
\label{Eqn:DG_stab_lambda_choices}
  \begin{align}
    &\lambda_{1}^{(1)} = \lambda_{2}^{(1)} = 1, \;
    \lambda_{1}^{(3)} = \eta_u h_{\Upsilon} 
    \{\!\!\{ \mu k_1^{-1} \}\!\!\}
    \; \mathrm{and} \;
    \lambda_{2}^{(3)} = \eta_u h_{\Upsilon} 
    \{\!\!\{ \mu k_2^{-1} \}\!\!\} \\
    &\Lambda_{1}^{(1)} = \Lambda_{2}^{(1)} = 1, \;
    \Lambda_{1}^{(3)} = \frac{\eta_p}{h_{\Upsilon}}
    \{\!\!\{ \mu^{-1} k_1 \}\!\!\}
    \; \mathrm{and} \;
    \Lambda_{2}^{(3)} = \frac{\eta_p}{h_{\Upsilon}}
    \{\!\!\{ \mu^{-1} k_2 \}\!\!\}
  \end{align}
\end{subequations}
and the other constants in equations
\eqref{Eqn:DPP2_NF_p1}--\eqref{Eqn:DPP2_NF_u2} 
are taken to be zero. $\eta_{u}$ and $\eta_{p}$ are
non-negative, non-dimensional bounded constants.
The corresponding numerical fluxes 
on $\Gamma^{\mathrm{int}}$ take the
following form:
\begin{align}
\label{Eqn:DG_stab_numerical_flux_choices}
  &\overset{*}{p}_{1} = \{\!\!\{p_{1}\}\!\!\}
  + \eta_{u} h_{\Upsilon} \{\!\!\{\mu k_1^{-1}\}\!\!\}
  \llbracket \mathbf{u}_1 \rrbracket, \; 
  \overset{*}{p}_{2} = \{\!\!\{p_{2}\}\!\!\}
  + \eta_{u} h_{\Upsilon} \{\!\!\{\mu k_2^{-1}\}\!\!\}
  \llbracket \mathbf{u}_2
  \rrbracket, \; \nonumber \\
  &\qquad \overset{*}{\mathbf{u}}_{1}
  = \{\!\!\{\mathbf{u}_{1}\}\!\!\}
  + \frac{\eta_{p}}{h_{\Upsilon}}
  \{\!\!\{\mu^{-1} k_1\}\!\!\}
  \llbracket p_1 \rrbracket 
  \; \mathrm{and} \; 
  \overset{*}{\mathbf{u}}_{2} = \{\!\!\{\mathbf{u}_{2}\}\!\!\}
  + \frac{\eta_{p}}{h_{\Upsilon}} \{\!\!\{\mu^{-1} k_2\}\!\!\}
  \llbracket p_2 \rrbracket 
\end{align}

\begin{tcolorbox}[breakable]
  The mathematical statement of
  the proposed stabilized mixed
  DG formulation reads as follows:~Find
  $\left(\mathbf{u}_1(\mathbf{x}),\mathbf{u}_2(\mathbf{x})\right)
  \in \mathcal{U} \times \mathcal{U}$, $\left(p_1(\mathbf{x}),
  p_2(\mathbf{x})\right) \in \mathcal{Q}$ such that we have
  \begin{align}
    \mathcal{B}_{\mathrm{stab}}^{\mathrm{DG}}
    (\mathbf{w}_1,\mathbf{w}_2,q_1,q_2;
    \mathbf{u}_1,\mathbf{u}_2,p_1,p_2)
    = \mathcal{L}_{\mathrm{stab}}^{\mathrm{DG}}
    (\mathbf{w}_1,\mathbf{w}_2,q_1,q_2) \nonumber \\
    \quad \forall \left(\mathbf{w}_1(\mathbf{x}),
    \mathbf{w}_2(\mathbf{x})\right) \in
    \mathcal{U} \times \mathcal{U},~
    \left(q_1(\mathbf{x}),q_2(\mathbf{x})\right)
    \in \mathcal{Q}
    \label{Eqn:VMS_DG_Weak_Form} 
  \end{align}
  where the bilinear form and the linear
  functional are, respectively, defined
  as follows:
  \begin{subequations} 
\label{sub:Stabilized_mixed}
    \begin{alignat}{2}
      \mathcal{B}^{\mathrm{DG}}_{\mathrm{stab}}
      := \mathcal{B}^{\mathrm{DG}}_{\mathrm{Gal}} 
      &-\frac{1}{2} \left(\mu k_1^{-1}
      \mathbf{w}_1 - \mathrm{grad}[q_1];
      \mu^{-1} k_1 (\mu k^{-1}_1
      \mathbf{u}_1 + \mathrm{grad}[p_1])\right)
      \nonumber \\
      &-\frac{1}{2} \left(\mu k_2^{-1}
      \mathbf{w}_2 - \mathrm{grad}[q_2];
      \mu^{-1} k_2(\mu k^{-1}_2 \mathbf{u}_2 + \mathrm{grad}[p_2])\right) 
      \nonumber\\
      &+ \left(\eta_{u} h_{\Upsilon} \{\!\!\{\mu k_1^{-1}\}\!\!\}
      \llbracket \mathbf{w}_{1} \rrbracket ; \llbracket
      \mathbf{u}_1 \rrbracket
      \right)_{\Gamma^{\mathrm{int}}}
      + \left(\eta_{u} h_{\Upsilon} \{\!\!\{\mu k_2^{-1}\}\!\!\}
      \llbracket \mathbf{w}_{2} \rrbracket ; \llbracket
      \mathbf{u}_2 \rrbracket\right)_{\Gamma^{\mathrm{int}}} \nonumber \\
      &+ \left(\frac{\eta_{p}}{h_{\Upsilon}} \{\!\!\{\mu^{-1} k_1\}\!\!\}
      \llbracket q_{1} \rrbracket ; \llbracket p_1 \rrbracket
      \right)_{\Gamma^{\mathrm{int}}}
      + \left(\frac{\eta_{p}}{h_{\Upsilon}} \{\!\!\{\mu^{-1} k_2\}\!\!\}
      \llbracket q_{2} \rrbracket ; \llbracket p_2
      \rrbracket\right)_{\Gamma^{\mathrm{int}}}  \label{Eqn:VMS_DG_bilinear_form}  \\
      \mathcal{L}_{\mathrm{stab}}^{\mathrm{DG}} := \mathcal{L}^{\mathrm{DG}}_{\mathrm{Gal}}
      &-\frac{1}{2} \left(\mu k_1^{-1} \mathbf{w}_1 - \mathrm{grad}[q_1];
      \mu^{-1} k_1 \gamma \mathbf{b}_{1}\right)
      -\frac{1}{2} \left(\mu k_2^{-1} \mathbf{w}_2 - \mathrm{grad}[q_2];
      \mu^{-1} k_2 \gamma \mathbf{b}_{2}\right) \label{Eqn:VMS_DG_linear}
    \end{alignat}
  \end{subequations}

To completely define the formulation, the
parameters $\eta_u$ and $\eta_p$ have to
be prescribed. We make the following
recommendation, which is based on the
theoretical convergence analysis (see
$\mathsection\ref{Sec:S4_DG_Error}$)
and extensive numerical simulations
(see $\mathsection\ref{Sec:S5_DG_Patch_tests}$--$\mathsection\ref{Sec:S8_DG_NR}$):
\begin{enumerate}[(i)]
\item For conforming approximations, the
  parameters can be taken to be $\eta_u
  = \eta_p = 0$.
\item For non-conforming approximations,
  the parameters can be taken to be
  $\eta_{u}=\eta_{p}=10$ or $100$. (See
  $\mathsection~\ref{Sec5:2D_square}$).
\end{enumerate}
\end{tcolorbox}

A few remarks about the stabilized formulation
are in order. 
\begin{enumerate}[(a)]
\item The above stabilized formulation
  is an adjoint-type formulation. We
  have posed even
  the classical mixed formulation as an
  adjoint-type (see the bilinear
  form \eqref{Eqn:DG_bilinear_form}). 
  In addition, the stabilization terms within the
  elements (i.e., in $\widetilde{\Omega}$)
  are of adjoint-type, which look similar
  to the one proposed by
  \citep{Hughes_Masud_Wan_2006} for the
  {\color{red} two-field} Darcy equations.
\item Since the formulation is of adjoint-type,
  the formulation will not give rise to
  symmetric coefficient (``stiffness'') matrix.
  But the coefficient matrix will be positive
  definite, which can be inferred from 
  Lemma \ref{Lemma:DD2_stab_norm}.
  Alternatively, the above stabilized formulation
  can be posed as an equivalent symmetric
  formulation by replacing $q_1$ and $q_2$ with
  $-q_1$ and $-q_2$, respectively; which is
  justified as $q_1$ and $q_2$ are arbitrary
  weighting functions. In this case, the
  resulting symmetric
  formulation will not result in positive-definite
  coefficient matrix. 
\item  In order to minimize the drift in the
  solution fields, especially in the case of
  non-conforming discretization,
  additional stabilization terms on the interior
  boundaries (i.e.,
  terms containing $\eta_{u}$ and $\eta_{p}$)
  are required in both networks. The necessity
  of employing such stabilization parameters
  has been addressed by \citep{badia2010stabilized}
  for the case of Darcy equations. It is noteworthy
  that $\eta_{u}$ parameter was not included
  in the formulation proposed by
  \citep{Hughes_Masud_Wan_2006}, as they did not
  consider non-conforming approximations.
  \item Due to the presence of the terms
  containing $\eta_{u}$ and $\eta_{p}$, the
  above numerical fluxes are no longer
  similar to the ones proposed by
  \citep{Bassi_Rebay_1997}. The numerical
  fluxes employed in the proposed formulation
  are not the same as any of the DG methods
  discussed in the review paper
  \citep{Arnold_Brezzi_Cockburn_Marini_2002}.
\item In the case of Darcy equations,
  a stabilized formulation without
  edge stabilization terms has been
  developed and its convergence has
  been established by utilizing a
  lifting operator \citep{Brezzi_Hughes_Marini_Masud_SIAMJSC_2005_v22_p119}.
  The question about whether such an approach
  can be extended to the DPP model is worthy
  of an investigation, but is beyond the scope
  of this paper.
\end{enumerate}

\section{A THEORETICAL ANALYSIS OF THE PROPOSED DG FORMULATION}
\label{Sec:S4_DG_Error}
We start by grouping the field
variables and their corresponding
weighting functions as follows:
\begin{subequations}
  \begin{align}
    \label{Eqn:VMS_DG_Uexact}
    \mathbf{U} &= (\mathbf{u}_1(\mathbf{x}),
    \mathbf{u}_2(\mathbf{x}),p_1(\mathbf{x}),
    p_2(\mathbf{x})) \in \mathbb{U} \\
    \label{Eqn:VMS_DG_Wexact}
    \mathbf{W} &= (\mathbf{w}_1(\mathbf{x}),
    \mathbf{w}_2(\mathbf{x}),q_1(\mathbf{x}),
    q_2(\mathbf{x})) \in \mathbb{U}
  \end{align}
\end{subequations}
where the product space $\mathbb{U}$
is defined as follows:  
\begin{align}
  \mathbb{U} = \mathcal{U} \times
  \mathcal{U} \times \mathcal{Q} 
\end{align}
The proposed stabilized mixed DG formulation
\eqref{Eqn:VMS_DG_Weak_Form} can then be
compactly written as follows:~Find $\mathbf{U}
\in \mathbb{U}$ such that we have
\begin{align}
  \label{Eqn:VMS_DG_Weak_Form_compact} 
  \mathcal{B}_{\mathrm{stab}}^{\mathrm{DG}}(\mathbf{W};
  \mathbf{U})
  = \mathcal{L}_{\mathrm{stab}}^{\mathrm{DG}}(\mathbf{W})
  \quad \forall \mathbf{W} \in \mathbb{U} 
\end{align}
The stability of the proposed weak formulation 
will be established under the following norm:
{\small
  \begin{align}
    \label{Eqn:DG_stability_norm}
    \left(\|\mathbf{W}\|_{\mathrm{stab}}^{\mathrm{DG}}\right)^{2}
    := \mathcal{B}_{\mathrm{stab}}^{\mathrm{DG}}(\mathbf{W};\mathbf{W})
    &= \frac{1}{2}\left\|\sqrt{\frac{\mu}{k_1}}
    \mathbf{w}_{1}\right\|^2
    + \frac{1}{2}\left\|\sqrt{\frac{k_1}{\mu}}
    \mathrm{grad}[q_1]\right\|^2 \nonumber \\
    & + \frac{1}{2}\left\|\sqrt{\frac{\mu}{k_2}}
    \mathbf{w}_2 \right\|^2 
    + \frac{1}{2}\left\|\sqrt{\frac{k_2}{\mu}}
    \mathrm{grad}[q_2]\right\|^2 
    + \left\|\sqrt{\frac{\beta}{\mu}}
    (q_1 - q_2)\right\|^2 \nonumber \\
    & + \left\| \sqrt{\eta_u h_{\Upsilon}
      \{\!\!\{\mu~ k_1^{-1}\}\!\!\}}
    \; \llbracket \mathbf{w}_{1} \rrbracket
    \right\|_{\Gamma^{\mathrm{int}}}^{2}
    + \left\| \sqrt{\frac{\eta_p}{h_{\Upsilon}}
      \{\!\!\{\mu^{-1} k_1\}\!\!\}}
    \; \llbracket q_{1} \rrbracket
    \right\|_{\Gamma^{\mathrm{int}}}^{2} \nonumber \\
    & + \left\| \sqrt{\eta_u h_{\Upsilon}
      \{\!\!\{\mu~ k_2^{-1}\}\!\!\}}
    \; \llbracket \mathbf{w}_{2} \rrbracket
    \right\|_{\Gamma^{\mathrm{int}}}^{2}
    + \left\| \sqrt{\frac{\eta_p}{h_{\Upsilon}}
      \{\!\!\{\mu^{-1} k_2\}\!\!\}} \; \llbracket
    q_{2} \rrbracket \right\|_{\Gamma^{\mathrm{int}}}^{2}
    \quad \forall \mathbf{W} \in \mathbb{U}
\end{align}
}

\begin{lemma}{(Stability norm)}
  \label{Lemma:DD2_stab_norm}
  $\|\cdot\|_{\mathrm{stab}}^{\mathrm{DG}}$ is 
  a norm on $\mathbb{U}$. 
\end{lemma}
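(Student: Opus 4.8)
The plan is to read the three norm axioms directly off the explicit expansion of $\mathcal{B}_{\mathrm{stab}}^{\mathrm{DG}}(\mathbf{W};\mathbf{W})$ displayed in \eqref{Eqn:DG_stability_norm}. That expansion is obtained by putting $(\mathbf{w}_i,q_i)=(\mathbf{u}_i,p_i)$ in \eqref{Eqn:VMS_DG_bilinear_form}, whereupon the antisymmetric interior-edge, divergence, and $\Gamma^u_i$ couplings of the Galerkin part cancel pairwise and each volume stabilization term collapses to $-\tfrac12\|\sqrt{\mu/k_i}\,\mathbf{w}_i\|^2+\tfrac12\|\sqrt{k_i/\mu}\,\mathrm{grad}[q_i]\|^2$, which combines with $(\mathbf{w}_i;\mu k_i^{-1}\mathbf{w}_i)$ to give the leading two terms. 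Granting this, I first note that every weight appearing on the right of \eqref{Eqn:DG_stability_norm} — $\tfrac12$, $\beta/\mu$, $\eta_u h_\Upsilon\{\!\!\{\mu k_i^{-1}\}\!\!\}$, $\eta_p h_\Upsilon^{-1}\{\!\!\{\mu^{-1}k_i\}\!\!\}$ — is pointwise non-negative (since $\eta_u,\eta_p\ge 0$, $\beta\ge 0$, $h_\Upsilon>0$ and the permeabilities are strictly positive), so $\mathcal{B}_{\mathrm{stab}}^{\mathrm{DG}}(\mathbf{W};\mathbf{W})\ge 0$ for all $\mathbf{W}\in\mathbb{U}$; hence the square root in the definition makes sense and $\|\mathbf{W}\|_{\mathrm{stab}}^{\mathrm{DG}}\ge 0$. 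Absolute homogeneity is immediate from bilinearity: $\mathcal{B}_{\mathrm{stab}}^{\mathrm{DG}}(\alpha\mathbf{W};\alpha\mathbf{W})=\alpha^{2}\mathcal{B}_{\mathrm{stab}}^{\mathrm{DG}}(\mathbf{W};\mathbf{W})$, so $\|\alpha\mathbf{W}\|_{\mathrm{stab}}^{\mathrm{DG}}=|\alpha|\,\|\mathbf{W}\|_{\mathrm{stab}}^{\mathrm{DG}}$ (and note $\mathcal{Q}$ is closed under scalar multiplication, so $\alpha\mathbf{W}\in\mathbb{U}$).

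For the triangle inequality I would write $\|\mathbf{W}\|_{\mathrm{stab}}^{\mathrm{DG}}=\big(\sum_{j=1}^{N}\|\ell_j\mathbf{W}\|_{\mathcal{K}_j}^{2}\big)^{1/2}$, where the $\ell_j$ are the nine linear maps $\mathbf{W}\mapsto\sqrt{\mu/k_1}\,\mathbf{w}_1$, $\mathbf{W}\mapsto\sqrt{k_1/\mu}\,\mathrm{grad}[q_1]$, $\dots$, $\mathbf{W}\mapsto\sqrt{\eta_p h_\Upsilon^{-1}\{\!\!\{\mu^{-1}k_2\}\!\!\}}\,\llbracket q_2\rrbracket$ and $\mathcal{K}_j\in\{\widetilde{\Omega},\Gamma^{\mathrm{int}}\}$ is the corresponding integration domain. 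Then the triangle inequality in each $L_2(\mathcal{K}_j)$ followed by Minkowski's inequality in $\mathbb{R}^{N}$ gives $\|\mathbf{V}+\mathbf{W}\|_{\mathrm{stab}}^{\mathrm{DG}}\le\|\mathbf{V}\|_{\mathrm{stab}}^{\mathrm{DG}}+\|\mathbf{W}\|_{\mathrm{stab}}^{\mathrm{DG}}$; equivalently, the right side of \eqref{Eqn:DG_stability_norm} is the quadratic form of the genuine semi-inner-product $\sum_j(\ell_j\mathbf{V};\ell_j\mathbf{W})_{\mathcal{K}_j}$, so the triangle inequality is just Cauchy--Schwarz. (Strictly, this uses that $\mathbf{V}+\mathbf{W}\in\mathbb{U}$; since the product-of-means constraint in \eqref{Eqn:Q} is not linear, one should read $\mathbb{U}$ as the honest vector space obtained once a pressure datum is fixed, as in the Remark following \eqref{Eqn:Q} — e.g. $\int_{\widetilde{\Omega}}p_1=0$, or $p_1=0$ on a pressure boundary.)

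The only step needing real input — and the one I expect to be the main obstacle — is positive-definiteness: $\|\mathbf{W}\|_{\mathrm{stab}}^{\mathrm{DG}}=0\Rightarrow\mathbf{W}=\mathbf{0}$. If the left side vanishes, each term of \eqref{Eqn:DG_stability_norm} vanishes. From $\|\sqrt{\mu/k_i}\,\mathbf{w}_i\|=0$ and the drag bounds \eqref{Eqn:DG_drag_bounds} (which give $\inf_\Omega\mu/k_i>0$) one gets $\mathbf{w}_1=\mathbf{w}_2=\mathbf{0}$ a.e. on $\widetilde{\Omega}$. From $\|\sqrt{k_i/\mu}\,\mathrm{grad}[q_i]\|=0$ and $k_i/\mu\ge(\sup_\Omega\mu/k_i)^{-1}>0$ one gets $\mathrm{grad}[q_i]=\mathbf{0}$ a.e., i.e. each $q_i$ is constant on every element $\omega^j$. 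The vanishing of the interior-edge pressure terms, with $\eta_p>0$, forces $\llbracket q_1\rrbracket=\llbracket q_2\rrbracket=\mathbf{0}$ on $\Gamma^{\mathrm{int}}$; combined with elementwise constancy and connectedness of $\Omega$ this upgrades each $q_i$ to a single global constant. The vanishing of $\|\sqrt{\beta/\mu}\,(q_1-q_2)\|$, with $\beta>0$, then gives $q_1=q_2=:c$, and the pressure-datum constraint carried by $\mathcal{Q}$ in \eqref{Eqn:Q} — namely $\big(\int_{\widetilde{\Omega}}q_1\big)\big(\int_{\widetilde{\Omega}}q_2\big)=c^{2}|\Omega|^{2}=0$, so $c=0$ (in the boundary-condition variant, that Dirichlet condition forces $c=0$ instead) — yields $q_1=q_2=0$. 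Hence $\mathbf{W}=\mathbf{0}$. The delicate points are exactly the two coercivity estimates extracted from \eqref{Eqn:DG_drag_bounds} and, above all, the passage from ``piecewise constant with vanishing interior jumps'' to ``identically zero'', which is where $\eta_p>0$, $\beta>0$, connectedness of $\Omega$, and the pressure-datum condition are all used — without any one of these the candidate fails to be definite.
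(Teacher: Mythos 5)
Your proof is correct, but it is a genuinely different route from what the paper does: the paper's entire ``proof'' of Lemma \ref{Lemma:DD2_stab_norm} is a deferral — it states that the argument is similar to that for the continuous formulation and cites \citep{Nakshatrala_Joodat_Ballarini_P2} — whereas you give a self-contained verification directly from the expansion \eqref{Eqn:DG_stability_norm}. The comparison is instructive. Your non-negativity, homogeneity, and Minkowski/semi-inner-product arguments are routine and match what any reader would reconstruct, but your definiteness argument is where the DG setting genuinely departs from the cited CG proof: on the broken spaces, $\mathrm{grad}[q_i]=\mathbf{0}$ elementwise only gives piecewise constants, and you correctly identify that passing to a global constant requires the interior-edge terms with $\eta_p>0$ (together with $\beta>0$, connectedness of $\Omega$ through the mesh skeleton, the lower drag bounds from \eqref{Eqn:DG_drag_bounds}, and the pressure-datum condition in \eqref{Eqn:Q}). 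This is a point the citation to the continuous formulation cannot cover, and it sits in mild tension with the paper's own statement that $\eta_u,\eta_p$ are merely non-negative and may be taken to be zero for conforming approximations: with $\eta_p=0$ the quantity is only a seminorm on the full broken space $\mathbb{U}$ (a nonzero zero-mean piecewise constant $q_1=q_2$ is annihilated), although on a conforming subspace, where jumps vanish identically, definiteness is recovered without the edge terms. Your further caveat that $\mathcal{Q}$ as written in \eqref{Eqn:Q} is not closed under addition, so one must fix the pressure datum (zero mean, or a Dirichlet pressure boundary as in the remark following \eqref{Eqn:Q}) before speaking of a norm on a vector space, is also a legitimate observation that the paper glosses over. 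In short, what your approach buys is an explicit accounting of exactly which hypotheses make the stated lemma true in the DG setting; what the paper's approach buys is brevity, at the cost of hiding these DG-specific ingredients.
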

\begin{proof}
  The mathematical proof is similar to that of
  the continuous formulation, which is provided
  in \citep{Nakshatrala_Joodat_Ballarini_P2}.
\end{proof}

\subsection{Convergence theorem and error analysis}
In order to perform the error analysis
of the proposed stabilized mixed DG
formulation, we need to define the
finite element solution $\mathbf{U}^{h}$
and the corresponding weighting function as
\begin{subequations}
\begin{alignat}{2}
  &\mathbf{U}^{h} = (\mathbf{u}_{1}^{h}(\mathbf{x}),\mathbf{u}_{2}^{h}(\mathbf{x}),
  p_{1}^{h}(\mathbf{x}),
  p_{2}^{h}(\mathbf{x})) \in \mathbb{U}^{h} \\
  &\mathbf{W}^{h} = (\mathbf{w}_{1}^{h}(\mathbf{x}),\mathbf{w}_{2}^{h}(\mathbf{x}),
q_{1}^{h}(\mathbf{x}),
q_{2}^{h}(\mathbf{x})) \in \mathbb{U}^{h}
\end{alignat}
\end{subequations}
$\mathbb{U}^{h}$ is the closed linear subspace
of $\mathbb{U}$ and is defined as follows:
\begin{align}
    \mathbb{U}^{h} = \mathcal{U}^{h} \times \mathcal{U}^{h} \times \mathcal{Q}^{h}
\end{align}
where
\begin{subequations}
  \begin{align}
    \mathcal{U}^{h} &:= \left\{\mathbf{u}^{h}(\mathbf{x})
    \in \mathcal{U} \; \Big| \; \mathbf{u}^{h}(\mathbf{x})
    \in \left(C^{0}(\overline{\omega}^i)\right)^{nd};
    \mathbf{u}^{h}(\mathbf{x})|_{\omega^i} \in
    \left(\mathscr{P}^{k}(\omega^i)\right)^{nd}; 
    i = 1, \cdots, Nele  \right\} \\
    \mathcal{Q}^{h} &:= \left\{
    \left(p_1^{h},p_2^{h}\right) \in
    \mathcal{Q} \; \Big| \; p_{1}^{h}(\mathbf{x}), 
    p_{2}^{h}(\mathbf{x}) \in
    C^{0}(\overline{\omega}^i); 
    p_{1}^{h}(\mathbf{x}), p_{2}^{h}(\mathbf{x})|_{\omega^i} \in
    \mathscr{P}^{l}(\omega^i); 
    i = 1, \cdots, Nele  \right\}
  \end{align}
\end{subequations}
and $C^{0}(\overline{\omega}^i)$ is the
set of all continuous functions defined
on $\overline{\omega}^i$ (which is the
set closure of $\omega^i$).

The finite element formulation corresponding to
the proposed stabilized mixed DG formulation is defined as follows:~Find
$\mathbf{U}^{h} \in \mathbb{U}^{h}$ such that
we have
\begin{align}
  \mathcal{B}_{\mathrm{stab}}^{\mathrm{DG}}(\mathbf{W}^{h};\mathbf{U}^{h})
  = \mathcal{L}_{\mathrm{stab}}^{\mathrm{DG}}(\mathbf{W}^{h})
  \quad \forall \mathbf{W}^{h} \in \mathbb{U}^{h}
  \label{Eqn:DG_Galerkin_Weak_form_h}
\end{align}

The error in the finite element solution $\mathbf{E}$ is defined as the difference between the finite element solution and the exact solution.
If we define $\widetilde{\mathbf{U}}^{h}$ as an ``interpolate'' of $\mathbf{U}$ onto $\mathbb{U}^{h}$ \citep{Brenner_Scott}, decomposition of the error can be performed as follows:
\begin{align}
  \label{Eqn:DG_Error_decomposition}
  \mathbf{E} := \mathbf{U}^{h} - \mathbf{U}
  = \mathbf{E}^h + \mathbf{H} 
\end{align}
where $\mathbf{E}^h = \mathbf{U}^{h} - \widetilde{\mathbf{U}}^{h}$ is the approximation error and $\mathbf{H} = \widetilde{\mathbf{U}}^{h} - \mathbf{U}$ is the interpolation error. The components of $\mathbf{E}$ and $\mathbf{H}$ are as follows:
\begin{align}
  \mathbf{E} = \left\{\mathbf{e}_{\mathbf{u}_{1}},\mathbf{e}_{\mathbf{u}_{2}},e_{p_{1}},e_{p_{2}} \right\}
  \quad \mathrm{and} \quad 
  \mathbf{H} =
  \left\{ \boldsymbol{\eta}_{\mathbf{u}_{1}},
  \boldsymbol{\eta}_{\mathbf{u}_{2}},\eta_{p_{1}},\eta_{p_{2}}\right\}
  \label{Eqn:DG_components_of_E_and_H}
\end{align}

\begin{lemma}{(Estimates for approximation errors
    on $\Gamma^{\mathrm{int}}$.)}
  \label{Lemma:DG_approx_error_Gamma_int}
  On a sequence of admissible meshes,
  the following estimates hold: 
  \begin{align}
    \label{Eqn:Trace5}
    \left\| \sqrt{h_{\Upsilon} \{\!\!\{ \mu
    k_1^{-1} \}\!\!\}} \; \{\!\!\{
    \mathbf{e}_{\mathbf{u}_1} \}\!\!\}
    \right\|^2_{\Gamma^{\mathrm{int}}}
    \leq \mathcal{C}_{\mathbf{e}_{\mathbf{u}_1}} 
    \left\|\sqrt{\mu k_1^{-1}} \; 
    \mathbf{e}_{\mathbf{u}_1}\right\|^2
  \end{align}
  \begin{align}
    \label{Eqn:Trace6}
    \left\| \sqrt{h_{\Upsilon} \{\!\!\{ \mu k_2^{-1} \}\!\!\}} \; 
    \{\!\!\{ \mathbf{e}_{\mathbf{u}_2} \}\!\!\}
    \right\|^2_{\Gamma^{\mathrm{int}}}
    \leq \mathcal{C}_{\mathbf{e}_{\mathbf{u}_2}} 
    \left\|\sqrt{\mu k_2^{-1}} \;
    \mathbf{e}_{\mathbf{u}_2}\right\|^2
  \end{align}
\end{lemma}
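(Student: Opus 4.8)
The plan is to prove both estimates \eqref{Eqn:Trace5} and \eqref{Eqn:Trace6} in exactly the same way, so I would write the argument for \eqref{Eqn:Trace5} and remark that \eqref{Eqn:Trace6} follows \emph{mutatis mutandis}. First I would expand the norm over $\Gamma^{\mathrm{int}}$ as a sum over interior edges, $\bigl\|\sqrt{h_{\Upsilon}\{\!\!\{\mu k_1^{-1}\}\!\!\}}\,\{\!\!\{\mathbf{e}_{\mathbf{u}_1}\}\!\!\}\bigr\|^2_{\Gamma^{\mathrm{int}}} = \sum_{\Upsilon \in \mathcal{E}^{\mathrm{int}}} h_{\Upsilon}\{\!\!\{\mu k_1^{-1}\}\!\!\}\,\|\{\!\!\{\mathbf{e}_{\mathbf{u}_1}\}\!\!\}\|^2_{\Upsilon}$, treating $\{\!\!\{\mu k_1^{-1}\}\!\!\}$ as (bounded by) its supremum on $\Upsilon$. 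On a fixed edge $\Upsilon$ shared by $\omega^{+}_{\Upsilon}$ and $\omega^{-}_{\Upsilon}$, I would bound the average pointwise by $|\{\!\!\{\mathbf{e}_{\mathbf{u}_1}\}\!\!\}| \le \tfrac12\bigl(|\mathbf{e}^{+}_{\mathbf{u}_1}| + |\mathbf{e}^{-}_{\mathbf{u}_1}|\bigr)$ and use $\partial\omega \supseteq \Upsilon$ to pass to the element boundaries: $\|\{\!\!\{\mathbf{e}_{\mathbf{u}_1}\}\!\!\}\|^2_{\Upsilon} \lesssim \|\mathbf{e}_{\mathbf{u}_1}\|^2_{\partial\omega^{+}_{\Upsilon}} + \|\mathbf{e}_{\mathbf{u}_1}\|^2_{\partial\omega^{-}_{\Upsilon}}$.

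Next I would apply the continuous trace inequality \eqref{Eqn:Trace_inequality_exp2} of Lemma~\ref{Lemma:DDP_continuous_trace_inequality} on each element $\omega$ abutting $\Upsilon$, giving $\|\mathbf{e}_{\mathbf{u}_1}\|^2_{\partial\omega} \le 2\mathcal{C}_{\mathrm{trace}}^2\bigl(h_{\omega}^{-1}\|\mathbf{e}_{\mathbf{u}_1}\|^2_{\omega} + h_{\omega}\|\mathrm{grad}[\mathbf{e}_{\mathbf{u}_1}]\|^2_{\omega}\bigr)$. Multiplying by $h_{\Upsilon}$ and using the locally-quasi-uniform bound \eqref{Eqn:DPP_useful_bound_on_h} to replace $h_{\Upsilon}$ by a constant multiple of $h_{\omega^{\pm}_{\Upsilon}}$, the $h_{\Upsilon}h_{\omega}^{-1}$ factor collapses to an $\mathcal{O}(1)$ constant depending only on $\mathcal{C}_{\mathrm{lqu}}$, while the second term carries an extra $h_{\omega}^2$ which is harmless on a mesh of bounded diameter (it can be absorbed by replacing $h_{\omega}^2$ with $\bar h^2$, or, if one wants an $h$-free constant, by invoking the discrete inverse inequality — see below). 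Summing over all interior edges and noting that each element $\omega$ is counted only a bounded number of times (at most the number of its faces, controlled by shape regularity), I would arrive at $\sum_{\Upsilon} h_{\Upsilon}\{\!\!\{\mu k_1^{-1}\}\!\!\}\|\{\!\!\{\mathbf{e}_{\mathbf{u}_1}\}\!\!\}\|^2_{\Upsilon} \lesssim \sum_{\omega}\{\!\!\{\mu k_1^{-1}\}\!\!\}\bigl(\|\mathbf{e}_{\mathbf{u}_1}\|^2_{\omega} + h_{\omega}^2\|\mathrm{grad}[\mathbf{e}_{\mathbf{u}_1}]\|^2_{\omega}\bigr)$.

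The last step is to reconcile the right-hand side with the claimed bound $\mathcal{C}_{\mathbf{e}_{\mathbf{u}_1}}\|\sqrt{\mu k_1^{-1}}\,\mathbf{e}_{\mathbf{u}_1}\|^2$, i.e. to get rid of the gradient term. Here I would recall that $\mathbf{e}_{\mathbf{u}_1} = \mathbf{e}^h_{\mathbf{u}_1} + \boldsymbol{\eta}_{\mathbf{u}_1}$ — but since the lemma is stated for the \emph{approximation} error component $\mathbf{e}_{\mathbf{u}_1}$ on $\Gamma^{\mathrm{int}}$, and on a piecewise-polynomial subspace one may split and use the discrete inverse inequality \eqref{Eqn:DDP_discrete_inverse_inequality} to bound $h_{\omega}\|\mathrm{grad}[\cdot]\|_{\omega} \le \mathcal{C}_{\mathrm{inv}}\|\cdot\|_{\omega}$ on each element; this is exactly where the finite-dimensionality is used and is the crux of absorbing the gradient contribution into the $L_2$ norm. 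After this, the drag bounds \eqref{Eqn:DG_drag_bounds} let me pull $\{\!\!\{\mu k_1^{-1}\}\!\!\}$ out against $\mu k_1^{-1}$ at the cost of a factor $\mathcal{C}_{\mathrm{drag},1}$, producing the weighted $L_2$ norm on the right and defining $\mathcal{C}_{\mathbf{e}_{\mathbf{u}_1}}$ as a product of $\mathcal{C}_{\mathrm{trace}}^2$, $\mathcal{C}_{\mathrm{inv}}^2$, $\mathcal{C}_{\mathrm{drag},1}$ and mesh-regularity constants ($\mathcal{C}_{\mathrm{lqu}}$, $\mathcal{C}_{\mathrm{sp}}$, $nd$). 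The main obstacle I anticipate is precisely the bookkeeping around the gradient term: deciding cleanly whether to use the discrete inverse inequality (requiring the polynomial structure of $\mathbf{e}_{\mathbf{u}_1}$) or to accept an $h$-dependent constant, and making sure the edge-to-element summation does not over-count — both are routine but need care to keep the constant genuinely independent of $h$ as claimed. The analogous argument with subscript $2$ yields \eqref{Eqn:Trace6}.
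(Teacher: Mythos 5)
Your argument is correct and follows essentially the same route as the paper: expand the edge norm, bound the average by the one-sided traces, use the drag bounds \eqref{Eqn:DG_drag_bounds} and the locally quasi-uniform estimate \eqref{Eqn:DPP_useful_bound_on_h} to trade $h_{\Upsilon}$ for $h_{\omega^{\pm}_{\Upsilon}}$, sum edges into element boundaries, and reduce to the element $L_2$ norm using the piecewise-polynomial structure of the approximation error. The only cosmetic difference is that you invoke the continuous trace inequality \eqref{Eqn:Trace_inequality_exp2} followed by the discrete inverse inequality \eqref{Eqn:DDP_discrete_inverse_inequality}, whereas the paper applies the discrete trace inequality \eqref{Eqn:DPP_dti_vector} in one step, which is exactly that composition, yielding the same constant structure $\mathcal{C}_{\mathrm{drag},1}\,\mathcal{C}^2_{\mathrm{trace}}(1+\mathcal{C}_{\mathrm{inv}})^2(1+\mathcal{C}_{\mathrm{lqu}})$ up to numerical factors.
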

\begin{proof}
  We first note that
  \begin{align}
    \left\|\sqrt{h_{\Upsilon} \{\!\!\{ \mu
    k_1^{-1} \}\!\!\}} \; \{\!\!\{
    \mathbf{e}_{\mathbf{u}_1} \}\!\!\}
    \right\|^2_{\Gamma^{\mathrm{int}}} = 
    \sum_{\Upsilon \in \mathcal{E}^{\mathrm{int}}}
    \left\|\sqrt{h_{\Upsilon} \{\!\!\{ \mu
    k_1^{-1} \}\!\!\}} \; \{\!\!\{
    \mathbf{e}_{\mathbf{u}_1} \}\!\!\}
    \right\|^2_{\Upsilon}
  \end{align}
  We now bound the approximation error
  of $\mathbf{u}_1$ on an interior edge
  $\Upsilon \in \mathcal{E}^{\mathrm{int}}$.
  The Cauchy-Schwarz inequality implies
  the following: 
  \begin{align}
    \left\|\sqrt{h_{\Upsilon} \{\!\!\{ \mu
    k_1^{-1} \}\!\!\}} \; \{\!\!\{
    \mathbf{e}_{\mathbf{u}_1} \}\!\!\}
    \right\|^2_{\Upsilon}
    &\leq \frac{1}{2} \left(
    \left\|\sqrt{h_{\Upsilon}
      \{\!\!\{ \mu k_1^{-1} \}\!\!\}} 
    \; \mathbf{e}_{\mathbf{u}_1}
    \right\|^2_{\partial \omega^{+}_{\Upsilon} \cap \Upsilon}
    + \left\|\sqrt{h_{\Upsilon} \{\!\!\{ \mu k_1^{-1}
      \}\!\!\}} \; \mathbf{e}_{\mathbf{u}_1}
    \right\|^2_{\partial \omega^{-}_{\Upsilon} \cap \Upsilon}
    \right) 
  \end{align}
  Noting the boundedness of the drag
  coefficients (i.e., equation
  \eqref{Eqn:DG_drag_bounds}), we
  obtain the following:
  \begin{align}
    \left\|\sqrt{h_{\Upsilon} \{\!\!\{ \mu
    k_1^{-1} \}\!\!\}} \; \{\!\!\{
    \mathbf{e}_{\mathbf{u}_1} \}\!\!\}
    \right\|^2_{\Upsilon}
    &\leq \frac{1}{2} \mathcal{C}_{\mathrm{drag},1}
    \left(\left\|\sqrt{h_{\Upsilon} \mu k_1^{-1}} \;  
    \mathbf{e}_{\mathbf{u}_1}
    \right\|^2_{\partial \omega^{+}_{\Upsilon} \cap \Upsilon}
    + \left\|\sqrt{h_{\Upsilon} \mu k_1^{-1}} \; 
    \mathbf{e}_{\mathbf{u}_1}
    \right\|^2_{\partial \omega^{-}_{\Upsilon} \cap \Upsilon}
    \right)
  \end{align}
  Using the bound based on the locally
  quasi-uniform condition (i.e., inequality
  \eqref{Eqn:DPP_useful_bound_on_h}) we
  obtain the following:
  {\small
  \begin{align}
    \left\|\sqrt{h_{\Upsilon} \{\!\!\{ \mu
    k_1^{-1} \}\!\!\}} \{\!\!\{
    \mathbf{e}_{\mathbf{u}_1} \}\!\!\}
    \right\|^2_{\Upsilon}
    &\leq \frac{1}{4} \mathcal{C}_{\mathrm{drag},1}
    (1 + \mathcal{C}_{\mathrm{lqu}})
    \left(\left\|\sqrt{h_{\omega^{+}_{\Upsilon}} \mu k_1^{-1}} \; 
    \mathbf{e}_{\mathbf{u}_1}
    \right\|^2_{\partial \omega^{+}_{\Upsilon} \cap \Upsilon}
    + \left\|\sqrt{h_{\omega^{-}_{\Upsilon}} \mu k_1^{-1}} \; 
    \mathbf{e}_{\mathbf{u}_1}
    \right\|^2_{\partial \omega^{-}_{\Upsilon} \cap \Upsilon}
    \right)
  \end{align}
  }
  By summing over all the interior
  edges we obtain the following: 
  \begin{align}
    \sum_{\Upsilon \in \mathcal{E}^{\mathrm{int}}}
    \left\|\sqrt{h_{\Upsilon} \{\!\!\{ \mu
    k_1^{-1} \}\!\!\}} \; \{\!\!\{
    \mathbf{e}_{\mathbf{u}_1} \}\!\!\}
    \right\|^2_{\Upsilon}
    &\leq \frac{1}{4} \mathcal{C}_{\mathrm{drag},1}
    (1 + \mathcal{C}_{\mathrm{lqu}})
    \sum_{\omega \in \mathcal{T}}
    \left\|\sqrt{h_{\omega} \mu k_1^{-1}} \; 
    \mathbf{e}_{\mathbf{u}_1}
    \right\|^2_{\partial \omega \setminus \partial \Omega}
    \nonumber \\
    &\leq \frac{1}{4} \mathcal{C}_{\mathrm{drag},1}
    (1 + \mathcal{C}_{\mathrm{lqu}})
    \sum_{\omega \in \mathcal{T}}
    \left\|\sqrt{h_{\omega} \mu k_1^{-1}} \;  
    \mathbf{e}_{\mathbf{u}_1}\right\|^2_{\partial \omega}
  \end{align}
  By invoking the discrete trace inequality
  \eqref{Eqn:DPP_dti_vector} we obtain the
  following:
  \begin{align}
    \sum_{\Upsilon \in \mathcal{E}^{\mathrm{int}}}
    \left\|\sqrt{h_{\Upsilon} \{\!\!\{ \mu
    k_1^{-1} \}\!\!\}} \; \{\!\!\{
    \mathbf{e}_{\mathbf{u}_1} \}\!\!\}
    \right\|^2_{\Upsilon}
    &\leq \frac{1}{4} \mathcal{C}_{\mathrm{drag},1}
    (1 + \mathcal{C}_{\mathrm{lqu}})
    \mathcal{C}^2_{\mathrm{trace}} (1 + \mathcal{C}_{\mathrm{inv}})^2
    \sum_{\omega \in \mathcal{T}}
    \left\|\sqrt{\mu k_1^{-1}} \; 
    \mathbf{e}_{\mathbf{u}_1}
    \right\|^2_{\omega} \nonumber \\
    &\leq \frac{1}{4} \mathcal{C}_{\mathrm{drag},1}
    \mathcal{C}^2_{\mathrm{trace}}
    (1 + \mathcal{C}_{\mathrm{inv}})^2
    (1 + \mathcal{C}_{\mathrm{lqu}})
    \left\|\sqrt{\mu k_1^{-1}} \;
    \mathbf{e}_{\mathbf{u}_1} \right\|^2   
  \end{align}
  (Recall that the subscript will be dropped if
  the $L_2$ norm is over $\widetilde{\Omega} :=
  \cup_{\omega \in \mathcal{T}} \; \omega$.) Thus,
  \begin{align}
    \label{Eqn:DG_C_eu1}
    \mathcal{C}_{\mathbf{e}_{\mathbf{u}_1}} := \frac{1}{4}
    \mathcal{C}_{\mathrm{drag},1}
    \mathcal{C}^2_{\mathrm{trace}}
    (1 + \mathcal{C}_{\mathrm{inv}})^2
    (1 + \mathcal{C}_{\mathrm{lqu}})   
  \end{align}
  On similar lines, one can establish
  the estimate \eqref{Eqn:Trace6} with
  \begin{align}
    \label{Eqn:DG_C_eu2}
    \mathcal{C}_{\mathbf{e}_{\mathbf{u}_2}} := \frac{1}{4}
    \mathcal{C}_{\mathrm{drag},2}
    \mathcal{C}^2_{\mathrm{trace}}
    (1 + \mathcal{C}_{\mathrm{inv}})^2
    (1 + \mathcal{C}_{\mathrm{lqu}})   
  \end{align}
\end{proof}

If a $p$-th order polynomial is employed for
a field variable $f(\mathbf{x})$ on an element
$\omega \in \mathcal{T}$ and the corresponding
interpolate denoted by $\widetilde{f}^{h}$, the
following estimate holds for the interpolation
error \citep{Brezzi_Fortin}:
\begin{align}
  \label{Eqn:DG_standard_estimate_for_int_error}
  \|f - \widetilde{f}^{h}\|_{\omega} \leq
  \mathcal{C}_{\mathrm{int}} h_{w}^{p+1}
  |f|_{H^{p+1}(\omega)} 
\end{align}
where $h_{\omega}$ is the element diameter of $\omega$,
$\mathcal{C}_{\mathrm{int}}$ is a non-dimensional constant
independent of $h_{\omega}$ and $f$, and
$|\cdot|_{H^{p+1}(\omega)}$ is a Sobolev
semi-norm, which is defined in equation
\eqref{Eqn:Sobolev_norm_Scalar}.

To avoid further introduction of constants,
we employ the notation $A \lesssim B$ to
denote that there exits a constant
$\mathcal{C}$, independent of the mesh size,
such that $A \leq \mathcal{C} B$. A similar
definition holds for $A \gtrsim B$. The
notation $A \sim B$ denotes the case when
$A \lesssim B$ and $A \gtrsim B$ hold
simultaneously. 

\begin{lemma}{(Estimates for interpolation
    errors on $\Gamma^{\mathrm{int}}$.)}
  \label{Lemma:DG_int_error_Gamma_int}
  If polynomial orders used for interpolation of
  $\mathbf{u}_1$, $\mathbf{u}_2$, $p_1$ and $p_2$
  are, respectively, $p$, $q$, $r$ and $s$ then
  the following estimates hold for the interpolation
  errors on $\Gamma^{\mathrm{int}}$:
  \begin{align}
    \label{Eqn:int_error_avg_eta_u1}
    \left\| \sqrt{\frac{h_{\Upsilon}}{\eta_p}
      \{\!\!\{ \mu k_1^{-1} \}\!\!\}} \;
    \{\!\!\{ \boldsymbol{\eta}_{\mathbf{u}_1} \}\!\!\}
    \right\|^2_{\Gamma^{\mathrm{int}}}
    \lesssim
    \sum_{\omega \in \mathcal{T}_{h}}
    h_{\omega}^{2(p+1)} |\mathbf{u}_1|^2_{H^{p+1}(\omega)}
  \end{align}
  \begin{align}
    \label{Eqn:int_error_avg_eta_u2}
    \left\| \sqrt{\frac{h_{\Upsilon}}{\eta_p}
      \{\!\!\{ \mu k_2^{-1} \}\!\!\}} \;
    \{\!\!\{ \boldsymbol{\eta}_{\mathbf{u}_2} \}\!\!\}
    \right\|^2_{\Gamma^{\mathrm{int}}}
    \lesssim
    \sum_{\omega \in \mathcal{T}_{h}}
    h_{\omega}^{2(q+1)} |\mathbf{u}_2|^2_{H^{q+1}(\omega)}
  \end{align}
  \begin{align}
    \label{Eqn:int_error_jump_eta_p1}
      \left\| \sqrt{h_{\Upsilon}^{-1} \{\!\!\{ \mu^{-1} k_1 \}\!\!\}} \; 
      \llbracket \eta_{p_1} \rrbracket \right\|^2_{\Gamma^{\mathrm{int}}}
      \lesssim
      \sum_{\omega \in \mathcal{T}_{h}}
      h_{\omega}^{2r}|p_1|^2_{H^{r+1}(\omega)}
    \end{align}
  \begin{align}
    \label{Eqn:int_error_jump_eta_p2}
      \left\| \sqrt{h_{\Upsilon}^{-1}
        \{\!\!\{ \mu^{-1} k_2 \}\!\!\}} \; 
      \llbracket \eta_{p_2} \rrbracket
      \right\|^2_{\Gamma^{\mathrm{int}}}
      \lesssim
      \sum_{\omega \in \mathcal{T}_{h}}
      h_{\omega}^{2s}|p_2|^2_{H^{s+1}(\omega)}
    \end{align}
\end{lemma}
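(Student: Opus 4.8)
The plan is to reduce each of the four estimates to the standard polynomial interpolation error bound \eqref{Eqn:DG_standard_estimate_for_int_error} by the same mechanism used in Lemma \ref{Lemma:DG_approx_error_Gamma_int}, the only difference being that here we are controlling the \emph{interpolation} error $\mathbf{H}$ (which, unlike the approximation error $\mathbf{E}^{h}$, is not a polynomial), so the discrete trace inequality must be replaced by the \emph{continuous} trace inequality of Lemma \ref{Lemma:DDP_continuous_trace_inequality}. First I would expand the left-hand side of \eqref{Eqn:int_error_avg_eta_u1} as a sum over interior edges $\Upsilon\in\mathcal{E}^{\mathrm{int}}$, and on each edge apply the Cauchy--Schwarz/triangle inequality to the average operator to split $\|\{\!\!\{\boldsymbol{\eta}_{\mathbf{u}_1}\}\!\!\}\|^2_{\Upsilon}$ into the two one-sided contributions $\|\boldsymbol{\eta}_{\mathbf{u}_1}\|^2_{\partial\omega^{\pm}_{\Upsilon}\cap\Upsilon}$, exactly as in the proof of Lemma \ref{Lemma:DG_approx_error_Gamma_int}. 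The weight $h_{\Upsilon}\{\!\!\{\mu k_1^{-1}\}\!\!\}/\eta_p$ is then bounded above using the drag bounds \eqref{Eqn:DG_drag_bounds} and the locally quasi-uniform bound \eqref{Eqn:DPP_useful_bound_on_h}, converting $h_{\Upsilon}$ into $h_{\omega^{\pm}_{\Upsilon}}$ and absorbing all the $\mu,k_1,\mathcal{C}_{\mathrm{drag},1},\mathcal{C}_{\mathrm{lqu}},\eta_p$ factors into the implied constant of $\lesssim$.

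Next, summing over all interior edges and re-indexing by elements, the right-hand side becomes $\lesssim \sum_{\omega\in\mathcal{T}_h} h_{\omega}\,\|\boldsymbol{\eta}_{\mathbf{u}_1}\|^2_{\partial\omega}$. At this point I would invoke the \emph{continuous} trace inequality \eqref{Eqn:Trace_inequality_exp2} applied to $\mathbf{v}=\boldsymbol{\eta}_{\mathbf{u}_1}$ on each $\omega$, giving $\|\boldsymbol{\eta}_{\mathbf{u}_1}\|^2_{\partial\omega}\lesssim h_{\omega}^{-1}\|\boldsymbol{\eta}_{\mathbf{u}_1}\|^2_{\omega}+h_{\omega}\|\mathrm{grad}[\boldsymbol{\eta}_{\mathbf{u}_1}]\|^2_{\omega}$, so that $h_{\omega}\|\boldsymbol{\eta}_{\mathbf{u}_1}\|^2_{\partial\omega}\lesssim \|\boldsymbol{\eta}_{\mathbf{u}_1}\|^2_{\omega}+h_{\omega}^2\|\mathrm{grad}[\boldsymbol{\eta}_{\mathbf{u}_1}]\|^2_{\omega}$. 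Finally I would apply the standard interpolation estimate \eqref{Eqn:DG_standard_estimate_for_int_error} to each term: $\|\boldsymbol{\eta}_{\mathbf{u}_1}\|_{\omega}\lesssim h_{\omega}^{p+1}|\mathbf{u}_1|_{H^{p+1}(\omega)}$ directly, and for the gradient term one uses the fact that $\mathrm{grad}[\boldsymbol{\eta}_{\mathbf{u}_1}]$ is, componentwise, the interpolation error of $\mathrm{grad}[\mathbf{u}_1]$ up to a lower-order interpolate (or, more cleanly, the standard estimate for the $H^1$-seminorm of the interpolation error, $|\mathbf{u}_1-\widetilde{\mathbf{u}}_1^h|_{H^1(\omega)}\lesssim h_{\omega}^{p}|\mathbf{u}_1|_{H^{p+1}(\omega)}$), so $h_{\omega}^2\|\mathrm{grad}[\boldsymbol{\eta}_{\mathbf{u}_1}]\|^2_{\omega}\lesssim h_{\omega}^{2(p+1)}|\mathbf{u}_1|^2_{H^{p+1}(\omega)}$. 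Summing over $\omega$ yields \eqref{Eqn:int_error_avg_eta_u1}. Estimate \eqref{Eqn:int_error_avg_eta_u2} is identical with $(1,p)\to(2,q)$.

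For the two pressure estimates \eqref{Eqn:int_error_jump_eta_p1}--\eqref{Eqn:int_error_jump_eta_p2}, the structure is the same but the jump operator replaces the average and the weight carries $h_{\Upsilon}^{-1}$ instead of $h_{\Upsilon}$; one bounds $\|\llbracket\eta_{p_1}\rrbracket\|^2_{\Upsilon}\lesssim \|\eta_{p_1}\|^2_{\partial\omega^+_{\Upsilon}\cap\Upsilon}+\|\eta_{p_1}\|^2_{\partial\omega^-_{\Upsilon}\cap\Upsilon}$, so after edge-to-element re-indexing the bound becomes $\lesssim\sum_\omega h_{\omega}^{-1}\|\eta_{p_1}\|^2_{\partial\omega}$, and the continuous trace inequality \eqref{Eqn:Trace_inequality_exp} gives $h_{\omega}^{-1}\|\eta_{p_1}\|^2_{\partial\omega}\lesssim h_{\omega}^{-2}\|\eta_{p_1}\|^2_{\omega}+\|\mathrm{grad}[\eta_{p_1}]\|^2_{\omega}\lesssim h_{\omega}^{2r}|p_1|^2_{H^{r+1}(\omega)}$, using \eqref{Eqn:DG_standard_estimate_for_int_error} and its $H^1$-seminorm counterpart. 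Summation over $\omega$ completes \eqref{Eqn:int_error_jump_eta_p1}; \eqref{Eqn:int_error_jump_eta_p2} is the same with $(1,r)\to(2,s)$. The main obstacle — really the only non-bookkeeping point — is the treatment of the gradient of the interpolation error: one must either cite the standard simultaneous interpolation estimate in both the $L_2$-norm and the $H^1$-seminorm, or argue that $\mathrm{grad}[\boldsymbol{\eta}_{\mathbf{u}_1}]$ inherits the optimal order; everything else is a mechanical reprise of Lemma \ref{Lemma:DG_approx_error_Gamma_int} with the discrete trace inequality swapped for the continuous one.
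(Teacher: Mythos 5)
Your proposal is correct and reaches the stated rates, but it diverges from the paper's proof at the trace-inequality step. The paper follows exactly your edge-splitting and weight-bounding preamble (drag bounds \eqref{Eqn:DG_drag_bounds}, locally quasi-uniform bound \eqref{Eqn:DPP_useful_bound_on_h}, re-indexing edges by elements to get $\sum_{\omega} h_{\omega}\|\boldsymbol{\eta}_{\mathbf{u}_1}\|^2_{\partial\omega}$ and $\sum_{\omega} h_{\omega}^{-1}\|\eta_{p_1}\|^2_{\partial\omega}$), but then it invokes the \emph{discrete} trace inequalities \eqref{Eqn:DPP_dti_vector} and \eqref{Eqn:DPP_dti_scalar} directly on $\boldsymbol{\eta}_{\mathbf{u}_i}$ and $\eta_{p_i}$, collapsing the boundary norms to $\|\cdot\|_{\omega}$ (with an extra $h_{\omega}^{-2}$ in the pressure case) and finishing with only the $L_2$ interpolation estimate \eqref{Eqn:DG_standard_estimate_for_int_error}. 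You instead use the \emph{continuous} trace inequality of Lemma \ref{Lemma:DDP_continuous_trace_inequality}, which produces the additional gradient terms $h_{\omega}^{2}\|\mathrm{grad}[\boldsymbol{\eta}_{\mathbf{u}_1}]\|^2_{\omega}$ and $\|\mathrm{grad}[\eta_{p_1}]\|^2_{\omega}$, and you absorb them with the $H^1$-seminorm interpolation estimate $|f-\widetilde{f}^h|_{H^1(\omega)}\lesssim h_{\omega}^{p}|f|_{H^{p+1}(\omega)}$. Your route is in fact the more careful one: the interpolation error is not a piecewise polynomial, so applying the discrete trace inequality to it (as the paper does) is formally an abuse, whereas the continuous trace inequality is valid for any $H^1$ function; the price you pay is the simultaneous $L_2$/$H^1$ interpolation estimate, which is standard (Brenner--Scott, Brezzi--Fortin) but is not explicitly stated in the paper — you correctly flag this as the only non-bookkeeping point, and the extra gradient terms carry exactly the right powers of $h_{\omega}$ so the final rates $h_{\omega}^{2(p+1)}$, $h_{\omega}^{2(q+1)}$, $h_{\omega}^{2r}$, $h_{\omega}^{2s}$ are unchanged.
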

\begin{proof}
  We first establish the estimate
  \eqref{Eqn:int_error_avg_eta_u1}.
  The boundedness of the drag coefficient
  $\mu/k_1(\mathbf{x})$ and the linearity
  of a norm imply the following: 
  {\small
    \begin{align}
      \left\| \sqrt{\frac{h_{\Upsilon}}{\eta_p}
        \{\!\!\{ \mu k_1^{-1} \}\!\!\}} \;
      \{\!\!\{ \boldsymbol{\eta}_{\mathbf{u}_1} \}\!\!\}
      \right\|^2_{\Upsilon}
      &\leq \frac{1}{\eta_p}
      \left(\sup_{\mathbf{x} \in \Omega}
      \frac{\mu}{k_1(\mathbf{x})}\right)
      \left\| \sqrt{h_{\Upsilon}} \;
      \{\!\!\{ \boldsymbol{\eta}_{\mathbf{u}_1} \}\!\!\}
      \right\|^2_{\Upsilon}
      \qquad \forall \Upsilon \in \mathcal{E}^{\mathrm{int}}
    \end{align}
  }
  Using the triangle inequality and the bound
  from the locally quasi-uniform condition
  \eqref{Eqn:DPP_useful_bound_on_h}, we obtain
  the following: 
  {\small
    \begin{align}
      \left\| \sqrt{\frac{h_{\Upsilon}}{\eta_p}
        \{\!\!\{ \mu k_1^{-1} \}\!\!\}} \;
      \{\!\!\{ \boldsymbol{\eta}_{\mathbf{u}_1} \}\!\!\}
      \right\|^2_{\Upsilon}
      &\lesssim
      \frac{1}{4} \left(1 + \mathcal{C}_{\mathrm{lqu}}\right)
      \left(\left\| \sqrt{h_{\omega_{\Upsilon}^{+}}} \;
      \boldsymbol{\eta}_{\mathbf{u}_1}
      \right\|^2_{\partial \omega_{\Upsilon}^{+} \cap \Upsilon}
      + \left\| \sqrt{h_{\omega_{\Upsilon}^{-}}} \;
      \boldsymbol{\eta}_{\mathbf{u}_1} 
      \right\|^2_{\partial \omega_{\Upsilon}^{-} \cap \Upsilon}
      \right)
      \quad \forall \Upsilon \in \mathcal{E}^{\mathrm{int}}
    \end{align}
  }
  By summing over all the interior edges
  and noting the linearity of a norm,
  we obtain the following: 
  {\small
    \begin{align}
      \left\| \sqrt{\frac{h_{\Upsilon}}{\eta_p}
        \{\!\!\{ \mu k_1^{-1} \}\!\!\}} \;
      \{\!\!\{ \boldsymbol{\eta}_{\mathbf{u}_1} \}\!\!\}
      \right\|^2_{\Gamma^{\mathrm{int}}}
      = \sum_{\Upsilon \in \mathcal{E}^{\mathrm{int}}}
      \left\| \sqrt{\frac{h_{\Upsilon}}{\eta_p}
        \{\!\!\{ \mu k_1^{-1} \}\!\!\}} \;
      \{\!\!\{ \boldsymbol{\eta}_{\mathbf{u}_1} \}\!\!\}
      \right\|^2_{\Upsilon}
      &\lesssim
      \sum_{\omega \in \mathcal{T}_{h}}
      \left(h_{\omega} 
      \left\|\boldsymbol{\eta}_{\mathbf{u}_1}
      \right\|^2_{\partial \omega}
      \right) 
    \end{align}
  }
  By invoking the discrete trace inequality
  \eqref{Eqn:DPP_dti_vector}, we obtain the
  following inequality: 
  {\small
    \begin{align}
      \left\| \sqrt{\frac{h_{\Upsilon}}{\eta_p}
        \{\!\!\{ \mu k_1^{-1} \}\!\!\}} \;
      \{\!\!\{ \boldsymbol{\eta}_{\mathbf{u}_1} \}\!\!\}
      \right\|^2_{\Gamma^{\mathrm{int}}}
      &\lesssim
      \sum_{\omega \in \mathcal{T}_{h}} 
      \left\|\boldsymbol{\eta}_{\mathbf{u}_1}
      \right\|^2_{\omega}
    \end{align}
  }
  If a polynomial of order $p$ is employed
  for approximating $\mathbf{u}_1$, then the
  standard estimate for the interpolation error
  \eqref{Eqn:DG_standard_estimate_for_int_error}
  provides the following: 
  {\small
    \begin{align}
      \left\| \sqrt{\frac{h_{\Upsilon}}{\eta_p}
        \{\!\!\{ \mu k_1^{-1} \}\!\!\}} \;
      \{\!\!\{ \boldsymbol{\eta}_{\mathbf{u}_1} \}\!\!\}
      \right\|^2_{\Gamma^{\mathrm{int}}}
      &\lesssim \sum_{\omega \in \mathcal{T}_{h}}
      h_{\omega}^{2(p+1)} |\mathbf{u}_1|^2_{H^{p+1}(\omega)}
    \end{align}
  }
  which is the estimate \eqref{Eqn:int_error_avg_eta_u1}. 
  By reasoning out on similar lines, one can establish
  the estimate \eqref{Eqn:int_error_avg_eta_u2}.

  We now establish the estimate
  \eqref{Eqn:int_error_jump_eta_p1}.
  The boundedness of the drag coefficient
  $\mu/k_1(\mathbf{x})$ and the linearity
  of a norm imply the following: 
  {\small
    \begin{align}
      \left\| \sqrt{h_{\Upsilon}^{-1} 
        \{\!\!\{ \mu^{-1} k_1\}\!\!\}} \;
      \llbracket \eta_{p_1} \rrbracket
      \right\|^2_{\Upsilon}
      &\leq 
      \left(\inf_{\mathbf{x} \in \Omega}
      \frac{\mu}{k_1(\mathbf{x})}\right)
      \left\| \sqrt{h_{\Upsilon}^{-1}} \;
      \llbracket \eta_{p_1} \rrbracket
      \right\|^2_{\Upsilon}
      \qquad \forall \Upsilon \in \mathcal{E}^{\mathrm{int}}
    \end{align}
  }
  Using the triangle inequality and the bound
  from the locally quasi-uniform condition
  \eqref{Eqn:DPP_useful_bound_on_h}, we obtain
  the following: 
  {\small
    \begin{align}
      \left\| \sqrt{h_{\Upsilon}^{-1}
        \{\!\!\{\mu^{-1} k_1 \}\!\!\}} \;
      \llbracket \eta_{p_1} \rrbracket 
      \right\|^2_{\Upsilon}
      &\lesssim 4 
      \left(1 + \frac{1}{\mathcal{C}_{\mathrm{lqu}}}\right)^{-1}
      \left(\left\| \sqrt{h_{\omega_{\Upsilon}^{+}}^{-1}} \;
      \eta_{p_1}
      \right\|^2_{\partial \omega_{\Upsilon}^{+} \cap \Upsilon}
      + \left\| \sqrt{h_{\omega_{\Upsilon}^{-}}^{-1}} \;
      \eta_{p_1} 
      \right\|^2_{\partial \omega_{\Upsilon}^{-} \cap \Upsilon}
      \right)
      \quad \forall \Upsilon \in \mathcal{E}^{\mathrm{int}}
    \end{align}
  }
  By summing over all the interior edges
  and noting the linearity of a norm,
  we obtain the following: 
  {\small
    \begin{align}
      \left\| \sqrt{h_{\Upsilon}^{-1}
        \{\!\!\{ \mu^{-1} k_1 \}\!\!\}} \;
      \llbracket \eta_{p_1} \rrbracket 
      \right\|^2_{\Gamma^{\mathrm{int}}}
      = \sum_{\Upsilon \in \mathcal{E}^{\mathrm{int}}}
      \left\| \sqrt{h_{\Upsilon}^{-1} 
        \{\!\!\{ \mu^{-1} k_1 \}\!\!\}} \;
      \llbracket \eta_{p_1} \rrbracket 
      \right\|^2_{\Upsilon}
      &\lesssim
      \sum_{\omega \in \mathcal{T}_{h}}
      \left(h_{\omega}^{-1}  
      \left\|\eta_{p_1}
      \right\|^2_{\partial \omega}
      \right) 
    \end{align}
  }
  By invoking the discrete trace inequality
  \eqref{Eqn:DPP_dti_scalar}, we obtain the
  following inequality: 
  {\small
    \begin{align}
      \left\| \sqrt{h_{\Upsilon}^{-1}
        \{\!\!\{ \mu^{-1} k_1 \}\!\!\}} \;
      \llbracket \eta_{p_1} \rrbracket 
      \right\|^2_{\Gamma^{\mathrm{int}}}
      &\lesssim
      \sum_{\omega \in \mathcal{T}_{h}}
      \left(h_{\omega}^{-2} 
      \left\|\eta_{p_1}
      \right\|^2_{\omega}
      \right) 
    \end{align}
  }
  If a polynomial of order $r$ is employed
  for approximating $p_1$, then the standard
  estimate for the interpolation error
  \eqref{Eqn:DG_standard_estimate_for_int_error}
  provides the following: 
  {\small
    \begin{align}
      \left\| \sqrt{h_{\Upsilon}^{-1}
        \{\!\!\{ \mu^{-1} k_1\}\!\!\}} \;
      \llbracket \eta_{p_1} \rrbracket 
      \right\|^2_{\Gamma^{\mathrm{int}}}
      &\lesssim \sum_{\omega \in \mathcal{T}_{h}}
      h_{\omega}^{2r} |p_1|^2_{H^{r+1}(\omega)}
    \end{align}
  }
  which is the estimate \eqref{Eqn:int_error_jump_eta_p1}. 
  By reasoning out on similar lines, one can establish
  the estimate \eqref{Eqn:int_error_jump_eta_p2}. 
\end{proof}

\begin{lemma}{(Estimate for $\mathbf{H}$ under the stability norm.)}
  \label{Lemma:DG_estimate_for_H_in_terms_of_stab_norm}
  If polynomial orders used for interpolation of
  $\mathbf{u}_1$, $\mathbf{u}_2$, $p_1$ and $p_2$
  are, respectively, $p$, $q$, $r$ and $s$ then
  the following estimate holds:
  {\small
    \begin{align}
      \label{Eqn:H_stab}
      \left(\|\mathbf{H}\|_{\mathrm{stab}}^{\mathrm{DG}}\right)^{2}
      \lesssim \sum_{\omega \in \mathcal{T}_h} \left(
      h^{2(p+1)}_{\omega} |\mathbf{u}_{1}|^2_{H^{p+1}(\omega)}
      + h^{2(q+1)}_{\omega} |\mathbf{u}_{2}|^2_{H^{q+1}(\omega)}
      + \left(1 + h_{\omega}^{2}\right) h^{2r}_{\omega} |p_{1}|^2_{H^{r+1}(\omega)}
      + \left(1 + h^{2}_{\omega}\right) h^{2s}_{\omega} |p_{2}|^2_{H^{s+1}(\omega)}
      \right)
    \end{align}
  }
  where the constant in the estimate is
  independent of the characteristic mesh
  length ($h$ or $h_{\omega}$) and the
  solution fields ($\mathbf{u}_{1}$,
  $\mathbf{u}_2$, $p_1$ and $p_2$).
\end{lemma}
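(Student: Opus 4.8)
The plan is to expand the squared stability norm $\bigl(\|\mathbf{H}\|_{\mathrm{stab}}^{\mathrm{DG}}\bigr)^{2}$ by substituting $\mathbf{W}=\mathbf{H}=\{\boldsymbol{\eta}_{\mathbf{u}_1},\boldsymbol{\eta}_{\mathbf{u}_2},\eta_{p_1},\eta_{p_2}\}$ into the defining identity \eqref{Eqn:DG_stability_norm} and to bound each of the nine resulting terms separately, collecting the contributions at the end. I would organize the terms into three groups: (i) the interior $L_2$ terms on the velocity interpolation errors, $\tfrac12\|\sqrt{\mu/k_i}\,\boldsymbol{\eta}_{\mathbf{u}_i}\|^2$; (ii) the interior terms tied to the pressure interpolation errors, namely $\tfrac12\|\sqrt{k_i/\mu}\,\mathrm{grad}[\eta_{p_i}]\|^2$ together with $\|\sqrt{\beta/\mu}(\eta_{p_1}-\eta_{p_2})\|^2$; and (iii) the four edge terms on $\Gamma^{\mathrm{int}}$ involving $\llbracket\boldsymbol{\eta}_{\mathbf{u}_i}\rrbracket$ and $\llbracket\eta_{p_i}\rrbracket$. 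Each group is estimated by a now-routine combination of ingredients already available in the excerpt.

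For group (i), I would use the boundedness of the drag coefficients \eqref{Eqn:DG_drag_bounds} to replace $\mu/k_i$ by its supremum, reducing the term to $\|\boldsymbol{\eta}_{\mathbf{u}_i}\|^2=\sum_{\omega}\|\boldsymbol{\eta}_{\mathbf{u}_i}\|_{\omega}^2$, and then apply the standard interpolation estimate \eqref{Eqn:DG_standard_estimate_for_int_error} elementwise, yielding $\sum_{\omega}h_{\omega}^{2(p+1)}|\mathbf{u}_1|^2_{H^{p+1}(\omega)}$ and $\sum_{\omega}h_{\omega}^{2(q+1)}|\mathbf{u}_2|^2_{H^{q+1}(\omega)}$. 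For group (ii), the gradient terms are handled the same way after pulling out $\sup(k_i/\mu)$, except that here I need the $H^1$-seminorm companion of \eqref{Eqn:DG_standard_estimate_for_int_error}, i.e.\ $|\eta_{p_1}|_{H^1(\omega)}\lesssim h_{\omega}^{r}|p_1|_{H^{r+1}(\omega)}$ (classical interpolation theory, \citep{Brezzi_Fortin,Brenner_Scott}), which contributes $\sum_{\omega}h_{\omega}^{2r}|p_1|^2_{H^{r+1}(\omega)}$; the $\beta/\mu$-weighted term is bounded, by a triangle inequality and the boundedness of $\beta/\mu$, by $\|\eta_{p_1}\|^2+\|\eta_{p_2}\|^2$, hence by $\sum_{\omega}\bigl(h_{\omega}^{2(r+1)}|p_1|^2_{H^{r+1}(\omega)}+h_{\omega}^{2(s+1)}|p_2|^2_{H^{s+1}(\omega)}\bigr)$ via \eqref{Eqn:DG_standard_estimate_for_int_error}. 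Adding the $h_{\omega}^{2r}$ contribution from the gradient term and the $h_{\omega}^{2(r+1)}=h_{\omega}^{2}\cdot h_{\omega}^{2r}$ contribution from the mass-transfer term reproduces exactly the $(1+h_{\omega}^{2})h_{\omega}^{2r}|p_1|^2_{H^{r+1}(\omega)}$ term of \eqref{Eqn:H_stab}, and similarly for $p_2$.

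For group (iii), the pressure-jump edge terms $\bigl\|\sqrt{\eta_p h_{\Upsilon}^{-1}\{\!\!\{\mu^{-1}k_i\}\!\!\}}\,\llbracket\eta_{p_i}\rrbracket\bigr\|^2_{\Gamma^{\mathrm{int}}}$ are, up to the bounded constant $\eta_p$, precisely the estimates \eqref{Eqn:int_error_jump_eta_p1}–\eqref{Eqn:int_error_jump_eta_p2} of Lemma \ref{Lemma:DG_int_error_Gamma_int}, contributing $\sum_{\omega}h_{\omega}^{2r}|p_1|^2_{H^{r+1}(\omega)}$ and $\sum_{\omega}h_{\omega}^{2s}|p_2|^2_{H^{s+1}(\omega)}$; the velocity-jump edge terms $\bigl\|\sqrt{\eta_u h_{\Upsilon}\{\!\!\{\mu k_i^{-1}\}\!\!\}}\,\llbracket\boldsymbol{\eta}_{\mathbf{u}_i}\rrbracket\bigr\|^2_{\Gamma^{\mathrm{int}}}$ are not literally covered by Lemma \ref{Lemma:DG_int_error_Gamma_int} (which is stated with the average $\{\!\!\{\cdot\}\!\!\}$ rather than the jump $\llbracket\cdot\rrbracket$), but since on each $\Upsilon$ one has the same two-sided bound $|\llbracket\boldsymbol{\tau}\rrbracket|\le|\boldsymbol{\tau}^{+}_{\Upsilon}|+|\boldsymbol{\tau}^{-}_{\Upsilon}|$ as for $\{\!\!\{\cdot\}\!\!\}$, the identical chain — Cauchy–Schwarz, the locally quasi-uniform bound \eqref{Eqn:DPP_useful_bound_on_h}, the discrete trace inequality \eqref{Eqn:DPP_dti_vector}, then \eqref{Eqn:DG_standard_estimate_for_int_error} — goes through verbatim and gives $\sum_{\omega}h_{\omega}^{2(p+1)}|\mathbf{u}_1|^2_{H^{p+1}(\omega)}$ and $\sum_{\omega}h_{\omega}^{2(q+1)}|\mathbf{u}_2|^2_{H^{q+1}(\omega)}$. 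Finally I would add the bounds from all three groups: the velocity contributions already carry the full powers $h_{\omega}^{2(p+1)}$ and $h_{\omega}^{2(q+1)}$, and the pressure contributions assemble into $(1+h_{\omega}^{2})h_{\omega}^{2r}$ and $(1+h_{\omega}^{2})h_{\omega}^{2s}$, which is \eqref{Eqn:H_stab}; every constant generated ($\mathcal{C}_{\mathrm{drag},i}$, $\mathcal{C}_{\mathrm{lqu}}$, $\mathcal{C}_{\mathrm{trace}}$, $\mathcal{C}_{\mathrm{inv}}$, $\mathcal{C}_{\mathrm{int}}$, $\eta_u$, $\eta_p$, and the bounds on $\mu/k_i$ and $\beta/\mu$) is independent of $h$, so the $\lesssim$ is justified.

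The only ingredient not already recorded in the excerpt is the $H^1$-seminorm interpolation estimate needed for the $\mathrm{grad}[\eta_{p_i}]$ terms, so that is the one place I would need to cite standard interpolation theory rather than \eqref{Eqn:DG_standard_estimate_for_int_error}; beyond that I expect the ``hard part'' to be purely bookkeeping — specifically, keeping track of the $h_{\omega}$ powers so that the gradient term (order $h_{\omega}^{2r}$) and the $L_2$ pressure terms (order $h_{\omega}^{2(r+1)}$) combine correctly into the stated $(1+h_{\omega}^{2})h_{\omega}^{2r}$ factor, and noting that the velocity-jump edge terms require the trivial observation that jumps of vectors satisfy the same pointwise bound as averages.
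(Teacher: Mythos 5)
Your proposal is correct and follows essentially the same route as the paper: expand the nine terms of the stability norm with $\mathbf{W}=\mathbf{H}$, bound the interior $L_2$ terms via the drag bounds and the interpolation estimate \eqref{Eqn:DG_standard_estimate_for_int_error}, treat the velocity-jump edge terms by the same quasi-uniformity/discrete-trace/interpolation chain used for Lemma \ref{Lemma:DG_int_error_Gamma_int}, and invoke that lemma directly (with $\eta_p$ bounded) for the pressure-jump edge terms. The only divergence is in the gradient terms: the paper applies the discrete inverse inequality \eqref{Eqn:DDP_discrete_inverse_inequality} to $\eta_{p_i}$ and then the $L_2$ interpolation estimate, whereas you use the $H^1$-seminorm interpolation estimate from classical interpolation theory — both produce the $h_{\omega}^{2r}$ (resp.\ $h_{\omega}^{2s}$) contribution, and your variant has the mild advantage of not applying an inverse inequality to the non-polynomial interpolation error.
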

\begin{proof}
  The definition of the stability
  norm \eqref{Eqn:DG_stability_norm}
  and the components of $\mathbf{H}$
  \eqref{Eqn:DG_components_of_E_and_H}
  imply the following: 
{\small
  \begin{align}
    \left(\|\mathbf{H}\|_{\mathrm{stab}}^{\mathrm{DG}}\right)^{2}
    &= \frac{1}{2}\left\|\sqrt{\frac{\mu}{k_1}}
    \boldsymbol{\eta}_{\mathbf{u}_1}\right\|^2
    + \frac{1}{2}\left\|\sqrt{\frac{k_1}{\mu}}
    \mathrm{grad}[\eta_{p_1}]\right\|^2 
    + \frac{1}{2}\left\|\sqrt{\frac{\mu}{k_2}}
    \boldsymbol{\eta}_{\mathbf{u}_2} \right\|^2 
    + \frac{1}{2}\left\|\sqrt{\frac{k_2}{\mu}}
    \mathrm{grad}[\eta_{p_2}]\right\|^2 \nonumber \\
    &+ \left\|\sqrt{\frac{\beta}{\mu}}
    (\eta_{p_1} - \eta_{p_2})\right\|^2 
    + \left\| \sqrt{\eta_u h_{\Upsilon}
      \{\!\!\{\mu~ k_1^{-1}\}\!\!\}}
    \; \llbracket \boldsymbol{\eta}_{\mathbf{u}_{1}}
    \rrbracket
    \right\|_{\Gamma^{\mathrm{int}}}^{2}
    + \left\| \sqrt{\frac{\eta_p}{h_{\Upsilon}}
      \{\!\!\{\mu^{-1} k_1\}\!\!\}}
    \; \llbracket \eta_{p_{1}} \rrbracket
    \right\|_{\Gamma^{\mathrm{int}}}^{2} \nonumber \\
    & + \left\| \sqrt{\eta_u h_{\Upsilon}
      \{\!\!\{\mu~ k_2^{-1}\}\!\!\}}
    \; \llbracket \boldsymbol{\eta}_{\mathbf{u}_{2}} \rrbracket
    \right\|_{\Gamma^{\mathrm{int}}}^{2}
    + \left\| \sqrt{\frac{\eta_p}{h_{\Upsilon}}
      \{\!\!\{\mu^{-1} k_2\}\!\!\}} \; \llbracket
    \eta_{p_{2}} \rrbracket \right\|_{\Gamma^{\mathrm{int}}}^{2}
\end{align}
}
Using the boundedness of the drag
coefficient of the first pore-network,
linearity of a norm and the standard
estimate for the interpolation error
\eqref{Eqn:DG_standard_estimate_for_int_error},
and noting that the polynomial order of
approximation for $\mathbf{u}_1$ is $p$,
we obtain the following:
\begin{align}
  \frac{1}{2}\left\|\sqrt{\frac{\mu}{k_1}}
  \boldsymbol{\eta}_{\mathbf{u}_1}\right\|^2
  &\leq \frac{1}{2} \sup_{\mathbf{x} \in \Omega}
  \frac{\mu}{k_1(\mathbf{x})}
  \sum_{\omega \in \mathcal{T}_{h}}
  \left\|\boldsymbol{\eta}_{\mathbf{u}_1}\right\|^2_{\omega}
  \lesssim \sum_{\omega \in \mathcal{T}_h}
  h_{\omega}^{2(p+1)}
  \left|\mathbf{u}_1\right|^2_{H^{p+1}(\omega)}
\end{align}
Similarly, 
\begin{align}
  \frac{1}{2}\left\|\sqrt{\frac{\mu}{k_2}}
  \boldsymbol{\eta}_{\mathbf{u}_2}\right\|^2
  &\leq \frac{1}{2} \sup_{\mathbf{x} \in \Omega}
  \frac{\mu}{k_2(\mathbf{x})}
  \sum_{\omega \in \mathcal{T}_{h}}
  \left|\boldsymbol{\eta}_{\mathbf{u}_2}
  \right|^2_{\omega}
  \lesssim \sum_{\omega \in \mathcal{T}_h}
  h_{\omega}^{2(q+1)}
  \left|\mathbf{u}_2\right|^2_{H^{q+1}(\omega)}
\end{align}

For the \emph{second} term,
we proceed as follows by
first noting the boundedness
of the drag coefficient in
the first pore-network: 
\begin{align}
  \frac{1}{2}\left\|\sqrt{\frac{k_1}{\mu}}
  \mathrm{grad}[\eta_{p_1}]\right\|^2
  &\leq \frac{1}{2}\inf_{\mathbf{x} \in \Omega}
  \frac{\mu}{k_1(\mathbf{x})}
  \sum_{\omega \in \mathcal{T}_h}
  \left\|\mathrm{grad}[\eta_{p_1}]\right\|^2_{\omega}
  \nonumber \\
  &\leq \frac{1}{2}\inf_{\mathbf{x} \in \Omega}
  \frac{\mu}{k_1(\mathbf{x})}
  C_{\mathrm{inv}}^{2}
  \sum_{\omega \in \mathcal{T}_h} h_{\omega}^{-2}
  \left\|\eta_{p_1}\right\|^2_{\omega}
  &&\; \mbox{[inverse estimate
    \eqref{Eqn:DDP_discrete_inverse_inequality}]}
  \nonumber \\
  &\lesssim \sum_{\omega \in \mathcal{T}_h} h_{\omega}^{2r}
  \left|p_1\right|^2_{H^{r+1}(\omega)}
  &&\; \mbox{[interpolation estimate
    \eqref{Eqn:DG_standard_estimate_for_int_error}]}
\end{align}
Similarly, one can derive the following
estimate for the \emph{fourth} term:
\begin{align}
  \frac{1}{2}\left\|\sqrt{\frac{k_2}{\mu}}
  \mathrm{grad}[\eta_{p_2}]\right\|^2
  &\lesssim \sum_{\omega \in \mathcal{T}_h} h_{\omega}^{2s}
  \left|p_2\right|^2_{H^{s+1}(\omega)}
\end{align}

The estimate for the \emph{fifth}
term utilizes the triangle inequality
and the interpolation estimate
\eqref{Eqn:DG_standard_estimate_for_int_error}
and it can be obtained as follows: 
{\small
\begin{align}
  \left\|\sqrt{\frac{\beta}{\mu}}
  (\eta_{p_1} - \eta_{p_2})\right\|^2
  \leq \frac{\beta}{\mu}
  \sum_{\omega \in \mathcal{T}_{h}}
  \left(\left\|\eta_{p_1}\right\|^{2}_{\omega}
  + \left\|\eta_{p_2}\right\|^{2}_{\omega}\right)
  \lesssim \sum_{\omega \in \mathcal{T}_{h}}
  \left(h_{\omega}^{2(r+1)}
  \left|p_1\right|^{2}_{H^{r+1}(\omega)}
  + h_{\omega}^{2(s+1)}
  \left|p_2\right|^{2}_{H^{s+1}(\omega)}\right)
\end{align}}

Using the boundedness of $\eta_u$ and
the drag coefficient of the first
pore-network and noting the linearity
of a norm, we obtain the following
estimate for the \emph{sixth} term: 
\begin{align}
  \left\| \sqrt{\eta_u h_{\Upsilon}
    \{\!\!\{\mu~ k_1^{-1}\}\!\!\}}
  \; \llbracket \boldsymbol{\eta}_{\mathbf{u}_{1}}
  \rrbracket
  \right\|_{\Gamma^{\mathrm{int}}}^{2}
  &\lesssim
  \sum_{\Upsilon \in \mathcal{E}^{\mathrm{int}}}
  \left\|\sqrt{h_{\Upsilon}} \;
  \llbracket \boldsymbol{\eta}_{\mathbf{u}_1}
  \rrbracket \; \right\|^2_{\Upsilon}
\end{align}
Using the bound based on the
locally quasi-uniform condition
\eqref{Eqn:DPP_useful_bound_on_h}
and the triangle inequality, we
obtain the following:
\begin{align}
  \left\| \sqrt{\eta_u h_{\Upsilon}
    \{\!\!\{\mu~ k_1^{-1}\}\!\!\}}
  \; \llbracket \boldsymbol{\eta}_{\mathbf{u}_{1}}
  \rrbracket
  \right\|_{\Gamma^{\mathrm{int}}}^{2}
  &\lesssim 
  \sum_{\Upsilon \in \mathcal{E}^{\mathrm{int}}}
  \left(
  \left\|\sqrt{h_{\omega_{\Upsilon}^{+}}} \;
  \boldsymbol{\eta}_{\mathbf{u}_1}
  \; \right\|^2_{\partial \omega_{\Upsilon}^{+} \cap \Upsilon}
  + \left\|\sqrt{h_{\omega_{\Upsilon}^{-}}} \;
  \boldsymbol{\eta}_{\mathbf{u}_1}
  \; \right\|^2_{\partial \omega_{\Upsilon}^{-} \cap \Upsilon}
  \right) \nonumber \\
  &\lesssim
  \sum_{\omega \in \mathcal{T}_h}
  h_{\omega}\left\|
  \boldsymbol{\eta}_{\mathbf{u}_1}
  \right\|^2_{\partial \omega}
\end{align}
Using the discrete trace inequality
\eqref{Eqn:DPP_dti_vector} and the
standard interpolation estimate
\eqref{Eqn:DG_standard_estimate_for_int_error},
we obtain the following:
\begin{align}
  \left\| \sqrt{\eta_u h_{\Upsilon}
    \{\!\!\{\mu~ k_1^{-1}\}\!\!\}}
  \; \llbracket \boldsymbol{\eta}_{\mathbf{u}_{1}}
  \rrbracket
  \right\|_{\Gamma^{\mathrm{int}}}^{2}
  &\lesssim
  \sum_{\omega \in \mathcal{T}_h}
  \left\|\boldsymbol{\eta}_{\mathbf{u}_1}
  \right\|^2_{\omega} 
  \lesssim
  \sum_{\omega \in \mathcal{T}_h}
  h_{\omega}^{2(p+1)}
  \left|\mathbf{u}_1\right|^2_{H^{p+1}(\omega)}
\end{align} 
A similar argument gives rise
to the following estimate for
the \emph{eighth} term:
\begin{align}
  \left\| \sqrt{\eta_u h_{\Upsilon}
    \{\!\!\{\mu~ k_2^{-1}\}\!\!\}}
  \; \llbracket \boldsymbol{\eta}_{\mathbf{u}_{2}}
  \rrbracket
  \right\|_{\Gamma^{\mathrm{int}}}^{2}
  \lesssim
  \sum_{\omega \in \mathcal{T}_h}
  h_{\omega}^{2(q+1)}
  \left|\mathbf{u}_2\right|^2_{H^{q+1}(\omega)}
\end{align} 

Noting that $\eta_p$ is a bounded constant,
estimates \eqref{Eqn:int_error_jump_eta_p1}
and \eqref{Eqn:int_error_jump_eta_p2}
immediately imply the following estimates
for the \emph{seventh} and \emph{ninth}
terms: 
\begin{align}
  \left\| \sqrt{\frac{\eta_p}{h_{\Upsilon}}
    \{\!\!\{\mu^{-1} k_1\}\!\!\}}
  \; \llbracket \eta_{p_{1}} \rrbracket
  \right\|_{\Gamma^{\mathrm{int}}}^{2}
  \lesssim
  \sum_{\omega \in \mathcal{T}_h} h_{\omega}^{2r}
  \left| p_1\right|^2_{H^{r+1}(\omega)} \\
  \left\| \sqrt{\frac{\eta_p}{h_{\Upsilon}}
    \{\!\!\{\mu^{-1} k_2\}\!\!\}}
  \; \llbracket \eta_{p_2} \rrbracket
  \right\|_{\Gamma^{\mathrm{int}}}^{2}
  \lesssim
  \sum_{\omega \in \mathcal{T}_h} h_{\omega}^{2s}
  \left| p_2\right|^2_{H^{s+1}(\omega)}
\end{align}
By adding up the individual
estimates for all the terms,
we obtain the desired result. 
\end{proof}

\begin{theorem}{(Consistency)}
  \label{Thm:DG_Consistency}
  The error in the finite element solution satisfies
  \begin{align}
    \mathcal{B}_{\mathrm{stab}}^{\mathrm{DG}}(\mathbf{W}^{h};
    \mathbf{E}) = 0 \quad 
    \forall \mathbf{W}^{h} \in \mathbb{U}^{h}  
    \subset \mathbb{U}
    \label{Eqn:DG_solution_Error}
  \end{align}
\end{theorem}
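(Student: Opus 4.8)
The plan is to establish the stated identity as a \emph{Galerkin orthogonality} relation: first show that the exact solution $\mathbf{U}$ itself satisfies the discrete variational statement (i.e. the method is consistent), and then subtract this from the finite element equation \eqref{Eqn:DG_Galerkin_Weak_form_h} and invoke bilinearity of $\mathcal{B}_{\mathrm{stab}}^{\mathrm{DG}}(\cdot;\cdot)$ in its second slot. Concretely, the goal of the first (and main) step is to prove
\begin{align*}
  \mathcal{B}_{\mathrm{stab}}^{\mathrm{DG}}(\mathbf{W}^{h};\mathbf{U})
  = \mathcal{L}_{\mathrm{stab}}^{\mathrm{DG}}(\mathbf{W}^{h})
  \qquad \forall \mathbf{W}^{h} \in \mathbb{U}^{h}.
\end{align*}

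To obtain this, I would start from the element-wise weak form \eqref{Eqn:Weakform}, which is derived directly from \eqref{Eqn:DG_GE_Darcy_BLM_1}--\eqref{Eqn:DG_GE_Darcy_mass_balance_2} by testing against $\mathbf{W}^{h}$, applying the divergence theorem on each $\omega$, and using \eqref{Eqn:DG_GE_Darcy_MT}; hence it holds for the exact solution with the numerical fluxes on $\partial\omega$ replaced by the genuine (single-valued) traces of $p_i$ and $\mathbf{u}_i$. Summing over elements and regrouping with \eqref{Eqn:DG_jump_avg_identity}, the contributions on $\Gamma^{\mathrm{int}}$ organize into averages and jumps; the jump terms vanish because the exact solution obeys the physical continuity conditions \eqref{Eqn:DG_Physical_jumps} (equivalently $\mathbf{u}_i \in H(\mathrm{div},\Omega)$ and $p_i \in H^{1}(\Omega)$), so in particular the interior-edge stabilization terms carrying $\eta_u$ and $\eta_p$ in \eqref{Eqn:VMS_DG_bilinear_form} are each multiplied by a vanishing jump and drop out, and the numerical fluxes \eqref{Eqn:DG_stab_numerical_flux_choices} collapse to $\overset{*}{p}_i = \{\!\!\{p_i\}\!\!\} = p_i$, $\overset{*}{\mathbf{u}}_i = \{\!\!\{\mathbf{u}_i\}\!\!\} = \mathbf{u}_i$. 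On the exterior boundary the flux prescriptions in \eqref{Eqn:DPP2_NF_p1}--\eqref{Eqn:DPP2_NF_u2} reproduce the data via \eqref{Eqn:DG_GE_vBC_1}--\eqref{Eqn:DG_GE_Darcy_pBC_2}. What survives is exactly $\mathcal{B}^{\mathrm{DG}}_{\mathrm{Gal}}(\mathbf{W}^{h};\mathbf{U}) = \mathcal{L}^{\mathrm{DG}}_{\mathrm{Gal}}(\mathbf{W}^{h})$. Finally, for the residual-based subdomain stabilization, note that in $\mathcal{B}^{\mathrm{DG}}_{\mathrm{stab}} - \mathcal{L}^{\mathrm{DG}}_{\mathrm{stab}}$ these terms combine (for $i=1,2$) into
\begin{align*}
  -\frac{1}{2}\left(\mu k_i^{-1}\mathbf{w}^{h}_i - \mathrm{grad}[q^{h}_i];\;
  \mu^{-1} k_i\bigl(\mu k_i^{-1}\mathbf{u}_i + \mathrm{grad}[p_i] - \gamma\mathbf{b}_i\bigr)\right),
\end{align*}
whose second argument is precisely the residual of the balance of linear momentum \eqref{Eqn:DG_GE_Darcy_BLM_1}--\eqref{Eqn:DG_GE_Darcy_BLM_2}, identically zero in $\widetilde{\Omega}$; hence these terms vanish and consistency follows.

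With consistency in hand, the conclusion is immediate: since $\mathbf{U}^{h}$ satisfies \eqref{Eqn:DG_Galerkin_Weak_form_h} and $\mathbf{U}$ satisfies the identity just proved, and since $\mathbb{U}^{h}\subset\mathbb{U}$ so that $\mathbf{W}^{h}$ is an admissible test function in both, bilinearity in the second argument gives
\begin{align*}
  \mathcal{B}_{\mathrm{stab}}^{\mathrm{DG}}(\mathbf{W}^{h};\mathbf{E})
  = \mathcal{B}_{\mathrm{stab}}^{\mathrm{DG}}(\mathbf{W}^{h};\mathbf{U}^{h})
  - \mathcal{B}_{\mathrm{stab}}^{\mathrm{DG}}(\mathbf{W}^{h};\mathbf{U})
  = \mathcal{L}_{\mathrm{stab}}^{\mathrm{DG}}(\mathbf{W}^{h})
  - \mathcal{L}_{\mathrm{stab}}^{\mathrm{DG}}(\mathbf{W}^{h}) = 0
\end{align*}
for every $\mathbf{W}^{h}\in\mathbb{U}^{h}$, which is \eqref{Eqn:DG_solution_Error}. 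The only genuinely delicate point — the main obstacle — is the bookkeeping in the first step: one must verify carefully that the sum of the broken element integrals reassembles into $\mathcal{B}^{\mathrm{DG}}_{\mathrm{Gal}}$, that the global regularity of $\mathbf{U}$ makes every interior-edge and boundary flux term well defined, and that each jump appearing in the stabilization genuinely vanishes for the exact solution.
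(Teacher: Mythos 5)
Your proposal is correct and follows essentially the same route as the paper: the paper's proof is exactly the final Galerkin-orthogonality step, noting that $\mathbf{U}$ satisfies \eqref{Eqn:VMS_DG_Weak_Form_compact} for all $\mathbf{W}\in\mathbb{U}$ and hence for $\mathbf{W}^{h}\in\mathbb{U}^{h}\subset\mathbb{U}$, subtracting the discrete equation \eqref{Eqn:DG_Galerkin_Weak_form_h}, and using linearity in the second slot. Your first step---re-deriving that the exact PDE solution satisfies the discrete statement by checking that the physical jumps \eqref{Eqn:DG_Physical_jumps} kill the edge stabilization terms, the fluxes collapse to single-valued traces, and the residual-based element terms vanish---is a correct, more explicit justification of what the paper subsumes by simply invoking the weak form \eqref{Eqn:VMS_DG_Weak_Form_compact}.
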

\begin{proof}
  The proof follows a standard procedure
  employed in the literature.
  Equation \eqref{Eqn:VMS_DG_Weak_Form_compact} implies
  that for all $\mathbf{W}^{h} \in \mathbb{U}^{h} \subset
  \mathbb{U}$ we have the following:
  \begin{subequations}
    \begin{align}
      \mathcal{B}_{\mathrm{stab}}^{\mathrm{DG}}(\mathbf{W}^{h};
      \mathbf{U}^{h})
      &= \mathcal{L}_{\mathrm{stab}}^{\mathrm{DG}}(\mathbf{W}^{h}) \\
      \mathcal{B}_{\mathrm{stab}}^{\mathrm{DG}}(\mathbf{W}^{h};
      \mathbf{U})
      &= \mathcal{L}_{\mathrm{stab}}^{\mathrm{DG}}(\mathbf{W}^{h}) 
    \end{align}
  \end{subequations}
  By subtracting the above two equations,
  invoking the linearity in the second slot
  of $\mathcal{B}_{\mathrm{stab}}^{\mathrm{DG}}(\cdot;\cdot)$
  and noting the definition of $\mathbf{E}$ given by
  \eqref{Eqn:DG_Error_decomposition}, we obtain the
  desired result.
\end{proof}

\begin{theorem}{(Convergence)}
  \label{Thm:DG_convergence_theorem}
  Under a sequence of admissible meshes,
  the finite element solution $\mathbf{U}^{h}
  \in \mathbb{U}^{h}$ tends to the exact
  solution $\mathbf{U} \in \mathbb{U}$
  almost everywhere\footnote{Two quantities
    that are the same except on a set of
    measure zero are said to be equal
    almost everywhere \citep{Evans_PDE}.}
  as the mesh-size $h \rightarrow 0$. 
\end{theorem}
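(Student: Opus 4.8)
The plan is to follow the classical Céa-type argument adapted to the stabilized setting: control the total error $\mathbf{E}$ in the stability norm by the interpolation error $\mathbf{H}$, and then let $h\to 0$. First I would recall the error decomposition \eqref{Eqn:DG_Error_decomposition}, $\mathbf{E}=\mathbf{E}^{h}+\mathbf{H}$ with $\mathbf{E}^{h}=\mathbf{U}^{h}-\widetilde{\mathbf{U}}^{h}\in\mathbb{U}^{h}$, and observe that by Lemma \ref{Lemma:DD2_stab_norm} the quantity $\|\cdot\|_{\mathrm{stab}}^{\mathrm{DG}}$ is a genuine norm and, by the definition \eqref{Eqn:DG_stability_norm}, $\left(\|\mathbf{E}^{h}\|_{\mathrm{stab}}^{\mathrm{DG}}\right)^{2}=\mathcal{B}_{\mathrm{stab}}^{\mathrm{DG}}(\mathbf{E}^{h};\mathbf{E}^{h})$, i.e.\ the bilinear form is coercive on $\mathbb{U}^{h}$ with constant one in this norm. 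This is the ``stability'' half already encoded in how the norm was built.

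Next I would use consistency. Since $\mathbf{E}^{h}\in\mathbb{U}^{h}$, Theorem \ref{Thm:DG_Consistency} gives $\mathcal{B}_{\mathrm{stab}}^{\mathrm{DG}}(\mathbf{E}^{h};\mathbf{E})=0$, hence $\mathcal{B}_{\mathrm{stab}}^{\mathrm{DG}}(\mathbf{E}^{h};\mathbf{E}^{h})=-\mathcal{B}_{\mathrm{stab}}^{\mathrm{DG}}(\mathbf{E}^{h};\mathbf{H})$. So $\left(\|\mathbf{E}^{h}\|_{\mathrm{stab}}^{\mathrm{DG}}\right)^{2}=-\mathcal{B}_{\mathrm{stab}}^{\mathrm{DG}}(\mathbf{E}^{h};\mathbf{H})$, and the task reduces to a continuity (boundedness) estimate: I must show
\begin{align}
  \left|\mathcal{B}_{\mathrm{stab}}^{\mathrm{DG}}(\mathbf{E}^{h};\mathbf{H})\right|
  \lesssim \|\mathbf{E}^{h}\|_{\mathrm{stab}}^{\mathrm{DG}}\;\interleave\mathbf{H}\interleave,
\end{align}
where $\interleave\mathbf{H}\interleave$ is some norm of $\mathbf{H}$ that is itself controlled by interpolation estimates. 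Dividing through then yields $\|\mathbf{E}^{h}\|_{\mathrm{stab}}^{\mathrm{DG}}\lesssim\interleave\mathbf{H}\interleave$, and the triangle inequality $\|\mathbf{E}\|_{\mathrm{stab}}^{\mathrm{DG}}\le\|\mathbf{E}^{h}\|_{\mathrm{stab}}^{\mathrm{DG}}+\|\mathbf{H}\|_{\mathrm{stab}}^{\mathrm{DG}}$ combined with Lemma \ref{Lemma:DG_estimate_for_H_in_terms_of_stab_norm} gives a bound of the total error purely in terms of interpolation quantities. I would carry out the continuity estimate term by term on the bilinear form \eqref{Eqn:VMS_DG_bilinear_form}: the volumetric Darcy, divergence, and adjoint-stabilization terms are handled by Cauchy–Schwarz together with the drag bounds \eqref{Eqn:DG_drag_bounds}; the interface terms $(\llbracket\mathbf{w}_1\rrbracket;\{\!\!\{p_1\}\!\!\})_{\Gamma^{\mathrm{int}}}$ etc.\ and the $\eta_u,\eta_p$ jump-penalty terms are handled using precisely Lemma \ref{Lemma:DG_approx_error_Gamma_int} and Lemma \ref{Lemma:DG_int_error_Gamma_int}, which were proved for this purpose; the boundary terms on $\Gamma_1^u,\Gamma_2^u$ use the continuous/discrete trace inequalities of Lemmas \ref{Lemma:DDP_continuous_trace_inequality}--\ref{Lemma:DDP_discrete_trace_inequality}.

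Finally, assembling the individual estimates and invoking Lemma \ref{Lemma:DG_estimate_for_H_in_terms_of_stab_norm}, I obtain an \emph{a priori} bound of the form
\begin{align}
  \left(\|\mathbf{E}\|_{\mathrm{stab}}^{\mathrm{DG}}\right)^{2}
  \lesssim \sum_{\omega\in\mathcal{T}_h}\Bigl(
  h_{\omega}^{2(p+1)}|\mathbf{u}_1|^2_{H^{p+1}(\omega)}
  + h_{\omega}^{2(q+1)}|\mathbf{u}_2|^2_{H^{q+1}(\omega)}
  + (1+h_{\omega}^2)h_{\omega}^{2r}|p_1|^2_{H^{r+1}(\omega)}
  + (1+h_{\omega}^2)h_{\omega}^{2s}|p_2|^2_{H^{s+1}(\omega)}\Bigr),
\end{align}
assuming the exact solution has the regularity needed for the semi-norms on the right to be finite. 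Since the right-hand side is $O(h^{2\min(p+1,q+1,r,s)})$ (or better), it tends to $0$ as $h\to0$; because $\|\cdot\|_{\mathrm{stab}}^{\mathrm{DG}}$ controls the $L_2$ norms of $\mathbf{u}_1-\mathbf{u}_1^h$, $\mathbf{u}_2-\mathbf{u}_2^h$, $\mathrm{grad}$ of the pressure errors, and the pressure difference via the $\beta$-term, convergence in $L_2$ (hence almost everywhere along a subsequence, and in fact of the full sequence) follows. The main obstacle I anticipate is the term-by-term continuity estimate for the interface contributions of the bilinear form against $\mathbf{H}$: one has to pair a discrete jump $\llbracket\mathbf{e}^h\rrbracket$ (controlled by the stability norm) against an interpolation-error average/jump, and getting the mesh-weights $h_{\Upsilon}^{\pm1/2}$ and the permeability weights $\{\!\!\{\mu k_i^{-1}\}\!\!\}$ to match up cleanly — so that Cauchy–Schwarz splits the product exactly into a $\|\mathbf{E}^h\|_{\mathrm{stab}}^{\mathrm{DG}}$ factor and a factor bounded by Lemmas \ref{Lemma:DG_approx_error_Gamma_int}--\ref{Lemma:DG_int_error_Gamma_int} — is the delicate bookkeeping; in the conforming case ($\eta_u=\eta_p=0$) these jump terms drop and the argument is considerably simpler.
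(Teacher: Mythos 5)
Your proposal is correct and rests on exactly the same ingredients as the paper — consistency (Theorem \ref{Thm:DG_Consistency}), the fact that the stability norm is defined through $\mathcal{B}_{\mathrm{stab}}^{\mathrm{DG}}$, weighted Cauchy--Schwarz with the drag bounds, and Lemmas \ref{Lemma:DG_approx_error_Gamma_int}--\ref{Lemma:DG_estimate_for_H_in_terms_of_stab_norm} — but it is organized differently. You run the classical quasi-optimality (C\'ea/Strang) argument on the discrete part: $\left(\|\mathbf{E}^{h}\|_{\mathrm{stab}}^{\mathrm{DG}}\right)^{2}=-\mathcal{B}_{\mathrm{stab}}^{\mathrm{DG}}(\mathbf{E}^{h};\mathbf{H})$, then a continuity bound, division by $\|\mathbf{E}^{h}\|_{\mathrm{stab}}^{\mathrm{DG}}$, and the triangle inequality with Lemma \ref{Lemma:DG_estimate_for_H_in_terms_of_stab_norm}. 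The paper instead works with the total error directly, writing $\left(\|\mathbf{E}\|_{\mathrm{stab}}^{\mathrm{DG}}\right)^{2}=\mathcal{B}_{\mathrm{stab}}^{\mathrm{DG}}(\mathbf{H};\mathbf{E})$ (consistency used in the first slot with $\mathbf{E}^h$), expanding all terms, and absorbing every $\mathbf{E}$-dependent factor into the left-hand side via Peter--Paul inequalities with explicitly chosen constants $\varepsilon_1,\dots,\varepsilon_{17}$; no separate continuity estimate or triangle inequality is invoked. Your route buys a cleaner statement (an explicit quasi-optimality bound and no seventeen-constant bookkeeping) and has a technical advantage: the discrete trace inequality hidden in Lemma \ref{Lemma:DG_approx_error_Gamma_int} is then applied to $\mathbf{e}^{h}_{\mathbf{u}_i}$, which genuinely is piecewise polynomial, whereas the paper applies that lemma to components of the total error. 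The paper's absorption argument, in turn, avoids having to state what your norm of $\mathbf{H}$ on the right of the continuity bound is. Two caveats you should make explicit when carrying out your ``delicate bookkeeping'': the bound $|\mathcal{B}_{\mathrm{stab}}^{\mathrm{DG}}(\mathbf{E}^{h};\mathbf{H})|\lesssim\|\mathbf{E}^{h}\|_{\mathrm{stab}}^{\mathrm{DG}}\,(\cdots)$ is \emph{not} a generic boundedness of $\mathcal{B}_{\mathrm{stab}}^{\mathrm{DG}}$ in the stability norm — it works only because the first slot is discrete (so edge averages such as $\{\!\!\{\mathbf{e}^{h}_{\mathbf{u}_1}\}\!\!\}$ can be pulled back to element $L_2$ norms by Lemma \ref{Lemma:DG_approx_error_Gamma_int}) and because $\eta_p\neq 0$ places $\llbracket e^{h}_{p_i}\rrbracket$ in the stability norm, the same restriction the paper records in its remark; and the boundary terms on $\Gamma_i^{u}$, $\Gamma_i^{p}$ (which the paper's expansion quietly suppresses) need the same weighted trace treatment you sketch. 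Finally, norm convergence gives almost-everywhere convergence along a subsequence; your parenthetical claim about the full sequence is no better justified than the paper's identical assertion, so it is not a gap relative to the paper, but you should not present it as automatic.
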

\begin{proof}
  The error with respect to the
  stability norm can be rewritten
  as follows:
  \begin{alignat}{2}
    \left(\|\mathbf{E}\|_{\mathrm{stab}}^{\mathrm{DG}}\right)^{2}
    &= \mathcal{B}_{\mathrm{stab}}^{\mathrm{DG}}
    (\mathbf{E};\mathbf{E})
    = \mathcal{B}_{\mathrm{stab}}^{\mathrm{DG}}
    (\mathbf{E}^{h} + \mathbf{H};\mathbf{E})
    = \mathcal{B}_{\mathrm{stab}}^{\mathrm{DG}}(\mathbf{E}^{h};\mathbf{E})
    + \mathcal{B}_{\mathrm{stab}}^{\mathrm{DG}}(\mathbf{H};\mathbf{E})
    = \mathcal{B}_{\mathrm{stab}}^{\mathrm{DG}}
    (\mathbf{H};\mathbf{E})
    \label{Eqn:DG_stab_theoerem_Estab}
  \end{alignat}
  We invoked the definition of $\|\cdot\|_{\mathrm{stab}}^{\mathrm{DG}}$
  norm (i.e., Eq.~\eqref{Eqn:DG_stability_norm}) for establishing
  the first equality,
  the decomposition of the error (i.e.,
  Eq.~\eqref{Eqn:DG_Error_decomposition})
  for the second equality, linearity in the
  first slot of $\mathcal{B}_{\mathrm{stab}}^{\mathrm{DG}}
  (\cdot;\cdot)$ for the third equality, and consistency
  (i.e., Theorem \ref{Thm:DG_Consistency}) for the fourth
  equality.
  We now expand $\mathcal{B}_{\mathrm{stab}}^{\mathrm{DG}}
  (\mathbf{H};\mathbf{E})$ as follows:
  \begin{align}
    \mathcal{B}_{\mathrm{stab}}^{\mathrm{DG}}(\mathbf{H};\mathbf{E}) &=
    \mathcal{B}_{\mathrm{stab}}^{\mathrm{DG}}(
    \boldsymbol{\eta}_{\mathbf{u}_{1}},\boldsymbol{\eta}_{\mathbf{u}_{2}},
    \eta_{p_1},\eta_{p_{2}};\mathbf{e}_{\mathbf{u}_{1}},
    \mathbf{e}_{\mathbf{u}_{2}},e_{p_1},e_{p_{2}}) \nonumber \\
    & = \frac{1}{2}(\boldsymbol{\eta}_{\mathbf{u}_{1}};
    \mu k_{1}^{-1}\mathbf{e}_{\mathbf{u}_{1}})
    + \frac{1}{2} (\boldsymbol{\eta}_{\mathbf{u}_{1}};\mathrm{grad}[e_{p_{1}}])
    - \frac{1}{2} (\mathrm{grad}[\eta_{p_{1}}];\mathbf{e}_{\mathbf{u}_{1}})
    + \frac{1}{2} \left(\mathrm{grad}[\eta_{p_{1}}];\frac{k_1}{\mu}
    \mathrm{grad}[e_{p_{1}}]\right) \nonumber \\
    & + \frac{1}{2}(\boldsymbol{\eta}_{\mathbf{u}_{2}};\mu k_{2}^{-1}\mathbf{e}_{\mathbf{u}_{2}})
    + \frac{1}{2} (\boldsymbol{\eta}_{\mathbf{u}_{2}};\mathrm{grad}[e_{p_{2}}])
    - \frac{1}{2} (\mathrm{grad}[\eta_{p_{2}}];\mathbf{e}_{\mathbf{u}_{2}}) + \frac{1}{2} \left(\mathrm{grad}[\eta_{p_{2}}];\frac{k_2}{\mu}
    \mathrm{grad}[e_{p_{2}}]\right) \nonumber \\
    & + \left((\eta_{p_{1}}- \eta_{p_{2}});
    \frac{\beta}{\mu}(e_{p_{1}} - e_{p_{2}})\right)
    \nonumber \\
    & - \left(\{\!\!\{ \boldsymbol{\eta}_{\mathbf{u}_{1}} \}\!\!\}; \llbracket e_{p_1} \rrbracket
    \right)_{\Gamma^{\mathrm{int}}}
    + \left(\llbracket \eta_{p_1} \rrbracket ; \{\!\!\{ \mathbf{e}_{\mathbf{u}_{1}} \}\!\!\}
    \right)_{\Gamma^{\mathrm{int}}} 
    - \left(\{\!\!\{ \boldsymbol{\eta}_{\mathbf{u}_{2}} \}\!\!\}; \llbracket e_{p_2} \rrbracket
    \right)_{\Gamma^{\mathrm{int}}}
    + \left(\llbracket \eta_{p_2} \rrbracket ; \{\!\!\{ \mathbf{e}_{\mathbf{u}_{2}} \}\!\!\}
    \right)_{\Gamma^{\mathrm{int}}} 
  \nonumber\\
  &+ \left(\eta_{u} h_{\Upsilon} \{\!\!\{\mu k_1^{-1}\}\!\!\}
  \llbracket \boldsymbol{\eta}_{\mathbf{u}_{1}} \rrbracket ; \llbracket
  \mathbf{e}_{\mathbf{u}_1}  \rrbracket
  \right)_{\Gamma^{\mathrm{int}}}
  + \left(\frac{\eta_{p}}{h_{\Upsilon}} \{\!\!\{\mu^{-1} k_1\}\!\!\}
  \llbracket \eta_{p_{1}} \rrbracket ; \llbracket e_{p_1} \rrbracket
  \right)_{\Gamma^{\mathrm{int}}}
   \nonumber \\
  &
  + \left(\eta_{u} h_{\Upsilon} \{\!\!\{\mu k_2^{-1}\}\!\!\}
  \llbracket \boldsymbol{\eta}_{\mathbf{u}_{2}} \rrbracket ; \llbracket
  \mathbf{e}_{\mathbf{u}_2}  \rrbracket\right)_{\Gamma^{\mathrm{int}}}
  + \left(\frac{\eta_{p}}{h_{\Upsilon}} \{\!\!\{\mu^{-1} k_2\}\!\!\}
  \llbracket \eta_{p_{2}} \rrbracket ; \llbracket e_{p_2}
  \rrbracket\right)_{\Gamma^{\mathrm{int}}}
    \end{align}
  By employing Cauchy-Schwarz and
  Peter-Paul inequalities, we obtain
  the following bound\footnote{For 
    convenience of the reader, we
    color-coded the terms. (See the
    online version for the colored text.)
    The {\color{red} red-colored} terms 
    contain interpolation errors and
    contribute to
    $\|\mathbf{H}\|_{\mathrm{stab}}^{\mathrm{DG}}$.
    The {\color{blue} blue-colored} terms 
    contain approximation errors and
    contribute to
    $\|\mathbf{E}\|_{\mathrm{stab}}^{\mathrm{DG}}$.
    We employ Lemma \ref{Lemma:DG_approx_error_Gamma_int}
    on the {\color{magenta} magenta-colored}
    terms and employ Lemma \ref{Lemma:DG_int_error_Gamma_int}
    on the {\color{ForestGreen} green-colored} terms. 
  }: 
  {\footnotesize{  
  \begin{align}
    2 \mathcal{B}_{\mathrm{stab}}^{\mathrm{DG}}
    (\mathbf{H};\mathbf{E})
    &\leq 
    {\color{red}
    	\frac{\varepsilon_1}{2} \left\| \sqrt{\frac{\mu} {k_1}}
    \boldsymbol{\eta}_{\mathbf{u}_{1}} \right\|^{2} }
    {\color{blue} + \frac{1}{2\varepsilon_1}
      \left\| \sqrt{\frac{\mu} {k_1}}
      \mathbf{e}_{\mathbf{u}_{1}} \right\|^{2}}
    {\color{red}
    + \frac{\varepsilon_2}{2} \left\| \sqrt{\frac{\mu} {k_1}}
    \boldsymbol{\eta}_{\mathbf{u}_{1}} \right\|^{2} }
    {\color{blue}+ \frac{1}{2\varepsilon_2} \left\|
    \sqrt{\frac{k_1} {\mu}}
    \mathrm{grad}[e_{p_{1}}] \right\|^{2}} \nonumber \\
    & {\color{red}
    + \frac{\varepsilon_3}{2} \left\| \sqrt{\frac{k_1} {\mu}} \mathrm{grad}[\eta_{p_{1}}] \right\|^2 }
    {\color{blue}+ \frac{1}{2\varepsilon_3}
      \left\| \sqrt{\frac{\mu} {k_1}}
    \mathbf{e}_{\mathbf{u}_{1}} \right\|^2} 
    {\color{red}
    + \frac{\varepsilon_{4}}{2} \left\| \sqrt{\frac{k_1} {\mu}} \mathrm{grad}[\eta_{p_{1}}] \right\|^2  }
    {\color{blue}
    + \frac{1}{2 \varepsilon_{4}} \left\| \sqrt{\frac{k_1} {\mu}} \mathrm{grad}[e_{p_{1}}] \right\|^2 
    }
    \nonumber \\
    &
    {\color{red} + \frac{\varepsilon_5}{2} \left\| \sqrt{\frac{\mu} {k_2}}
    \boldsymbol{\eta}_{\mathbf{u}_{2}} \right\|^{2} }
    {\color{blue}
    + \frac{1}{2\varepsilon_5} \left\| \sqrt{\frac{\mu} {k_2}}
    \mathbf{e}_{\mathbf{u}_{2}} \right\|^{2}  
    }
    {\color{red}
    + \frac{\varepsilon_6}{2} \left\| \sqrt{\frac{\mu} {k_2}}
     \boldsymbol{\eta}_{\mathbf{u}_{2}} \right\|^{2} }
    {\color{blue}
    + \frac{1}{2\varepsilon_6}\left\| \sqrt{\frac{k_2} {\mu}}
     \mathrm{grad}[e_{p_{2}}] \right\|^{2} 
     }
    \nonumber \\
    &{\color{red}
    + \frac{\varepsilon_7}{2} \left\| \sqrt{\frac{k_2} {\mu}}
    \mathrm{grad}[\eta_{p_{2}}] \right\|^2 }
    {\color{blue}
    + \frac{1}{2\varepsilon_7} \left\| \sqrt{\frac{\mu} {k_2}}
    \mathbf{e}_{\mathbf{u}_{2}} \right\|^2 
     }
    {\color{red}+ \frac{\varepsilon_{8}}{2} \left\|
    \sqrt{\frac{k_2} {\mu}}
    \mathrm{grad}[\eta_{p_{2}}] \right\|^2  }
    {\color{blue}
    + \frac{1}{2 \varepsilon_{8}} \left\| \sqrt{\frac{k_2} {\mu}}
     \mathrm{grad}[e_{p_{2}}] \right\|^2
    }
    \nonumber \\
    & {\color{red}
    + \varepsilon_{9} \left\|\sqrt{\frac{\beta}{\mu}}
    (\eta_{p_{1}}-\eta_{p_{2}})\right\|^2 }
    {\color{blue}
    + \frac{1}{\varepsilon_{9}} \left\|
    \sqrt{\frac{\beta}{\mu}}
    (e_{p_{1}}-e_{p_{2}}) \right\|^2 
    }
     \nonumber \\
     &{\color{ForestGreen}+ \varepsilon_{10} \left\|
        \sqrt{\frac{h_{\Upsilon}}{\eta_p}
    \{\!\!\{ \mu k_1^{-1} \}\!\!\}} \; 
    \{\!\!\{ \boldsymbol{\eta}_{\mathbf{u}_1} \}\!\!\}
    \right\|^2_{\Gamma^{\mathrm{int}}}}
    {\color{blue}
    + \frac{1}{\varepsilon_{10}} \left\| \sqrt{\frac{\eta_p}{h_{\Upsilon}}
      \{\!\!\{\mu^{-1} k_1\}\!\!\}} \; 
    \llbracket e_{p_1} \rrbracket \right\|^2_{\Gamma^{\mathrm{int}}} 
    }
     \nonumber \\
     &{\color{ForestGreen}
        + \varepsilon_{11} \left\|
        \sqrt{h_{\Upsilon}^{-1}
        \{\!\!\{\mu^{-1} k_1\}\!\!\}} \;
        \llbracket \eta_{p_1} \rrbracket
        \right\|^2_{\Gamma^{\mathrm{int}}}
    }
    {\color{magenta}
     + \frac{1}{\varepsilon_{11}} \left\| \sqrt{h_{\Upsilon}
     \{\!\!\{ \mu k_1^{-1} \}\!\!\}} \;
     \{\!\!\{ \mathbf{e}_{\mathbf{u}_1} \}\!\!\}
     \right\|^2_{\Gamma^{\mathrm{int}}}
  }
     \nonumber \\
    &{\color{ForestGreen}+ \varepsilon_{12} \left\|
    \sqrt{\frac{h_{\Upsilon}}{\eta_p}
    \{\!\!\{ \mu k_2^{-1}\}\!\!\}} \;
    \{\!\!\{ \boldsymbol{\eta}_{\mathbf{u}_2} \}\!\!\}
    \right\|^2_{\Gamma^{\mathrm{int}}}}
    {\color{blue}
    + \frac{1}{\varepsilon_{12}} \left\|
    \sqrt{\frac{\eta_p}{h_{\Upsilon}}
    \{\!\!\{\mu^{-1} k_2\}\!\!\}} \; 
    \llbracket e_{p_2} \rrbracket
    \right\|^2_{\Gamma^{\mathrm{int}}} 
    }
    \nonumber \\
    &{\color{ForestGreen}+ \varepsilon_{13}
       \left\|\sqrt{h_{\Upsilon}^{-1}
         \{\!\!\{\mu^{-1} k_2\}\!\!\}} \; 
       \llbracket \eta_{p_2} \rrbracket
       \right\|^2_{\Gamma^{\mathrm{int}}}
     }
    {\color{magenta}
      + \frac{1}{\varepsilon_{13}}
      \left\|\sqrt{h_{\Upsilon}
    \{\!\!\{ \mu k_2^{-1} \}\!\!\}} \; 
    \{\!\!\{ \mathbf{e}_{\mathbf{u}_2}\}\!\!\}
    \right\|^2_{\Gamma^{\mathrm{int}}}
     }
     \nonumber \\
    &{\color{red}+\varepsilon_{14} \left\| \sqrt{\eta_u h_{\Upsilon}
    \{\!\!\{ \mu k_1^{-1} \}\!\!\}} \; 
    \llbracket \boldsymbol{\eta}_{\mathbf{u}_1}
    \rrbracket \right\|^2 _{\Gamma^{\mathrm{int}}}}
    {\color{blue}
      + \frac{1}{\varepsilon_{14}} \left\|\sqrt{\eta_u h_{\Upsilon}
        \{\!\!\{ \mu k_1^{-1} \}\!\!\}} \; \llbracket
      \mathbf{e}_{\mathbf{u}_1} \rrbracket
      \right\|^2_{\Gamma^{\mathrm{int}}}
    }
     \nonumber \\
     &{\color{red}+ \varepsilon_{15} \left\|
        \sqrt{\frac{\eta_p}{h_{\Upsilon}}
          \{\!\!\{\mu^{-1} k_1\}\!\!\}}\;
        \llbracket \eta_{p_1} \rrbracket
        \right\|^2_{\Gamma^{\mathrm{int}}}}
    {\color{blue}
    + \frac{1}{\varepsilon_{15}} \left\|
    \sqrt{\frac{\eta_p}{h_{\Upsilon}}
    \{\!\!\{\mu^{-1} k_1\}\!\!\}} \;
    \llbracket e_{p_1} \rrbracket
    \right\|^2_{\Gamma^{\mathrm{int}}} 
    }
    \nonumber \\
    &{\color{red}+ \varepsilon_{16} \left\| \sqrt{\eta_u
      h_{\Upsilon} \{\!\!\{ \mu k_2^{-1} \}\!\!\}}\;
    \llbracket \boldsymbol{\eta}_{\mathbf{u}_2}
    \rrbracket \right\|^2_{\Gamma^{\mathrm{int}}}}
    {\color{blue}
    + \frac{1}{\varepsilon_{16}} \left\|
    \sqrt{\eta_u h_{\Upsilon} \{\!\!\{
      \mu k_2^{-1} \}\!\!\}} \; \llbracket
    \mathbf{e}_{\mathbf{u}_2} \rrbracket
    \right\|^2_{\Gamma^{\mathrm{int}}}
     }
     \nonumber \\
     &{\color{red}+ \varepsilon_{17} \left\|
        \sqrt{\frac{\eta_p}{h_{\Upsilon}}
          \{\!\!\{\mu^{-1} k_2\}\!\!\}}\;
        \llbracket \eta_{p_2} \rrbracket
        \right\|^2_{\Gamma^{\mathrm{int}}}}
    {\color{blue}
    + \frac{1}{\varepsilon_{17}} \left\|
    \sqrt{\frac{\eta_p}{h_{\Upsilon}}
    \{\!\!\{\mu^{-1} k_2\}\!\!\}}
    \; \llbracket e_{p_2} \rrbracket
    \right\|^2_{\Gamma^{\mathrm{int}}}
   }
    \label{Eqn:Error_inequality}
  \end{align}
  }}
  with $\varepsilon_{i}$ ($i=1,\cdots, 17$)
  are arbitrary positive constants. 
  After employing Lemma
  \ref{Lemma:DG_approx_error_Gamma_int},
  the above inequality can be grouped
  as follows:
{\footnotesize{  
    \begin{align}
      2 \mathcal{B}_{\mathrm{stab}}^{\mathrm{DG}}
      (\mathbf{H};\mathbf{E})
      & \leq {\color{blue}
        \left(\frac{1}{2\varepsilon_1}
        + \frac{1}{2\varepsilon_3}
        + \frac{\mathcal{C}_{\mathbf{e}_{\mathbf{u}_1}}}{\varepsilon_{11}}
        \right) \left\| \sqrt{\frac{\mu}{k_1}}
        \mathbf{e}_{\mathbf{u}_{1}} \right\|^{2}
      } 
      {\color{blue} + \left(\frac{1}{2\varepsilon_2}
        + \frac{1}{2\varepsilon_4} \right) \left\| 
        \sqrt{\frac{k_1}{\mu}}
        \mathrm{grad}[e_{p_{1}}] \right\|^{2}
      } \nonumber \\
      & \quad  {\color{blue}
        + \left(\frac{1}{2\varepsilon_5} + \frac{1}{2\varepsilon_7}
        + \frac{\mathcal{C}_{\mathbf{e}_{\mathbf{u}_2}}}{\varepsilon_{13}}
        \right) \left\| \sqrt{\frac{\mu}{k_2}} \mathbf{e}_{\mathbf{u}_{2}}
        \right\|^{2}
      }
      {\color{blue} + \left(\frac{1}{2\varepsilon_6}
        + \frac{1}{2\varepsilon_8}\right) \left\|
        \sqrt{\frac{k_2}{\mu}} \mathrm{grad}[e_{p_{2}}]
        \right\|^{2}
      } \nonumber \\
      & {\color{blue}+ \frac{1}{\varepsilon_{9}} \left\|
          \sqrt{\frac{\beta}{\mu}} (e_{p_{1}}-e_{p_{2}})
          \right\|^2
        } \nonumber \\
      &\quad {\color{blue}+
        \left(\frac{1}{\varepsilon_{10}} + \frac{1}{\varepsilon_{15}}
        \right) \left\| \sqrt{\frac{\eta_p}{h_{\Upsilon}}
          \{\!\!\{\mu^{-1} k_1\}\!\!\}} \;
        \llbracket e_{p_1} \rrbracket \right\|^2_{\Gamma^{\mathrm{int}}}
        }
      {\color{blue}+ \frac{1}{\varepsilon_{14}} \left\|
        \sqrt{\eta_u h_{\Upsilon} \{\!\!\{ \mu k_1^{-1} \}\!\!\}}
        \; \llbracket \mathbf{e}_{\mathbf{u}_1} \rrbracket
        \right\|^2_{\Gamma^{\mathrm{int}}}
      } \nonumber \\
      &\quad {\color{blue}+
        \left(\frac{1}{\varepsilon_{12}} +
        \frac{1}{\varepsilon_{17}}\right)
        \left\| \sqrt{\frac{\eta_p}{h_{\Upsilon}}
          \{\!\!\{\mu^{-1} k_2\}\!\!\}} \;
        \llbracket e_{p_2} \rrbracket \right\|^2_{\Gamma^{\mathrm{int}}}
      }
      {\color{blue}+
        \frac{1}{\varepsilon_{16}} \left\|
        \sqrt{\eta_u h_{\Upsilon} \{\!\!\{ \mu k_2^{-1} \}\!\!\}}
        \; \llbracket \mathbf{e}_{\mathbf{u}_2} \rrbracket
        \right\|^2_{\Gamma^{\mathrm{int}}}
      } \nonumber \\
      &\quad{\color{red}+ \left(\frac{\varepsilon_1}{2} +
        \frac{\varepsilon_2}{2} \right) \left\|
        \sqrt{\frac{\mu}{k_1}}\boldsymbol{\eta}_{\mathbf{u}_{1}}
        \right\|^{2}
        + \left( \frac{\varepsilon_3}{2} + \frac{\varepsilon_4}{2}
        \right) \left\| \sqrt{\frac{k_1}{\mu}} 
	\mathrm{grad}[\eta_{p_{1}}] \right\|^2
      }	\nonumber \\
      &\quad {\color{red}+
        \left( \frac{\varepsilon_5}{2} + \frac{\varepsilon_6}{2}
        \right) \left\| \sqrt{\frac{\mu}{k_2}}
        \boldsymbol{\eta}_{\mathbf{u}_{2}} \right\|^{2}
        + \left( \frac{\varepsilon_7}{2} + \frac{\varepsilon_8}{2}
        \right) \left\| \sqrt{\frac{k_2}{\mu}}
        \mathrm{grad}[\eta_{p_{2}}] \right\|^2
      } \nonumber \\
      &\quad {\color{red}+
        \varepsilon_{9} \left\|\sqrt{\frac{\beta}{\mu}}
        (\eta_{p_{1}}-\eta_{p_{2}})\right\|^2
        } \nonumber \\
      & \quad {\color{red}+
        \varepsilon_{14} \left\| \sqrt{\eta_u h_{\Upsilon}
          \{\!\!\{\mu k_1^{-1} \}\!\!\}} \;
        \llbracket \boldsymbol{\eta}_{\mathbf{u}_1} \rrbracket
        \right\|^2_{\Gamma^{\mathrm{int}}}
      }
      {\color{red}+ \varepsilon_{15} \left\|
        \sqrt{\frac{\eta_p}{h_{\Upsilon}}
          \{\!\!\{\mu^{-1} k_1\}\!\!\}} \;
        \llbracket \eta_{p_1} \rrbracket
        \right\|^2_{\Gamma^{\mathrm{int}}}
      } \nonumber \\
      &\quad {\color{red}+
        \varepsilon_{16} \left\| \sqrt{\eta_u h_{\Upsilon}
          \{\!\!\{ \mu k_2^{-1} \}\!\!\}} \;
        \llbracket \boldsymbol{\eta}_{\mathbf{u}_2} \rrbracket
        \right\| ^2 _{\Gamma^{\mathrm{int}}}
      }
      {\color{red}+ \varepsilon_{17}
        \left\| \sqrt{\frac{\eta_p}{h_{\Upsilon}}
          \{\!\!\{\mu^{-1} k_2\}\!\!\}} \;
        \llbracket \eta_{p_2} \rrbracket
        \right\|^2_{\Gamma^{\mathrm{int}}}
      } \nonumber \\
      &\quad {\color{ForestGreen}+
        \varepsilon_{10} \left\|\sqrt{\frac{h_{\Upsilon}}{\eta_p}
          \{\!\!\{ \mu k_1^{-1} \}\!\!\}} \;
        \{\!\!\{ \boldsymbol{\eta}_{\mathbf{u}_1}
        \}\!\!\} \right\|^2_{\Gamma^{\mathrm{int}}}
      }
              {\color{ForestGreen}+ \varepsilon_{11}
                \left\| \sqrt{h_{\Upsilon}^{-1}
                  \{\!\!\{\mu^{-1} k_1\}\!\!\}}
                \llbracket \eta_{p_1} \rrbracket
                \right\|^2_{\Gamma^{\mathrm{int}}}
              } \nonumber \\
              &\quad{\color{ForestGreen}+\varepsilon_{12}
                \left\| \sqrt{\frac{h_{\Upsilon}}{\eta_p}
                  \{\!\!\{ \mu k_2^{-1} \}\!\!\}} \;
                          \{\!\!\{ \boldsymbol{\eta}_{\mathbf{u}_2}
                          \}\!\!\} \right\|^2_{\Gamma^{\mathrm{int}}}}
                        {\color{ForestGreen}+ \varepsilon_{13} \left\|
                        \sqrt{h_{\Upsilon}^{-1}
                          \{\!\!\{\mu^{-1} k_2\}\!\!\}} \;
                        \llbracket \eta_{p_2} \rrbracket
                        \right\|^2_{\Gamma^{\mathrm{int}}}
                        }
			\label{Eqn:Error_inequality2}
    \end{align}
}}    
We choose the coefficients of the
first nine terms (i.e., blue-colored
terms) in such a way that these nine
terms add up to the square of
$\|\mathbf{E}\|_{\mathrm{stab}}^{\mathrm{DG}}$.
This can be achieved by choosing
these coefficients as follows:
\begin{align}
  \frac{1}{2\varepsilon_1} + \frac{1}{2\varepsilon_3}
  + \frac{\mathcal{C}_{\mathbf{e}_{\mathbf{u}_1}}}{\varepsilon_{11}}
  = \frac{1}{2\varepsilon_5} + \frac{1}{2\varepsilon_7}
  + \frac{\mathcal{C}_{\mathbf{e}_{\mathbf{u}_2}}}{\varepsilon_{13}}
  = \frac{1}{2}, \; 
  \left( \frac{1}{2\varepsilon_2} +
  \frac{1}{2\varepsilon_4} \right)
  = \left( \frac{1}{2\varepsilon_6} +
  \frac{1}{2\varepsilon_8} \right)
  = \frac{1}{2}, \nonumber \\
 \frac{1}{\varepsilon_{9}} = \frac{1}{\varepsilon_{14}}  = \frac{1}{\varepsilon_{16}} = 1, \quad \mathrm{and} \quad 
\left(\frac{1}{\varepsilon_{10}} + \frac{1}{\varepsilon_{15}}\right) = \left(\frac{1}{\varepsilon_{12}} + \frac{1}{\varepsilon_{17}}\right) = 1
\end{align}
One way to satisfy the above constraints
is to make the following choices for the
individual constants:
\begin{align}
  \varepsilon_1 = \varepsilon_3 =
  \varepsilon_5 = \varepsilon_7 = 4, \; 
  \varepsilon_2 = \varepsilon_4
  = \varepsilon_6 =  \varepsilon_{8} =
  \varepsilon_{10} = \varepsilon_{12} =
  \varepsilon_{15} =  \varepsilon_{17} = 2, \;
  \varepsilon_{9} = \varepsilon_{14} =
  \varepsilon_{16} = 1, \nonumber \\ 
  \varepsilon_{11} =
  4\mathcal{C}_{\mathbf{e}_{\mathbf{u}_1}}
  \quad \mathrm{and} \quad
  \varepsilon_{13} =
  4\mathcal{C}_{\mathbf{e}_{\mathbf{u}_2}}
  \label{Eqn:varepsilon_choices}
\end{align}
By incorporating the above choices into
inequality \eqref{Eqn:Error_inequality2},
we obtain the following:
\begin{align}
  2 \left( \| \mathbf{E}\|_{\mathrm{stab}}^{\mathrm{DG}}\right)^2 & \leq
  {\color{blue}
    \left( \left\|\mathbf{E}
    \right\|_{\mathrm{stab}}^{\mathrm{DG}}\right)^2
  }
  {\color{red}
    +3 \left\| \sqrt{\frac{\mu}{k_1}}
    \boldsymbol{\eta}_{\mathbf{u}_{1}} \right\|^{2}
    +3 \left\| \sqrt{\frac{k_1}{\mu}} 
    \mathrm{grad}[\eta_{p_{1}}] \right\|^2
  } \nonumber \\
  &\quad {\color{red}
    +3 \left\| \sqrt{\frac{\mu}{k_2}}
    \boldsymbol{\eta}_{\mathbf{u}_{2}} \right\|^{2}
    + 3 \left\| \frac{k_2}{\sqrt{\mu}} 
    \mathrm{grad}[\eta_{p_{2}}] \right\|^2
    +  \left\|\sqrt{\frac{\beta}{\mu}}
    (\eta_{p_{1}}-\eta_{p_{2}})\right\|^2
  } \nonumber \\
  & \quad {\color{red}
    + 2 \left\| \sqrt{\frac{\eta_p}{h_{\Upsilon}}
      \{\!\!\{\mu^{-1} k_1\}\!\!\}} \;
    \llbracket \eta_{p_1} \rrbracket
    \right\|^2_{\Gamma^{\mathrm{int}}}
    +2 \left\|\sqrt{\frac{\eta_p}{h_{\Upsilon}}}
    \{\!\!\{\mu^{-1} k_2\}\!\!\}^{1/2} \;
    \llbracket \eta_{p_2} \rrbracket
    \right\|^2_{\Gamma^{\mathrm{int}}}
  } \nonumber \\
  &\quad {\color{red}
    + \left\| \sqrt{\eta_u h_{\Upsilon}
      \{\!\!\{ \mu k_1^{-1} \}\!\!\}} \; 
    \llbracket \boldsymbol{\eta}_{\mathbf{u}_1} \rrbracket
    \right\|^2_{\Gamma^{\mathrm{int}}}
    + \left\| \sqrt{\eta_u h_{\Upsilon}
      \{\!\!\{\mu k_2^{-1} \}\!\!\}} \;
    \llbracket \boldsymbol{\eta}_{\mathbf{u}_2} \rrbracket
    \right\|^2_{\Gamma^{\mathrm{int}}}
  } \nonumber \\
        &{\color{ForestGreen}
           \quad +4\mathcal{C}_{\mathbf{e}_{\mathbf{u}_1}}
           \left\| \sqrt{h_{\Upsilon}^{-1}
             \{\!\!\{\mu^{-1} k_1\}\!\!\}} \;
           \llbracket \eta_{p_1} \rrbracket 
           \right\|^2_{\Gamma^{\mathrm{int}}}
           +4\mathcal{C}_{\mathbf{e}_{\mathbf{u}_2}}
           \left\|\sqrt{h_{\Upsilon}^{-1}
           \{\!\!\{\mu^{-1} k_2\}\!\!\}} \;
           \llbracket \eta_{p_2} \rrbracket
           \right\|^2_{\Gamma^{\mathrm{int}}} 
	 } \nonumber \\
	&{\color{ForestGreen}
           \quad + 2 \left\| \sqrt{\frac{h_{\Upsilon}}{\eta_p}
             \{\!\!\{ \mu k_1^{-1} \}\!\!\}} \;
           \{\!\!\{ \boldsymbol{\eta}_{\mathbf{u}_1}
           \}\!\!\}\right\|^2_{\Gamma^{\mathrm{int}}}
	   + 2 \left\| \sqrt{\frac{h_{\Upsilon}}{\eta_p}
             \{\!\!\{ \mu k_2^{-1} \}\!\!\}} \;
           \{\!\!\{ \boldsymbol{\eta}_{\mathbf{u}_2} \}\!\!\}
           \right\|^2_{\Gamma^{\mathrm{int}}}  
	 } \nonumber \\
  &\leq
  {\color{blue}\left(\|\mathbf{E}
    \|_{\mathrm{stab}}^{\mathrm{DG}}\right)^2}
  {\color{red}
    +6 \left(\| \mathbf{H}\|_{\mathrm{stab}}^{\mathrm{DG}}\right)^2
  } \nonumber \\
  &\quad{\color{ForestGreen}
    + 4\mathcal{C}_{\mathbf{e}_{\mathbf{u}_1}}
     \left\| \sqrt{h^{-1}_{\Upsilon}
       \{\!\!\{\mu^{-1} k_1\}\!\!\}} \;
     \llbracket \eta_{p_1} \rrbracket
     \right\|^2_{\Gamma^{\mathrm{int}}}
  }
  {\color{ForestGreen}
     + 4 \mathcal{C}_{\mathbf{e}_{\mathbf{u}_2}}
     \left\| \sqrt{h^{-1}_{\Upsilon}
       \{\!\!\{\mu^{-1} k_2\}\!\!\}} \;
     \llbracket \eta_{p_2} \rrbracket
     \right\|^2_{\Gamma^{\mathrm{int}}}
     } \nonumber \\
  &\quad{\color{ForestGreen}
     +2 \left\| \sqrt{\frac{h_{\Upsilon}}{\eta_p}
       \{\!\!\{ \mu k_1^{-1} \}\!\!\}} \; 
     \{\!\!\{ \boldsymbol{\eta}_{\mathbf{u}_1}
     \}\!\!\} \right\|^2_{\Gamma^{\mathrm{int}}}
     + 2 \left\| \sqrt{\frac{h_{\Upsilon}}{\eta_p}
       \{\!\!\{ \mu k_2^{-1} \}\!\!\}} \;
     \{\!\!\{ \boldsymbol{\eta}_{\mathbf{u}_2} \}\!\!\}
     \right\|^2_{\Gamma^{\mathrm{int}}}
   }
\end{align}
Lemma \ref{Lemma:DG_int_error_Gamma_int}
implies the following:
\begin{align}
  \left( \| \mathbf{E}\|_{\mathrm{stab}}^{\mathrm{DG}}\right)^2
  \lesssim 6 \left( \| \mathbf{H}\|_{\mathrm{stab}}^{\mathrm{DG}}\right)^2
  &+ \sum_{\omega \in \mathcal{T}_h} \left(
  h_{\omega}^{2(p+1)} |\mathbf{u}_1|^2_{H^{p+1}(\omega)} 
  + h_{\omega}^{2(q+1)} |\mathbf{u}_2|^2_{H^{q+1}(\omega)}
  \right. \nonumber \\
  &\qquad \qquad \left. + h_{\omega}^{2r}|p_1|^2_{H^{r+1}(\omega)} 
  + h_{\omega}^{2s}|p_2|^2_{H^{s+1}(\omega)} \right) 
\end{align}
As $h \rightarrow 0$, $h_{\omega} \rightarrow 0 \;
\forall \omega \in \mathcal{T}_{h}$, which
in turn implies that $\|\mathbf{H}\|_{\mathrm{stab}}^{\mathrm{DG}}
\rightarrow 0$ (using Lemma
\ref{Lemma:DG_estimate_for_H_in_terms_of_stab_norm})
and all other terms on the right hand side tend
to zero (using Lemma \ref{Lemma:DG_int_error_Gamma_int}).
Thus, $\|\mathbf{E}_{\mathrm{stab}}^{\mathrm{DG}}\|
\rightarrow 0$ as $h \rightarrow 0$. Since
$\|\cdot\|_{\mathrm{stab}}^{\mathrm{DG}}$ is a norm (i.e.,
Lemma \ref{Lemma:DD2_stab_norm}), one can conclude
that $\mathbf{U}^{h} \rightarrow \mathbf{U}$ almost
everywhere as $h \rightarrow 0$.
\end{proof}

\begin{remark}
  The selection of constants $\varepsilon_{i}
  \; (i = 1, \cdots, 17)$ in equation
  \eqref{Eqn:varepsilon_choices} is arbitrary.
  We do not claim that this selection provides
  an optimal bound, which is not the aim of our
  paper. However, the selection is sufficient
  to establish the convergence of the proposed
  formulation.
\end{remark}

Lemmas \ref{Lemma:DG_int_error_Gamma_int} and
\ref{Lemma:DG_estimate_for_H_in_terms_of_stab_norm}
immediately give the following result:
\begin{corollary}{(Rates of convergence.)}
  \label{Corollary:DG_rates_of_convergence}
  Let $p$, $q$, $r$ and $s$ be the
  polynomial orders for approximating
  the fields $\mathbf{u}_1$, $\mathbf{u}_2$, 
  $p_1$ and $p_2$. Let the orders of regularity
  in terms of the Sobolev semi-norm for
  these solution fields be $\widehat{p}$,
  $\widehat{q}$, $\widehat{r}$ and
  $\widehat{s}$. Then the rates of convergence
  for these fields will be, respectively,
  $\min[p+1,\widehat{p}]$, $\min[q+1,\widehat{q}]$,
  $\min[r,\widehat{r}]$ and
  $\min[s,\widehat{s}]$. 
\end{corollary}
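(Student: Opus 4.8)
The plan is to derive Corollary~\ref{Corollary:DG_rates_of_convergence} as a direct consequence of the error inequality obtained at the end of the proof of Theorem~\ref{Thm:DG_convergence_theorem}, combined with Lemmas~\ref{Lemma:DG_int_error_Gamma_int} and~\ref{Lemma:DG_estimate_for_H_in_terms_of_stab_norm} and the general (limited-regularity) form of the interpolation estimate. Concretely, the convergence proof already establishes, under full regularity,
\[
  \left(\|\mathbf{E}\|_{\mathrm{stab}}^{\mathrm{DG}}\right)^2
  \lesssim \sum_{\omega \in \mathcal{T}_h}\left(
  h_{\omega}^{2(p+1)}|\mathbf{u}_1|^2_{H^{p+1}(\omega)}
  + h_{\omega}^{2(q+1)}|\mathbf{u}_2|^2_{H^{q+1}(\omega)}
  + h_{\omega}^{2r}|p_1|^2_{H^{r+1}(\omega)}
  + h_{\omega}^{2s}|p_2|^2_{H^{s+1}(\omega)}\right),
\]
the $\|\mathbf{H}\|_{\mathrm{stab}}^{\mathrm{DG}}$ contribution having been absorbed via Lemma~\ref{Lemma:DG_estimate_for_H_in_terms_of_stab_norm} (whose $(1+h_\omega^2)$ prefactors on the pressure terms are $O(1)$ as $h\to 0$ and therefore irrelevant to the asymptotic rate) and the interface interpolation errors via Lemma~\ref{Lemma:DG_int_error_Gamma_int}. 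The point to flag is why the pressure exponents are $2r$ and $2s$ rather than $2(r+1)$ and $2(s+1)$: the stability norm \eqref{Eqn:DG_stability_norm} controls the pressure errors only through $\mathrm{grad}[\eta_{p_i}]$ and through the $h_\Upsilon^{-1}$-weighted interface jumps, both of which cost one extra power of $h$ relative to the $L_2$ interpolation estimate \eqref{Eqn:DG_standard_estimate_for_int_error}, whereas the velocity errors enter through $L_2$-type quantities and through $h_\Upsilon$-weighted interface terms, keeping the full powers $h^{p+1}$ and $h^{q+1}$.

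Next I would pass from this full-regularity statement to the stated rates by invoking the general Bramble--Hilbert form of \eqref{Eqn:DG_standard_estimate_for_int_error}, in which the interpolation order may be chosen anywhere between $1$ and the polynomial degree $+1$, at the cost of lowering the Sobolev index in the semi-norm accordingly. Applying this with index $\min[p+1,\widehat p]$ for $\mathbf{u}_1$, $\min[q+1,\widehat q]$ for $\mathbf{u}_2$, $\min[r,\widehat r]$ for $p_1$ and $\min[s,\widehat s]$ for $p_2$ — and using it verbatim in the same places inside Lemmas~\ref{Lemma:DG_int_error_Gamma_int} and~\ref{Lemma:DG_estimate_for_H_in_terms_of_stab_norm}, which carry exactly the same powers of $h_\omega$ as the volume terms — the displayed bound becomes
\[
  \|\mathbf{E}\|_{\mathrm{stab}}^{\mathrm{DG}}
  \lesssim h^{\min[p+1,\widehat p]} + h^{\min[q+1,\widehat q]} + h^{\min[r,\widehat r]} + h^{\min[s,\widehat s]},
\]
with the semi-norms of the exact fields absorbed into the hidden constant and $h=\max_{\omega}h_\omega$. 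Since $\|\cdot\|_{\mathrm{stab}}^{\mathrm{DG}}$ is a genuine norm (Lemma~\ref{Lemma:DD2_stab_norm}) that dominates the $L_2$-type norm on each velocity and the broken-$H^1$-plus-jump norm on each pressure, the error in each individual field is bounded by the same right-hand side, which yields the four rates $\min[p+1,\widehat p]$, $\min[q+1,\widehat q]$, $\min[r,\widehat r]$, $\min[s,\widehat s]$.

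I do not expect a real obstacle: as the text preceding the corollary states, the result follows \emph{immediately} from the two lemmas once the convergence estimate is in hand, so the proof is essentially bookkeeping. The one subtlety worth checking is the asymmetric loss of a power of $h$ for the pressures relative to the velocities. Among the pressure-dependent contributions to $(\|\mathbf{H}\|_{\mathrm{stab}}^{\mathrm{DG}})^2$, the gradient terms $\|\sqrt{k_i/\mu}\,\mathrm{grad}[\eta_{p_i}]\|^2$ and the $\eta_p/h_\Upsilon$-weighted interface jump terms scale like $h^{2r}$ (resp. $h^{2s}$), while the inter-pore coupling term $\|\sqrt{\beta/\mu}(\eta_{p_1}-\eta_{p_2})\|^2$ scales like $h^{2(r+1)}$ (resp. $h^{2(s+1)}$); hence the former dominate and fix the pressure rates at $\min[r,\widehat r]$ and $\min[s,\widehat s]$. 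Since Lemmas~\ref{Lemma:DG_estimate_for_H_in_terms_of_stab_norm} and~\ref{Lemma:DG_int_error_Gamma_int} already record precisely these powers of $h_\omega$, no further work is needed and the estimate closes at once.
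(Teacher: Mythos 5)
Your proposal is correct and follows the paper's own route: the paper simply asserts that Lemmas \ref{Lemma:DG_int_error_Gamma_int} and \ref{Lemma:DG_estimate_for_H_in_terms_of_stab_norm}, combined with the bound from Theorem \ref{Thm:DG_convergence_theorem}, immediately give the rates, and your write-up just fills in that bookkeeping (including the limited-regularity form of the interpolation estimate and the explanation of why the pressure rates are $r$ and $s$ rather than $r+1$ and $s+1$). No gap to flag.
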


\begin{remark}
  In order for Lemma
  \ref{Lemma:DG_int_error_Gamma_int}
  to hold, $\eta_p \neq 0$, as $\eta_p$
  is in the denominator of the estimates
  \eqref{Eqn:int_error_avg_eta_u1} and
  \eqref{Eqn:int_error_avg_eta_u2}.
  Since the convergence theorem utilizes
  Lemma \ref{Lemma:DG_int_error_Gamma_int},
  the convergence of the proposed DG
  formulation is thus established for the
  case $\eta_p \neq 0$. However, numerical
  simulations suggest that the parameters
  $\eta_u$ and $\eta_p$ do not seem to have
  a noticeable effect on the results for
  problems involving conforming meshes
  and conforming interpolations. 
\end{remark}

\section{PATCH TESTS}
\label{Sec:S5_DG_Patch_tests}
Patch tests are generally used to indicate the quality of a finite element. Despite some debated mathematical controversies regarding the patch test, ``the patch test is the most practically useful technique for assessing element behavior''  as nicely pinpointed by \citep{hughes2012finite}. 
In this section, different constant flow patch tests are used to showcase various features of the proposed stabilized mixed DG formulation. First, the capability of the proposed formulation for modeling flow in a highly heterogeneous, layered porous domain with abrupt changes in macro- and micro-permeabilities, is shown. Then, the ability of the proposed stabilized mixed DG formulation for supporting non-conforming discretization, in the form of non-conforming order refinement and non-conforming element refinement, is assessed. Finally, the proposed stabilized mixed DG formulation is employed on meshes with non-constant Jacobian elements. 
For the case of non-conforming order refinement, a parametric study is performed to assess
the sensitivity of the solutions with respect to the stabilization parameters $\eta_{u}$ and $\eta_{p}$.

\subsection{Velocity-driven patch test}
\label{Sec:Vel_patchTest}
In reality, heterogeneity of the material properties
is indispensable when it comes to porous domains. In
many geological systems, medium properties can vary
by many orders of magnitude and rapid changes may
occur over small spatial scales.
The aim of this boundary value problem is
to show that the proposed stabilized mixed
DG formulation can perform satisfactorily
when the medium properties are heterogeneous. 

The heterogeneous domain consists of five horizontal
layers with different macro- and micro-permeabilities
in each layer.
As shown in \textbf{Fig.~\ref{Fig:Patch_test_velocitydriven}}, on the left side of each layer, a constant normal velocity ($\mathbf{u}_{i} \cdot \widehat{\mathbf{n}} = -\frac{k_i^{\mathrm{\#~layer}}}{\mu}$) is applied and on the right side, $\mathbf{u}_{i} \cdot \widehat{\mathbf{n}} = \frac{k_i^{\mathrm{\#~layer}}}{\mu}$ is prescribed. On the top and bottom of the domain, normal components of macro- and micro-velocities are prescribed to be zero. 
For uniqueness of the solution, pressure
is prescribed on one corner of the domain.
Table \ref{Tb3:2D_velocity_driven5HL_data}
provides the model parameters for this problem. 
{\small
  \begin{table}[!h]
    \caption{Model parameters 
        for velocity-driven patch test.}
    \centering
    \begin{tabular}{|c|c|} \hline
      Parameter & Value \\
      \hline
      $\gamma \mathbf{b}$ & $\{0.0,0.0\}$\\
      $L_x$ & $5.0$ \\
      $L_y$ & $4.0$\\
      $\mu $ & $1.0$ \\
      $\beta $ & $1.0$ \\
      $k$&  $0.2$ \\
      $\eta_u$& $100.0$\\
      $\eta_p$& $100.0$\\
      $h$&structured T3 mesh of size $0.04$ used\\
      \hline 
    \end{tabular}
    \label{Tb3:2D_velocity_driven5HL_data}
  \end{table}
}

As can be seen in \textbf{Fig.~\ref{Fig:Patch_test_velocitydriven_profiles}},
  velocities are constant and pressures are linearly varying in the horizontal direction in each layer, which are in agreement with the exact solution of this problem as remarked by \citep{Hughes_Masud_Wan_2006}.
This problem is also solved using the stabilized continuous Galerkin (CG) formulation of the DPP model developed by \citep{Nakshatrala_Joodat_Ballarini_P2} and the x-components of velocity profiles are compared under both DG and CG formulations at x=2.5 throughout the domain as shown in \textbf{Fig.~\ref{Fig:Patch_test_velocitydriven_VMS_DG2}}. As can be seen, spurious oscillations are observed along the interfaces of the layers under the CG formulation. Under the DG formulation, however, such oscillations are completely eliminated and the physical jumps in the velocity profiles are accurately captured across the interfaces.

\begin{figure}
\includegraphics[clip,width=0.65\linewidth]{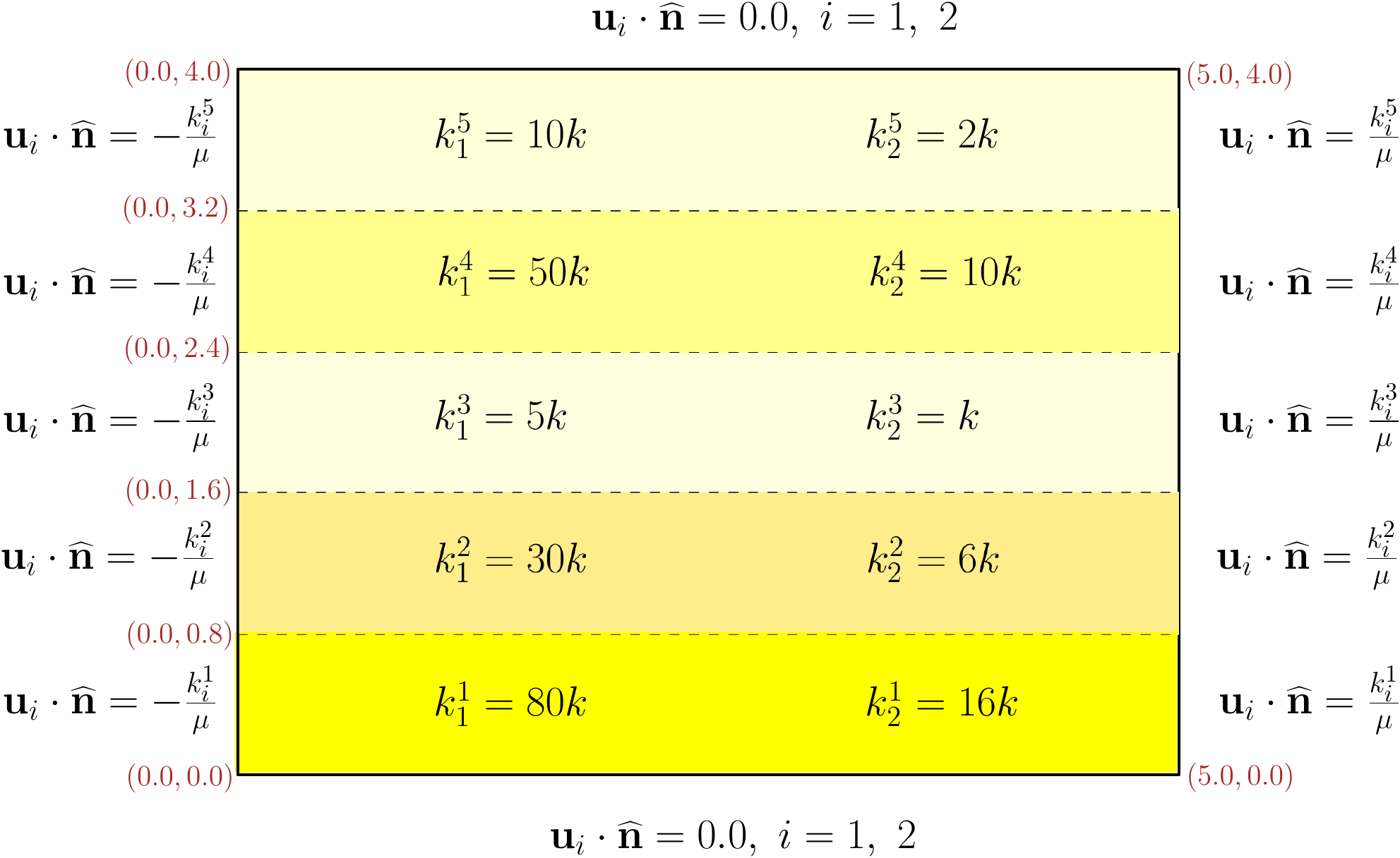}
\caption{\textsf{Velocity-driven patch test:}~This
  figure shows the computational domain, boundary
  conditions, and macro- and micro-permeabilities
  in each layer. \label{Fig:Patch_test_velocitydriven}}
\end{figure}

\begin{figure}[h]
\subfigure[Macro-pressure]{
\includegraphics[clip,width=0.42\linewidth]{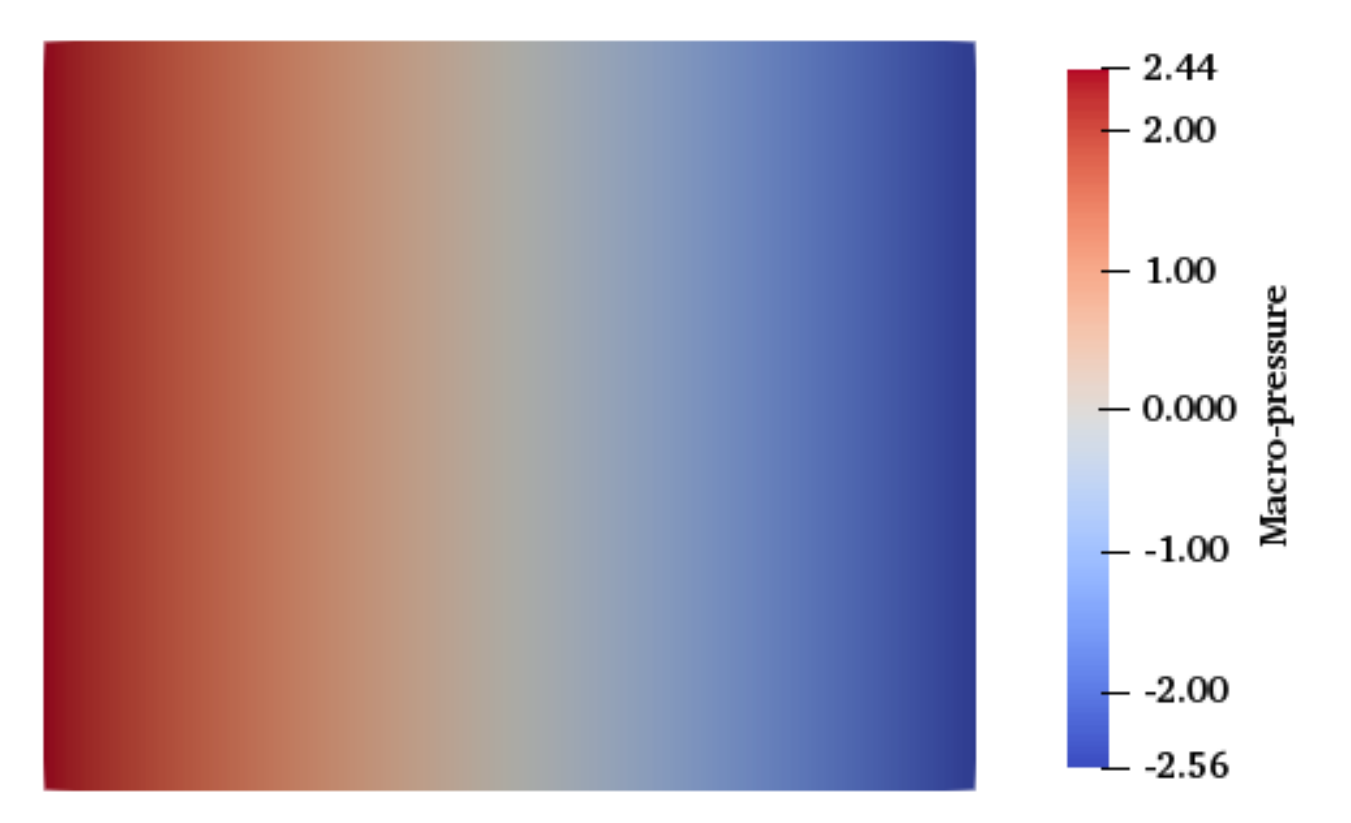}}
\subfigure[Micro-pressure]{
\includegraphics[clip,width=0.39\linewidth]{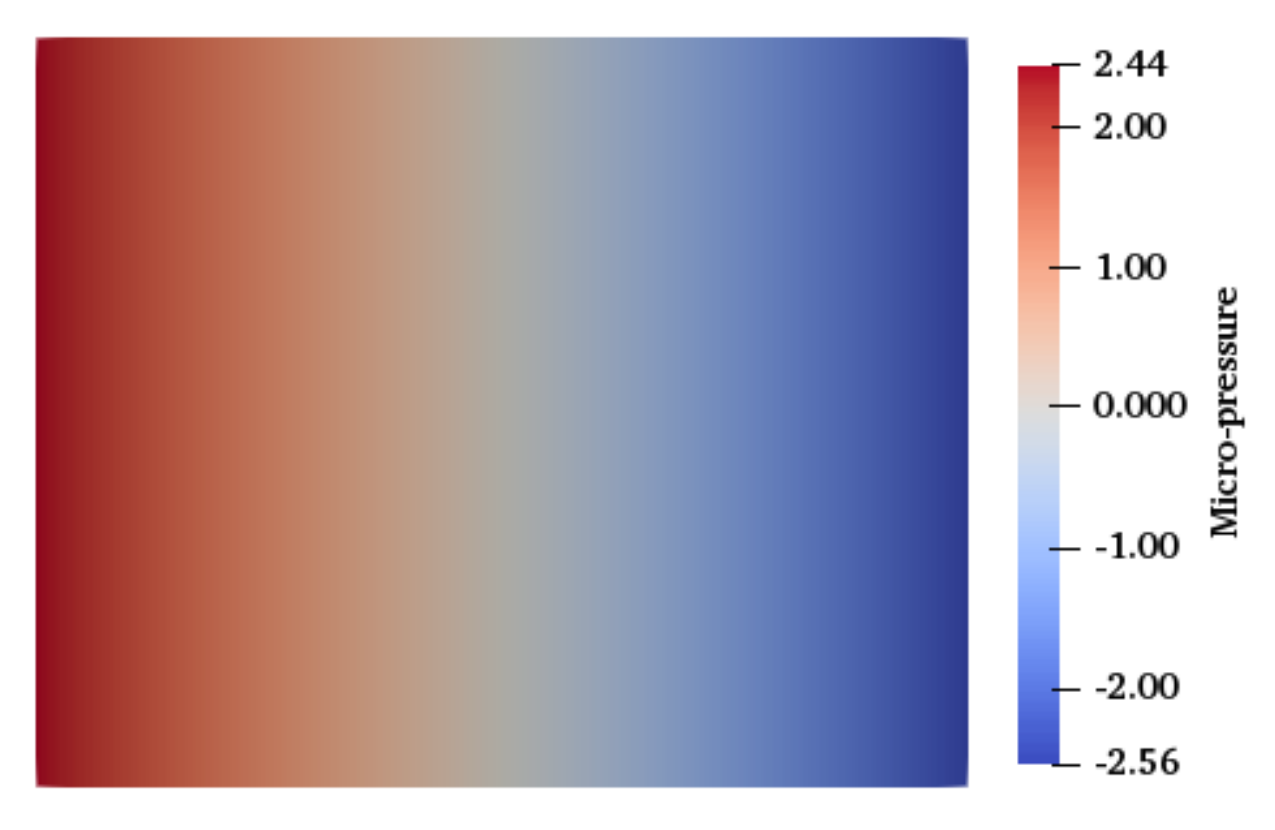}}
\subfigure[Macro-velocity]{
\includegraphics[clip,width=0.4\linewidth]{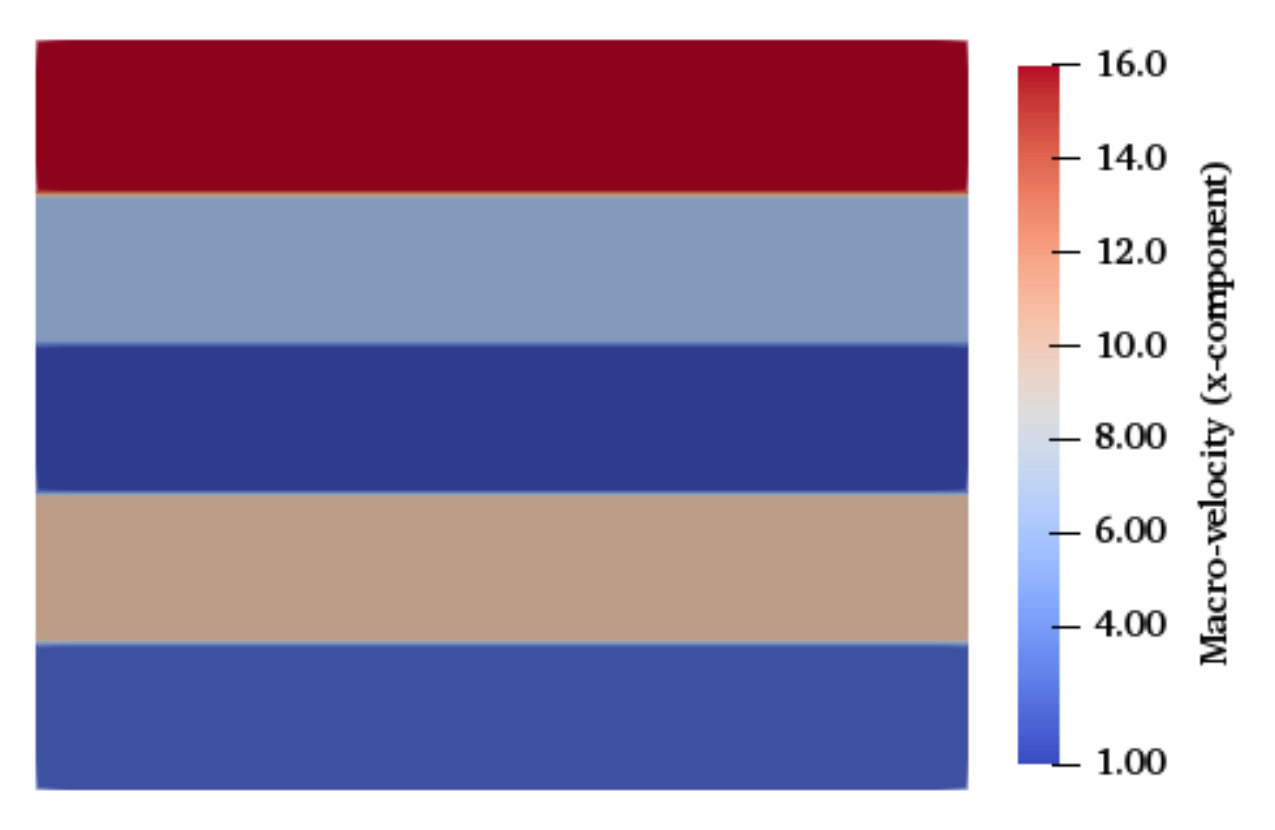}}
\subfigure[Micro-velocity]{
\includegraphics[clip,width=0.4\linewidth]{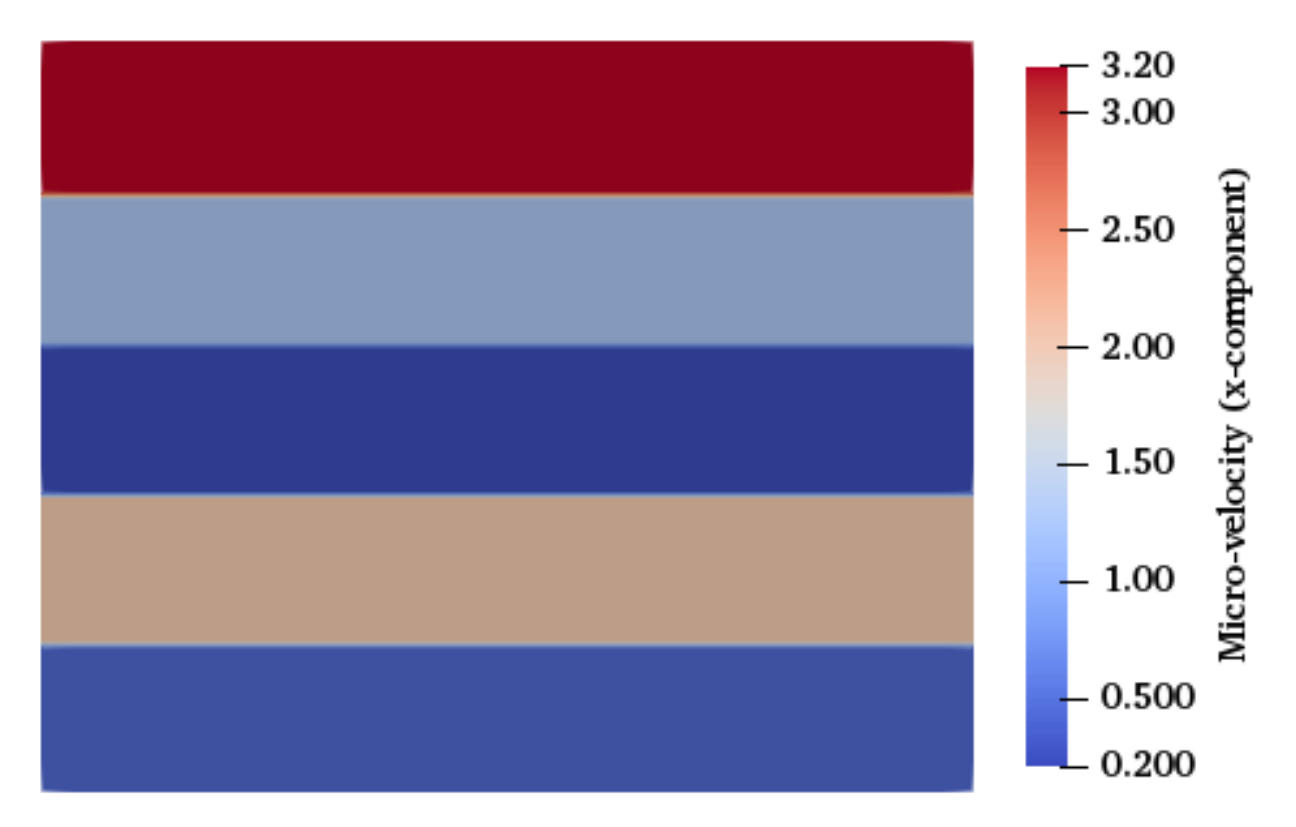}}
\caption{\textsf{Velocity-driven patch test:}~Velocities are constant within each layer and pressures are linearly varying in the horizontal direction which are in agreement with the exact solution of this problem. These results imply that the proposed formulation has successfully passed the velocity-driven patch test. \label{Fig:Patch_test_velocitydriven_profiles}}
\end{figure}

\begin{figure}[h]
	\subfigure{
		\includegraphics[clip,width=0.5\linewidth]{Figures/Fig5_a_Macro_vel_modified.pdf}}
	\subfigure{
		\includegraphics[clip,width=0.5\linewidth]{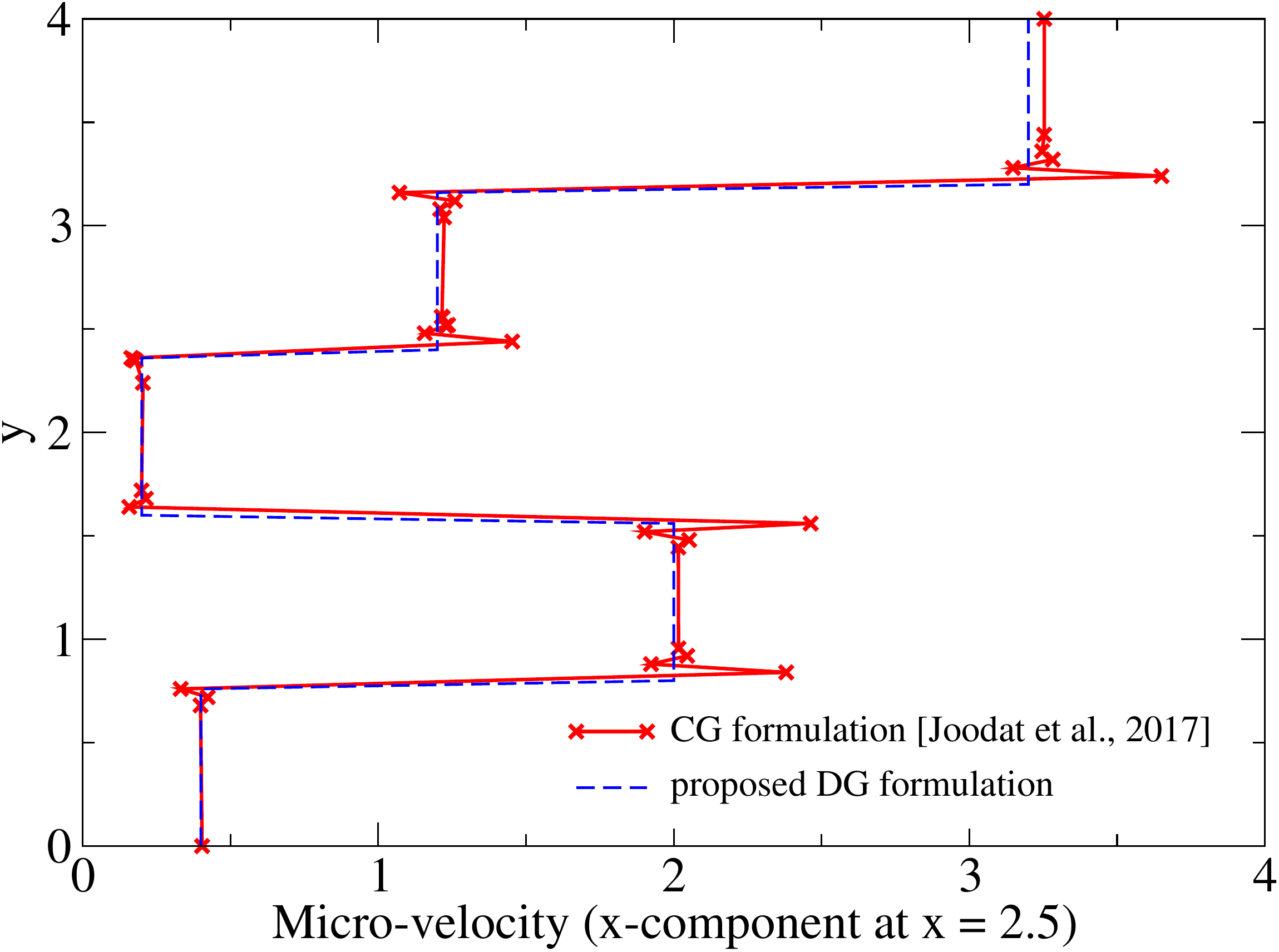}}
	\caption{\textsf{Velocity-driven patch test:}~This
          figure compares the velocities profiles obtained
          under the
          stabilized mixed CG
          formulation and the proposed DG formulation.
          The x-components of the macro-velocity (top)
          and micro-velocity (bottom)  
          at $x = 2.5$ are plotted. Under the CG formulation,
          overshoots and undershoots are observed along the
          interfaces of the layers. On the other hand, the
          proposed DG formulation is able to capture the
          physical jumps across the interfaces.
	\label{Fig:Patch_test_velocitydriven_VMS_DG2}}
\end{figure}
%
\subsection{Non-conforming discretization}
\label{Sec:Non_conforming}
One of the features of DG formulations is that the global error of the computation can be controlled by adjusting the numerical resolution in a selected set of the elements. Such a \emph{non-conforming discretization} can be obtained in two ways \citep{Hesthaven_Warburton_2007nodal}: One can either modify the local order of the interpolation, or locally change the element size in parts of the computational domain. \citep{babuvska2001finite,babuvska1981error} have discussed that the former method, also known as \emph{non-conforming order refinement} or \emph{non-conforming polynomial orders}, is more preferred for smooth problems. However, for the non-smooth case, which is due to the geometric features, sources, or boundary conditions, \emph{non-conforming element refinement} is the best choice. In the following, we show the application of non-conforming discretization under the proposed stabilized DG formulation using simply designed boundary value problems.
\subsubsection{Non-conforming polynomial orders}
\label{Sec5:2D_square}
Since the element communication under the DG formulations takes place through fluxes, each element can independently possess a desired order of interpolation. Hence, the DG methods can easily support the non-conforming polynomial orders (see \citep{remacle2003adaptive,canouet2005discontinuous,hesthaven2004high}).

In order to investigate the performance of our proposed stabilized mixed DG formulation under non-conforming polynomial orders,
a problem taken from
\citep{Nakshatrala_Joodat_Ballarini_P2} is used. 
The domain is considered to be a unit
square, with pressures being prescribed
on the entire boundary of both pore-networks as shown in \textbf{Fig.~ \ref{Fig:Dual_Problem_2D_domain}}. 
Prescribed pressure values on the respective boundary edges are obtained using the analytical solutions of this problem.
%
\begin{figure}[!h]
	\includegraphics[clip,scale=0.8]{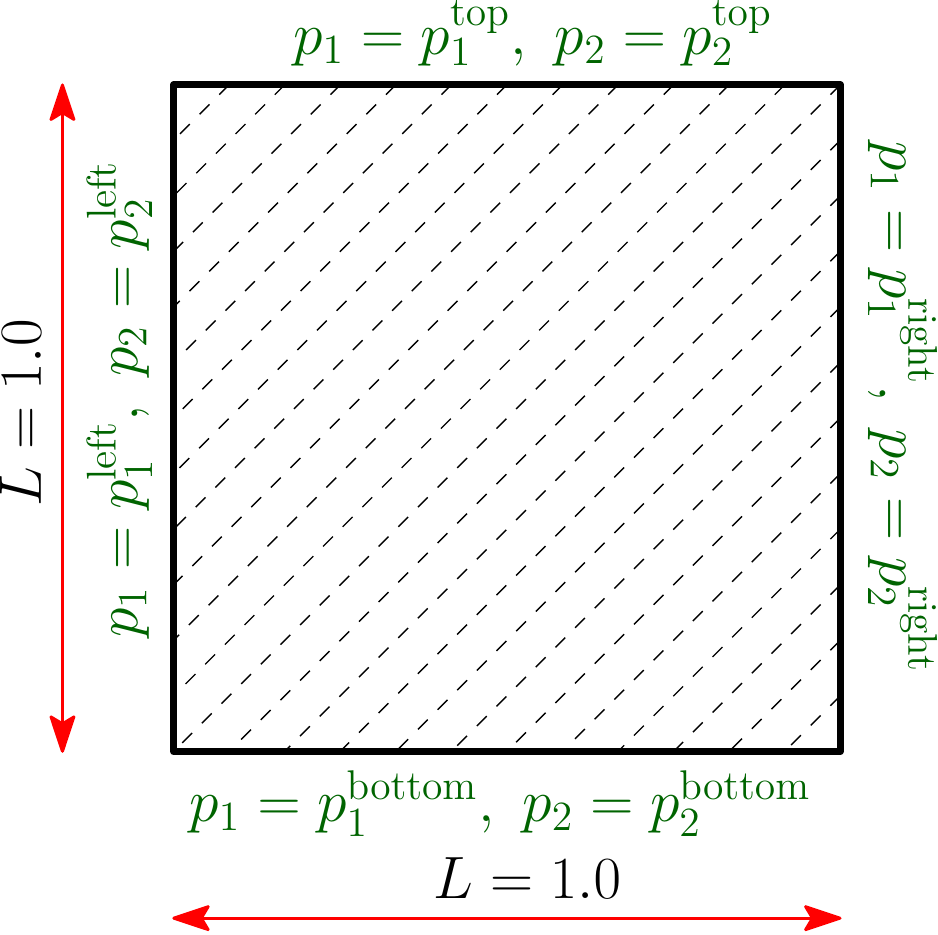}
	\caption{\textsf{Non-conforming polynomial orders:}~The computational domain in the 2D setting is a unit square. Pressures are prescribed on the entire boundary of both pore-networks. Prescribed pressure values on the respective boundary edges are obtained using the analytical solutions \eqref{Eqn:2D_Convergence_Analytical_p1} and \eqref{Eqn:2D_Convergence_Analytical_p2}.}
	\label{Fig:Dual_Problem_2D_domain}
	\vspace{1cm}
\end{figure}
The analytical solution for the pressure and velocity fields can be written as 
\begin{align}
	p_1(x,y) &= \frac{\mu}{\pi} \exp(\pi x) \sin(\pi y) - \frac{\mu}{\beta k_1} \exp(\eta y) \label{Eqn:2D_Convergence_Analytical_p1} \\
	p_2(x,y) &= \frac{\mu}{\pi} \exp(\pi x) \sin(\pi y) + \frac{\mu}{\beta k_2} \exp(\eta y)
	\label{Eqn:2D_Convergence_Analytical_p2}\\ 
	\mathbf{u}_1(x,y) &= -k_1\left(\begin{array}{c}
		\exp(\pi x) \sin(\pi y) \\
		\exp(\pi x) \cos(\pi y)
	\end{array}\right) 
	+ \left(\begin{array}{c}
		0 \\
		\frac{\eta}{\beta} \exp(\eta y) 
	\end{array}\right) 
	\\
	\mathbf{u}_2(x,y) &= -k_2\left(\begin{array}{c}
		\exp(\pi x) \sin(\pi y) \\
		\exp(\pi x) \cos(\pi y)
	\end{array}\right) 
	- \left(\begin{array}{c}
		0 \\
		\frac{\eta}{\beta} \exp(\eta y)
	\end{array}\right) 
	\label{Eqn:2D_square_Exact}
\end{align}
where
\begin{align}
	\label{Eqn:flow_characterization_param}
	\eta := \sqrt{\beta \frac{k_1 + k_2}{k_1 k_2}}
\end{align}
$\eta$ is a useful parameter to
characterize the flow of fluids
through porous media with double
porosity/permeability
\citep{Nakshatrala_Joodat_Ballarini}. 

{\small
  \begin{table}[!h]
    \caption{Model parameters for non-conforming polynomial orders, element-wise mass balance study, and 2D numerical convergence analysis.}
    \centering
    \begin{tabular}{|c|c|} \hline
      Parameter & Value \\
			\hline
			$\gamma \mathbf{b}$ & $\{0.0,0.0\}$\\
			$L $ & $1.0$ \\
			$\mu $ & $1.0$ \\
			$\beta $ & $1.0$ \\
			$k_1$&  $1.0$ \\
			$k_2$&  $0.1$\\
			$\eta$ & $\sqrt{11} \simeq 3.3166$\\
			$\eta_{u}$ & $10.0$\\
			$\eta_{p}$ & $1.0$\\
			$h$&structured T3 mesh of size $0.1$ used\\
			\hline
			$p_i^{\mathrm{left}},~i=1,2$&  Obtained by evaluating   \\
			$p_i^{\mathrm{right}},~i=1,2$&  the analytical solution \\
			$p_i^{\mathrm{top}},~i=1,2$&   (equations \eqref{Eqn:2D_Convergence_Analytical_p1} and \eqref{Eqn:2D_Convergence_Analytical_p2} ) \\
			$p_i^{\mathrm{bottom}},~i=1,2$&  on the respective boundaries. \\
			\hline 
		\end{tabular}
		\label{Tb4:2D_convergence_analysis_data}
	\end{table}
}
%
\begin{figure}[!h]
	\includegraphics[clip,scale=0.83]{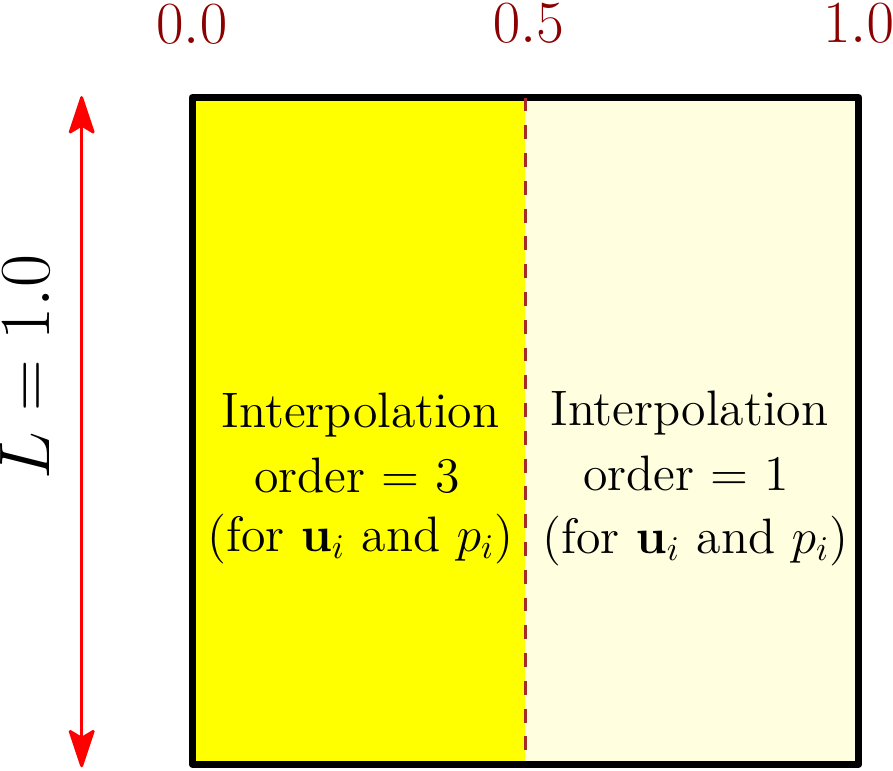}
	\caption{\textsf{Non-conforming polynomial orders:}~Different sets of equal-order interpolations used for this problem are shown in this figure. In the left part of the domain, third order interpolation polynomials are used for velocities and pressures, while in the right part, first order interpolation polynomials are used.
		\label{Fig:Dual_Problem_2D_domain2}}
\end{figure}
Table \ref{Tb4:2D_convergence_analysis_data} provides the parameter values for this problem.
In the left and right parts of the domain, two different sets of equal-order interpolation are employed for velocities and pressures as shown in \textbf{Fig.~\ref{Fig:Dual_Problem_2D_domain2}}. In the left half, third order interpolation polynomials are employed for velocities and pressures in each pore-network while in the right half, first-order interpolation polynomials are used. 

Smooth velocity profiles along the non-conforming edge ($x = 0.5$) are not achievable for a coarse
mesh (e.g., of size 10 x 10 elements mesh) without using extra stabilization terms (i.e., $\eta_{u} = \eta_{p} = 0$). 
One can either apply exhaustive mesh refinement, which in turn leads to a much higher computational cost, or can circumvent the unnecessary refinements by alternatively taking advantage of non-zero $\eta_{u}$ and $\eta_{p}$. 
\textbf{Figs.~\ref{Fig:Order_refinement_parametric-etap}}--\textbf{\ref{Fig:Order_refinement_parametric-etaup}} illustrate
the sensitivity of x-component of velocities along the non-conforming edge with respect to $\eta_{u}$, $\eta_{p}$ and their combined effect.
According to \textbf{Figs.~\ref{Fig:EtaP_macro}} and \textbf{\ref{Fig:EtaP_micro}}, the increase in $\eta_{p}$ per se in the absence of $\eta_{u}$ slightly improves the results. However, for the case of $\eta_{p} = 0$ and non-zero
$\eta_{u}$, a drastic enhancement is captured with $\eta_{u}$ of order one as shown in \textbf{Figs.~
\ref{Fig:EtaU_macro}} and \textbf{\ref{Fig:EtaU_micro}}. \textbf{Figs.~\ref{Fig:Eta_macro}} and \textbf{\ref{Fig:Eta_micro}} 
show the combined effect of $\eta_{u}$ and $\eta_{p}$ along the non-conforming edge in minimizing the drifts of macro and micro-velocity fields.
\begin{figure}
  \subfigure{
    \label{Fig:EtaP_macro}
    \includegraphics[trim={0 1cm 0 1cm},clip,width=0.7\linewidth]{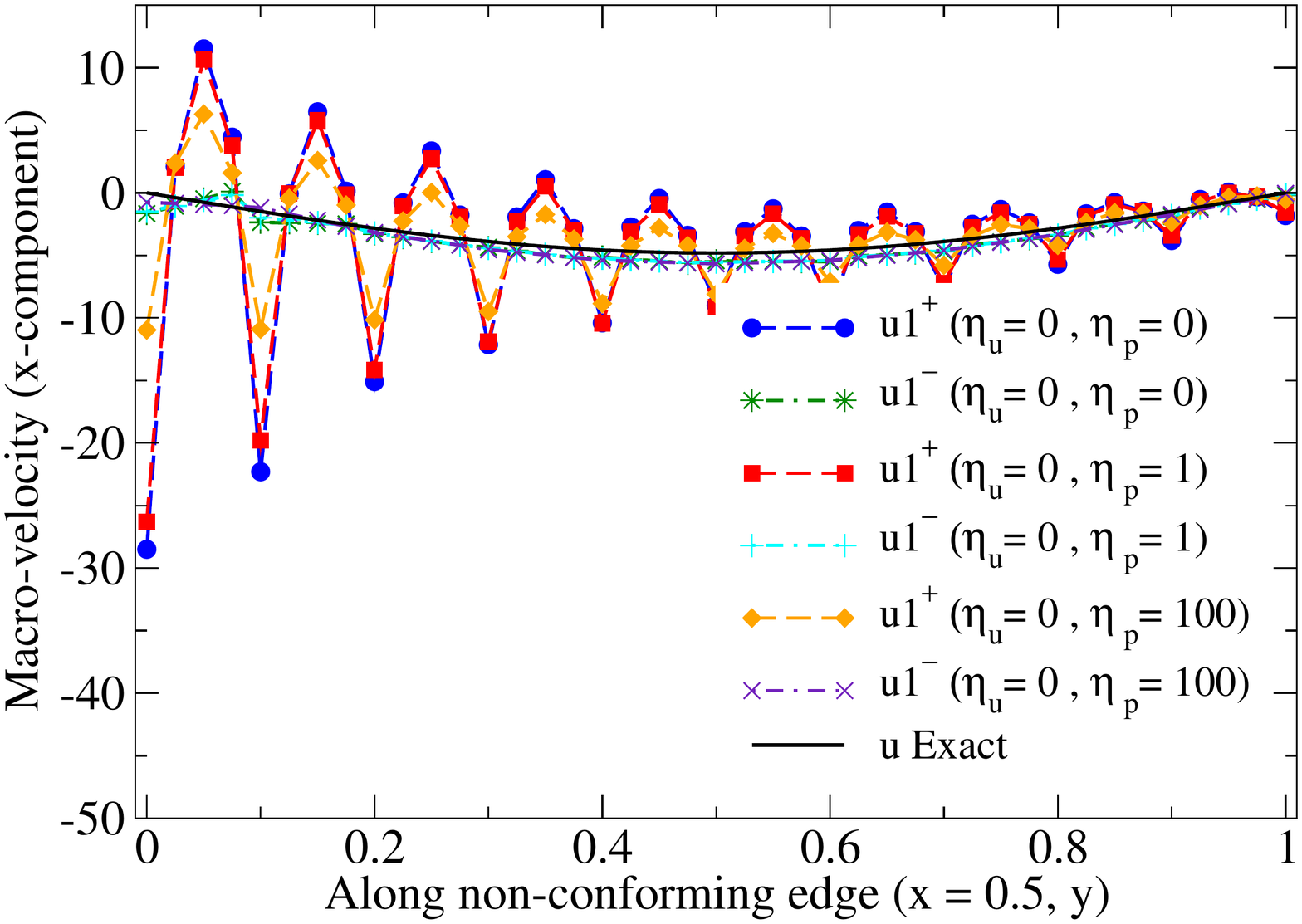}}
  \subfigure{
    \label{Fig:EtaP_micro}
    \includegraphics[trim={0 1.3cm 0 2.5cm},clip,width=0.7\linewidth]{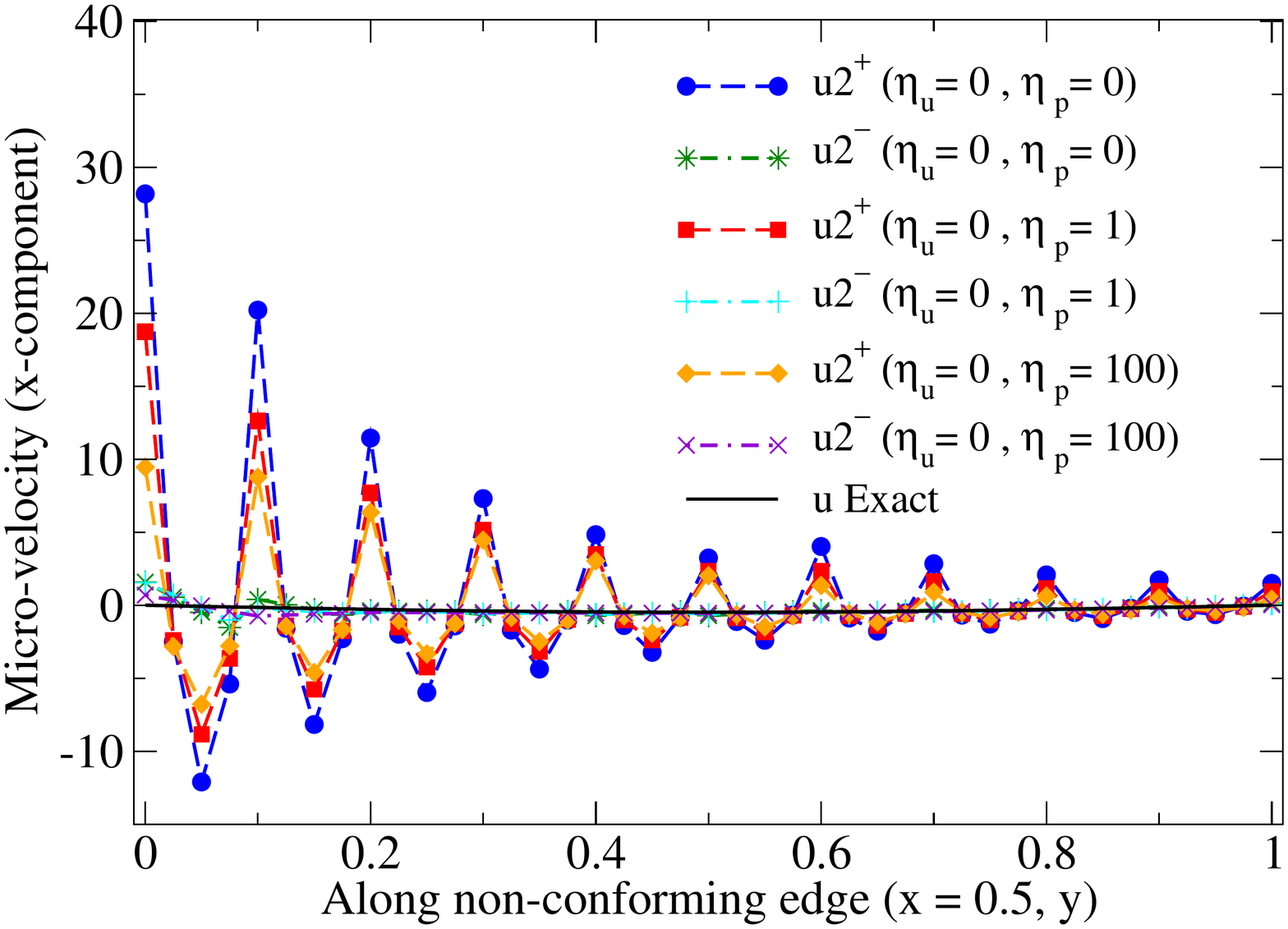}}
  \caption{\textsf{Non-conforming polynomial orders:}~This
    parametric study demonstrates that an increase in $\eta_{p}$
    in the absence of $\eta_{u}$ slightly improves the accuracy
    in capturing the jumps of the (macro- and micro-) velocities across
    a non-conforming edge. }
  \label{Fig:Order_refinement_parametric-etap}
\end{figure}
%
\begin{figure}
  \subfigure{
    \label{Fig:EtaU_macro}
    \includegraphics[trim={0 1cm 0 1cm},clip,width=0.7\linewidth]{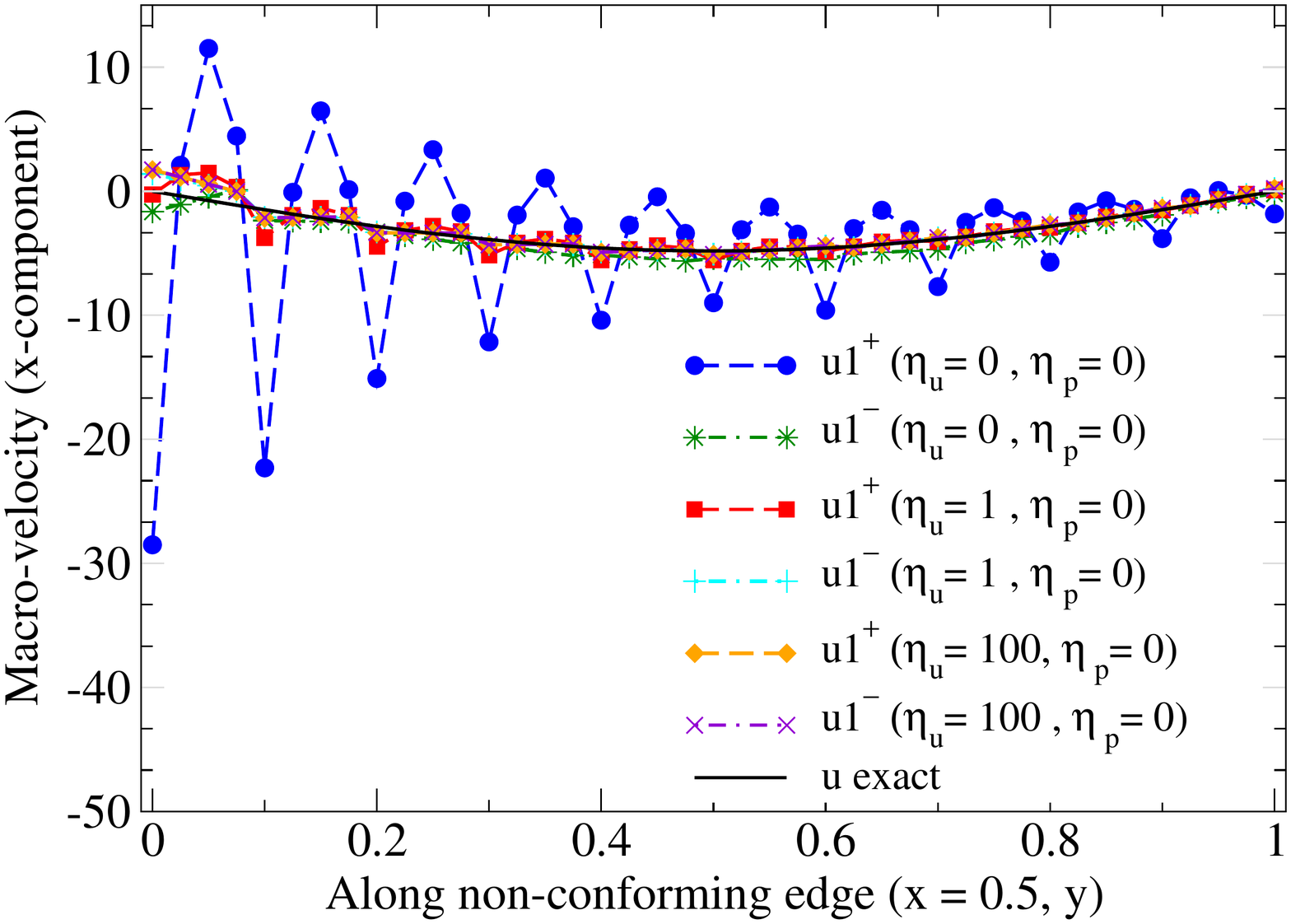}}
  \hspace{1cm}
  \subfigure{
    \label{Fig:EtaU_micro}
    \includegraphics[trim={0 1.3cm 0 2.5cm},clip,width=0.7\linewidth]{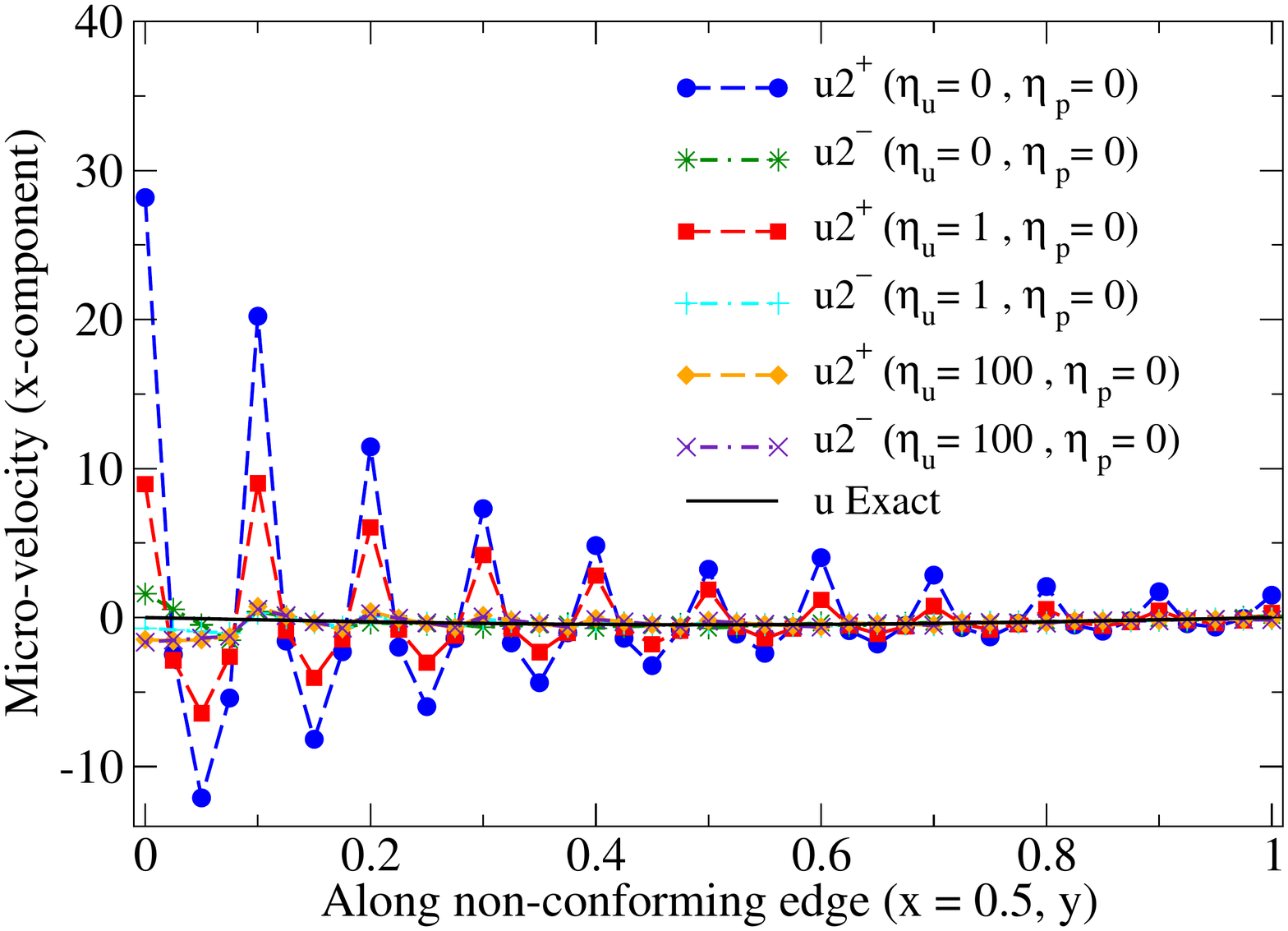}}
  \caption{\textsf{Non-conforming polynomial orders:}~This figure shows a parametric study performed on the effect of $\eta_u$ on velocity profiles. For the case of $\eta_{p} = 0$ and non-zero $\eta_{u}$, a drastic enhancement is captured with $\eta_{u}$ of order $1$. }
	\label{Fig:Order_refinement_parametric-etau}
\end{figure}
%
\begin{figure}
  \subfigure{
    \label{Fig:Eta_macro}
    \includegraphics[trim={0 1cm 0 1cm},clip,width=0.7\linewidth]{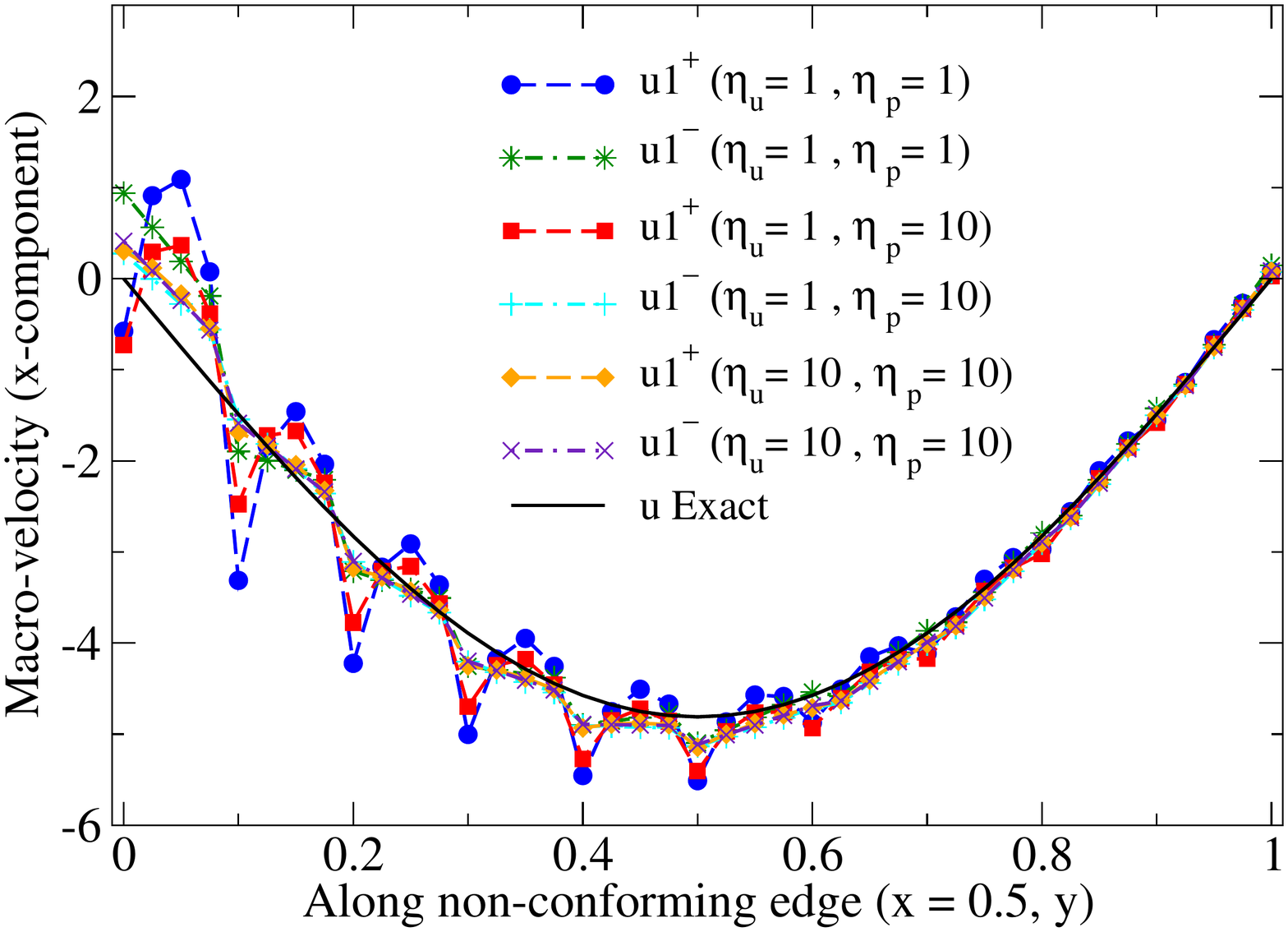}}
  \subfigure{
    \label{Fig:Eta_micro}
    \includegraphics[trim={0 1.3cm 0 2.5cm},clip,width=0.7\linewidth]{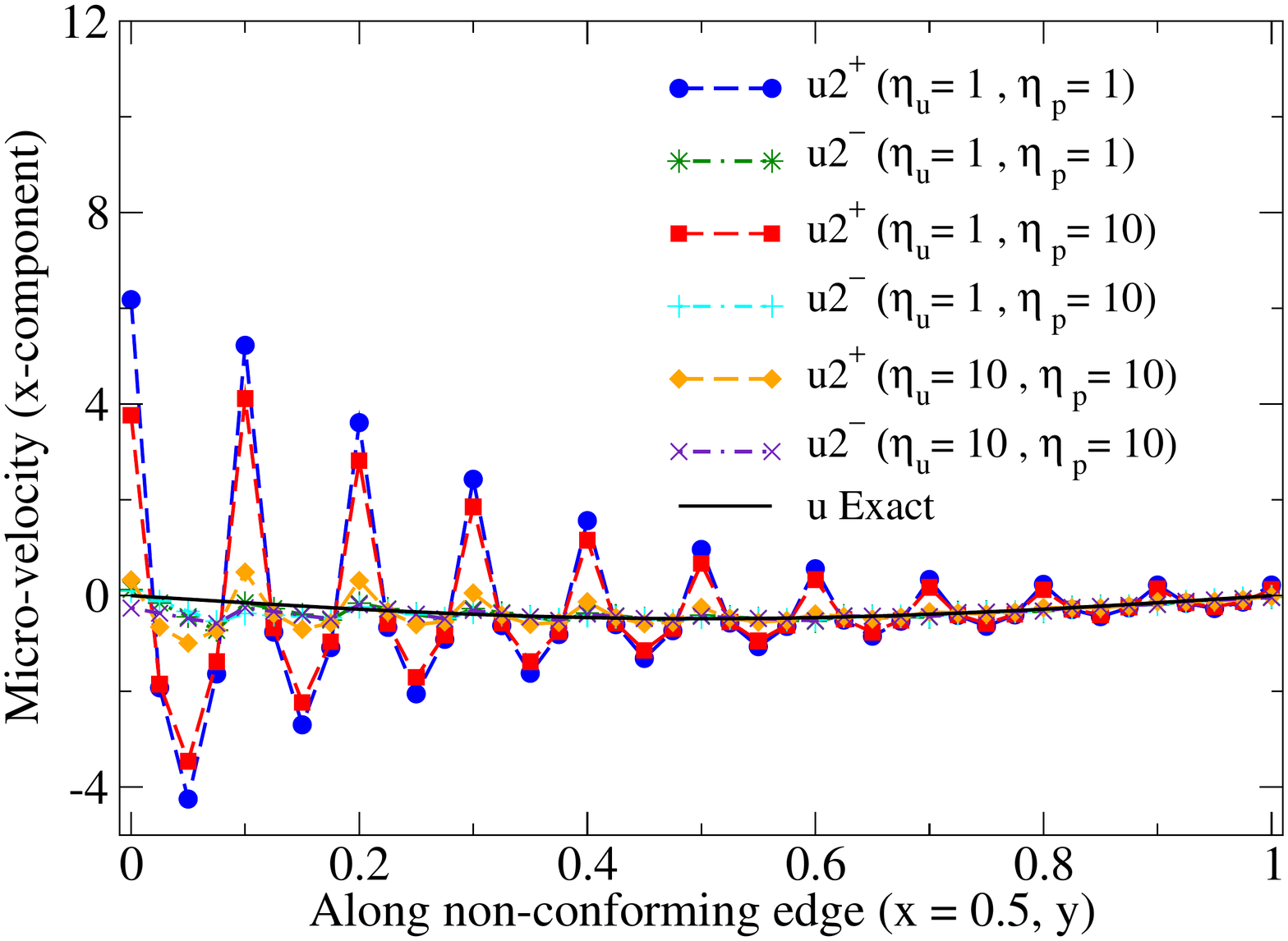}}
  \caption{\textsf{Non-conforming polynomial orders:}~This figure shows a parametric study performed on the combined effect of $\eta_u$ and $\eta_p$ on minimizing the drifts of 
  	macro and micro-velocity fields.}
	\label{Fig:Order_refinement_parametric-etaup}
\end{figure}

\textbf{Figs.~\ref{Fig:mismatching_polynomials_p}} and \textbf{\ref{Fig:mismatching_polynomials_v}} compare the exact and numerical solutions for the pressure and velocity fields by taking $\eta_{u} = 10$ and $\eta_{p} = 1$. As can be seen, the numerical and the exact solutions match, which implies that the proposed mixed DG formulation can nicely handle non-conforming polynomial orders.
\citep{badia2010stabilized} suggests the need for
such additional stabilization terms for modeling flow under Darcy equations. However, to the best of
the authors' knowledge,
no numerical simulation has been reported to quantify the effect of these stabilization parameters on the accuracy of results under the DPP model for the problems exhibiting mismatching interpolation order. 

\begin{figure}[!h]
	\subfigure[Macro-pressure (Numerical solution)]{
		\includegraphics[clip,width=0.4\linewidth]{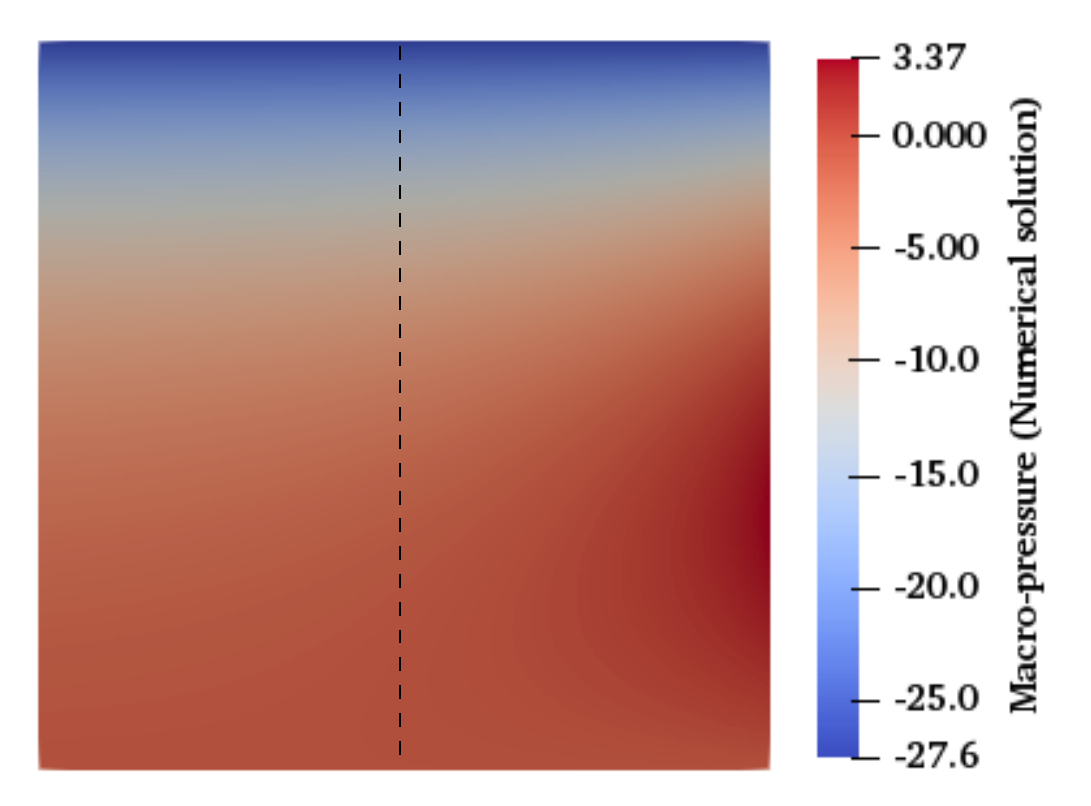}}
	\hspace{1.5cm}
	\subfigure[Macro-pressure (Exact solution)]{
		\includegraphics[clip,width=0.41\linewidth]{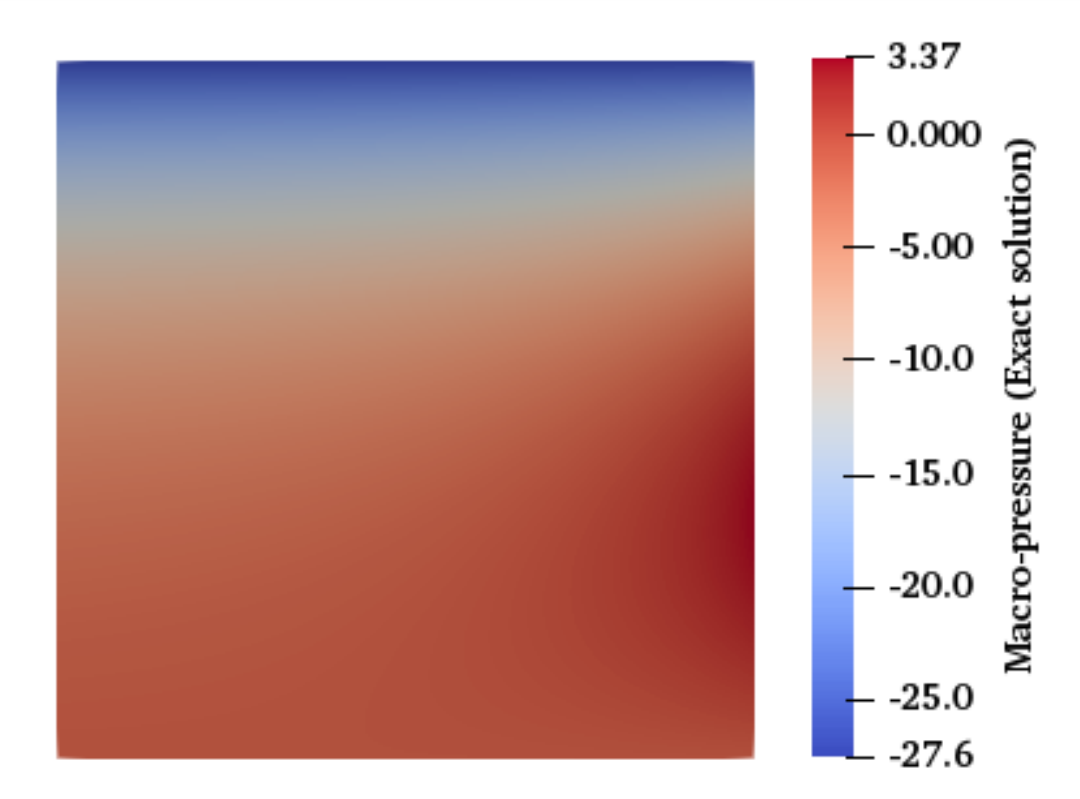}}
	\subfigure[Micro-pressure (Numerical solution)]{
		\includegraphics[clip,width=0.4\linewidth]{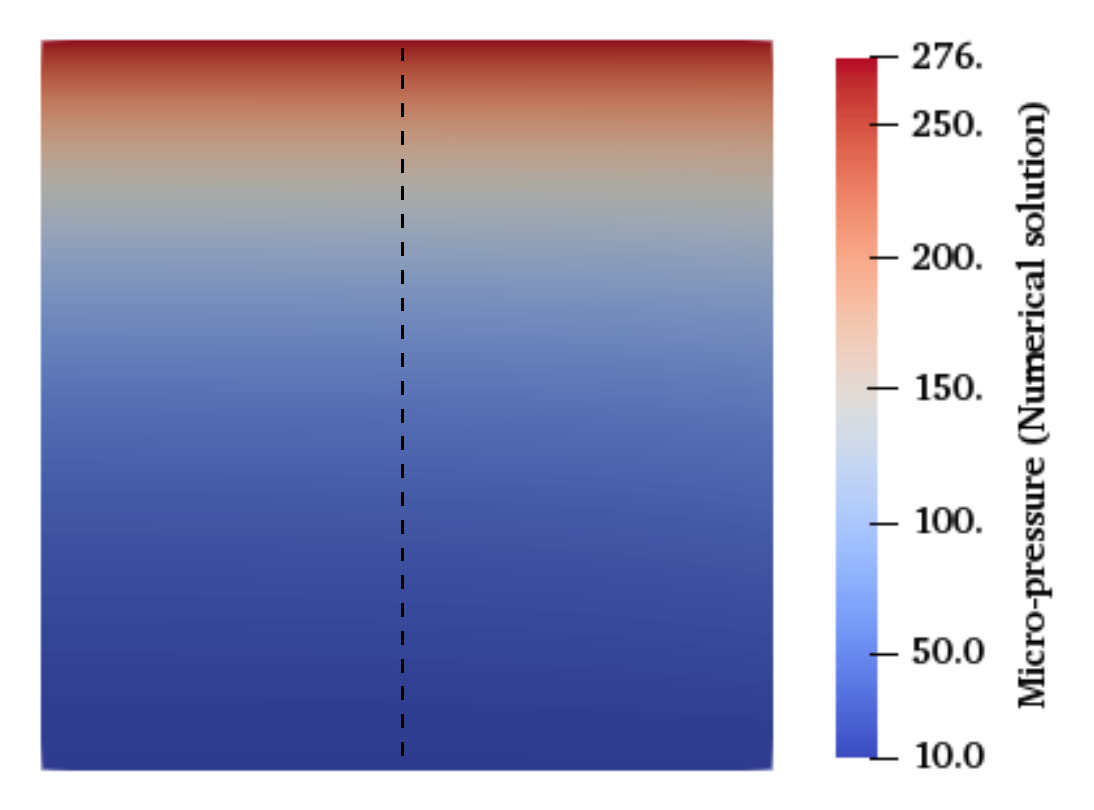}}
	\hspace{1.5cm}
	\subfigure[Micro-pressure (Exact solution)]{
		\includegraphics[clip,width=0.4\linewidth]{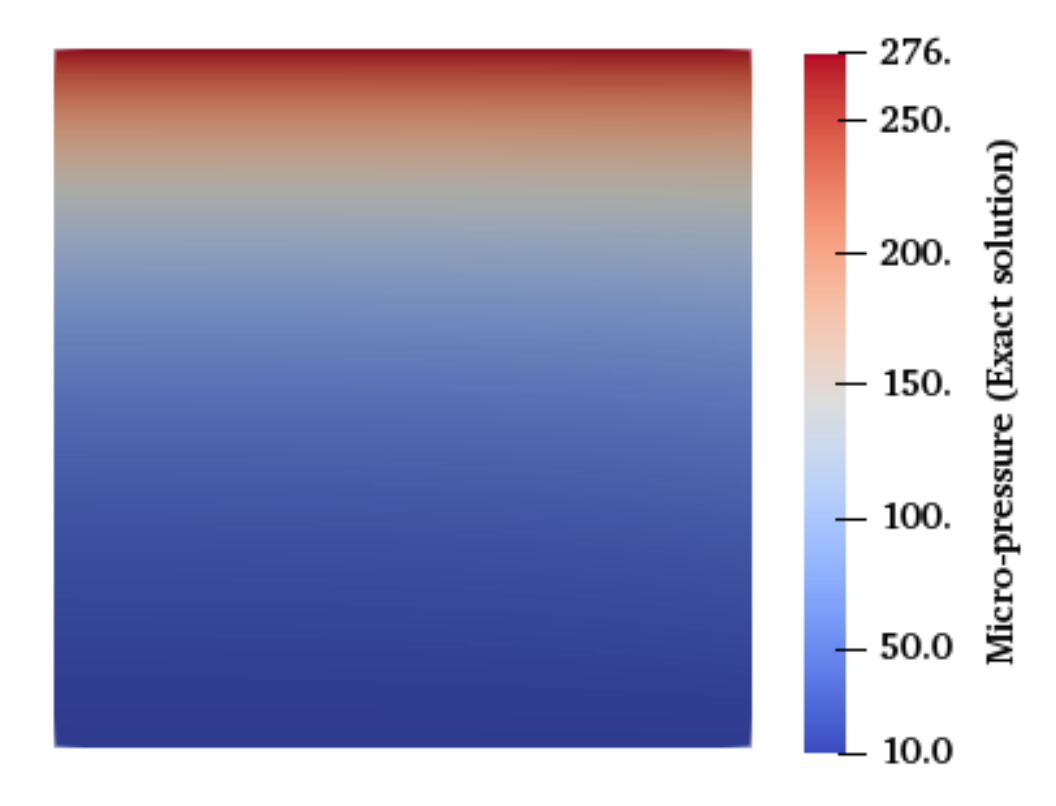}}
	\caption{\textsf{Non-conforming polynomial orders:}~This figure shows the exact and numerical solutions for the pressure profiles within the domain. In the left half of the domain, third order interpolation polynomials are used for velocities and pressures, while in the right half, first order interpolation polynomials are used. The exact and numerical solutions match which shows that the proposed stabilized DG formulation supports non-conforming order refinement. \label{Fig:mismatching_polynomials_p}}
\end{figure}
%
\begin{figure}[!h]
	\subfigure[Macro-velocity (Numerical solution)]{
		\includegraphics[clip,width=0.4\linewidth]{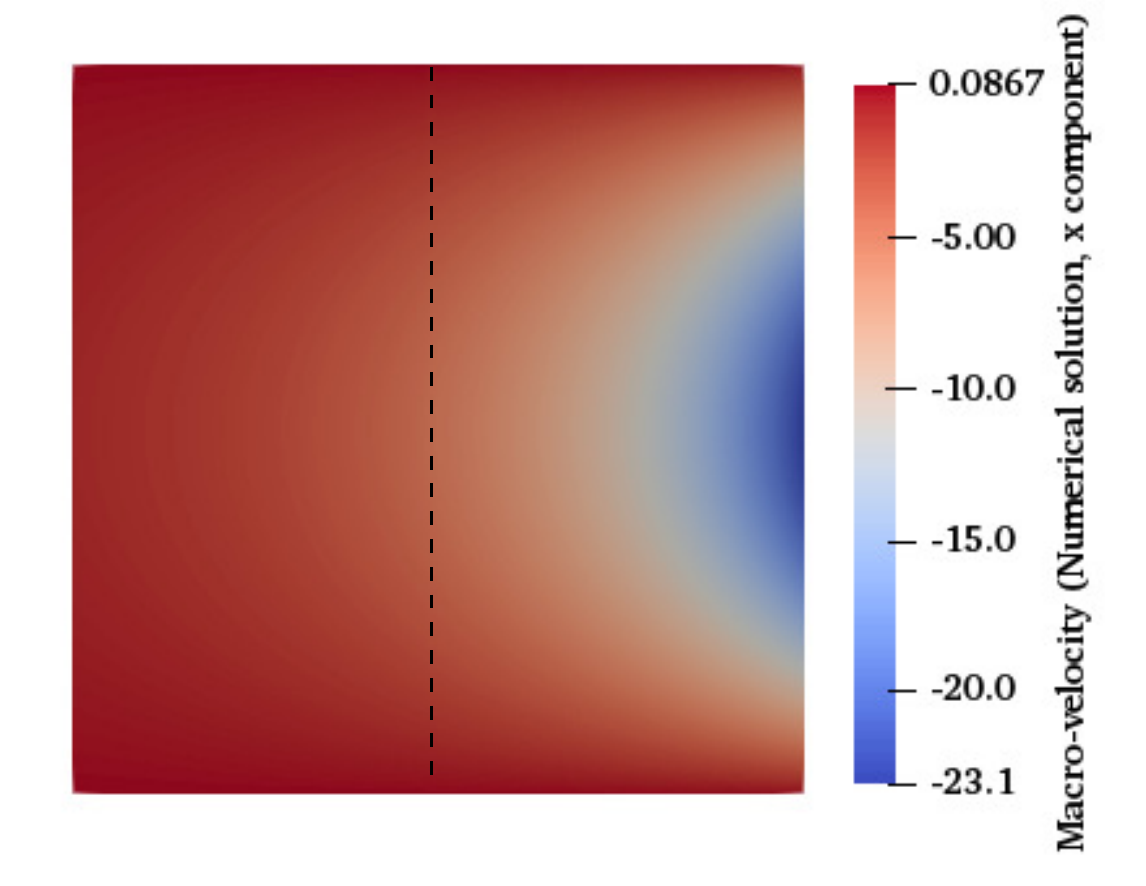}}
	\hspace{1.5cm}
	\subfigure[Macro-velocity (Exact solution)]{
		\includegraphics[clip,width=0.4\linewidth]{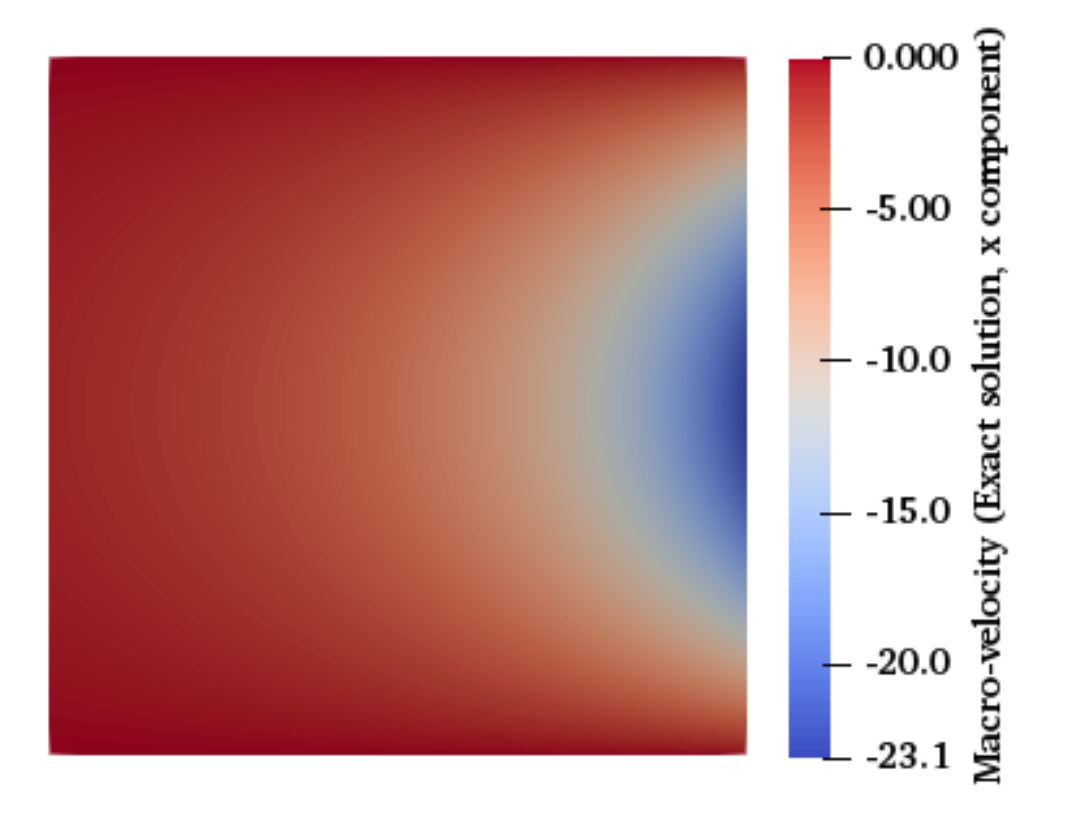}}
	\subfigure[Micro-velocity (Numerical solution)]{
		\includegraphics[clip,width=0.4\linewidth]{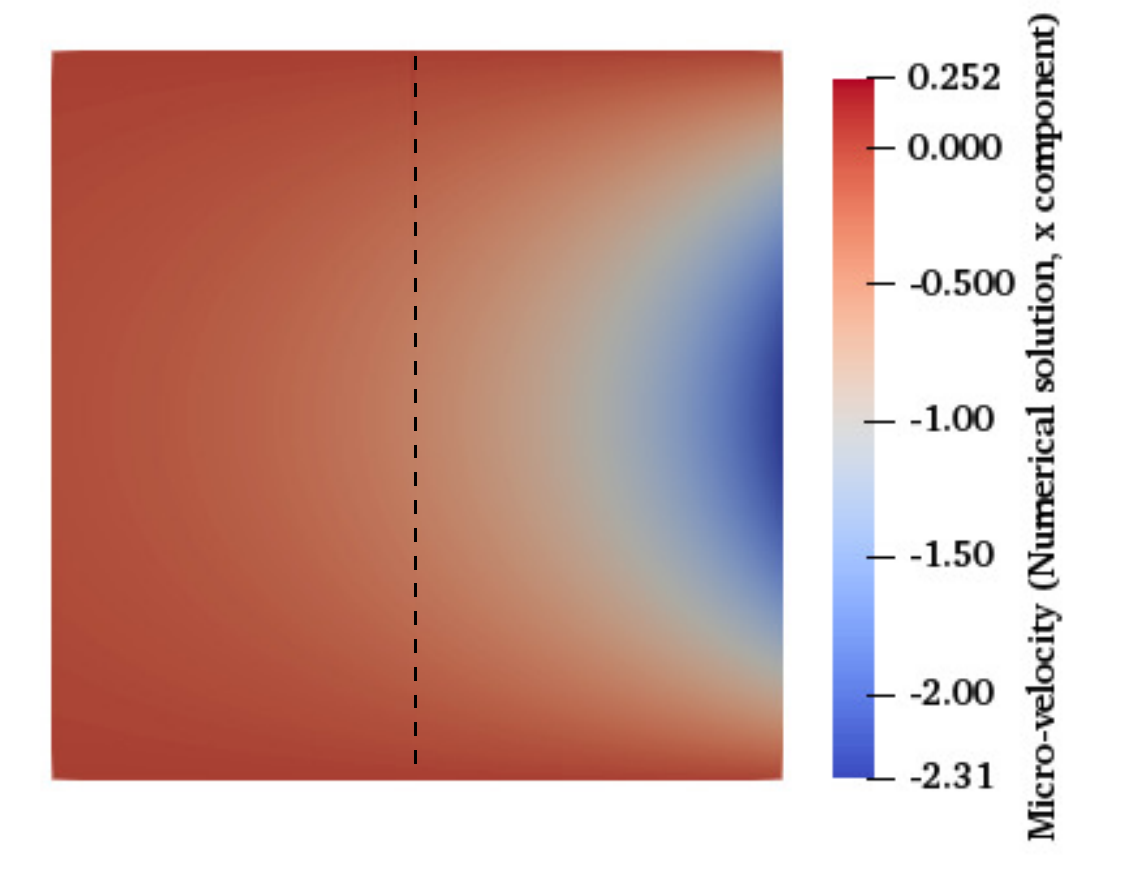}}
	\hspace{1.5cm}
	\subfigure[Micro-velocity (Exact solution)]{
		\includegraphics[clip,width=0.4\linewidth]{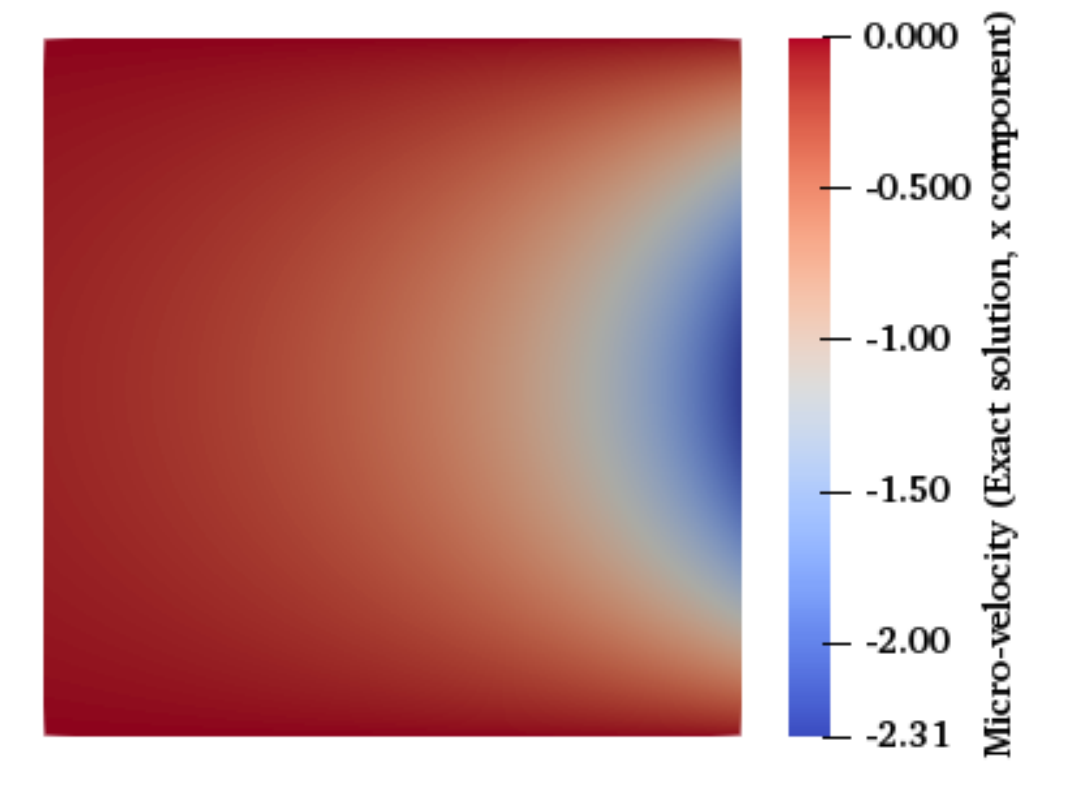}}
	\caption{\textsf{Non-conforming polynomial orders:}~This figure shows the exact and numerical solutions for the velocity profiles within the domain. In the left half of the domain, third order interpolation polynomials are used for velocities and pressures, while in the right half, first order interpolation polynomials are used. The exact and numerical solutions match which shows the proposed stabilized DG formulation supports non-conforming order refinement. \label{Fig:mismatching_polynomials_v}}
\end{figure}
%
\subsubsection{Non-conforming element refinement}
In mesh refinement procedures, one can either uphold the conformity of the mesh or produce irregular (non-conforming) meshes. The ability of DG formulations to support non-conforming elements obviates the user from propagating refinements beyond the desired elements
\citep{Hesthaven_Warburton_2007nodal}. The non-conforming meshes introduce \emph{hanging nodes} on the edge of neighboring elements. In general,
there are two strategies for handling non-matching interface discretization.
In the first approach, extra degrees of freedom are assigned to the hanging nodes; hence the shape functions are generated 
on both regular and hanging nodes in such a way that both Kronecker delta and partition of unity properties are satisfied.
Constructing these special shape functions for two- and three-dimensional problems is discussed in 
\citep{gupta1978finite,morton1995new}.
In the second approach, which is known as constrained approximation, the shape functions 
are generated only on the corner nodes of each element and the stiffness matrix is assembled via conventional algorithms. The constraints at hanging nodes
are then designed to be the average of their neighboring corner nodes. This can be enforced either through 
Lagrange multipliers or multiplication by the connectivity matrix \citep{ainsworth1997aspects,bank1983some}. This method is 
a classical standard procedure in treatment of mismatching girds and hanging nodes. For further details refer to \citep{oden1989toward}.
Herein, we resort to the second approach by introducing \textit{virtual nodes}, as the refinement algorithm is more straightforward 
compared to the first approach \citep{fries2011hanging}.

Applications of mesh refinement in the light of DG formulations are provided by \citep{burstedde2008scalable, kopera2014analysis, hartmann2002adaptive}, where the numerical fluxes on the non-conforming meshes are
incorporated in the DG solver.
In the following problem, the capability of our proposed stabilized mixed DG formulation for supporting the non-conforming element refinement is investigated. 
The domain is homogeneous with pressures being prescribed 
on the left and right boundaries of both pore-networks. The normal components of velocities 
are zero on top and bottom of the domain.
The model parameters for this
problem can be found in Table 
\ref{Tb4:nonconforming_element_refinement}. The refinement provided is based on physical considerations and
takes place on the right half of the domain, where the mismatching 
edge is shared by more than two elements, as can be seen in 
\textbf{Fig.~\ref{Fig:Non_conforming_domain}}.

The virtual nodes laid down on the 
non-conforming boundary face (nodes $13$ and $14$ in \textbf{Fig.~\ref{Fig:Non_conforming_mesh}}), each store a linear 
interpolation of nodes $2$ and $3$. These
nodes (similar to hanging nodes $8$ and $9$) do not initially impose any additional degrees 
of freedom and are merely auxiliary nodes on the 
edge of element $1$ for programming convenience.
The usual DG algorithm for the assembly of the global stiffness matrix 
is followed. Then, we enforce constraints for degrees of freedom 
 corresponding to hanging nodes (and virtual nodes) by
 Lagrange multiplier's approach as described in details in \citep{karniadakis2013spectral,fries2011hanging}.
 At this stage, the interactions of node $8$ with nodes $2$ and $3$ was facilitated via virtual node $13$, and similarly, the interaction of node $9$ with nodes $2$ and $3$ was assisted via virtual node $14$.
{\small
  \begin{table}[!h]
    \caption{Model parameters
      for non-conforming element refinement problem.}
    \centering
    \begin{tabular}{|c|c|} \hline
			Parameter & Value \\
			\hline
			$\gamma \mathbf{b}$ & $\{0.0,0.0\}$\\
			$L_x$ & $2.0$ \\
			$L_y$ & $1.0$\\
			$\mu $ & $1.0$ \\
			$\beta $ & $1.0$ \\
			$k_1$&  $1.0$  \\
			$k_2$&  $0.1$\\
			$\eta_u$& $0.0$\\
			$\eta_p$& $0.0$\\
			\hline 
		\end{tabular}
		\label{Tb4:nonconforming_element_refinement}
	\end{table}
}
%
\begin{figure}[!h]
	\subfigure[Computational domain \label{Fig:Non_conforming_domain}]{
		\includegraphics[clip,width=0.6\linewidth]{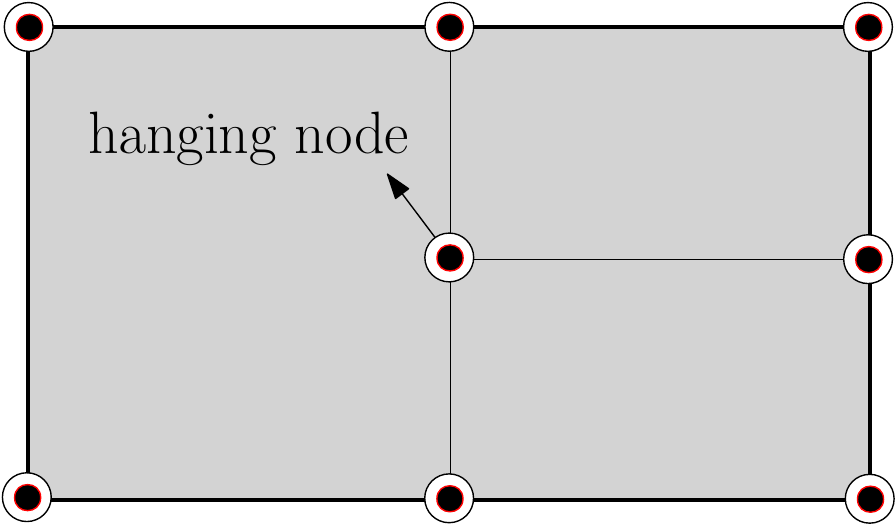}}

	\vspace{-1.3cm}
	\subfigure[Mesh discretization \label{Fig:Non_conforming_mesh}]{
		\includegraphics[clip,width=0.65\linewidth]{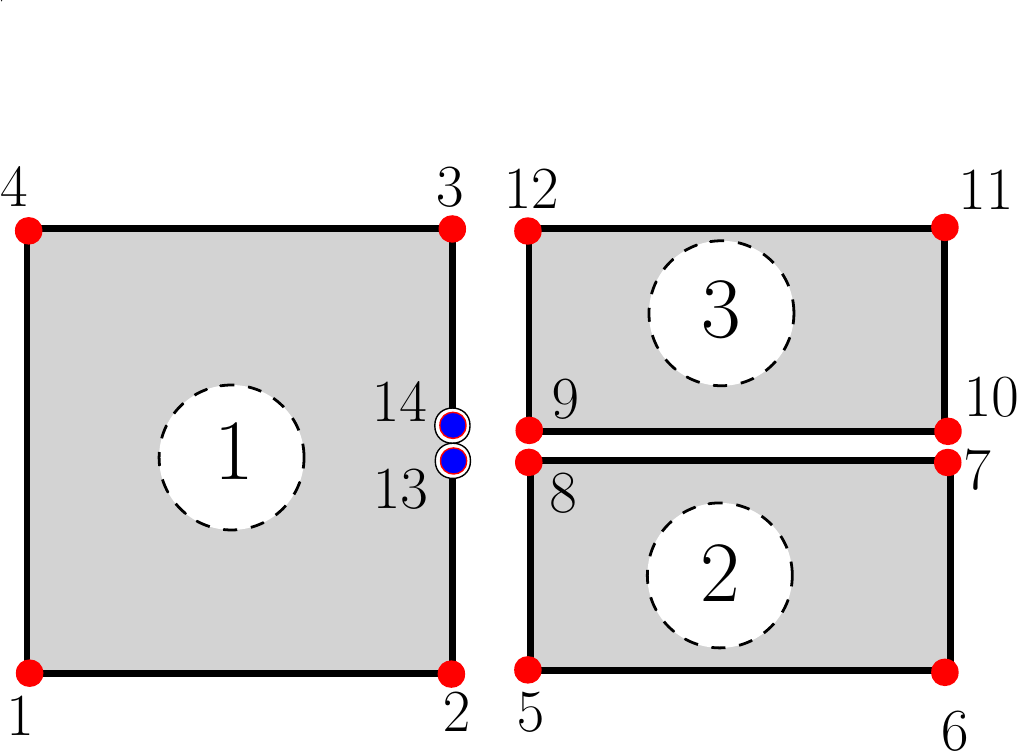}}
	\caption{\textsf{Non-conforming element refinement:}~ The top figure shows the representative computational domain with non-conforming element refinement (the hanging node on the non-conforming boundary is shown). The bottom figure shows the DG discretization of this domain. The blue nodes are the virtual nodes, each of which are a linear interpolation of nodes $2$ and $3$. They do not jack up the order of stiffness matrix as no degree of freedom is assigned to them. 
	}
\end{figure}
\textbf{Fig.~\ref{Fig:Hanging_node_results}} shows the velocity and pressure profiles within the domain. Pressures in both pore-networks are varying linearly and velocities are constant throughout the domain. These results show that the proposed stabilized DG formulation is capable of handling non-conforming element refinement (with hanging nodes in the mesh).
\begin{figure}[!h]
	\subfigure[Macro-pressure]{
		\includegraphics[clip,width=0.4\linewidth]{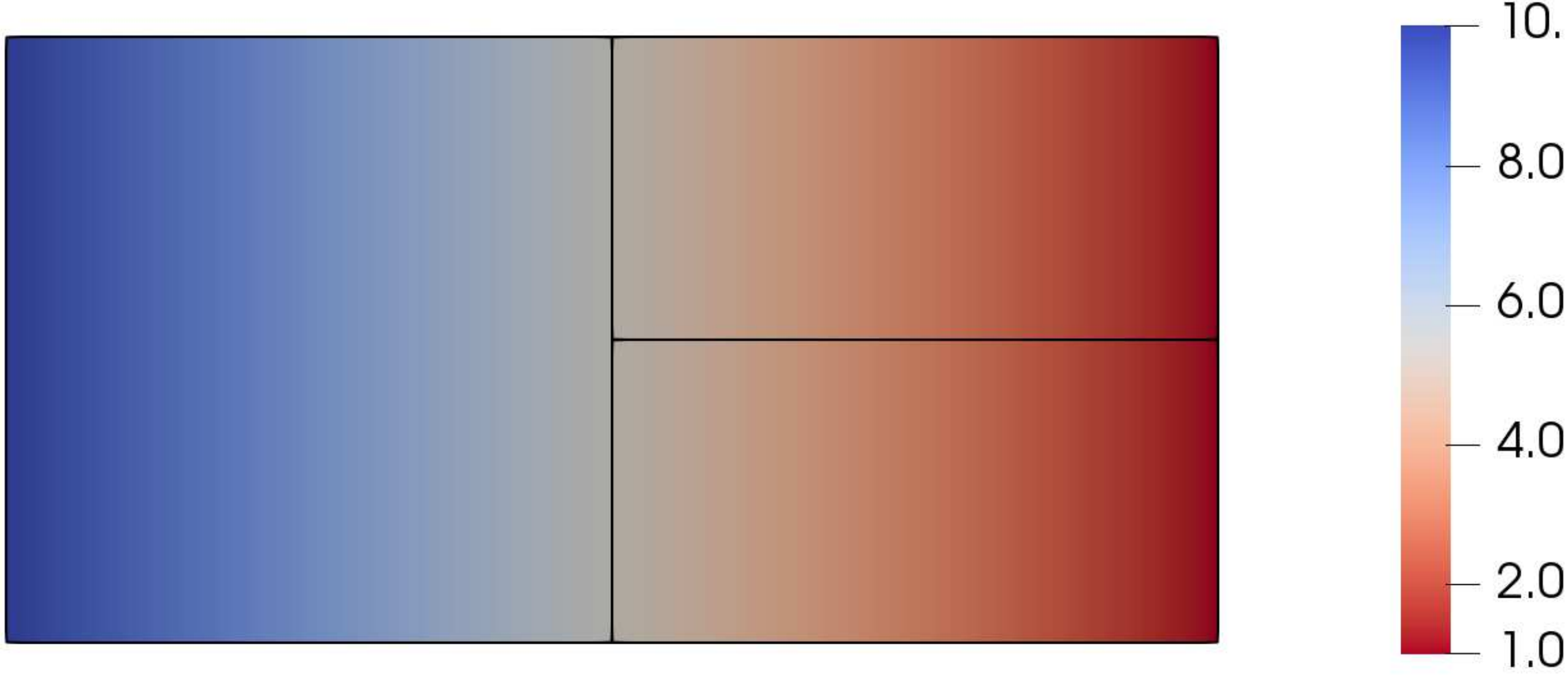}}
	\hspace{1.5cm}
	\subfigure[Micro-pressure]{
		\includegraphics[clip,width=0.4\linewidth]{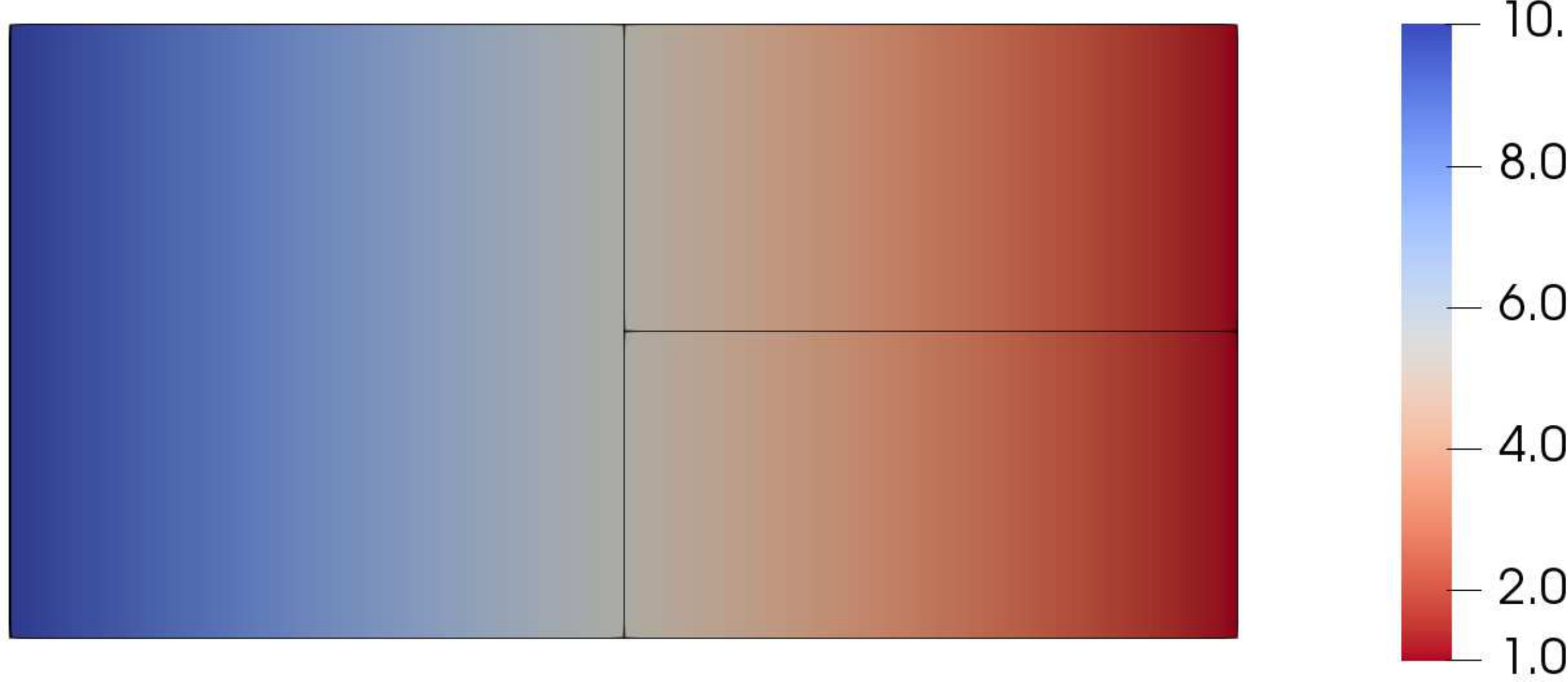}}
	\subfigure[Macro-velocity]{
		\includegraphics[clip,width=0.4\linewidth]{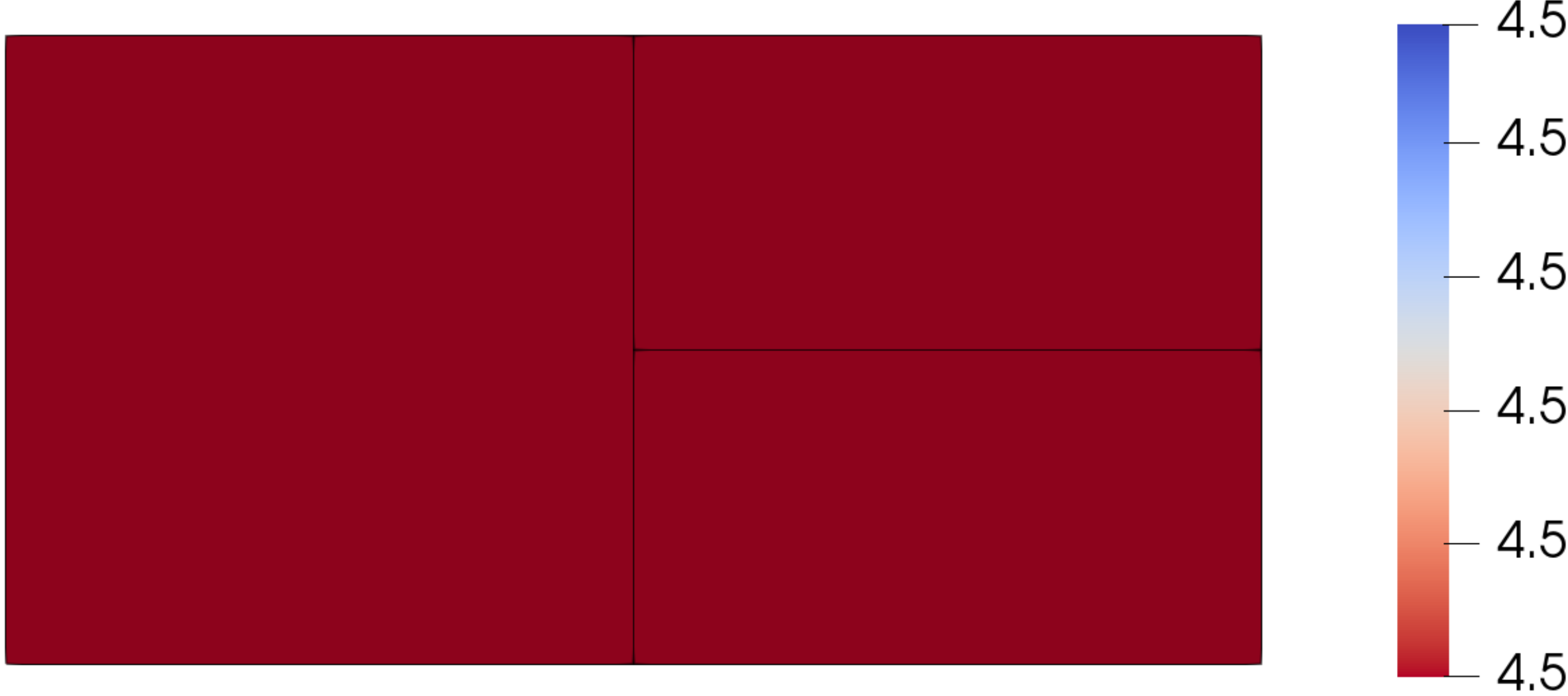}}
	\hspace{1.5cm}
	\subfigure[Micro-velocity]{
		\includegraphics[clip,width=0.4\linewidth]{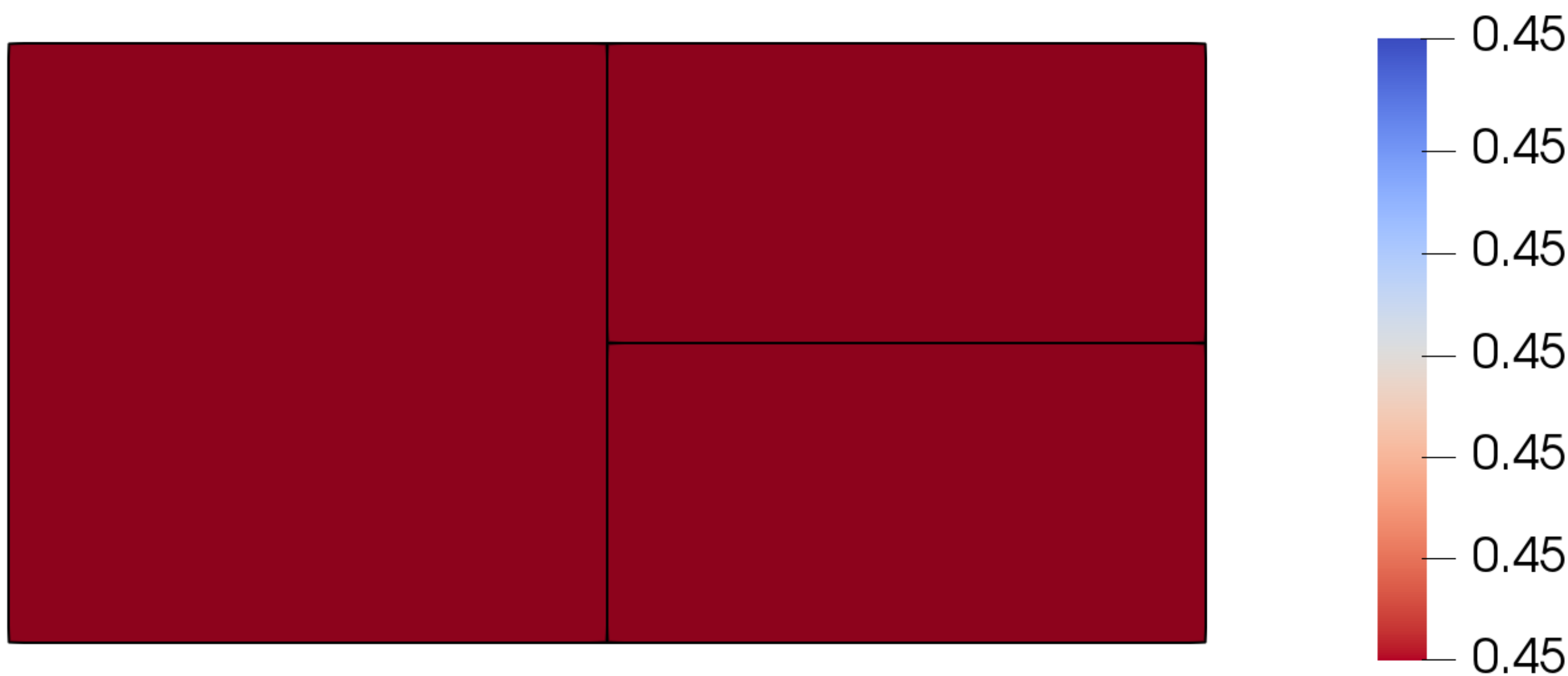}}
	\caption{\textsf{Non-conforming element refinement:}~Pressures in both pore-networks are varying linearly and velocities are constant throughout the domain.  These results show that the proposed stabilized DG formulation is capable of handling non-conforming element refinement (with hanging nodes in the mesh).  \label{Fig:Hanging_node_results}}
\end{figure}
\subsection{Non-constant Jacobian elements}
\label{Sec:Non_Constant_Jacob}
In practice, many hydrogeological systems have complex shapes and modeling of such domains, especially in the 3D settings, requires using of elements with irregular shapes. Divergent boundaries in such elements result in non-constant Jacobian determinants.
Herein, the aim is to show that the proposed stabilized mixed DG formulation can perform satisfactorily to model flow through computational domains composed of non-constant Jacobian elements.
It will be shown that under the equal-order interpolation for the field variables, our proposed formulation is still able to pass the constant flow patch test with irregular elements.
Two different computational domains with sample meshes having non-constant Jacobian brick elements are depicted in \textbf{Fig.~\ref{Fig:3D_patch_nonConstant_Domain}} and model parameters are provided in Table \ref{Tb4:nonconstant_Jacobian}.
%
\begin{figure}[!h]
	\subfigure[Mesh $\#$1 \label{Fig:Mesh1}]{
		\includegraphics[clip,width=0.3\linewidth]{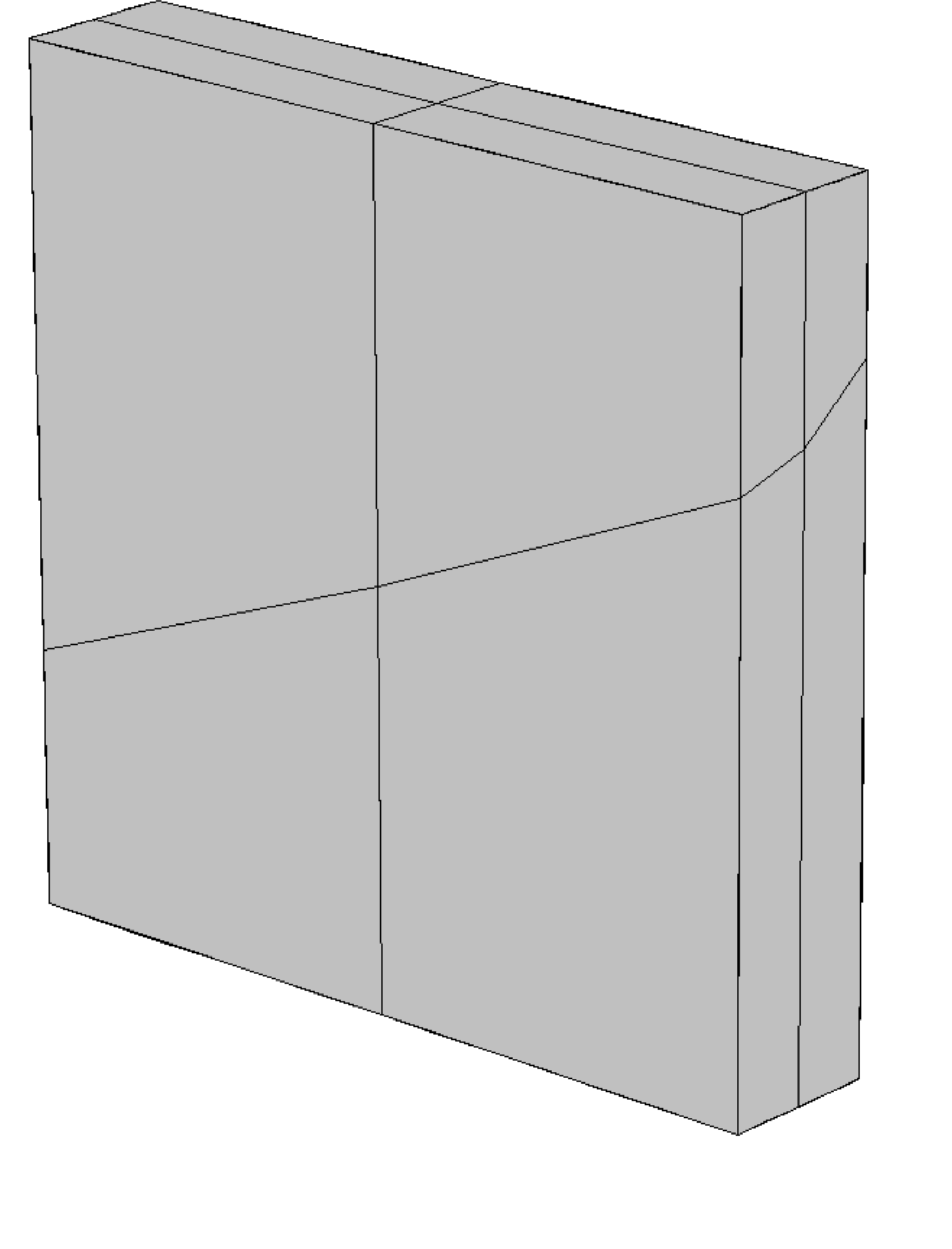}}
	\hspace{1.5cm}
	\subfigure[Mesh $\#$2 \label{Fig:Mesh2}]{
		\includegraphics[clip,width=0.4\linewidth]{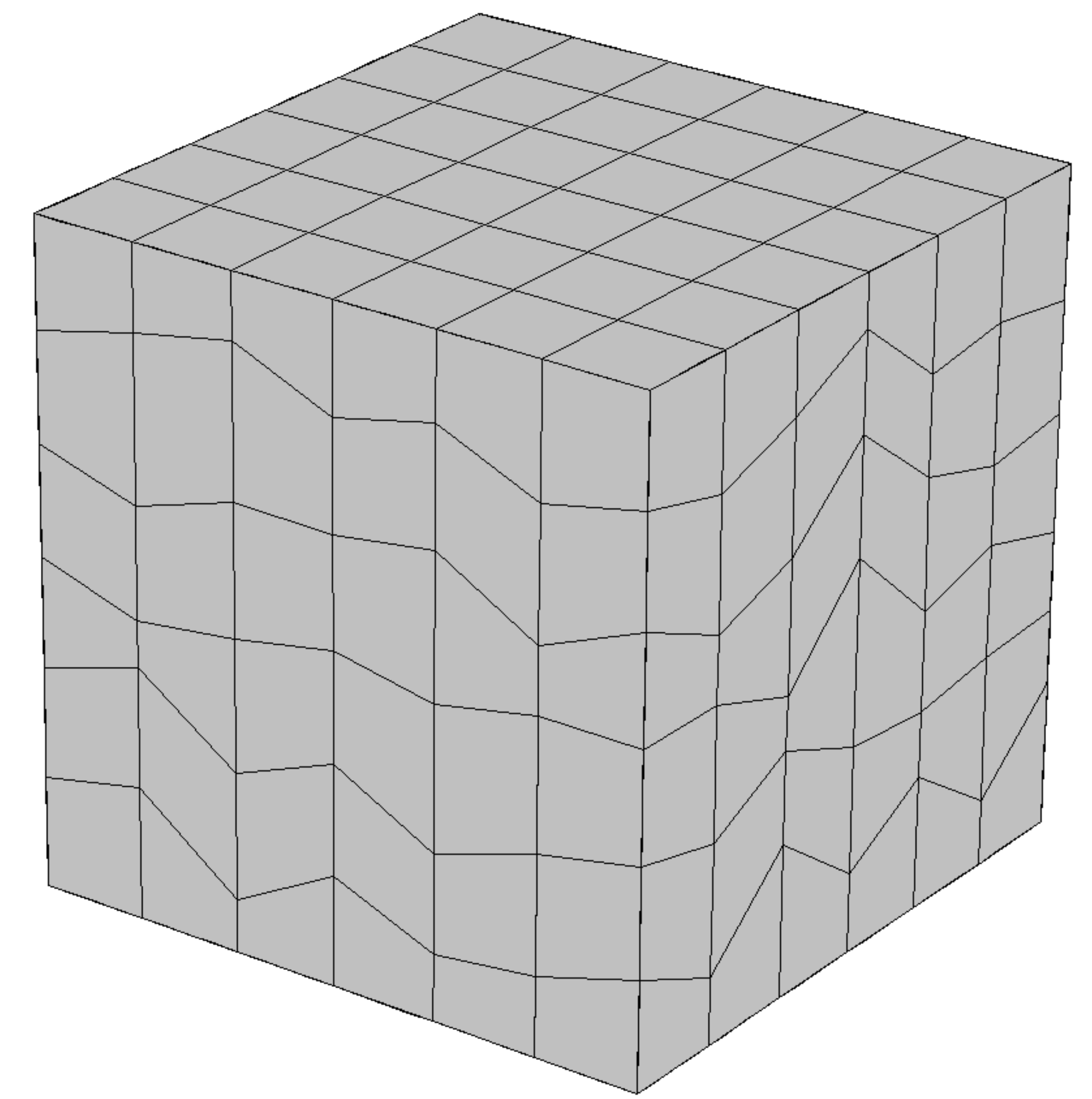}}
	\caption{\textsf{Non-constant Jacobian elements:}~This figure shows two different computational domains and their corresponding meshes for the constant flow patch test. For this problem, non-constant Jacobian brick elements are used.  \label{Fig:3D_patch_nonConstant_Domain}}
\end{figure}
{\small
	\begin{table}[!h]
		\caption{Model parameters for 3D computational domains with non-constant Jacobian elements.}
		\centering
		\begin{tabular}{|c|c|c|} \hline
			Parameter & Mesh \#1 & Mesh \#2 \\
			\hline
			$\gamma \mathbf{b}$ & $\{0.0,0.0,0.0\}$& $\{0.0,0.0,0.0\}$\\
			$L_x$ & $1.0$ &$1.0$\\
			$L_y$ & $1.0$&$0.2$\\
			$L_z$ & $1.0$&$1.0$\\
			$\mu $ & $1.0$&$1.0$ \\
			$\beta $ & $1.0$& $1.0$\\
			$k_1$&  $1.0$ &$1.0$ \\
			$k_2$&  $0.1$&$0.1$\\
			$\eta_u$& $0.0$&$0.0$\\
			$\eta_p$& $0.0$&$0.0$\\
			\hline 
		\end{tabular}
		\label{Tb4:nonconstant_Jacobian}
	\end{table}}
Pressures are prescribed at both left and right faces of the two pore-networks ($p_1(x=0,y,z) = p_2(x=0,y,z) = p^L$ and $p_1(x=1,y,z) = p_2(x=1,y,z) = p^R$). On the other faces, the normal component of velocity in both pore-networks is assumed to be zero (i.e., $\mathbf{u}_1 \cdot \widehat{\mathbf{n}} =
\mathbf{u}_{2} \cdot \widehat{\mathbf{n}} = 0$).
The pressure and velocity profiles for both domains are shown in \textbf{Figs.~\ref{Three_D_Domain_Non_1}} and \textbf{\ref{Three_D_Domain_Non_2}}. In both domains, pressures are varying linearly from the left face to the right one and velocities are constant throughout the domain as expected. These results show that the proposed mixed DG formulation is capable of providing accurate results using non-constant Jacobian elements.
%
%
\begin{figure}[!h]
	\subfigure[Macro-pressure]{
		\includegraphics[clip,width=0.4\linewidth]{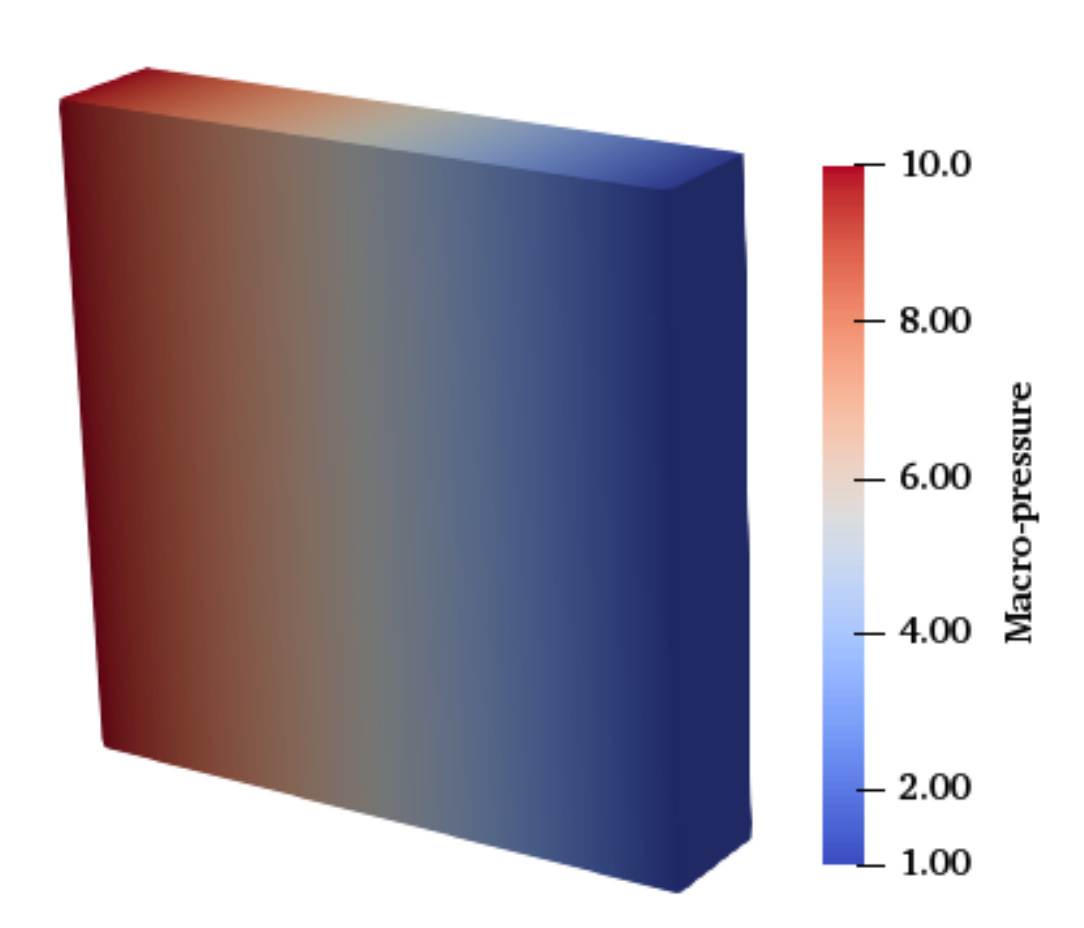}}
	\hspace{1.5cm}
	\subfigure[Micro-pressure ]{
		\includegraphics[clip,width=0.4\linewidth]{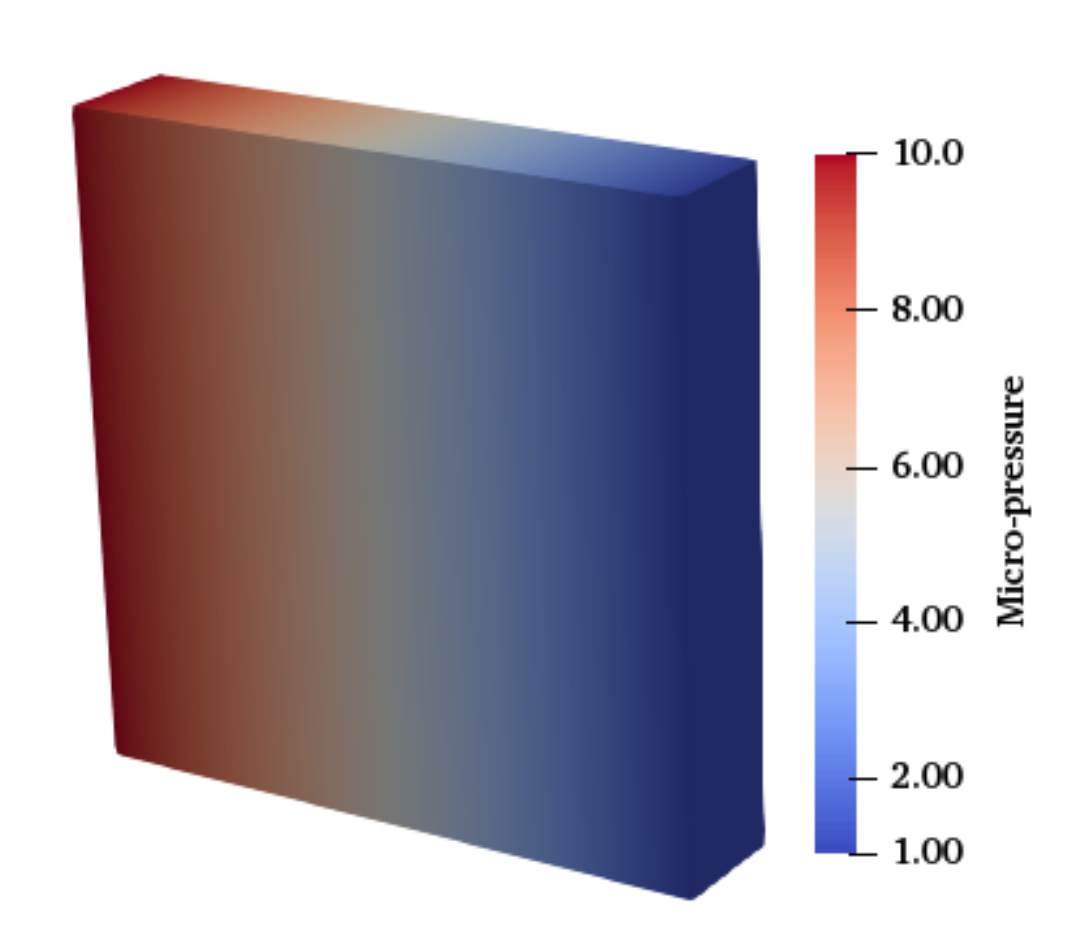}}
	\subfigure[Macro-velocity (x component) ]{
		\includegraphics[clip,width=0.4\linewidth]{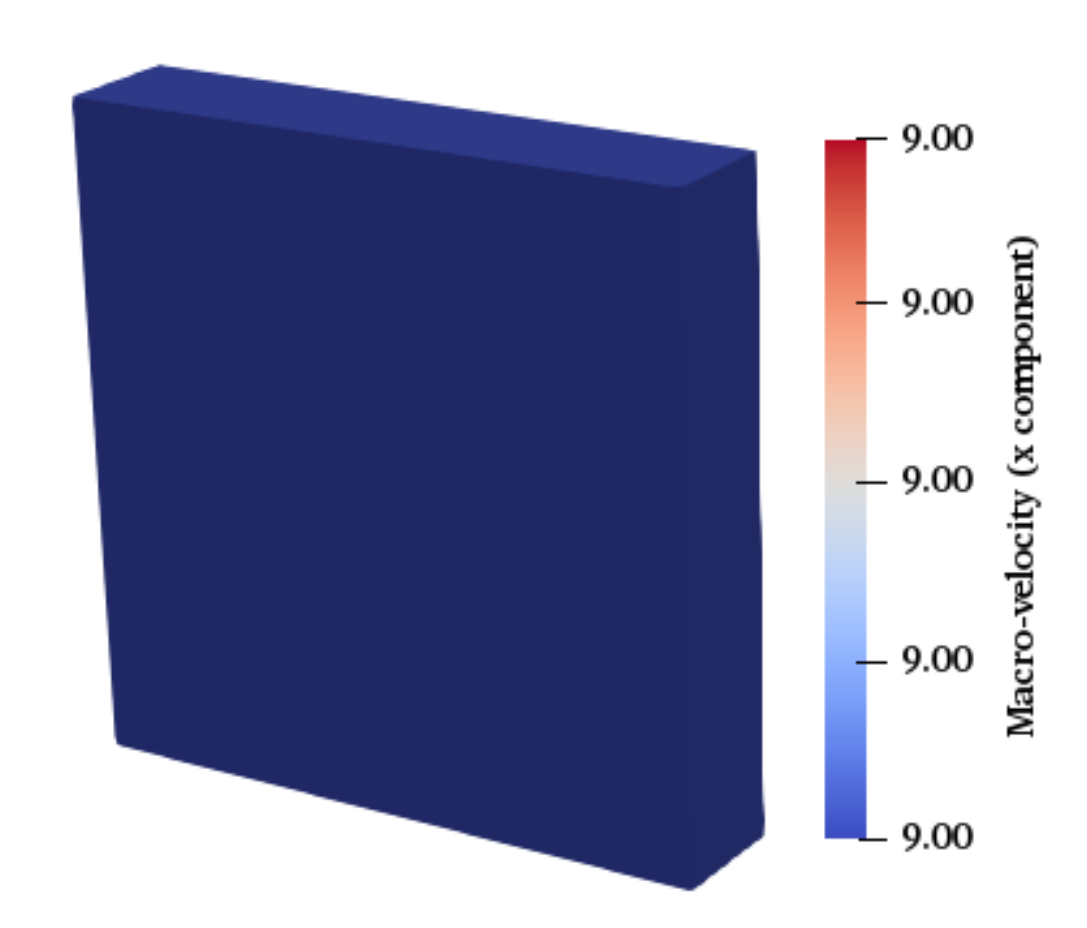}}
	\hspace{1.5cm}
	\subfigure[Micro-velocity (x component)]{
		\includegraphics[clip,width=0.4\linewidth]{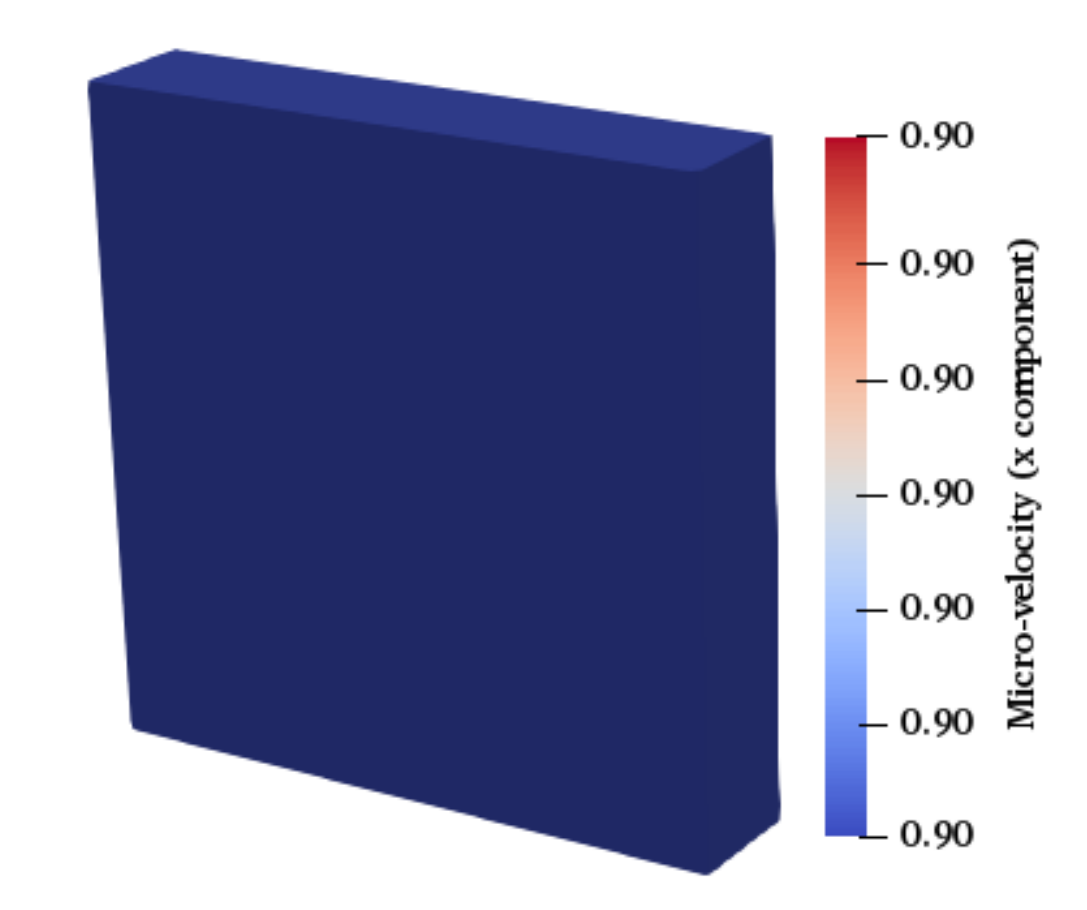}}
	\caption{\textsf{Non-constant Jacobian elements:}~Pressure and velocity profiles are shown for Mesh $\#$1 (\textbf{Fig.~\ref{Fig:Mesh1}}) with non-constant Jacobian elements. Pressures are varying linearly from the left face to the right one and velocities are constant throughout the domain as expected. These results show that the proposed mixed DG formulation is capable of providing accurate results using non-constant Jacobian elements. \label{Three_D_Domain_Non_1}}
\end{figure}
%
\begin{figure}[!h]
	\subfigure[Macro-pressure]{
		\includegraphics[clip,width=0.4\linewidth]{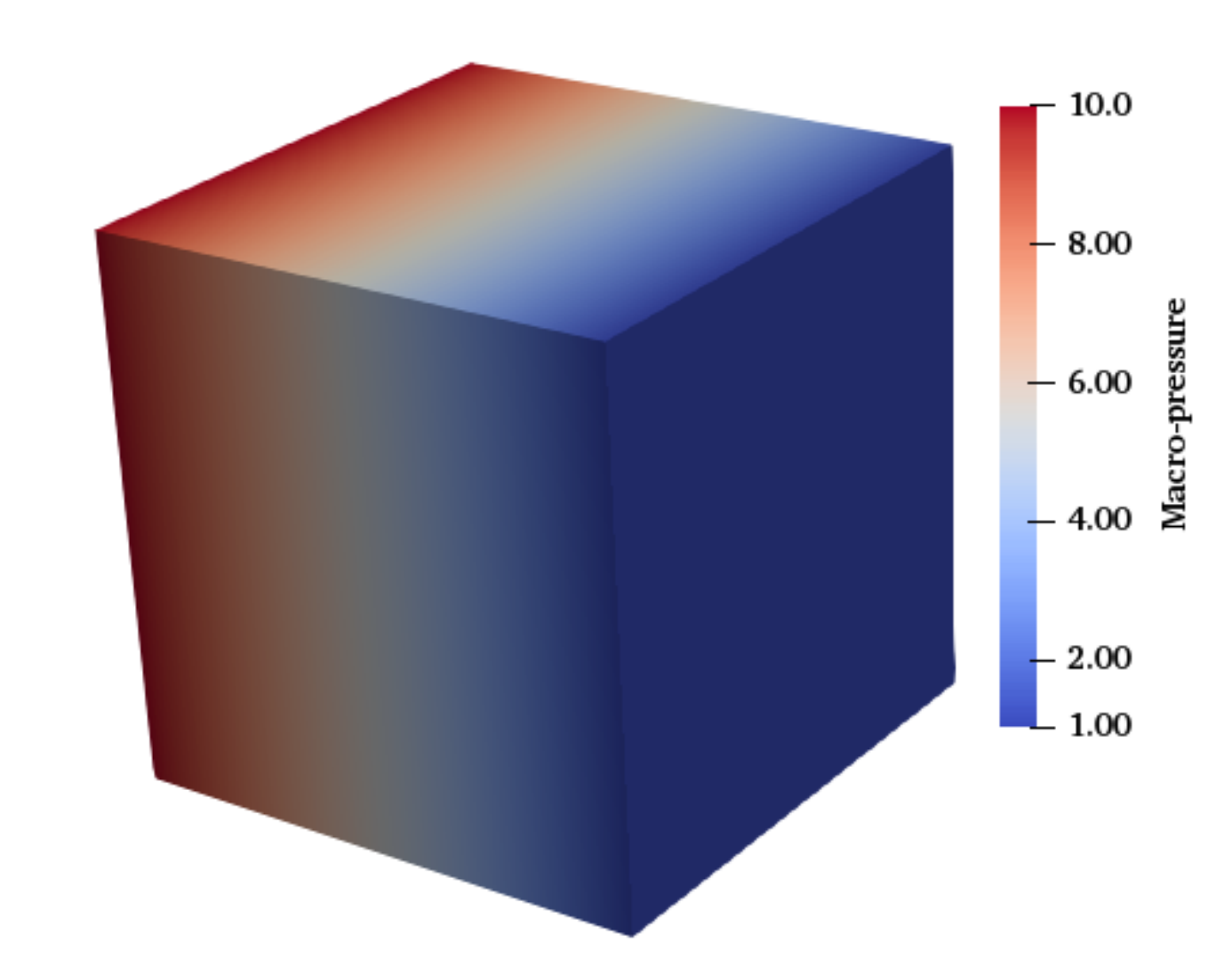}}
	\hspace{1.5cm}
	\subfigure[Micro-pressure ]{
		\includegraphics[clip,width=0.4\linewidth]{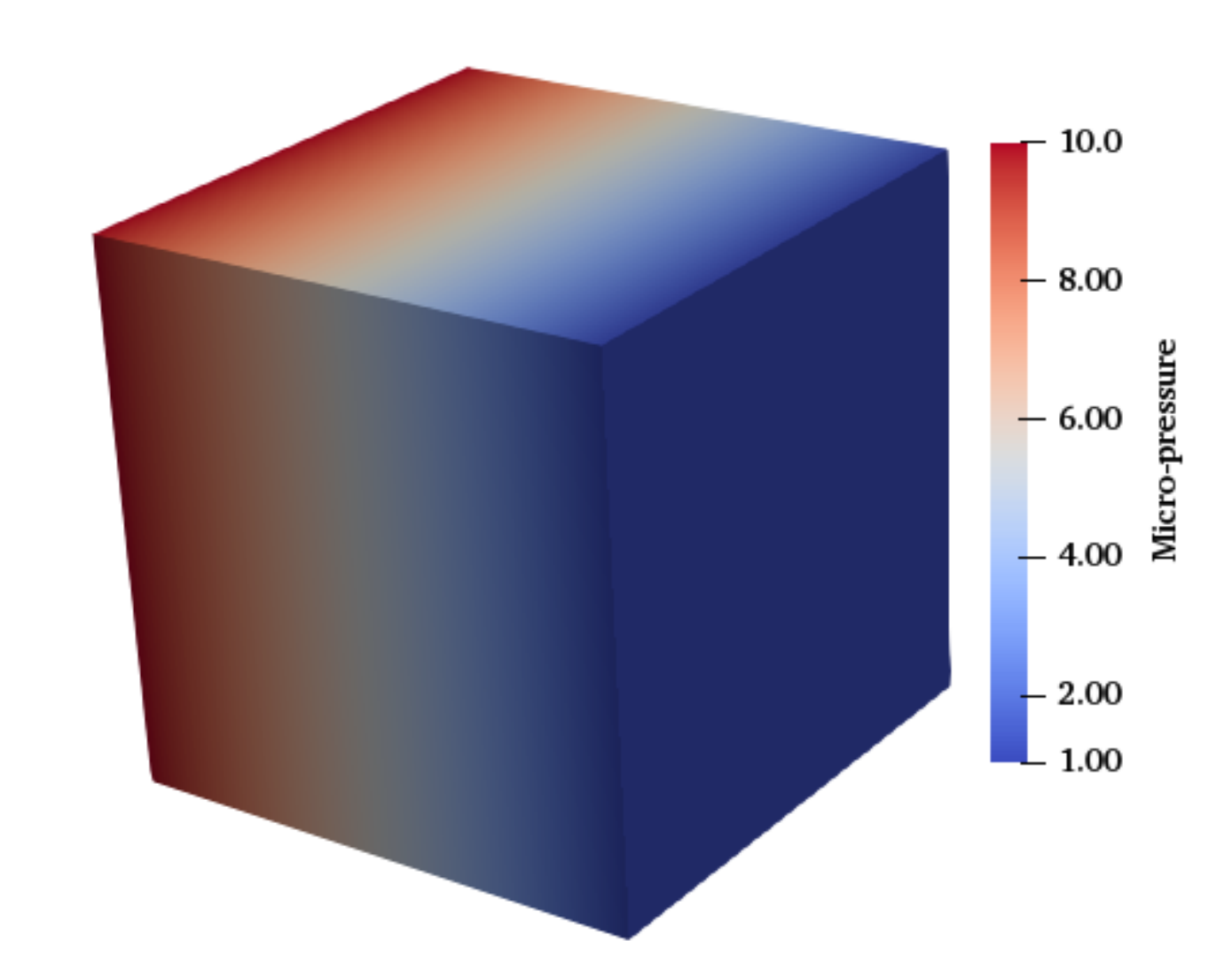}}
	\subfigure[Macro-velocity (x component) ]{
		\includegraphics[clip,width=0.4\linewidth]{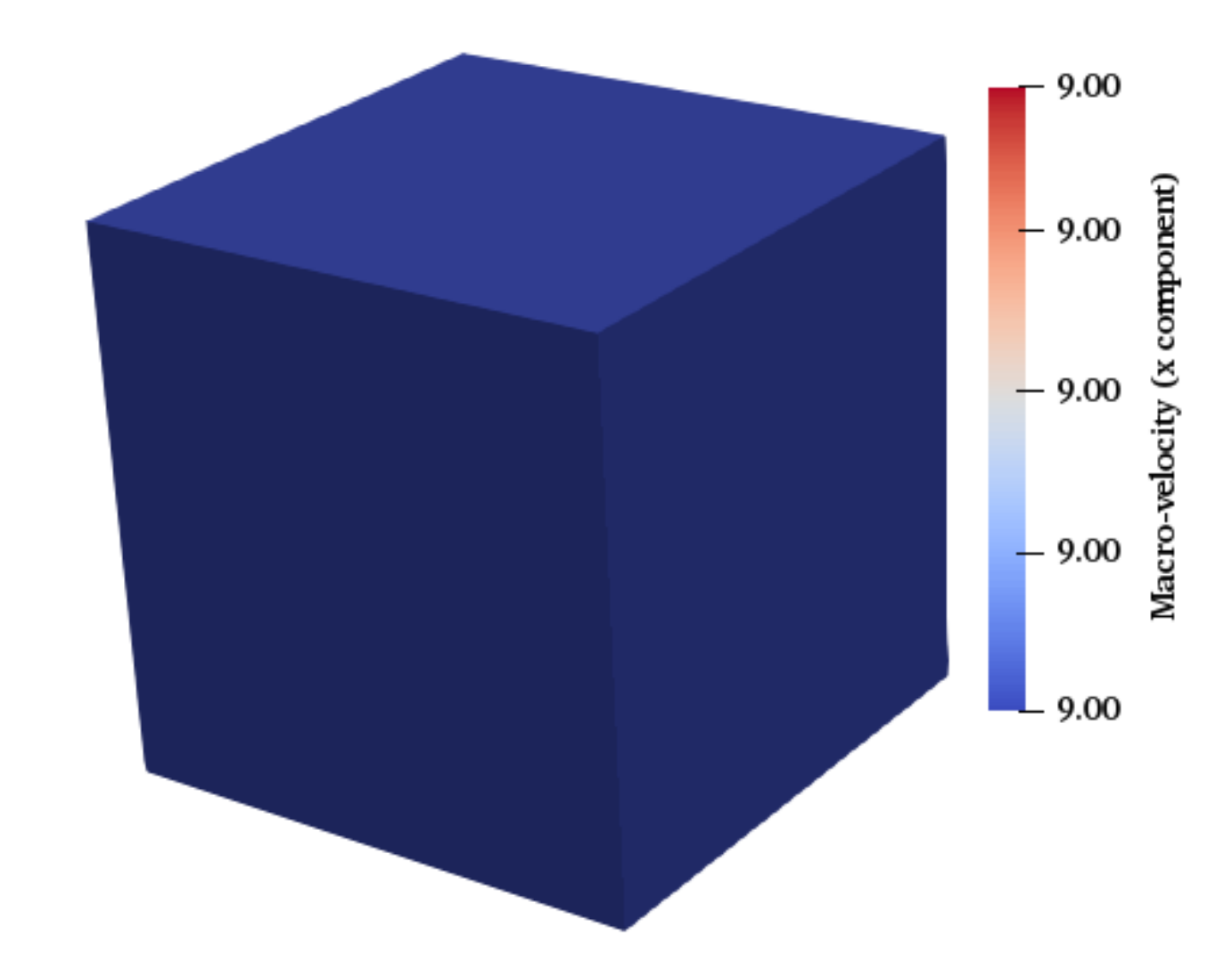}}
	\hspace{1.5cm}
	\subfigure[Micro-velocity (x component)]{
		\includegraphics[clip,width=0.4\linewidth]{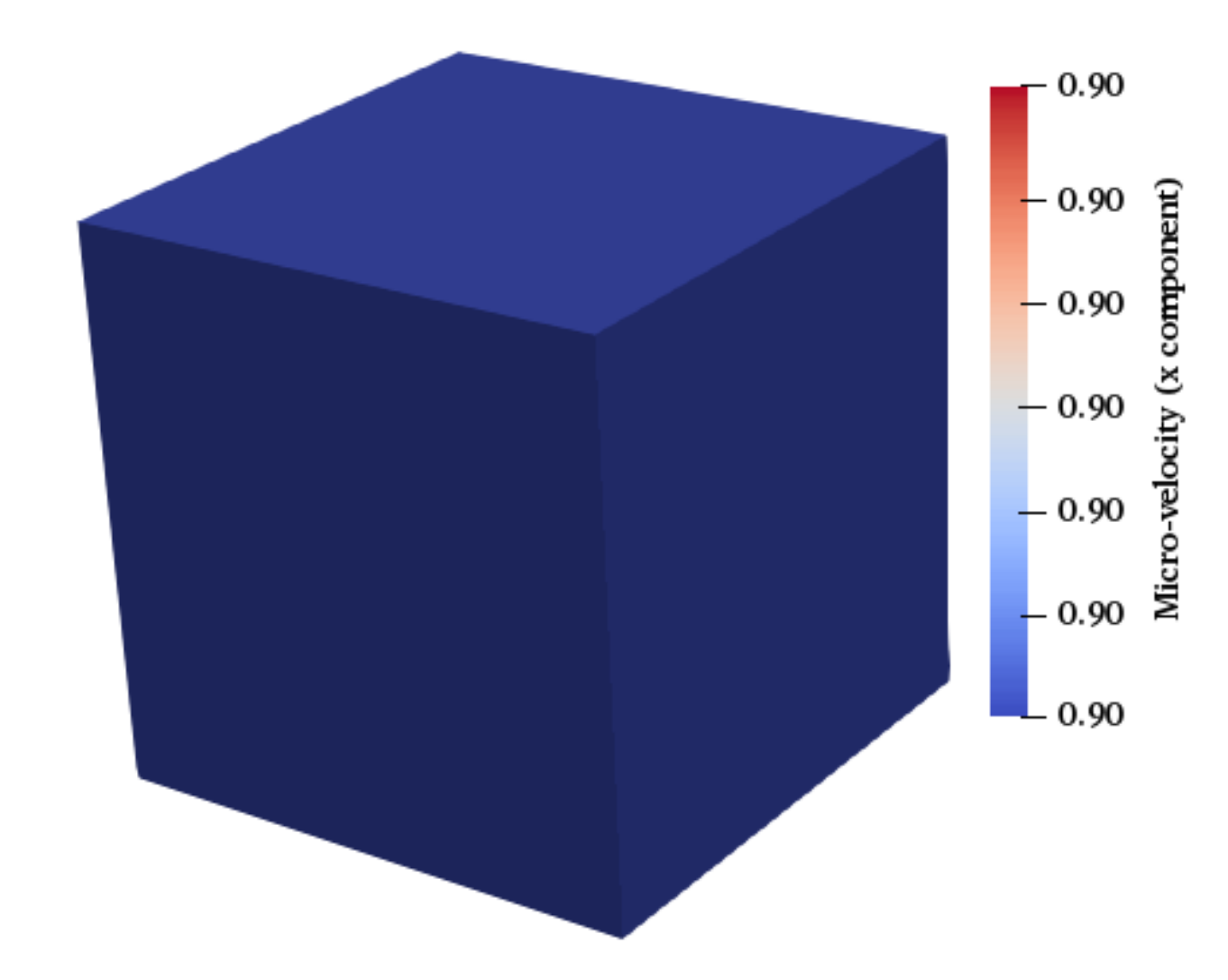}}
	\caption{\textsf{Non-constant Jacobian elements:}~Pressure and velocity contours are shown for Mesh $\#$2 (\textbf{Fig.~\ref{Fig:Mesh2}}) with non-constant Jacobian elements. Pressures are varying linearly from the left face to the right one and velocities are constant throughout the domain as expected. These results show that the proposed mixed DG formulation is capable of providing accurate results using non-constant Jacobian elements. \label{Three_D_Domain_Non_2}}
\end{figure}

\section{NUMERICAL CONVERGENCE ANALYSIS}
\label{Sec:S6_DG_Numerical_convergence}

In this section, we perform numerical convergence
analysis of the proposed stabilized DG formulation 
with respect to both $h$- and $p$-refinements.

\subsection{2D numerical convergence analysis:}
Convergence analysis in the 2D setting is performed on the boundary value problem described in Section \ref{Sec5:2D_square}. This problem was also employed by
\citep{Nakshatrala_Joodat_Ballarini_P2} for the convergence analysis
of the stabilized mixed \emph{continuous} Galerkin (CG)
formulation of the DPP model. The exact solutions for the pressures and velocities are provided by equations \eqref{Eqn:2D_Convergence_Analytical_p1} -- \eqref{Eqn:2D_square_Exact}. The domain for this problem is homogeneous (macro- and micro-permeabilities are constant within the domain) and same equal-order interpolations are used throughout the domain. The computational domain is shown in \textbf{Fig.~\ref{Fig:Dual_Problem_2D_domain}} and the parameter values are provided in Table \ref{Tb4:2D_convergence_analysis_data}. The three-node triangular element (T3), which is a simplicial finite element, is employed in the numerical simulation and the convergence is obtained under both \emph{$h$-refinement} and \emph{$p$-refinement}.
In \textbf{Figs.~\ref{Fig:DG_Problem_2D_h_refinement}} and \textbf{\ref{Fig:DG_Problem_2D_p_refinement}}, the convergence rates under $h$-refinement and $p$-refinement are provided for the $L_2$-norm and the $H^1$-norm of the pressure fields in the two pore-networks. The rates of convergence under $h$- and $p$-refinements are observed to be polynomial and exponential, respectively.
These results are in accordance with the theory
(viz. Corollary~\ref{Corollary:DG_rates_of_convergence}).
%
\begin{figure}[!h]
	\subfigure{
		\includegraphics[clip,scale=0.28]{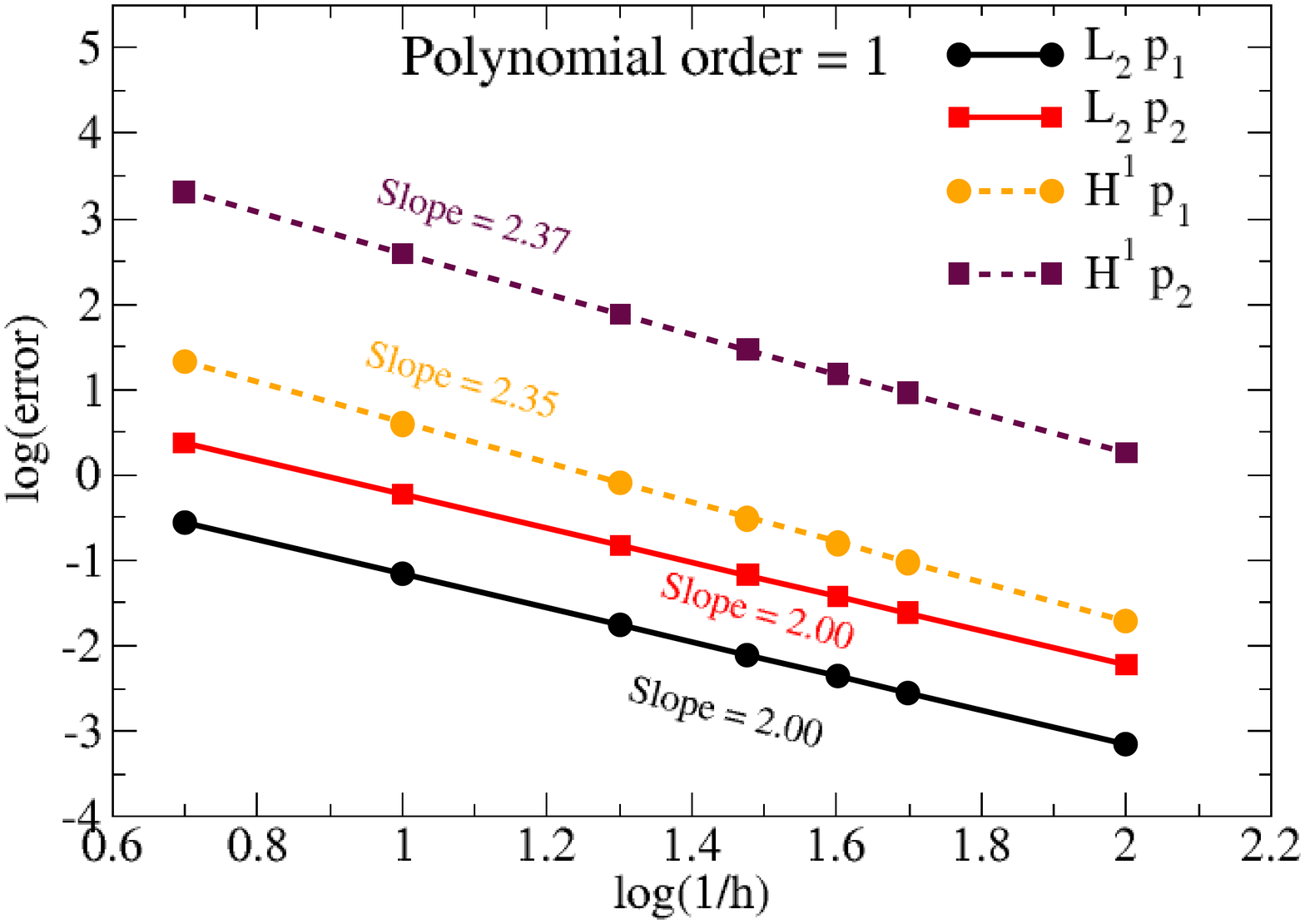}}
	\hspace{0.3cm}
	\subfigure{
		\includegraphics[clip,scale=0.28]{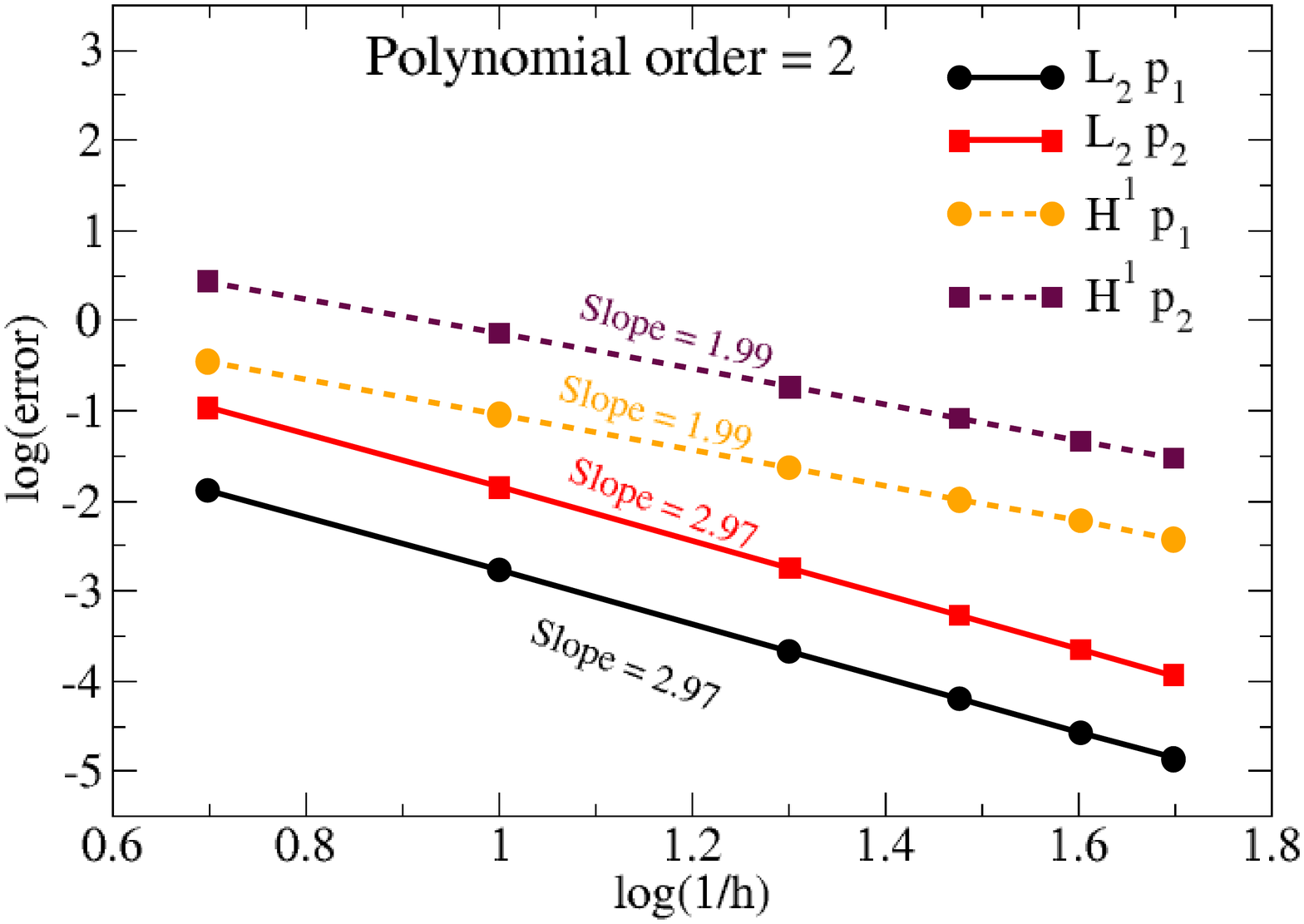}}
	\subfigure{
		\includegraphics[clip,scale=0.28]{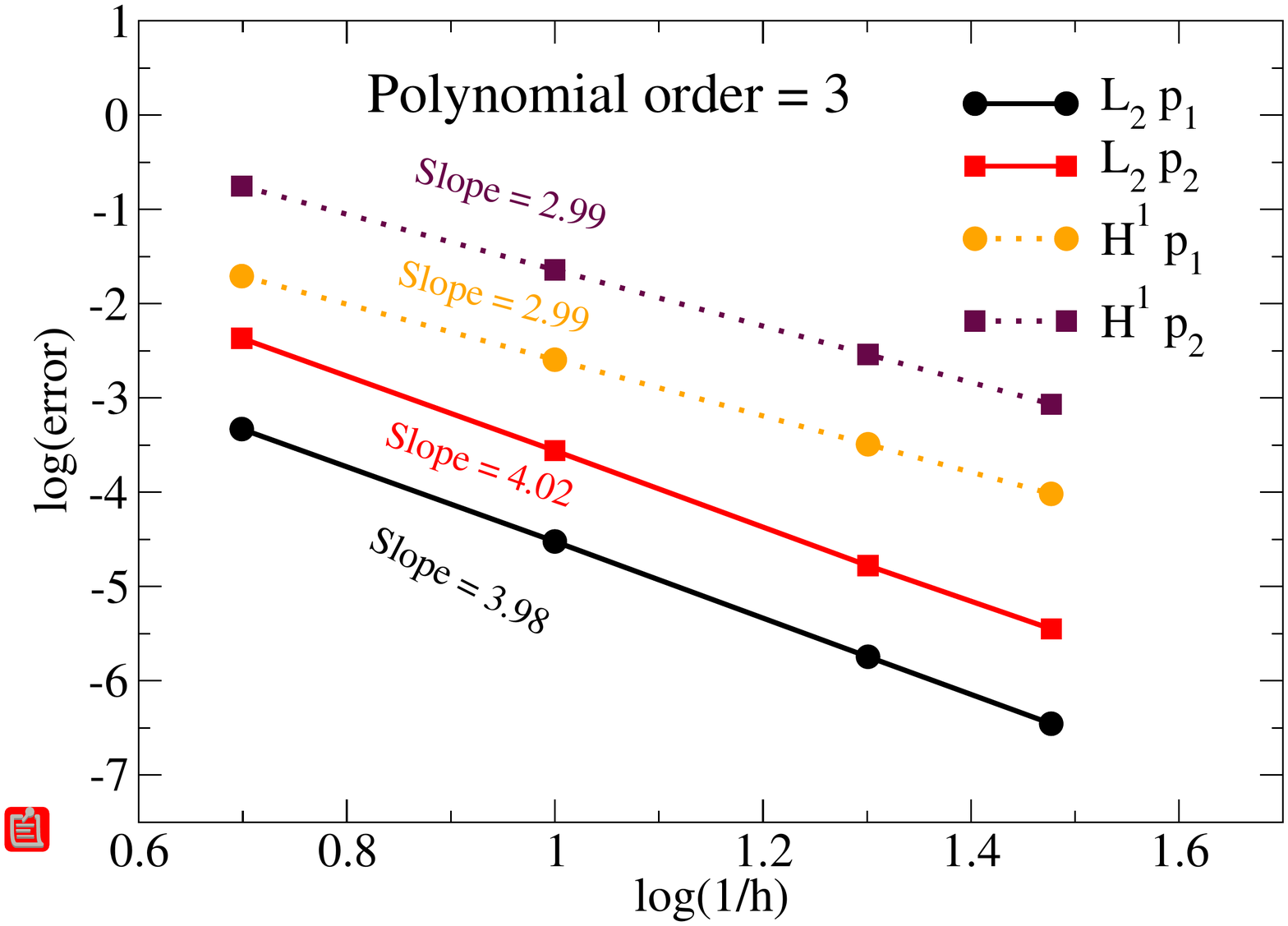}}
	\hspace{0.3cm}
	\subfigure{
		\includegraphics[clip,scale=0.28]{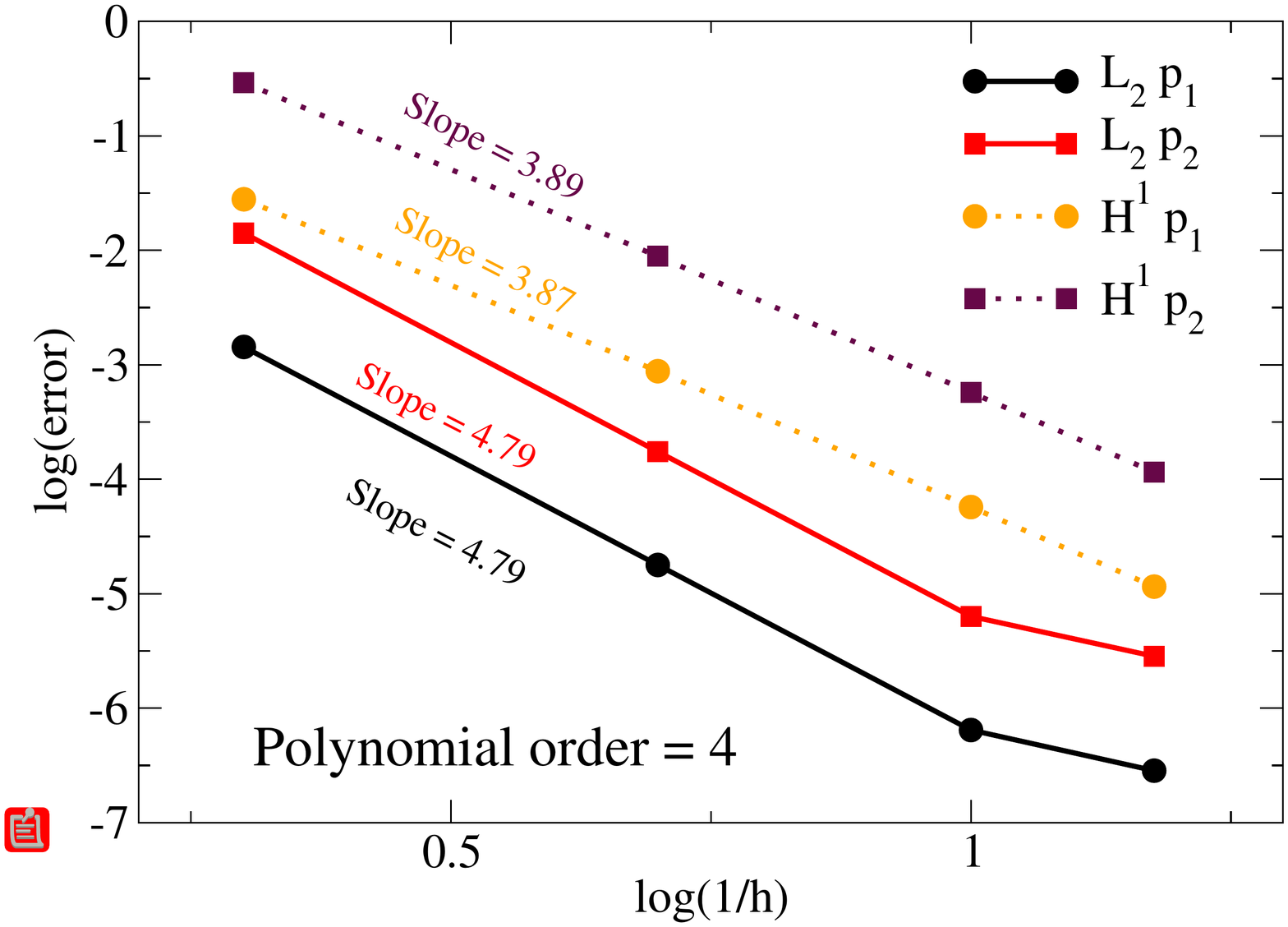}}
	\caption{\textsf{2D numerical convergence analysis:}~This figure provides the convergence rates under $h$-refinement for various polynomial orders. 
		The rate of convergence is polynomial, which is in accordance with the theory (viz. Corollary~\ref{Corollary:DG_rates_of_convergence}).}
	\label{Fig:DG_Problem_2D_h_refinement}
	\vspace{0.25cm}
\end{figure}
%
\begin{figure}
	\includegraphics[clip,scale=0.32]{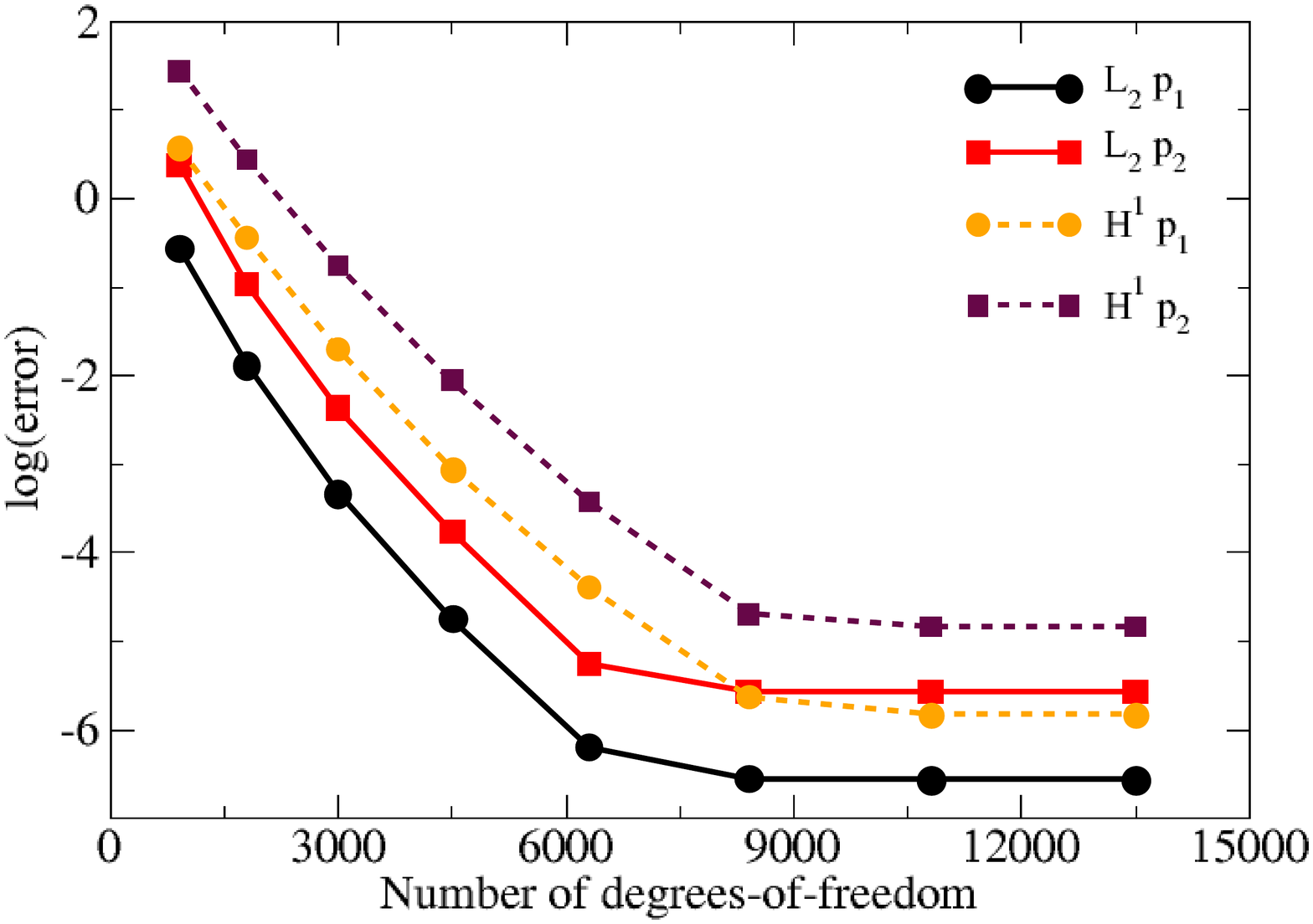}
	\caption{\textsf{2D numerical convergence analysis:}~This figure shows the results of numerical convergence
		under $p$-refinement for a fixed mesh size ($h = 0.2$).
		The number of degrees-of-freedom corresponds to $p
		= 1$ to $8$. The rate of convergence is exponential,
		which is in accordance with the theory (viz. Corollary~\ref{Corollary:DG_rates_of_convergence}).}
	\label{Fig:DG_Problem_2D_p_refinement}
\end{figure}
%
\subsection{3D numerical convergence analysis}
\label{Sec:3D_convergence}
The computational domain of this problem
is a unit cube with
pressure being prescribed on the entire
boundary of the two pore-networks. The
analytical solution takes the following
form: 
\begin{align}
  p_1(x,y,z) &= \frac{\mu}{\pi} \exp(\pi x)
  \left(\sin(\pi y) + \sin(\pi z)\right)
  - \frac{\mu}{\beta k_1} \left(\exp(\eta y)
  + \exp(\eta z) \right) \label{Eqn:3D_Convergence_Analytical_p1}\\
  p_2(x,y,z) &= \frac{\mu}{\pi} \exp(\pi x)
  \left(\sin(\pi y) + \sin(\pi z)\right)
  + \frac{\mu}{\beta k_2} \left(\exp(\eta y)
  + \exp(\eta z) \right) \label{Eqn:3D_Convergence_Analytical_p2}\\
  \mathbf{u}_1(x,y,z) &= -k_1 \exp(\pi x)
  \left(\begin{array}{c}
    \sin(\pi y) + \sin(\pi z) \\
    \cos(\pi y) \\
    \cos(\pi z) 
  \end{array}\right)
  + \frac{\eta}{\beta} \left(\begin{array}{c}
    0 \\
    \exp(\eta y) \\
    \exp(\eta z) 
  \end{array}\right) \\
  \mathbf{u}_2(x,y,z) &= -k_2 \exp(\pi x)
  \left(\begin{array}{c}
    \sin(\pi y) + \sin(\pi z) \\
    \cos(\pi y) \\
    \cos(\pi z) 
  \end{array}\right)
  - \frac{\eta}{\beta} \left(\begin{array}{c}
    0 \\
    \exp(\eta y) \\
    \exp(\eta z) 
  \end{array}\right) 
\end{align}
Pressure boundary conditions on each face are obtained by evaluating the analytical solution 
on the corresponding boundary of each pore-network. 
Table
\ref{Tb5:3D_convergence_analysis_data} provides
the parameter values employed in the numerical
simulation. 
{\small
  \begin{table}[!h]
    \caption{Model parameters for 3D numerical convergence analysis.}
    \centering
		\begin{tabular}{|c|c|} \hline
			Parameter & Value \\
			\hline
			$\gamma \mathbf{b}$ & $\{0.0,0.0,0.0\}$\\
			$L_x $ & $1.0$ \\
			$L_y $ & $1.0$ \\
			$\mu $ & $1.0$ \\
			$\beta $ & $1.0$ \\
			$k_1$&  $1.0$ \\
			$k_2$&  $0.1$\\
			$\eta$ & $\sqrt{11} \simeq 3.3166$\\
			$\eta_{u}$ & $100.0$\\
			$\eta_{p}$ & $0.0$\\
			\hline
			$p_i^{\mathrm{left}},~i=1,2$&  Obtained by evaluating   \\
			$p_i^{\mathrm{right}},~i=1,2$&  the analytical solution \\
			$p_i^{\mathrm{top}},~i=1,2$&   (equations \eqref{Eqn:3D_Convergence_Analytical_p1} and \eqref{Eqn:3D_Convergence_Analytical_p2} ) \\
			$p_i^{\mathrm{bottom}},~i=1,2$&  on the respective boundaries. \\
			\hline 
		\end{tabular}
		\label{Tb5:3D_convergence_analysis_data}
	\end{table}
}

The eight-node brick element (B8), which is a non-simplicial
element, is employed in this numerical simulation. \textbf{Figs.~\ref{Fig:DG_Problem_3D_h_refinement}} and \textbf{\ref{Fig:DG_Problem_3D_p_refinement}}
respectively provide the convergence rates under $h$-refinement and $p$-refinement for the
$L_2$-norm and the $H^1$-norm of the pressure fields in the two pore-networks.  As can be seen, the rates of convergence under the $h$- and $p$-refinements are polynomial and exponential, respectively; which are in accordance with the theory (viz. Corollary~\ref{Corollary:DG_rates_of_convergence}).

\begin{figure}
	\subfigure{
		\includegraphics[clip,scale=0.3]{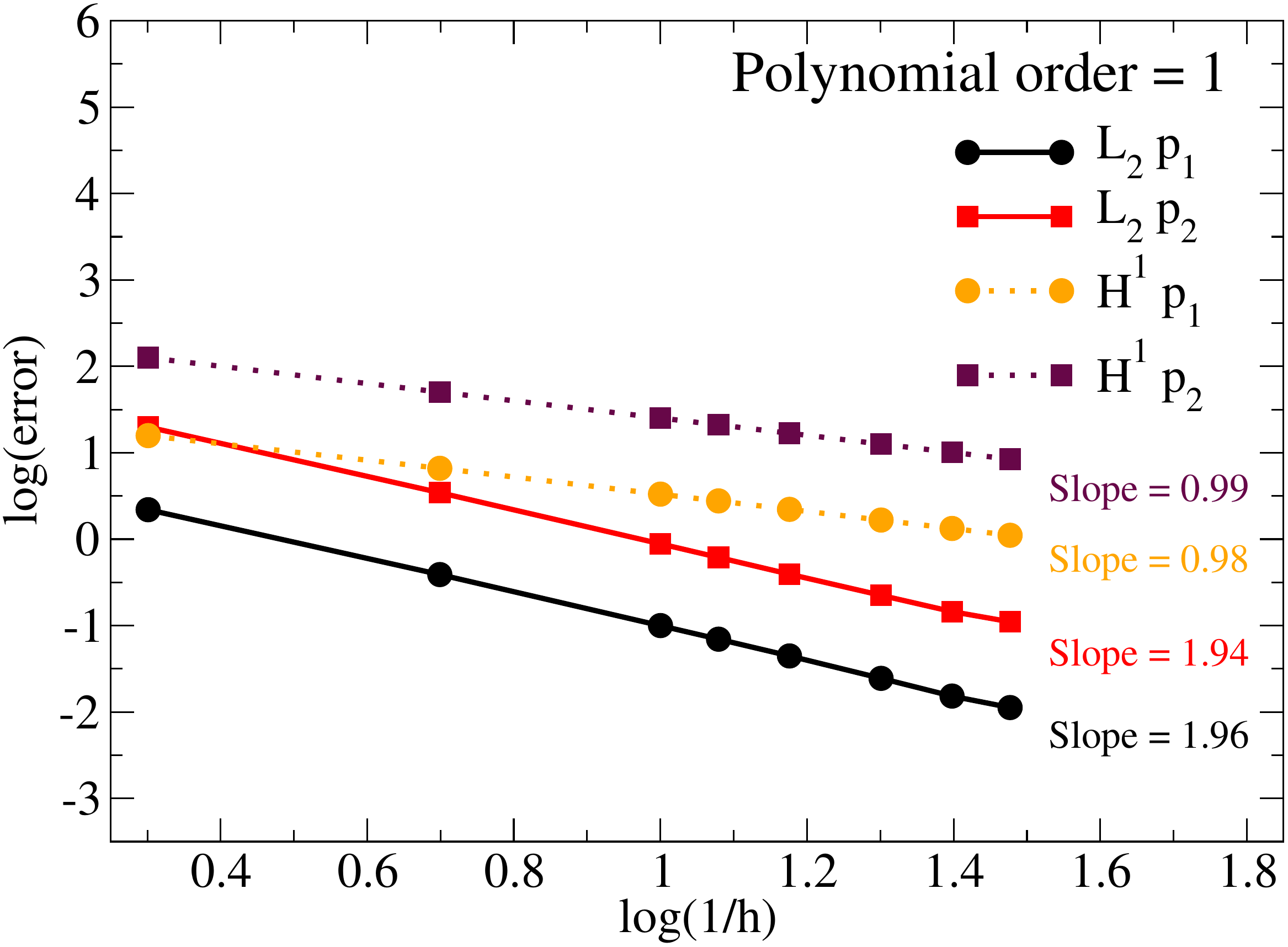}}
	\subfigure{
		\includegraphics[clip,scale=0.3]{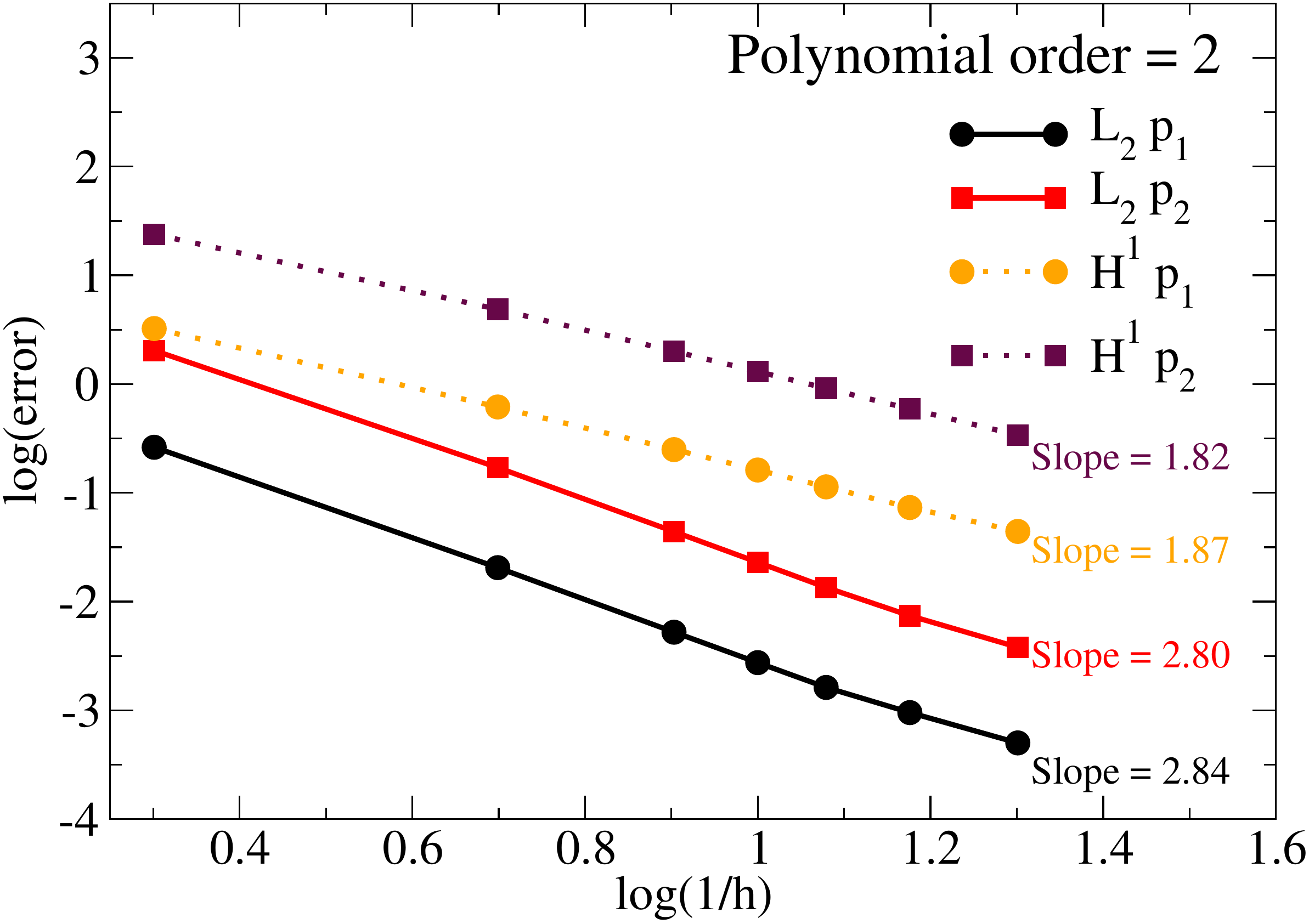}}
	\subfigure{
		\includegraphics[clip,scale=0.3]{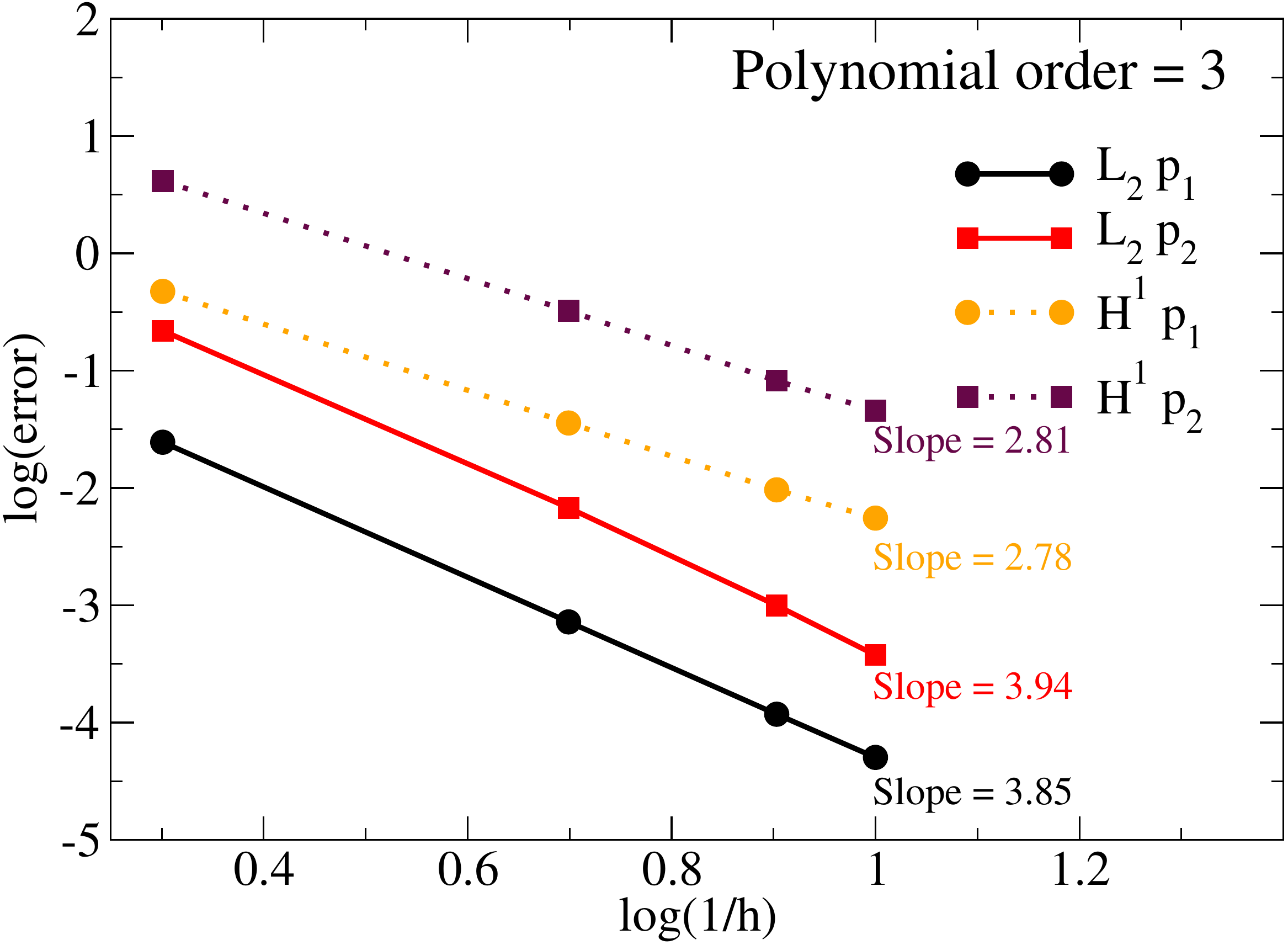}}
	\caption{\textsf{3D numerical convergence analysis:}~This figure provides the convergence rates under $h$-refinement for various polynomial orders. 
		The rate of convergence is polynomial, which is in accordance with the theory (viz. Corollary~\ref{Corollary:DG_rates_of_convergence}).}
	\label{Fig:DG_Problem_3D_h_refinement}
\end{figure}
%
\begin{figure}[!h]
	\includegraphics[clip,scale=0.32]{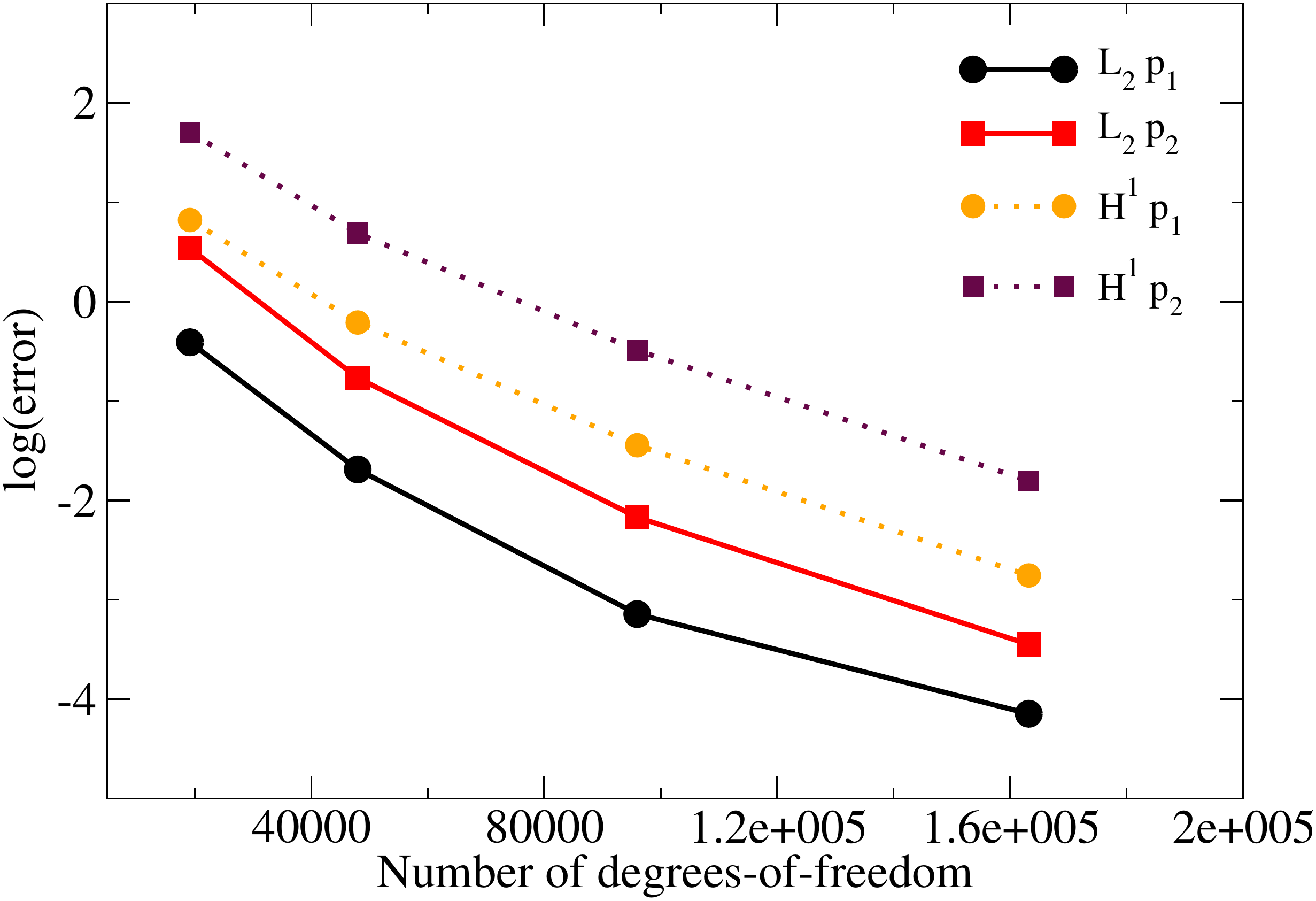}
	\caption{\textsf{3D numerical convergence analysis:}~This figure shows the results of numerical convergence
		under $p$-refinement for a fixed mesh size ($h = 0.2$).
		The number of degrees-of-freedom corresponds to $p
		= 1$ to $4$. The rate of convergence is exponential,
		which is in accordance with the theory (viz. Corollary~\ref{Corollary:DG_rates_of_convergence}).}
	\label{Fig:DG_Problem_3D_p_refinement}
\end{figure}
%

\section{CANONICAL PROBLEM AND STRUCTURE PRESERVING PROPERTIES}
\label{Sec:S7_DG_Structure_preserving}
In this section, first, robustness of the proposed stabilized mixed DG formulation is assessed using a standard test problem, with abrupt changes in material properties and elliptic singularities. In the literature, this problem is typically referred to as the quarter five-spot checkerboard problem. Second, the element-wise mass balance property associated with the CG and DG formulations is compared.

\subsection{Quarter five-spot checkerboard problem}
\label{Sec:fiv_spot}
The original form of this problem, known as ``five-spot problem'' with homogeneous properties, has been firstly designed for the Darcy equations. Herein, we extend this problem to the DPP model with modified boundary conditions and heterogeneous medium properties. 
\textbf{Fig.~\ref{Fig:DG_Checkerboard_Domain}} shows the computational domain and the boundary conditions for the five-spot problem. An injection well surrounded by four production wells placed at four corners of a square domain form a typical setting in the enhanced oil recovery applications. The underlying symmetry allows for solving the problem only in the top right quadrant, which is referred to as a ``quarter'' five-spot problem. In the well-known ``checkerboard problem'', such a computational domain is divided into four sub-regions I, II, III, and IV with abrupt changes in the permeability. 

In this problem, elliptic singularities are observed near the injection and production wells which are located at the opposite corners of the diagonals (denoted by $C_{\mathrm{inj}}$ and $C_{\mathrm{prod}}$, respectively). The normal component of velocity is prescribed to be zero on the entire boundary of the micro-pore network. In the macro-network, however, velocity at the injection and production wells is prescribed by applying a source/sink term while zero normal velocity is assumed on the rest of the boundary. 
It is worth mentioning that the prescribed source and sink strengths at injection and production wells are, respectively, equal to +1 and -1. However, instead of applying a pointwise sink/source at the location of wells, the normal component of velocity is applied along the external edges of the corner element in $x$- and $y$-directions with an equivalent distribution as shown in \textbf{Fig.~\ref{Fig:DG_Checkerboard_Domain}}.
%
\begin{figure}
  \includegraphics[clip,scale=0.44]{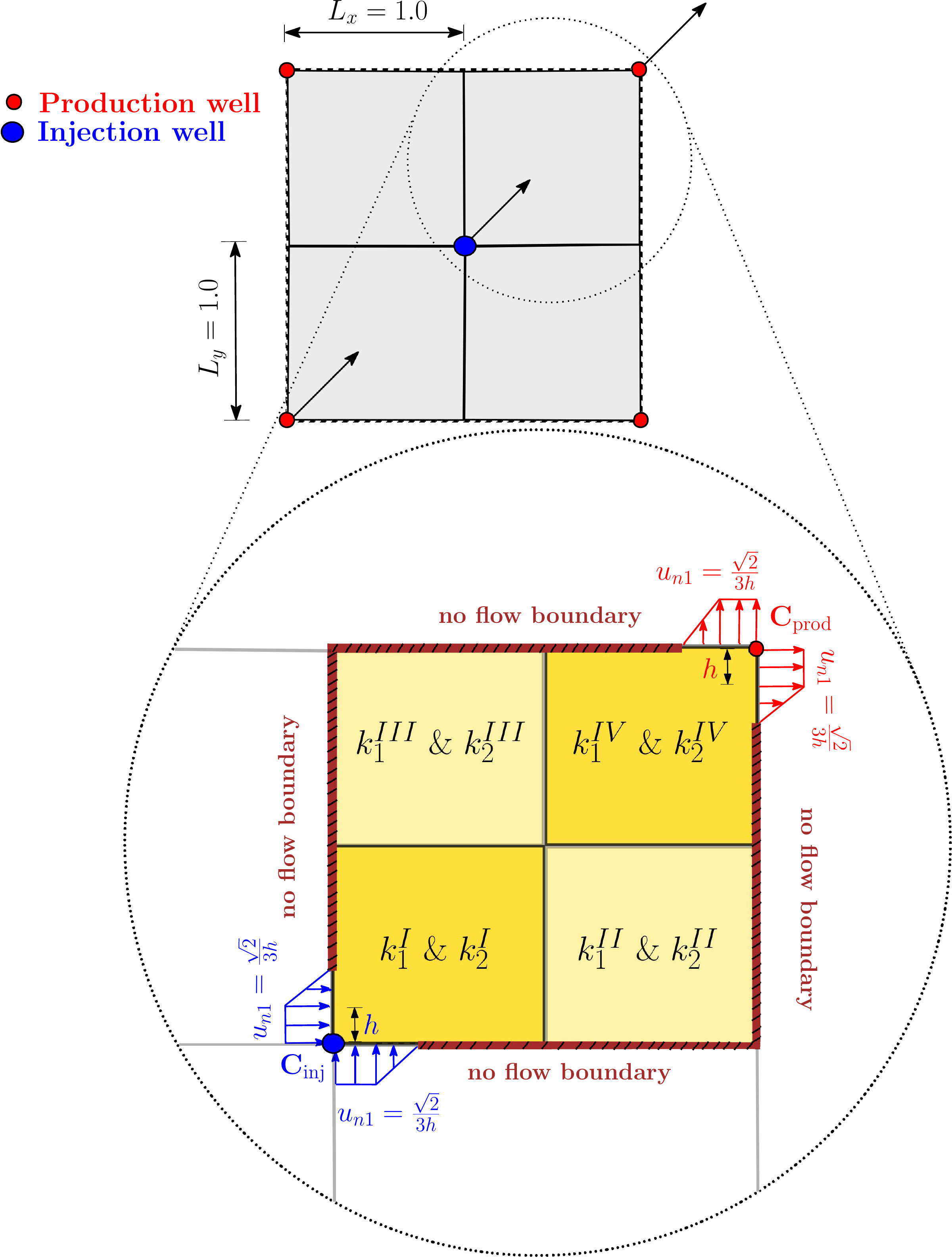}
  \caption{\textsf{Quarter five-spot checkerboard problem:}~This figure shows the computational domain and boundary conditions for the quarter five-spot checkerboard problem. The heterogeneous domain is divided into four sub-regions with permeabilities shown in equation \eqref{Eqn:Perm_case1_checkerboard}. The normal component of micro-velocity is equal to zero on the entire boundary. In the macro-network, however, source/sink strengths are prescribed in form of equivalent normal velocity distributions at the production ($C_{\mathrm{prod}}$) and injection ($C_{\mathrm{inj}}$) wells. On the rest of the boundary, the normal component of macro-velocity is assumed to be zero.
    \label{Fig:DG_Checkerboard_Domain}}
\end{figure}

Table \ref{Tb6:Five_Spot_data} provides the parameter values for this problem.
The permeability parameters in sub-regions I and IV and the ones in sub-regions II and III are mutually equal. Herein, we assume that sub-regions I and IV are more permeable compared to sub-regions II and III with the following drag coefficients:
\begin{align}
  &\left(\frac{\mu}{k_1}\right)_I
  = \left(\frac{\mu}{k_1}\right)_{IV} = 1,
  \quad \left(\frac{\mu}{k_1}\right)_{II}
  = \left(\frac{\mu}{k_1}\right)_{III} = 100,
  \nonumber \\
  &\left(\frac{\mu}{k_2}\right)_I
  = \left(\frac{\mu}{k_2}\right)_{IV} = 10,
  \quad \mathrm{and} \quad
  \left(\frac{\mu}{k_2}\right)_{II}
  = \left(\frac{\mu}{k_2}\right)_{III} = 1000
	\label{Eqn:Perm_case1_checkerboard}
\end{align}
{\small
  \begin{table}[!h]
    \caption{Model parameters for the quarter five-spot checkerboard problem.}
		\centering
		\begin{tabular}{|c|c|} \hline
			Parameter & Value \\
			\hline
			$\gamma \mathbf{b}$ & $\{0.0,0.0\}$\\
			$L_x $ & $1.0$ \\
			$L_y $ & $1.0$ \\
			$\mu $ & $1.0$ \\
			$\beta $ & $1.0$ \\
			$k_1, ~k_2$&  refer to Eqn. \eqref{Eqn:Perm_case1_checkerboard} \\
			$u_{n1}$ & $0.0~\mathrm{On}~ \partial \Omega - \{\mathrm{C}_{\mathrm{prod}}~ \&~ \mathrm{C}_{\mathrm{inj}}\}$ \\
			$u_{n2}$ & $0.0~\mathrm{On}~\partial \Omega$ \\ 
			\hline
			source and sink  & $-1~ \mathrm{at}~ \mathrm{C}_{\mathrm{prod}}$ \\
			strength & $+1~ \mathrm{at}~\mathrm{C}_{\mathrm{inj}}$ \\
			\hline
			$\eta_u$& $0, 10, 100$\\
			$\eta_p$& $0, 10, 100$\\
			$h$&structured T3 mesh of size $0.01$ used\\
			\hline 
		\end{tabular}
		\label{Tb6:Five_Spot_data}
	\end{table}
}

\textbf{Fig.~\ref{Fig:fiveSpot_problem_p}} shows the macro- and micro pressure profiles for this problem. Steep gradients near the injection and
production wells with no spurious oscillation in the pressure fields are observed under the proposed DG formulation which confirm the robustness of the numerical formulation. 
%
%
In order to further explore the effect of stabilization parameters on the solution profiles, this problem has been solved for different combinations of $\eta_u$ and $\eta_p$ as shown in \textbf{Fig.~\ref{Fig:fiveSpot_problem_v_1}}. As can be seen, $\eta_u$ and $\eta_p$ have no noticeable effect on x-component of velocities under the DG formulation. However, spurious oscillations are observed under the CG formulation at the interface of sub-regions with different permeability values which implies that CG formulations fall short in capturing material discontinuities. 
\begin{figure}
	\subfigure[Macro-pressure \label{Fig:fiveSpot_problem_p1}]{
		\includegraphics[scale=0.35]{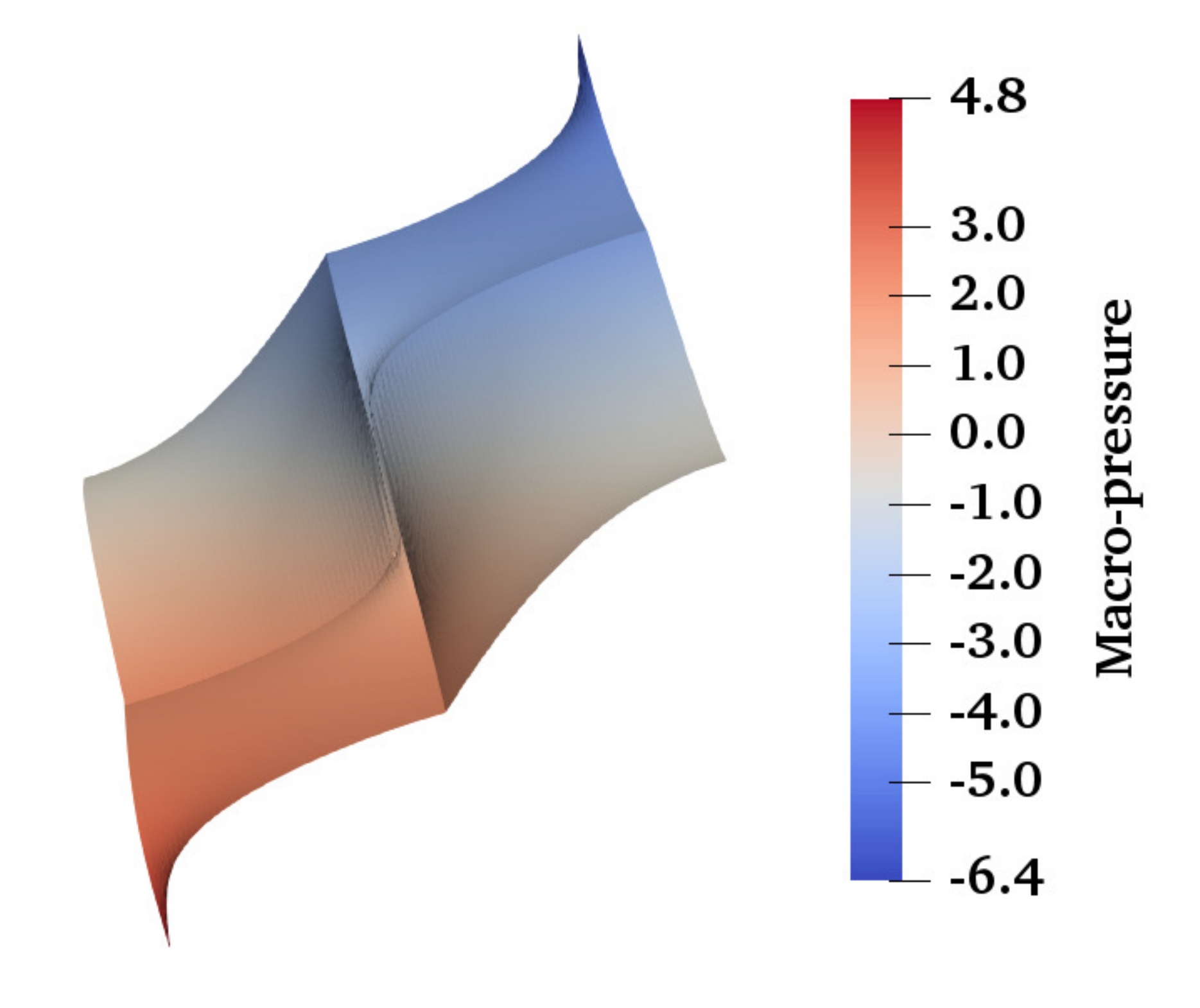}}
	\hspace{1cm}
	\subfigure[Micro-pressure \label{Fig:fiveSpot_problem_p2}]{
		\includegraphics[scale=0.33]{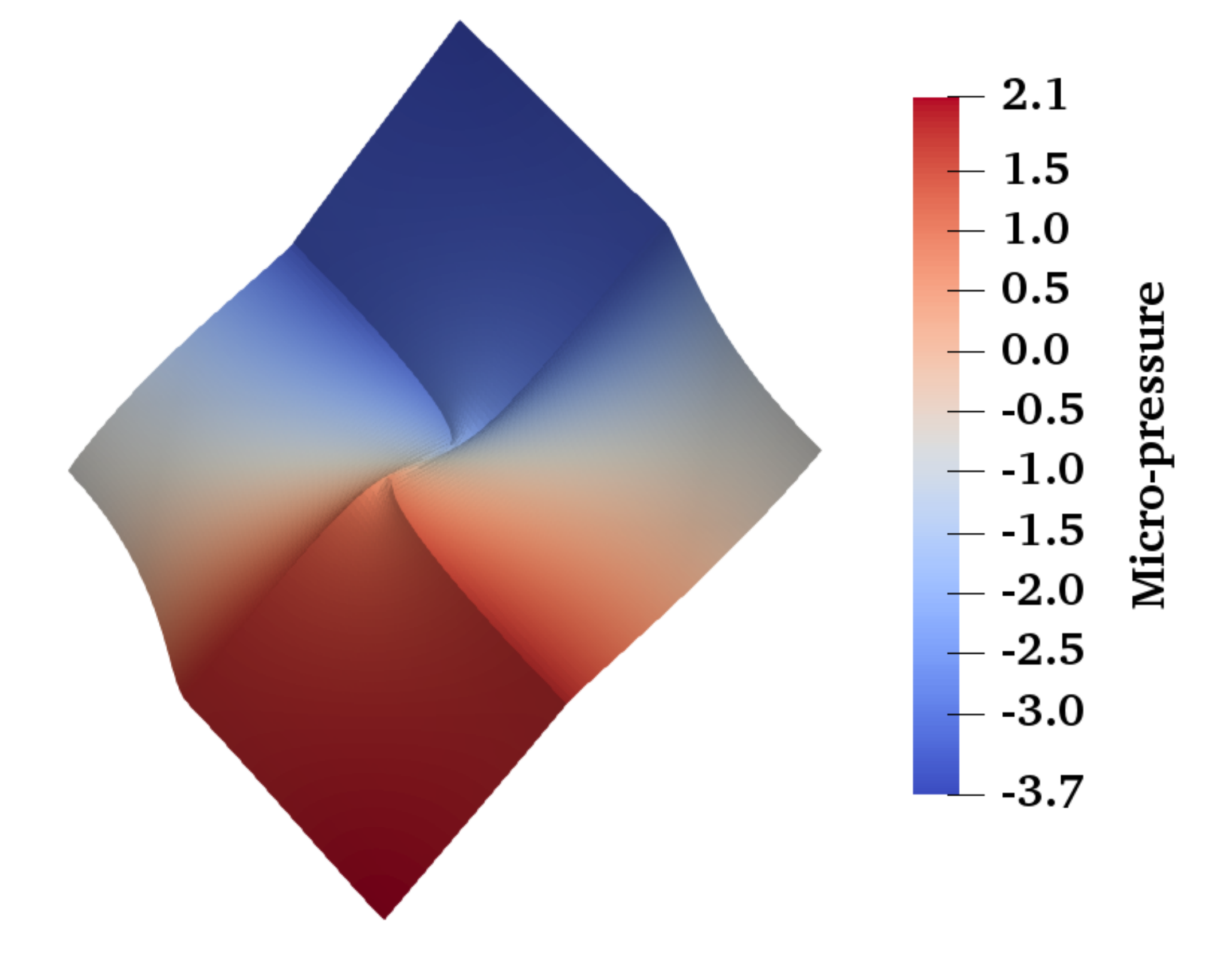}}
	\caption{\textsf{Quarter five-spot checkerboard problem:}~This figure shows that steep pressure gradients near the injection and production wells are correctly captured under the proposed DG formulation. However, no spurious oscillation are observed in the pressure fields which shows the robustness of our numerical formulation. These results are obtained for $\eta_u = \eta_p = 0$. \label{Fig:fiveSpot_problem_p}}
\end{figure}
\begin{figure}
	\subfigure{
		\label{Fig:fiveSpot_problem_v1_1}
		\includegraphics[trim=0cm 1cm 0cm 2cm,scale=0.45,clip]{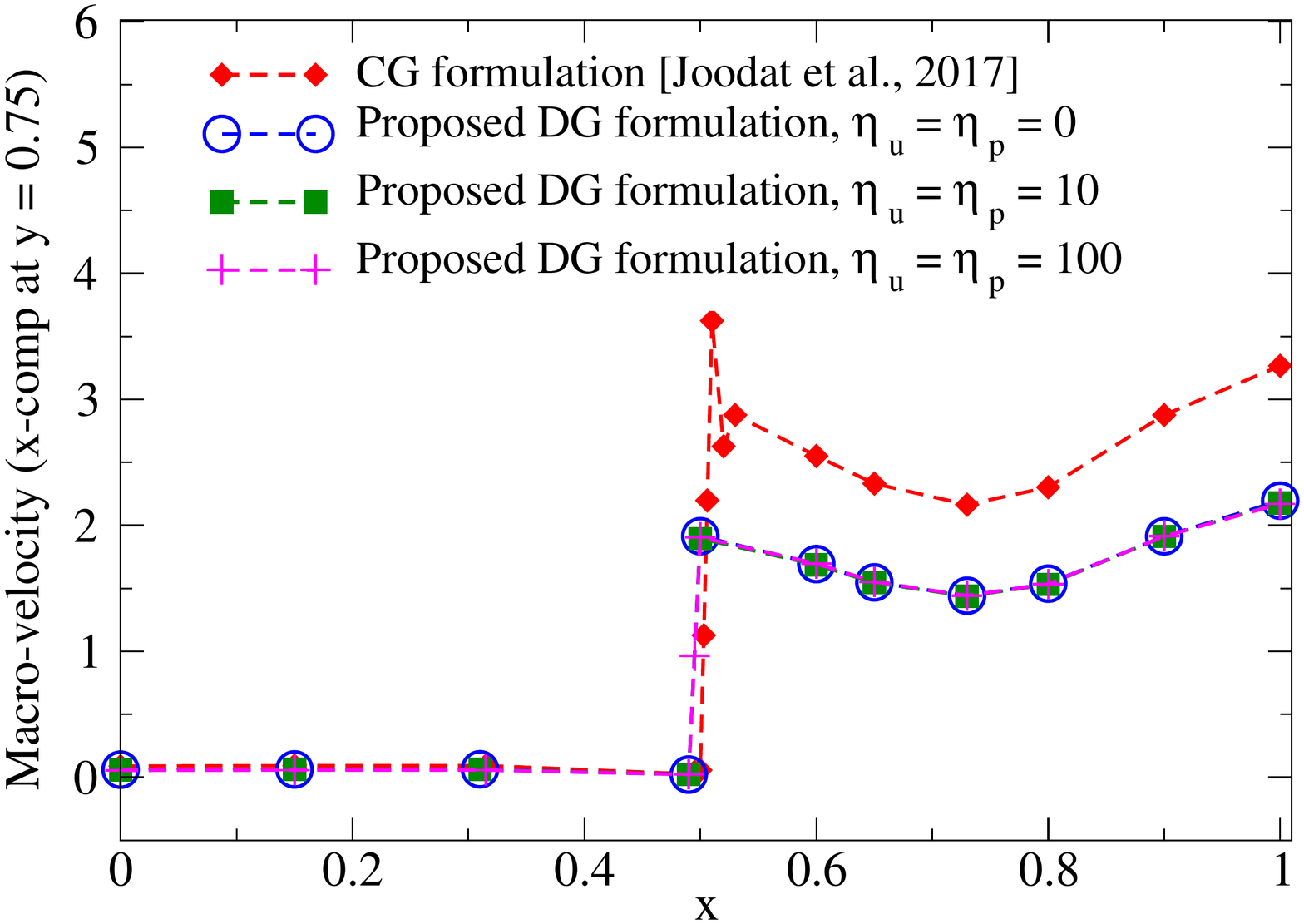}}
	\subfigure{
		\label{Fig:fiveSpot_problem_v2_1}
		\includegraphics[trim=0cm 1cm 0cm 3cm,scale=0.45,clip]{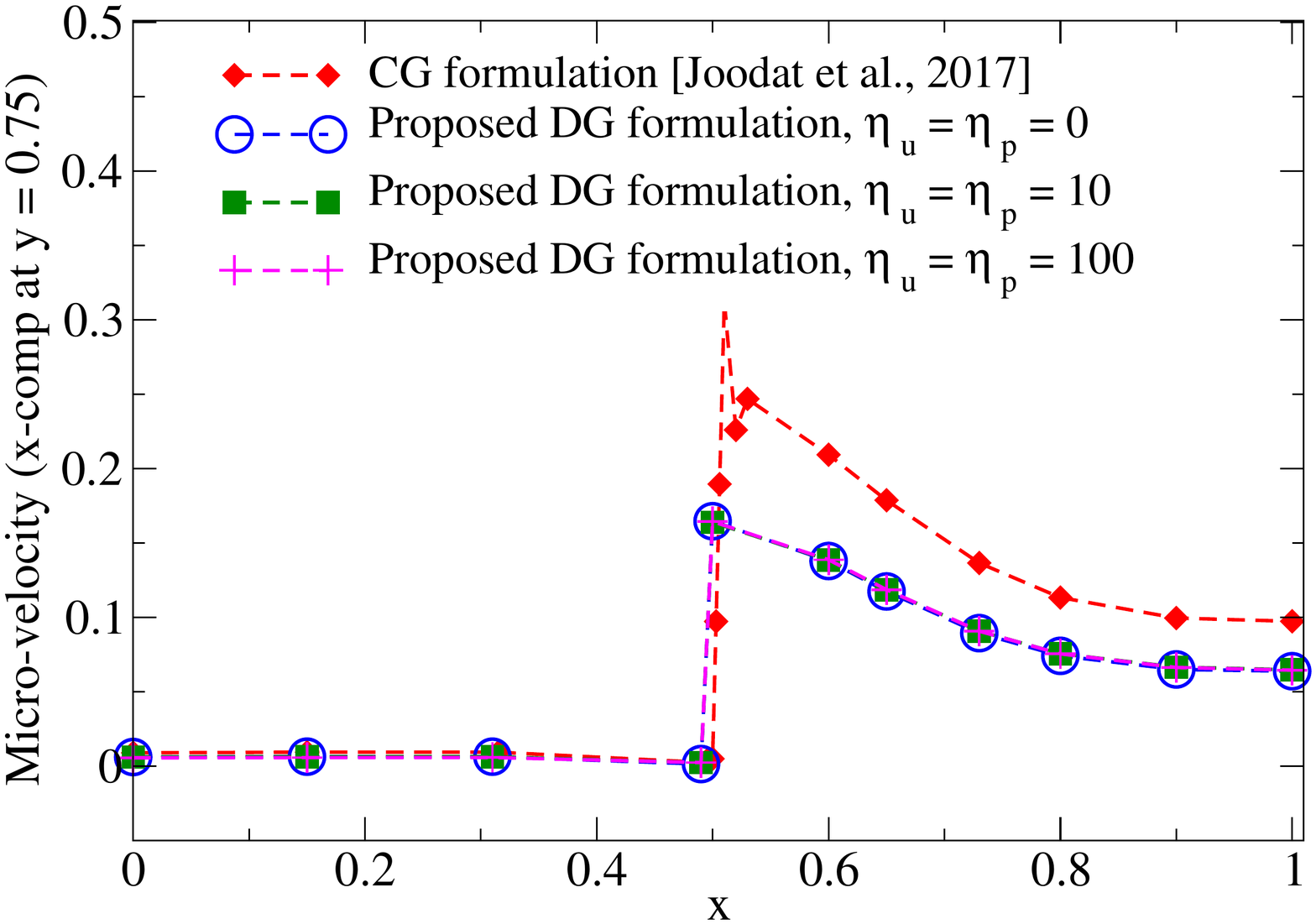}}
	\vspace{-0.25in}
	\caption{\textsf{Quarter five-spot
			checkerboard problem:}~This figure
		compares the x-component of the
		macro-velocity (top) and
		micro-velocity (bottom) profiles
		under the CG formulation and the
		proposed DG formulation with different values of stabilization parameters $\eta_u$ and $\eta_p$. As can be seen, under CG formulation, spurious oscillations are observed at the interface of sub-regions with different permeabilities. Moreover, $\eta_u$ and $\eta_p$ have no noticeable effect on solutions obtained under the DG formulation. \label{Fig:fiveSpot_problem_v_1}}
\end{figure}

\subsection{Element-wise mass balance} 
\label{Sec:Element_wise_MB}
A DG method, when designed properly, can exhibit
superior element-wise properties compared to its
continuous counterpart. CG formulations may suffer from
poor element-wise conservation;
however, they satisfy a global mass balance
\citep{Hughes_Engel2000continuous}.
The importance of element-wise mass
balance in subsurface modeling is
discussed in \citep{turner2012modeling}, which is
particularly true when the
flow is coupled with transport and/or
chemical reactions.

In this section, element-wise mass
balance error is investigated under
the proposed stabilized mixed DG formulation 
for the DPP model, and the results
are compared with its continuous
counterpart.
In the context of DPP, the net rate
of volumetric flux from both pore-networks can be obtained
as follows for an element $\omega
\in \mathcal{T}_h$: 
\begin{align}
  m(\omega):=\int_{\partial \omega}
  (\mathbf{u}_1 + \mathbf{u}_2)
  \cdot \widehat{\mathbf{n}} \; \mathrm{d}\Gamma
\label{Eqn:Local_mass_Flux}
\end{align}
After calculation, this equation should result in a zero value.
The maximum element-wise mass
inflow/outflow flux can be
obtained as follows:
\begin{subequations}
\begin{align}
  &m^{\mathrm{out}}_{\mathrm{max}} :=
  \max_{\omega \in \mathcal{T}_h}
  \left[\max[m(\omega),0]\right]
  \label{Eqn:Max_mass_outFlux} \\
  &m^{\mathrm{in}}_{\mathrm{max}} :=
  \max_{\omega \in \mathcal{T}_h}
  \left[\max[-m(\omega),0]\right]
  \label{Eqn:Max_mass_inFlux} 
\end{align}
\end{subequations}
It should be noted that the definition of the local mass flux presented in equation \eqref{Eqn:Local_mass_Flux} is different from the corresponding one under the Darcy equations. For the case of single porosity and under Darcy equations, the net flux is zero for the velocity. However, under the DPP model
the net flux need not be zero for the individual
velocities and it is shown to be zero for
the summation of $\mathbf{u}_1$ and $\mathbf{u}_2$.
The domain is discretized with structured T3 mesh of size $0.2$. 
We employ the same boundary value problem as stated in subsection \ref{Sec5:2D_square} with parameter values provided in Table \ref{Tb4:2D_convergence_analysis_data}.
Pressures are prescribed on the whole boundary in both pore-networks.

Comparisons of maximum local mass inflow/outflow with respect to different combinations of equal-order interpolation are illustrated in \textbf{Fig.~\ref{Fig:Local_mass_balance}} for both DG and CG formulations.
\textbf{Fig.~\ref{Fig:Local_mass_balance_field}} shows the local mass balance error in each element for cubic equal-order polynomials. The error values obtained under CG and DG formulations suggest that the DG formulation returns smaller errors.
%
%
\begin{figure}[!h]
  \includegraphics[clip,width=0.85
    \linewidth]{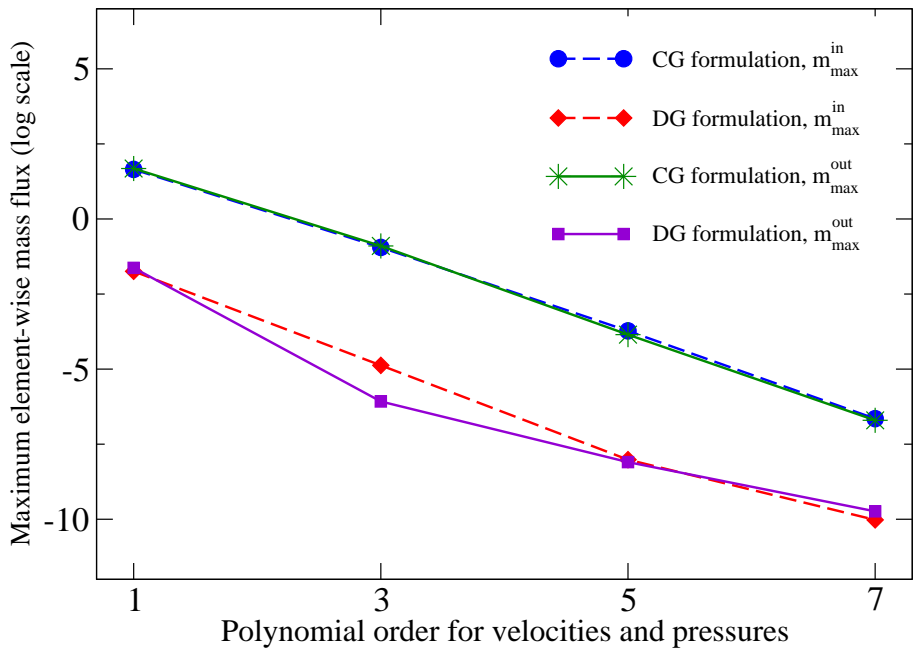}
  \caption{\textsf{Element-wise mass
      balance:}~This figure shows the variation
    of the maximum element-wise inflow/outflow
    flux with interpolation polynomial orders.}
	\label{Fig:Local_mass_balance}
\end{figure}

%
%
\begin{figure}[!h]
	\subfigure[CG formulation]{
		\includegraphics[clip,width=0.41\linewidth]{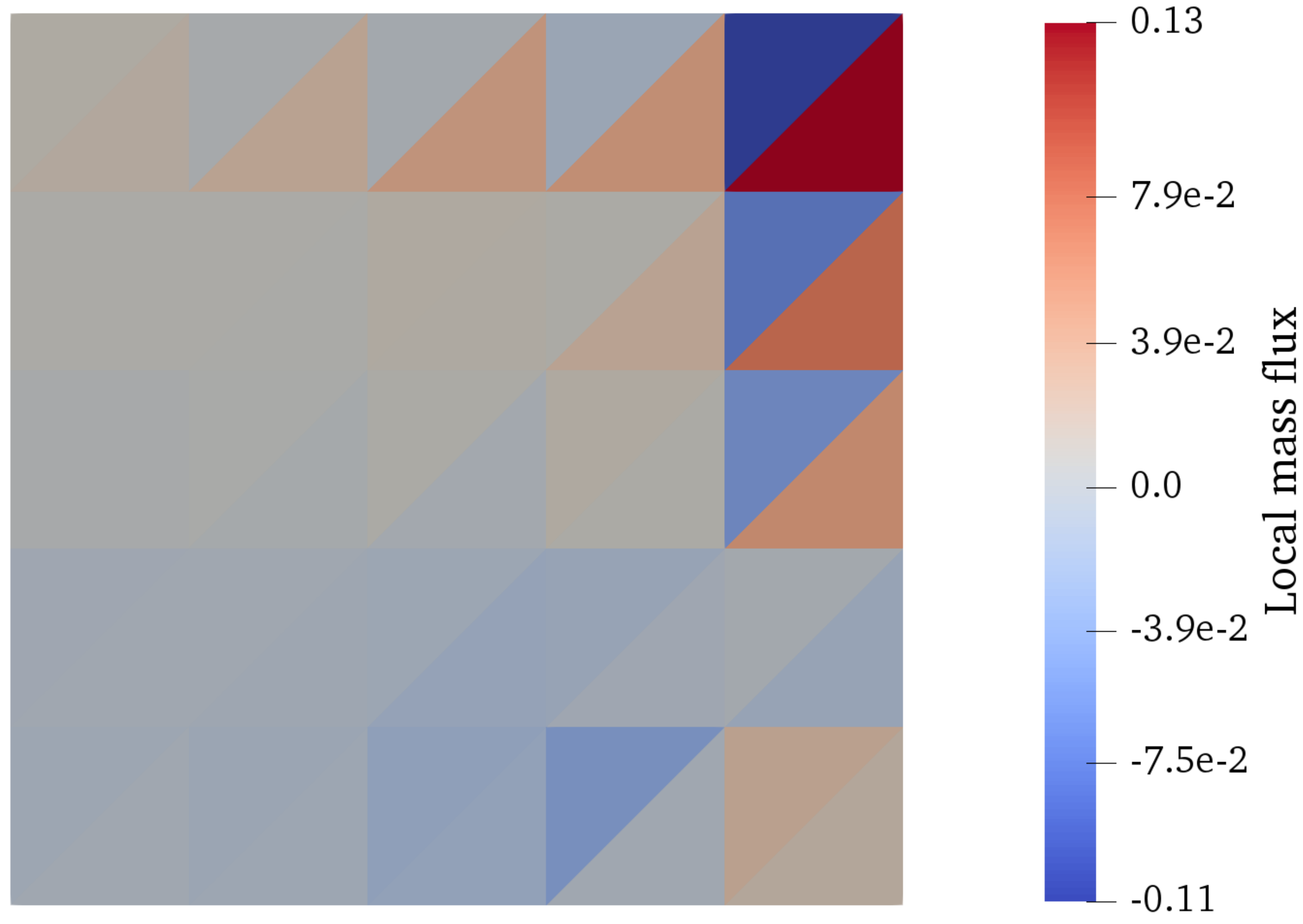}}
	\vspace{1cm}
	\subfigure[DG formulation]{
		\includegraphics[clip,width=0.41\linewidth]{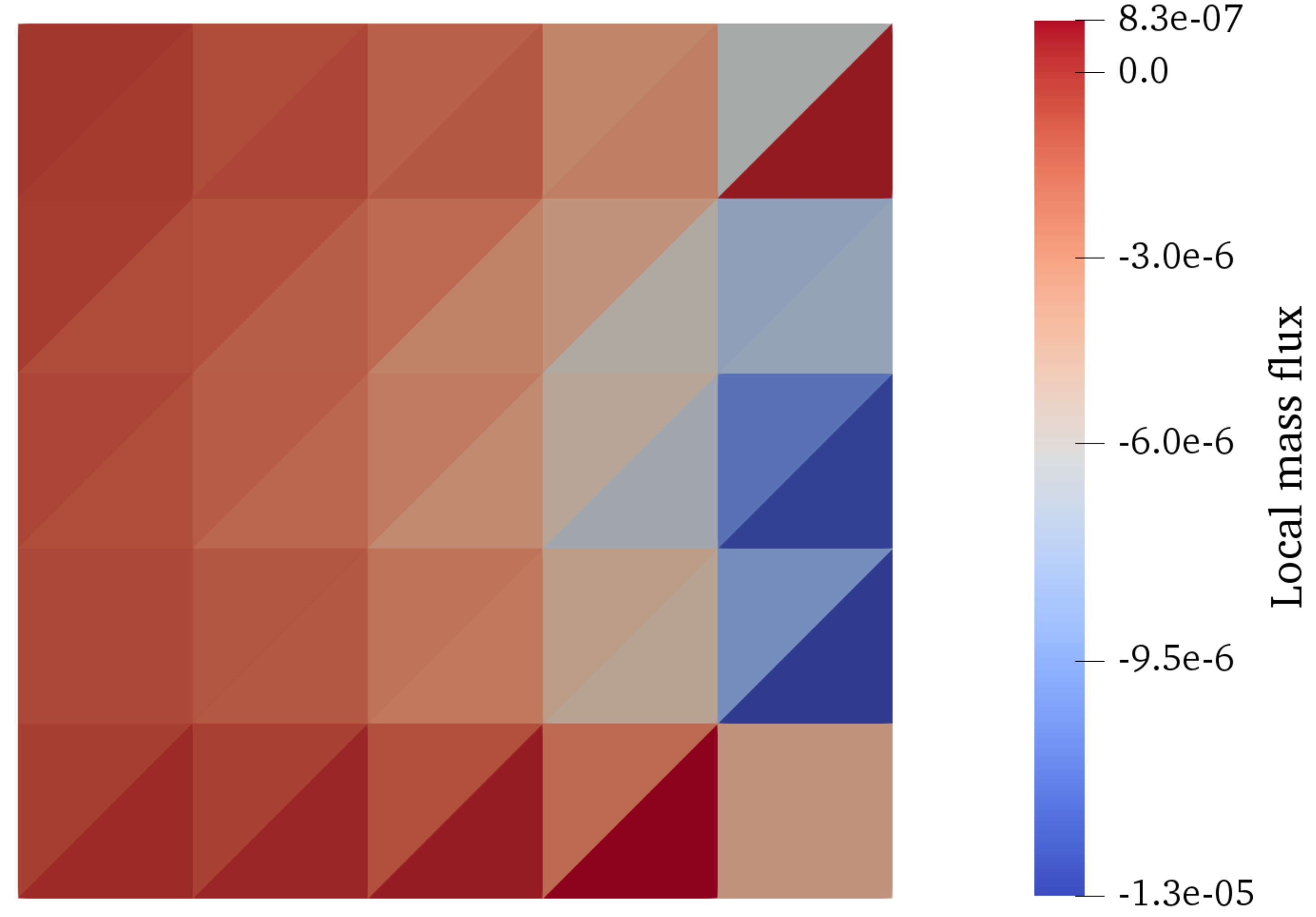}}
	
	\caption{\textsf{Element-wise mass
            balance:}~This figure shows the
          local mass balance error under both CG and DG formulations for cubic equal-order interpolation for all the field variables. As can be seen, the DG formulation returns smaller errors. }
	\label{Fig:Local_mass_balance_field}
\end{figure}

\section{COUPLED PROBLEM WITH HETEROGENEOUS MEDIUM PROPERTIES}
\label{Sec:S8_DG_NR}

In the previous sections, we used
patch tests and canonical problems
to demonstrate that the proposed
stabilized mixed DG formulation
can accurately capture the jumps
in the solution fields across
material interfaces.
We will further illustrate the
performance of this formulation
using a representative problem
pertaining to viscous fingering
in heterogeneous porous media.

Viscous fingering is a coupled phenomenon which
involves both flow and transport \citep{Drazin}.
In the flow of two immiscible fluids in a thin cell,
typically called the Hele-Shaw cell, a more viscous
fluid (with viscosity $\mu_{\mathrm{H}}$) is invaded
by a less viscous one (with viscosity $\mu_{\mathrm{L}}
< \mu_{\mathrm{H}}$), resulting in the creation of physical
(displacement) instabilities \citep{Homsy_1987}.
The classical viscous fingering in porous
media with a single pore-network (i.e.,
under Darcy equations) has been studied
by \citep{Saffman_1958}, and therefore,
this instability is sometimes referred
to as the Saffman-Taylor instability in
the literature \citep{Drazin}. 
Recently, \citep{Nakshatrala_Joodat_Ballarini_P2}
have numerically shown that viscous-fingering-type
instabilities can also occur in homogeneous porous
media with double pore-networks. They employed
the continuous Galerkin (CG) formulation of the DPP model, as their studies were
restricted to homogeneous porous media.

Herein, we will employ the proposed
DG formulation to study the effect
of \emph{heterogeneity} on the
appearance and growth of viscous-fingering-type
physical instabilities in porous media with two
pore-networks.
The governing equations for this two-way coupled flow and transport problem consist of two parts. Flow under the DPP model is governed by equations \eqref{Eqn:DG_GE_Darcy_BLM_1}--\eqref{Eqn:DG_GE_Darcy_pBC_2}
and the transient advection-diffusion problem is governed by the following set of equations:
\begin{subequations}
	\begin{align}
		\label{Eqn:Transport_GE}
		&\frac{\partial c(\mathbf{x},t)}{\partial t} + \mathrm{div}\left[\mathbf{u}(\mathbf{x},t) c(\mathbf{x},t)- D(\mathbf{x},t) \mathrm{grad}[c(\mathbf{x},t)]\right] = f(\mathbf{x},t)
		&&\quad \mathrm{in} \; \Omega \times \left(0,T\right) \\
		\label{Eqn:Transport_Dirichlet_BC}
		&c(\mathbf{x},t) = c^{p}(\mathbf{x},t)
		&&\quad \mathrm{on} \; \Gamma^{D} \times \left(0,T\right) \\
		\label{Eqn:Transport_Neumann_BC}
		&\widehat{\mathbf{n}} (\mathbf{x}) \cdot \left(\mathbf{u}(\mathbf{x},t) c(\mathbf{x},t) - D(\mathbf{x},t) \mathrm{grad}[c(\mathbf{x},t)]\right) = q^{p} (\mathbf{x},t)
		&&\quad \mathrm{on} \; \Gamma^{N} \times \left(0,T\right) \\
		\label{Eqn:Transport_IC}
		&c(\mathbf{x},t=0) = c_{0}(\mathbf{x})
		&&\quad \mathrm{in} \; \Omega
	\end{align}
\end{subequations}
where $c(\mathbf{x},t)$ denotes the concentration,
$D(\mathbf{x},t)$ is the diffusivity, 
  and the advection velocity $\mathbf{u}(\mathbf{x},t)$
  is sum of the macro- and micro-velocity fields
  (which are obtained from the flow problem).
  That is, 
  \begin{align}
    \mathbf{u}(\mathbf{x},t) = \mathbf{u}_1(\mathbf{x},t)
    + \mathbf{u}_2(\mathbf{x},t) 
  \end{align}
The concentration for the more viscous
fluid is assumed to be zero and for the
less viscous fluid is considered to be
equal to 1.
In order to complete the coupling of the flow and transport equations and upon introducing $\mu_{0}$ as the base viscosity of the less viscous fluid and
$R_{c} = \mathrm{log}\left(\mu_{\mathrm{H}}/\mu_{\mathrm{L}}\right)$ as the log-mobility ratio, the
viscosity of the fluid is assumed to exponentially depend on the
concentration of the diffusant as follows:
\begin{align}
  \mu (c(\mathbf{x} , t))
  = \mu_{0} \; \mathrm{exp}[R_{c}(1 - c(\mathbf{x}, t))]
\end{align}

We consider a domain consisting of two horizontal layers with different permeabilities. 
The pictorial description of the problem is provided
in \textbf{Fig.~\ref{Fig:DG_VF_Domain}}. The values of macro- and micro-permeabilities in the bottom layer are assumed to be higher than those of the upper layer. Such heterogeneity in the permeability imposes a perturbation on the interface of the two fluids which causes the appearance of unstable finger-like patterns throughout the domain at the fluid-fluid interface. Moreover, a random function is used for defining the initial condition for the transport problem within the domain. Parameter values for this coupled flow and transport problem are provided in Table \ref{Tb:Coupled_flow_transport_Hetero_data}. For the advection-diffusion model given by equations \eqref{Eqn:Transport_GE}--\eqref{Eqn:Transport_IC}, we have utilized Streamline Upwind Petrov-Galerkin (SUPG) formulation, as described in \citep{brooks1982streamline}. Also, see the computer
code provided in Appendix \ref{App:code}.

\begin{figure}
  \includegraphics[clip,width=0.6\linewidth]{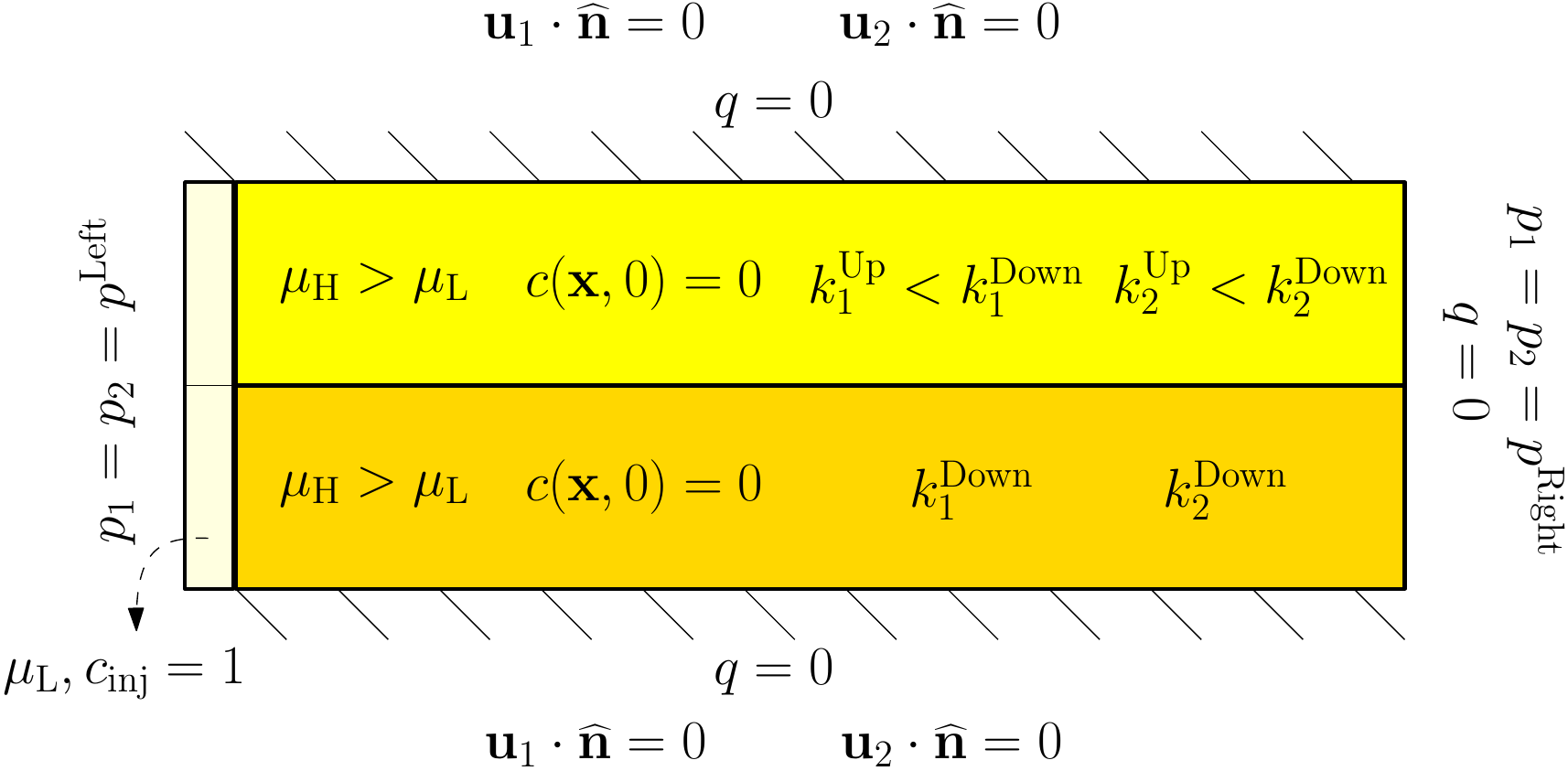}
  \caption{\textsf{Coupled flow and transport problem:}~This figure shows the pictorial description of coupled flow-transport problem with heterogeneous medium properties along with initial and boundary conditions.}
  \label{Fig:DG_VF_Domain}
\end{figure}

{\small
  \begin{table}[!h]
    \caption{Model parameters for coupled flow and
      transport problem in the heterogeneous
      domain.\label{Tb:Coupled_flow_transport_Hetero_data}}
    \centering
    \begin{tabular}{|c|c|} \hline
      Parameter & Value \\
      \hline
      $\gamma \mathbf{b}$ & $\{0.0,0.0\}$\\
      $f$ & $0.0$\\
      $L_x$,~$L_y$ & $1.0$,~$0.4$ \\
      $\mu_0 $ & $1 \times 10^{-3}$ \\
      $R_c$ & $3.0$ \\
      $D$ & $2 \times 10^{-6}$\\
      $\beta $ & $1.0$ \\
      $k_1^{\mathrm{Down}}$ & $1.1$\\
      $k_1^{\mathrm{Up}}$ & $0.9$\\
      $h$&structured T3 mesh\\
      &of size $0.01$ used\\
      \hline 
    \end{tabular}
    \begin{tabular}{|c|c|} \hline
      Parameter & Value \\
      \hline
      $k_2^{\mathrm{Down}}$ & $0.011$\\
      $k_2^{\mathrm{Up}}$ & $0.009$\\
      $c_0$ & $0.0$ \\
      $c_{\mathrm{inj}}$ & $1.0$ \\
      $p^{\mathrm{Left}}$ & $10.0$ \\
      $p^{\mathrm{Right}}$ & $1.0$ \\  
      $q$ & $0.0$\\          
      $\Delta t$ & $5 \times 10^{-5}$ \\
      $T$ & $1.5 \times 10^{-3}$\\
      $\eta_{u}$& $0$\\
      $\eta_P$ & $0$\\
      \hline 
    \end{tabular}
  \end{table}
}

\textbf{Fig.~\ref{Fig:DG_VF_Concentration_heterogeneous}}
shows the concentration profile at different time steps
throughout the heterogeneous domain. The more viscous fluid is
shown in dark blue and the less viscous fluid is shown
in dark red. As can be seen, physical instabilities in
form of separate finger-like intrusions are created at
the fluid-fluid interface. These intrusions are similar
to the viscous-fingering-type instabilities. At the early time
steps, we have a larger number of fingers compared to
the later time steps. These smaller fingers merge and
form fewer but much larger fingers as time goes by. 
It should be noted that finger-like physical
instabilities grow at a higher rate in the
bottom layer due to its higher permeability,
as can be seen in
\textbf{Fig.~\ref{Fig:DG_VF_Concentration_heterogeneous}}.
Moreover, at the later time steps, the fingers formed
in the bottom layer tend to move towards the interface
and enter the top layer.
The proposed DG formulation eliminated the numerical
instabilities (like Gibbs phenomenon and spurious
node-to-node oscillations) but yet accurately
captured the physical instabilities. 
It is worth mentioning that in
our numerical simulations, the
parameters $\eta_u$ and $\eta_p$
had no noticeable effect on the
generation of fingers.

\begin{figure}[!h]
  \subfigure[t=5 $\Delta t$]{
    \includegraphics[clip,width=0.4\linewidth]{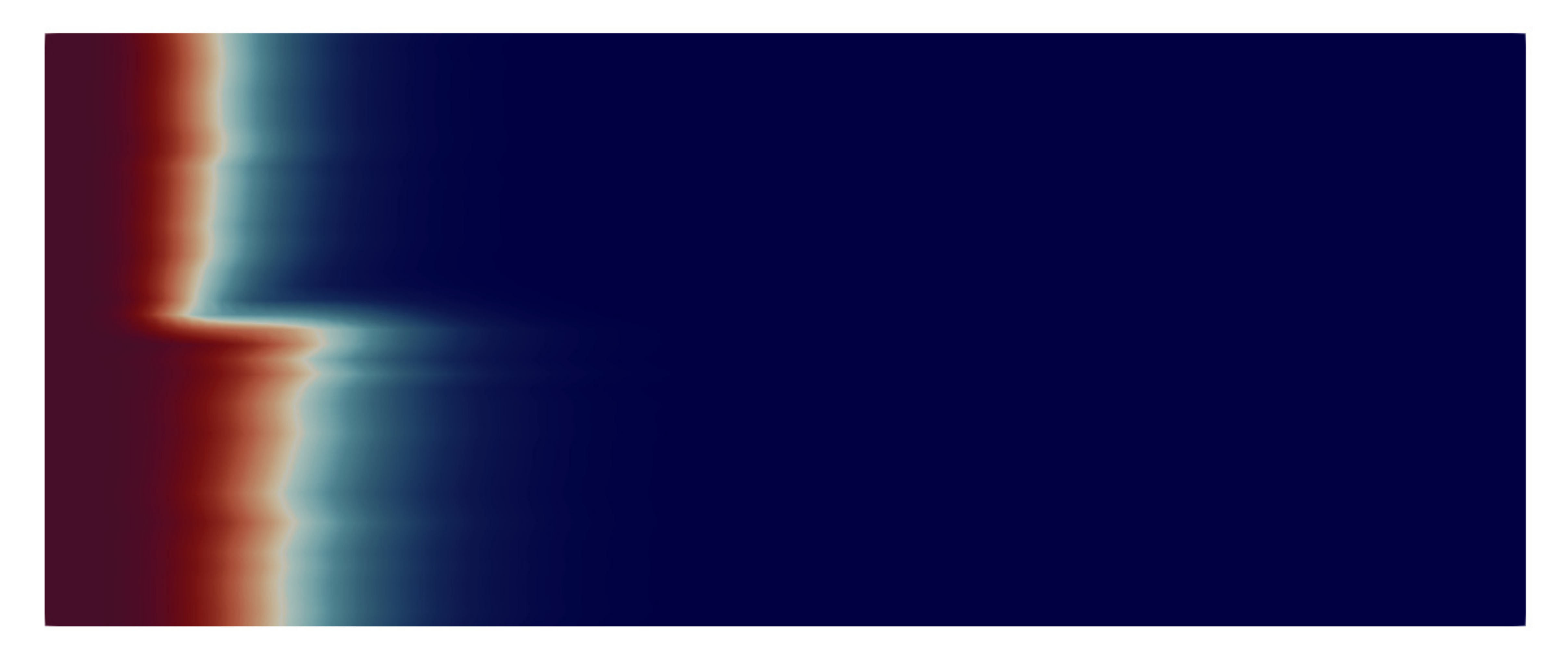}}
  \hspace{0.8cm}
  \subfigure[t=10 $\Delta t$]{
    \includegraphics[clip,width=0.4\linewidth]{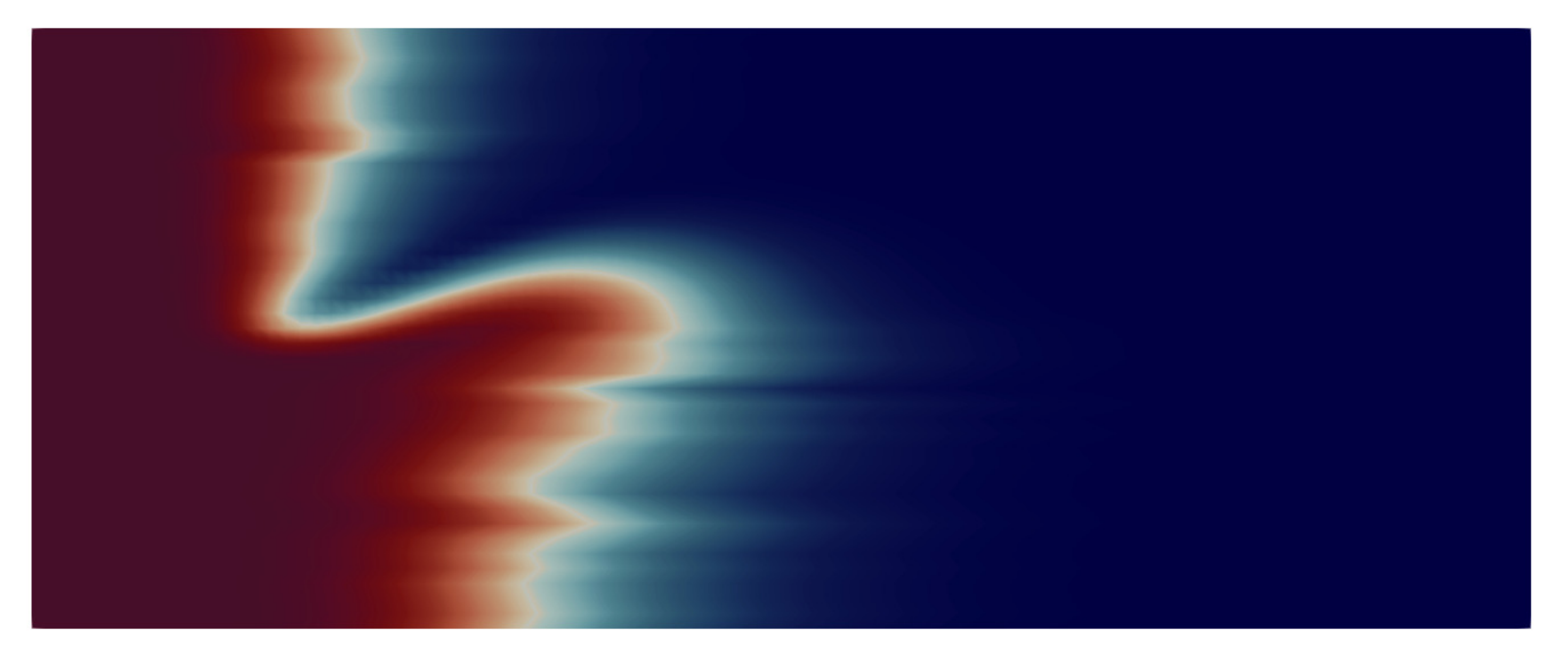}}
  \subfigure[t=15 $\Delta t$]{
    \includegraphics[clip,width=0.39\linewidth]{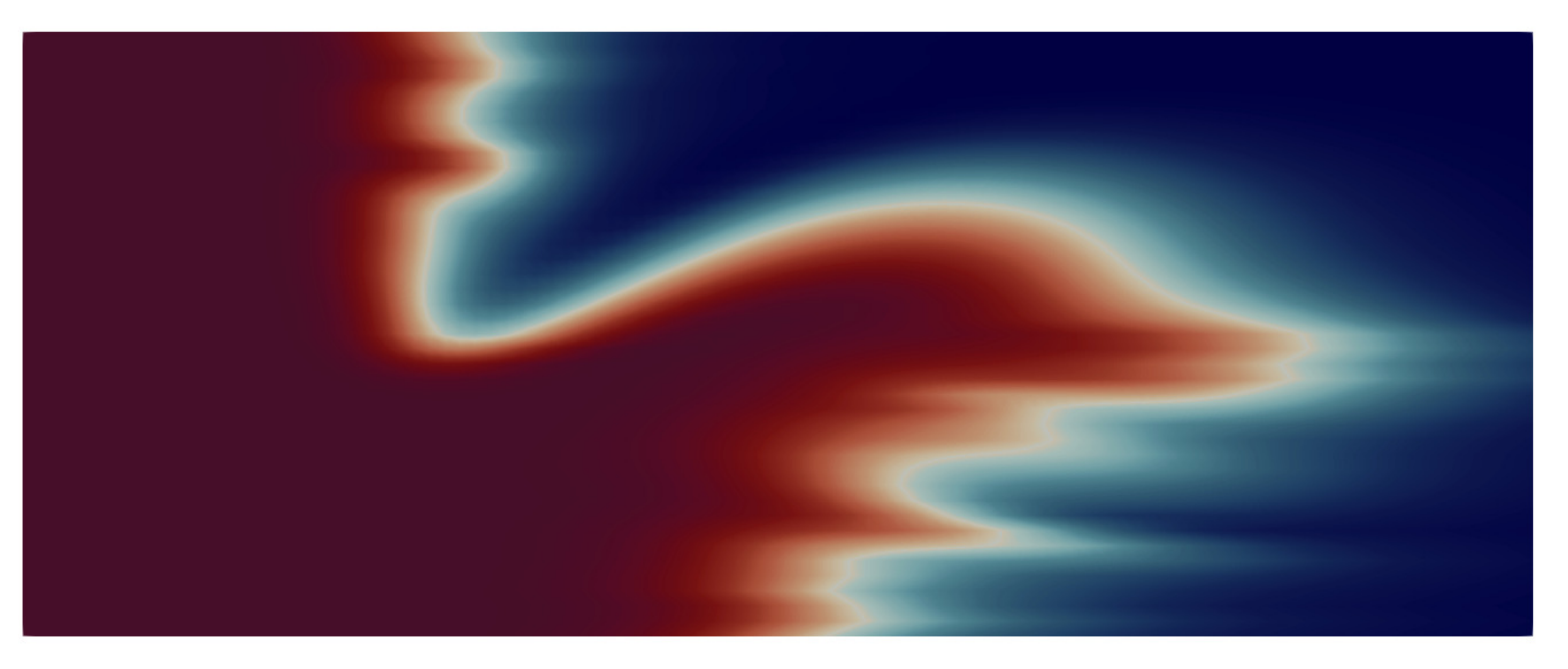}}
  \hspace{0.8cm}
  \subfigure[t=20 $\Delta t$]{
    \includegraphics[clip,width=0.4\linewidth]{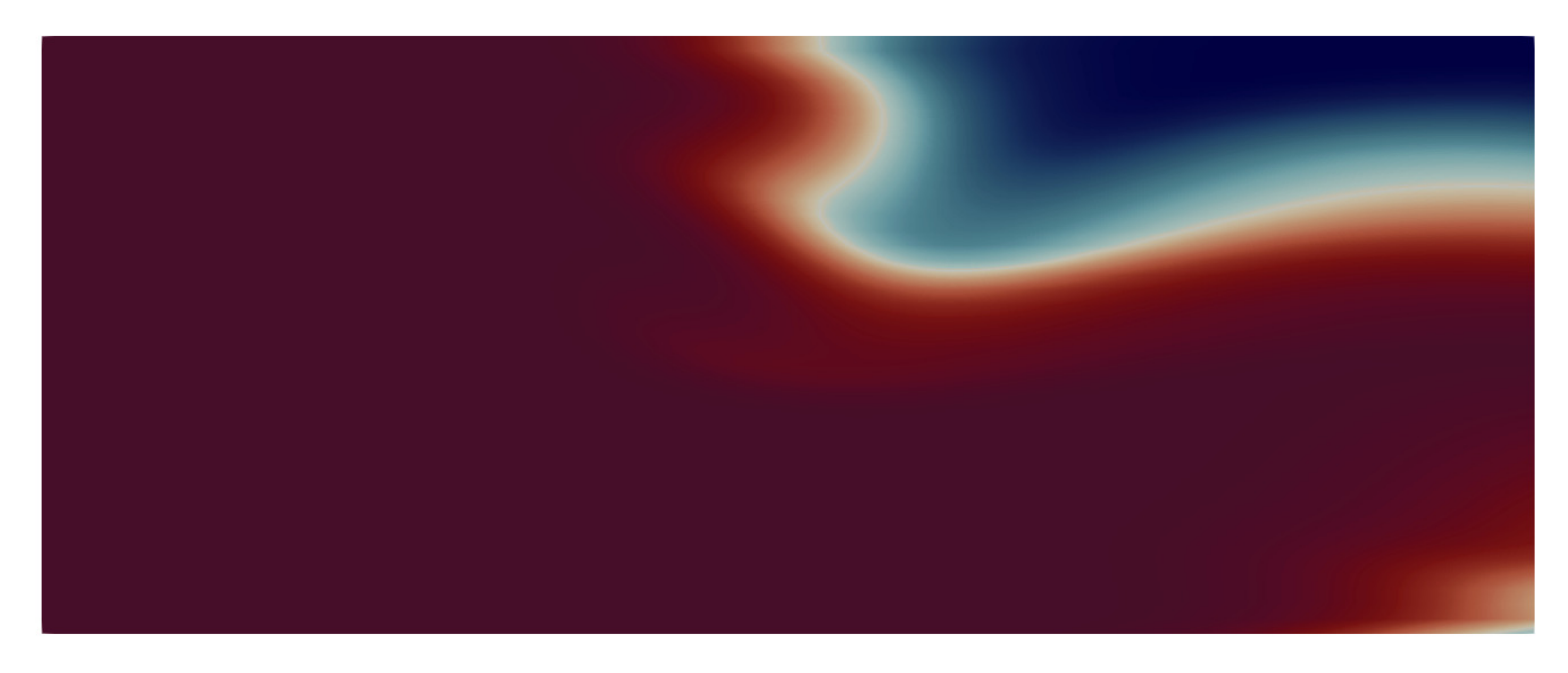}}
  \caption{\textsf{Coupled flow and transport problem:}~This figure illustrates that the proposed stabilized mixed DG formulation is capable of capturing well-known instabilities in fluid mechanics, similar to viscous-fingering instability, in a heterogeneous, layered porous domain with abrupt changes in permeabilities. Fingers are propagating at a higher rate in the bottom layer with higher macro- and micro-permeabilities.
    In our numerical simulations, the parameters
    $\eta_u$ and $\eta_p$ had no noticeable effect
    on the generation of fingers. We obtained similar
    results under $\eta_u = \eta_p = 10$ and $\eta_{u}
    = \eta_p = 100$; which are not shown here.}
  \label{Fig:DG_VF_Concentration_heterogeneous}
\end{figure}

\section{CONCLUDING REMARKS}
\label{Sec:S9_DG_CR}
A new stabilized mixed DG formulation has been
presented for the DPP mathematical model, which
describes the flow of a single-phase incompressible
fluid through a porous medium with two dominant pore-networks. 
Some of the main findings of this paper
on the \emph{computational front} and
the \emph{nature of flow} through porous
media with double pore-networks can be
summarized as follows:
\begin{enumerate}[(i)]
\item Arbitrary combinations of interpolation
  functions for the field variables are stable
  under the proposed DG formulation.
  Unlike the classical mixed DG formulation,
  which violates the LBB \emph{inf-sup}
  stability condition under the equal-order
  interpolation for all the field variables,
  the proposed DG formulation circumvents the
  LBB condition. This implies that the proposed DG
  formulation does not suffer from node-to-node
  spurious oscillations when the computationally
  convenient equal-order interpolation for all
  the field variables is employed.
\item Due to a careful selection of numerical
  fluxes, the proposed DG formulation does not
  suffer from the inherent instabilities that
  DG methods typically suffer from; for example,
  the Bassi-Rebay DG method.
\item The stabilization terms inside the domain
  are of adjoint-type and residual-based, and
  the corresponding stabilization parameters do
  not contain any mesh-dependent parameters.
  \item The proposed stabilized formulation performs
  remarkably well, in comparison with its continuous
  counterpart, in the presence of heterogeneity in material properties. In other words, under the proposed DG formulation no unphysical numerical instabilities are generated at the vicinity of discontinuities in material properties due to Gibbs phenomenon.
\item The formulation passes patch tests, even on
  meshes with non-constant Jacobian elements, in
  2D and 3D settings. 
\item The proposed DG formulation can support non-conforming discretization in form of non-conforming polynomial orders or non-conforming element refinement, thus allowing
efficient $h$-, $p$-, and $hp$-adaptivities.
\item A sensitivity study revealed the importance
  of $\eta_u$ and $\eta_{p}$ (i.e., jump terms with respect to
  the normal components of the velocities and pressures, respectively) to
  reduce the drift along the interior edges for
  the case of non-conforming polynomial orders.
\item It is shown, theoretically, that the proposed
  formulation is convergent. The convergence rates obtained under
  both $h$- and $p$-refinement methods in
  several numerical experiments are in
  accordance with the theory.
\item It is shown that the proposed DG formulation
  can be employed to solve coupled flow-transport
  problems in porous media with double pore-networks. 
  In particular, the effect of heterogeneity of medium properties is studied
  on the appearance and growth of fingers under viscous-fingering-type instability.
  The proposed formulation is capable of suppressing
  the non-physical numerical instabilities (like Gibbs phenomenon and spurious
  node-to-node oscillations), yet capturing
  the underlying physical ones. 
  
\end{enumerate}

\appendix
\section{COMPUTER IMPLEMENTATION}
\label{App:code}
The numerical results pertaining to the non-conforming 
discretization (Section \ref{Sec:Non_conforming}) and
non-constant Jacobian elements (Section \ref{Sec:Non_Constant_Jacob}), 
have been obtained using \textsf{COMSOL Java API} \citep{COMSOL_Java_API}.
The numerical simulations for the 3D numerical convergence analysis (Section \ref{Sec:3D_convergence}) and the coupled problem
(Section \ref{Sec:S8_DG_NR}) were carried out using the
\textsf{Firedrake} Project \citep{rathgeber2017firedrake,
  luporini2015cross}.
All the remaining numerical results were
generated using the \textsf{FEniCS} Project
\citep{logg2012automated,alnaes2015fenics}.

The \textsf{FEniCS} and \textsf{Firedrake} Projects
are built upon several scientific packages and provide
automated frameworks to solve partial differential
equations in serial and parallel environments. Both
provide an easy-to-use Python-based interface to
develop computer codes, to access the scientific
packages on which they are built upon, and to
generate the output in various formats which are
compatible with popular visualization software
packages such as \textsf{ParaView} \citep{paraview}
and \textsf{VisIt} \citep{VisIt}. 
Under both these projects, mesh generation can
be performed either within the code or using
the third party mesh generators such as \textsf{GMSH} 
\citep{Geuzaine_IJNME_2009}.

Among the various components available in \textsf{FEniCS},
we have used the Unified Form
Language (\textsf{UFL}) \citep{Alnaes_UFL_2014} 
and the \textsf{DOLFIN} library
\citep{logg2010dolfin,logg2012dolfin} in 
our implementations.
The former enables the user to declare the finite element
discretization of variational forms and the latter is used 
for the automated assembly of the finite element discrete formulations.
The \textsf{Firedrake} Project employs the \textsf{UFL} from the \textsf{FEniCS} 
Project. However, the main difference between the \textsf{FEniCS}
and \textsf{Firedrake} Projects is that all data structures,
linear solvers and non-linear solvers for the
former are provided by \textsf{DOLFIN} library and for
the latter are provided entirely by the \textsf{PETSc}
library \citep{petsc-user-ref}. Another notable difference
is that the \textsf{FEniCS} Project offers only simplicial
element (e.g., triangular and tetrahedron elements),
whereas the \textsf{Firedrake} Project offers
non-simplicial elements in addition to the simplicial
ones. 

In our numerical simulations, MUMPS
\citep{MUMPS:1} direct solver and
the sparse LU decomposition direct
solver from the UMFPACK
\citep{davis2004algorithm}
were, respectively, employed
with default settings under
the \textsf{COMSOL Java API}
and the \textsf{FEniCS} Project.
The GMRES iterative solver with ``bjacobi''
preconditioner and the relative convergence
tolerance of $10^{-7}$ was employed under
the \textsf{Firedrake} Project.

Below, we have provided a \textsf{Firedrake}-based
computer code, which can be used to generate the
results for the coupled problem, which is discussed
in Section \ref{Sec:S8_DG_NR}.

\lstinputlisting[language=Python,caption=Firedrake code
  for solving the coupled problem in the heterogeneous
  porous medium,label=Code:ex1,
  frame=single]{Sections/Coupled_Flow_Transport_DPP_DG.py}

\bibliographystyle{plainnat}
\bibliography{Master_References/Books,Master_References/Master_References}
%
\end{document}